\documentclass[12pt]{article}

\usepackage[title]{appendix}

\usepackage[round,authoryear]{natbib}
\def\citeinb #1{\citeauthor{#1}, \citeyear{#1}}

\usepackage{xcolor}
\definecolor{myblue}{RGB}{51,51,178}

\usepackage[utf8]{inputenc}
\usepackage[margin=1in]{geometry}
\usepackage[onehalfspacing]{setspace}
\usepackage{etoolbox}
\AtBeginEnvironment{proof}{\small}   
\usepackage{graphicx}
\graphicspath{ {./images/} }					

\usepackage[font=footnotesize]{caption}
\usepackage{float}

\usepackage{subcaption}							
\usepackage{rotating}							

\usepackage{adjustbox}
\newsavebox{\figbox}
\newlength{\notewidth}
\usepackage{tikz}
\usetikzlibrary{arrows.meta,calc,positioning}

\tikzset{
  control/.style={blue, thick},
  treated/.style={red, thick},
  benchmark/.style={red!75!black, thick, densely dotted},
  axis/.style={->, line width=0.9pt},
  bias/.style = {<->, >={Stealth[length=1.2mm,width=1mm]}, thin},
  every node/.style={font=\small}
}

\usepackage{pgfplots}							
\pgfplotsset{compat=1.18}
\pgfplotsset{flexsaving/.style={
    width=0.92\linewidth, height=0.55\linewidth,
    xmin=-2.2, xmax=5.2, ymin=0.25, ymax=0.65,
    xtick={-2,-1,0,1,2,3,4,5}, ytick={0.3,0.4,0.5,0.6},
    xlabel={Event time (year $-$ 2003)},
    ylabel={Flexible-term saving ratio},
    axis lines=left, tick align=outside,
    tick label style={font=\small}, label style={font=\small},
    scaled y ticks=false,
    yticklabel style={/pgf/number format/fixed,/pgf/number format/precision=2},
    legend style={at={(0.5,1.04)},anchor=south,
                  legend columns=2,draw=none,font=\small},
    legend cell align=left,
}}

\usepackage{empheq}								

\usepackage[hidelinks]{hyperref}

\usepackage[shortlabels]{enumitem}				

\usepackage{amsthm}								
\theoremstyle{definition}
\newtheorem{assumption}{Assumption}

\numberwithin{definition}{subsection}			
\newtheorem{lemma}{Lemma}
\newtheorem{theorem}{Theorem}
\newtheorem{remark}{Remark}[section]

\newcommand{\abs}[1]{\left\lvert#1\right\rvert}							
\newcommand{\norm}[1]{\left\lVert#1\right\rVert}						

\usepackage{amsmath, amsfonts, amssymb, amsthm, mathtools, bm}

\usepackage{bbm}													
\newcommand{\1}{\mathbbm{1}}
\newcommand{\E}{\mathbbm{E}}

\usepackage{mathrsfs}												

\usepackage{mathtools}

\usepackage{hyperref}

\hypersetup{
	colorlinks=true, 													
	linktoc=all,        												
	linkcolor=blue,  													
	citecolor=blue,
	urlcolor=blue
}

\usepackage{booktabs}							
\usepackage{caption}
\usepackage{multirow}
\usepackage{threeparttable}						
\usepackage{makecell}  

\newcommand{\bs}{\mathbf{S}}

\newcommand{\G}{\mathbf{G}}
\newcommand{\Ical}{\mathcal{I}}

\newcommand{\isum}{ \sum_{i=1}^{n} }

\newcommand{\plim}{\xrightarrow{p}}
\newcommand{\V}{\mathbb{V}}

\newcommand{\X}{\mathbf{X}}
\newcommand{\Xcal}{\mathcal{X}}

	\title{Matched Triple-Differences: A Framework for Covariate Adjustment \thanks{I thank Douglas Steigerwald, Max Farrell, and Gonzalo Vazquez-Bare for valuable comments and suggestions.}}
    \author{Yihong Liu\thanks{Department of Economics, UC Santa Barbara.}}
    \date{October 2nd, 2026}

\begin{document}
\maketitle
\vspace{-1em}

\begin{abstract}
A common empirical strategy in triple-differences (DDD) is to include either covariate trends or covariate levels to a three-way fixed effects (3WFE) regression. This strategy is typically motivated by the conditional parallel gaps assumption which assumes that deviations from parallel trends are similar among units with comparable observed covariates. We formally study both 3WFE specifications and show that, in general, neither consistently estimates the average treatment effect on the treated (ATT). Our diagnosis has two parts. First, we show that the OLS estimands of both specifications fail to satisfy the covariate balancing condition that is sufficient for them to equal the ATT. Second, we characterize the additional assumptions for each OLS estimand to equal the ATT. To address these limitations, we propose a matched triple-differences framework that accommodates a general class of matching procedures, including nearest-neighbor matching and kernel matching. Within this framework, we construct a class of consistent matching estimators, establish their asymptotic normality, and provide a consistent variance estimator that accounts for the variability introduced by the matching step. Our empirical application shows that the proposed matched triple-differences estimator can yield estimates and qualitative conclusions that differ from those obtained using the 3WFE regression.
\end{abstract} \vspace{1em}
\noindent \textbf{Keywords:} Triple-Differences; Difference-in-Differences-in-Differences; Parallel Gaps Assumption; Matching \\\noindent\textbf{JEL Classification:} C14, C21, C23  
\newpage    	
\bigskip
\noindent 

\section{Introduction}
Difference-in-differences (DiD) is one of the most widely used causal inference methods in empirical economics, and its identification relies on the parallel trends assumption. In practice, researchers may worry that this assumption is too restrictive to provide a reasonable approximation to reality. This concern is particularly relevant when treatment requires a unit to satisfy two criteria simultaneously: being geographically exposed to a policy and belonging to the subpopulation eligible for it. In a DiD comparing the treated subgroup (units that are both exposed and eligible) with the unexposed eligible subgroup, geography-specific shocks may violate parallel trends (top two boxes in Figure \ref{figure:foursubgroupintro}). Similarly, subgroup-specific shocks may invalidate a DiD comparing the treated subgroup with the exposed ineligible subgroup (left two boxes in Figure \ref{figure:foursubgroupintro}). Moreover, both types of shocks may affect units differently depending on their characteristics. These concerns motivate triple-differences (DDD) designs with covariates. Their identifying assumption, the conditional parallel gaps assumption, accommodates both geography- and subgroup-specific shocks and allows the effects of these shocks to vary with covariates.

To implement these designs, researchers typically incorporate covariates through one of two three-way fixed effects (3WFE) regression specifications: one includes covariate-specific trends, and the other includes covariate levels. We refer to these as covariate-trend and covariate-level 3WFE regressions, respectively. These specifications are widely used in empirical work: Table \ref{table:ddd_literature} lists $18$ papers published between $2010$ and $2026$ in top economics journals (e.g., AER and QJE) that use at least one of the two specifications. Yet these specifications have received limited attention in the methodological and theoretical literature. In this paper, we formally analyze both specifications, characterize their limitations, and propose valid alternative procedures for estimation and inference based on matching.

\begin{figure}[htbp]
\centering
\begin{tikzpicture}[scale=0.9, transform shape,
    box/.style={draw, rounded corners, minimum width=4cm, minimum height=1.5cm,
                align=center, font=\small},
    treated/.style={box, very thick, fill=gray!15, font=\small\bfseries},
    lab/.style={font=\small\bfseries},
    did/.style={<->, >=stealth, thick, shorten <=2pt, shorten >=2pt},
    shock/.style={font=\footnotesize\itshape, text=red!70!black, align=center}
]
\node[lab] at (-3.8, 3.05) {Exposed};
\node[lab] at ( 3.8, 3.05) {Unexposed};
\node[lab, rotate=90] at (-6.5,  1.85) {Eligible};
\node[lab, rotate=90] at (-6.5, -1.85) {Ineligible};

\node[treated] (A) at (-3.8,  1.85) {Exposed Eligible\\(Treated)};
\node[box]     (B) at ( 3.8,  1.85) {Unexposed Eligible};
\node[box]     (C) at (-3.8, -1.85) {Exposed Ineligible};
\node[box]     (D) at ( 3.8, -1.85) {Unexposed Ineligible};

\draw[did] (A) -- node[above, lab]  {DiD 1}
                  node[below, shock] {geography shock} (B);

\draw[did] (A) -- node[left, lab]   {DiD 2}
                  node[right, shock] {subgroup shock} (C);
\end{tikzpicture}
\caption{Limitations of the two DiD comparisons}
\label{figure:foursubgroupintro}
\end{figure}

Throughout, we consider a triple-differences design with multiple periods and simultaneous adoption of treatment, and the parameters of interest are the average treatment effects on the treated subgroup (ATT) in each post-treatment period. Our starting point is a weighting representation of the identification result: the ATT can be expressed as a weighting estimand in which the treated subgroup is left unweighted and each of the three untreated subgroups is reweighted by a density ratio. These density ratios align the covariate distribution of each untreated subgroup with that of the treated subgroup. This representation implies a sufficient covariate balancing condition: a weighting estimand equals the ATT if its weighting profile (i)~balances the covariate distribution of \emph{each} untreated subgroup with that of the treated subgroup, and (ii)~attains this balance while leaving the treated subgroup unweighted.

We then characterize the limitations of the commonly used covariate-trend and covariate-level 3WFE regressions. For the covariate-trend specification, the corresponding OLS estimand can be expressed as a weighting estimand that reweights all four subgroups with weights proportional to linear projection residuals, known in the literature as ``implied regression weights" \citep{Chatt_Zubi_2023_biometrika}. This weighting profile fails to satisfy the covariate balancing condition in three repects. First, the pooling issue, it targets covariate balance between the treated subgroup and a pooled untreated group merging the three untreated subgroups, rather than balancing each untreated subgroup separately with the treated subgroup. Second, the mean-only balance issue, the implied regression weights attain only mean balance rather than balance of the full covariate distribution: two subgroups can still have different covariate distributions even when their covariate means are balanced. Third, the treated-reweighting issue, these weights also reweight the treated subgroup. The covariate-trend OLS estimand equals the ATT if treatment effects are homogeneous in the covariates and, in addition, either the conditional mean functions of the three untreated subgroups are linear with a common slope or the generalized propensity scores take a specific inverse-linear form.

For the covariate-level specification, the corresponding OLS estimand equals the unconditional triple-differences estimand, which assigns a weight of one to all four subgroups. This weighting profile also fails to satisfy the covariate balancing condition because it leaves all three untreated subgroups unweighted. The covariate-level OLS estimand equals the ATT if and only if the parallel gaps assumption holds unconditionally, in which case there is no need to include covariates in the triple-differences design. 

To address these limitations, we propose a matched triple-differences framework for valid estimation and inference. The estimation part of our framework is a two-step procedure. The first step conducts three pairwise matchings, each between the treated subgroup and one of the three untreated subgroups. The second step estimates a weighted 3WFE regression using the weights generated in the matching step. The resulting matched triple-differences estimator is designed to satisfy the covariate balancing condition: matching reweights each untreated subgroup toward the covariate distribution of the treated subgroup while leaving the treated subgroup unweighted. The inference part of our framework is based on a set of novel regularity conditions on the matching procedures. These conditions allow us to abstract from any specific matching method and establish asymptotic theory for a general class of matching estimators. Notably, this class includes nearest-neighbor matching, with either a fixed or a diverging number of neighbors, as well as kernel matching. To account for the uncertainty introduced by the matching step, we propose a unified variance estimator that is consistent for every matching procedure covered by our framework. 

We illustrate our results with simulations and an empirical application. In our empirical application, we revisit \cite{Cai_2016_AEJ}, who studies the effect of a tobacco insurance program on the borrowing and saving behavior of the treated households. \cite{Cai_2016_AEJ} uses triple-differences as the main empirical design, and her main regression specification is similar to the covariate-level 3WFE regression. We find that our matched triple-differences estimates can differ in magnitude from those produced by her regression specification and lead to qualitatively different results for some outcomes of interest. Specifically, the estimates from her regression specification imply that the program’s effect on net saving switches sign over time (positive in years $0$ and $3$ and negative in year $5$), whereas our matched triple-differences estimates show no
significant effect in any post-treatment year.

\subsection{Related Literature and Contribution}
This paper contributes to four strands of literature. First, it contributes to the literature on triple-differences designs and the 3WFE method. Triple-differences was first proposed by \citet{Gruber_1994_AER}. \citet{Olden_Moen_2022_EJ} formalize the triple-differences estimator in a two-period setting without covariates. Recent work has begun to examine the validity of 3WFE methods and to study triple-differences designs with covariates. Three closely related papers are \citet{Strezhnev_2023_wp}, \citet{Leventer_2025_wp}, and \citet{Ortiz-Villavicencio_SantAnna_2025_working}. \citet{Strezhnev_2023_wp} studies the limitations of 3WFE regressions \emph{without} covariates under staggered adoption, where identification relies on an unconditional parallel gaps assumption. Our paper instead focuses on 3WFE regressions with covariates, where identification relies on a conditional parallel gaps assumption. We show that these regressions can fail even under simultaneous adoption, a setting in which 3WFE without covariates is valid under the unconditional parallel gaps assumption.

Both \citet{Leventer_2025_wp} and \citet{Ortiz-Villavicencio_SantAnna_2025_working} study triple-differences designs with covariates. \citet{Leventer_2025_wp} examines an empirical practice distinct from the 3WFE approach studied in our paper. Specifically, the paper shows that treating triple differences with covariates as the difference between two conditional DiD comparisons is invalid. \citet{Ortiz-Villavicencio_SantAnna_2025_working} also examine the practice of differencing two conditional DiD comparisons. In addition, they provide Monte Carlo evidence that the two 3WFE specifications considered in our paper can be biased in a two-period setting. Our paper differs from and complements these studies in two respects. First, we provide an analytical diagnosis that characterizes the distinct limitations of the two commonly used 3WFE specifications with covariates. This diagnosis explains the specific reasons why each specification can fail and characterizes the restrictions under which it is valid, thereby complementing the simulation evidence in \citet{Ortiz-Villavicencio_SantAnna_2025_working}. Second, the estimators proposed by \citet{Leventer_2025_wp} and \citet{Ortiz-Villavicencio_SantAnna_2025_working} are based on inverse-probability weighting and doubly robust estimation, whereas our proposed estimator is based on fully nonparametric matching methods.

Second, this paper contributes to the literature on cross-sectional matching methods. Earlier work focuses on asymptotic theory and inference for specific matching methods, such as nearest-neighbor matching (\citeinb{Abadie-Imbens_2006_ECMA}; \citeinb{Lin-Ding-Han_2023_ECMA}; \citeinb{Abadie-Spiess_2022_JASA}) and kernel matching (\citeinb{Heckman-etal_1998_ECMA}). More recently, \cite{Lin_Han_2025_JOE} and \cite{Meng_et_al_2026_wp} have developed general asymptotic theory and inference procedures that cover a class of matching methods. As in these two papers, our asymptotic framework also abstracts from a specific matching method but differs from them in two aspects. First, our framework is based on regularity conditions on the matching procedure that are non-nested with those in \cite{Lin_Han_2025_JOE} and \cite{Meng_et_al_2026_wp}. Second, we develop a different variance estimator. Our variance estimator builds on that of \cite{Abadie-Imbens_2006_ECMA} but extends it to the triple-differences setting and to a class of matching procedures, which requires a different proof strategy.

Third, this paper adds to the line of work that combines matching with panel data, which has primarily focused on the difference-in-differences setting. \citet{Heckman-Ichimura-Todd_1997_ReStud} propose a difference-in-differences matching estimator for a two-period setting based on kernel matching and local linear regression; \citet{Liu_Vazquez_Bare_2026_wp} study nearest-neighbor matching combined with difference-in-differences designs in settings with multiple periods and staggered adoption. We extend this literature to triple-differences designs: we lay out how to combine matching with triple-differences designs and develop asymptotic theory and inference for a class of matching methods.

Finally, our work is broadly related to the literature on interpreting OLS estimands in regression specifications with covariates \citep{Sloczynski_2022_Restat,Sloczynski_2026_Restud,Blandol_et_al_2026_Restud}. Specifically, the implied regression weights of the covariate-trend specification are closely related to those studied by \citet{Chatt_Zubi_2023_biometrika} in the cross-sectional setting and \citet{Caetano-Callaway_2026_wp} in the difference-in-differences setting. However, we show that the implied regression weights involve a pooling issue that has no counterpart in either the cross-sectional or the simultaneous-adoption DiD setting.
 
The remainder of the paper is organized as follows. Section \ref{section:econometricssetup} presents the motivating
example, introduces the setup and notation, and establishes the identification result and the covariate balancing condition. Section \ref{section:empiricaldiagnose} diagnoses the two 3WFE regression specifications commonly used in empirical practice. Section \ref{section:matchedtriplediffereces} introduces the matched triple-differences framework, establishes the consistency and asymptotic normality of the resulting estimators, and develops the variance estimator for inference. Section \ref{section:simulation} presents simulation studies, Section \ref{section:empiricalapplication} illustrates our methods by revisiting \citet{Cai_2016_AEJ}, and Section \ref{section:conclusion} concludes.

\section{Setup, Estimand, and Identification}\label{section:econometricssetup}
\subsection{Motivating Example}
As a motivating example, we use the study by \citet{Cai_2016_AEJ} to illustrate a triple-differences design and explain how conditioning on covariates can make its identifying assumption more credible. \citet{Cai_2016_AEJ} uses household-level panel data from $12$ tobacco-producing counties in Jiangxi Province, China, to study how the provision of a tobacco insurance program affects rural households' borrowing and saving behavior. In these counties, some households (hereafter, tobacco households) rely on tobacco as a main source of income, but tobacco production is vulnerable to adverse weather events. In 2003, one county (hereafter, the exposed county) collaborated with the People's Insurance Company of China (PICC) to design and launch a tobacco production insurance program that protected tobacco households against losses from weather-related disasters.

The key feature of this program is that a household is treated if and only if it satisfies two criteria simultaneously: exposure and eligibility. A household is exposed if it resides in the county offering the insurance program, and eligible if it grows tobacco. Households outside the exposed county and non-tobacco households are untreated. 

To estimate the effect of the insurance program on tobacco households in the exposed county, a triple-differences design is more appealing than a difference-in-differences design because it accounts for both county- and industry-specific shocks, such as changes in local economic conditions and in the tobacco market, respectively. To see this, consider a difference-in-differences comparison of tobacco households inside and outside the exposed county. This comparison accounts for industry-specific shocks common to tobacco households across counties. However, it may be confounded by county-specific shocks that affect households' financial decisions differently in the exposed and unexposed counties, in which case the parallel trends assumption fails. A triple-differences design adds a second difference-in-differences comparison, between non-tobacco households inside and outside the exposed county, to capture these county-specific shocks. Differencing the two comparisons removes this bias caused by the violation of the parallel trends assumption\footnote{Alternatively, one could start from a difference-in-differences comparison of tobacco and non-tobacco households within the exposed county, which accounts for county-specific shocks. This comparison, however, may be confounded by industry-specific shocks. The triple-differences design removes this bias symmetrically, by differencing out the analogous comparison in the unexposed counties.}. Figure \ref{figure:fourhouseholds} illustrates the two difference-in-differences comparisons underlying this triple-differences design.

Moreover, county- and industry-specific shocks are likely to affect households differently depending on their characteristics. For example, households of different sizes may face different budget constraints and therefore adjust their borrowing and saving differently in response to the same shock. Consequently, deviations from parallel trends may depend on household covariates. In this setting, conditioning on covariates makes the triple-differences identification strategy more credible, because households with similar covariates tend to experience similar deviations from parallel trends.

\begin{figure}[htbp]
\centering
\begin{tikzpicture}[
    box/.style={
        draw,
        rounded corners,
        minimum width=4.2cm,
        minimum height=1.8cm,
        align=center,
        font=\small
    },
    treated/.style={
        box,
        very thick,
        fill=gray!15
    }
]

\node[font=\small\bfseries] at (-3.5,3.25) {Exposed};
\node[font=\small\bfseries] at ( 3.5,3.25) {Unexposed};

\node[font=\small\bfseries, rotate=90] at (-7.0, 1.5) {Tobacco};
\node[font=\small\bfseries, rotate=90] at (-7.0,-1.5) {Non-tobacco};

\node[treated] (A) at (-3.5,1.5)
    {\textcolor{red}{\textbf{Exposed Tobacco}}\\
     \textcolor{red}{\textbf{(Treated)}}};
\node[box] (B) at (3.5,1.5)
    {Unexposed Tobacco\\ (Untreated)};
\node[box] (C) at (-3.5,-1.5)
    {Exposed Non-tobacco\\ (Untreated)};
\node[box] (D) at (3.5,-1.5)
    {Unexposed Non-tobacco\\ (Untreated)};

\draw[<->, thick]
    (A.east) -- node[above, font=\small]
    {$\mathbf{DiD_1}$} (B.west);

\draw[<->, thick]
    (C.east) -- node[below, font=\small]
    {$\mathbf{DiD_2}$} (D.west);

\node[align=center, font=\small] at (0,0)
    {$\mathbf{DDD}=\mathbf{DiD_1}-\mathbf{DiD_2}$};

\end{tikzpicture}
\caption{Four types of households and the two DiD comparisons}
\label{figure:fourhouseholds}
\end{figure}

\newpage
\noindent
\subsection{Econometric Setup and Identification}
Consider a balanced panel that follows $n$ units over $T$ periods, $t=1,\dots,T$. The treatment is implemented simultaneously in a single period $t^*$, with $2\leq t^*\leq T$, so that $t<t^*$ are the pre-treatment periods and $t\geq t^*$ are the post-treatment periods. Each unit is characterized by two time-invariant binary indicators: an exposure indicator \(G\) and an eligibility indicator \(S\). Specifically,
\begin{itemize}
    \item the \emph{exposure} indicator \(G \in \{0,1\}\) indicates whether the unit belongs to a group (e.g., a county) that is exposed to the treatment in the post-treatment periods;
    \item the \emph{eligibility} indicator \(S \in \{0,1\}\) indicates whether the unit belongs to a subpopulation (e.g., households growing tobacco) that is eligible for the treatment.
\end{itemize}
These two indicators partition the population into four cross-sectional subgroups displayed in Figure \ref{figure:foursubgroup}. The key feature of the triple-differences design is that only units satisfying both criteria are treated: a unit must belong to the exposed group and to the eligible subpopulation. The treatment is an absorbing state: once a unit gets treated, it remains treated through period $T$. Let $D\in\{0,1\}$ indicate whether the unit is treated in the post-treatment periods, and let \(D_{t} \in \{0,1\}\) indicate whether the unit is treated in period $t$, so that
\[D=GS, \quad D_{t}=GS\1(t\geq t^*)   \]
In Figure~\ref{figure:foursubgroup}, the treated subgroup is the top-left cell, \((G=1,S=1)\); we refer to the remaining three cells as the \emph{untreated subgroups}.

\begin{figure}[htbp]
\centering
\begin{tikzpicture}[
    box/.style={
        draw,
        rounded corners,
        minimum width=4.2cm,
        minimum height=1.8cm,
        align=center,
        font=\small
    },
    treated/.style={
        box,
        very thick,
        fill=gray!15
    }
]

\node[font=\small\bfseries] at (-3.5,3.25) {Exposed};
\node[font=\small] at (-3.5,2.75) {\((G=1)\)};

\node[font=\small\bfseries] at (3.5,3.25) {Unexposed};
\node[font=\small] at (3.5,2.75) {\((G=0)\)};

\node[font=\small\bfseries, rotate=90] at (-7.0,1.5) {Eligible};
\node[font=\small, rotate=90] at (-6.5,1.5) {\((S=1)\)};

\node[font=\small\bfseries, rotate=90] at (-7.0,-1.5) {Ineligible};
\node[font=\small, rotate=90] at (-6.5,-1.5) {\((S=0)\)};

\node[treated] (A) at (-3.5,1.5) {\textbf{Treated subgroup}};
\node[box]     (B) at ( 3.5,1.5) {Unexposed Eligible};
\node[box]     (C) at (-3.5,-1.5) {Exposed Ineligible};
\node[box]     (D) at ( 3.5,-1.5) {Unexposed Ineligible};

\end{tikzpicture}
\caption{Four cross-sectional subgroups}
\label{figure:foursubgroup}
\end{figure}

We adopt the potential-outcome framework in which the potential outcome depends on the entire treatment sequence. Let $Y_{t}(d_{1},\dots,d_{T})$ denote the potential outcome in period $t$ under treatment sequence $(d_{1},\dots,d_{T})$, and let $\mathbf{0}_s$ and $\mathbf{1}_s$ denote $s$-dimensional vectors of zeros and ones, respectively. Because adoption is simultaneous and the treatment is absorbing, only two treatment paths can occur in our setting: treated units follow $(\mathbf{0}_{t^*-1},\mathbf{1}_{T-t^*+1})$ and untreated units follow $\mathbf{0}_{T}$. We therefore abbreviate
\[ Y_{t}(1)\equiv Y_{t}(\mathbf{0}_{t^*-1},\mathbf{1}_{T-t^*+1}), \quad  Y_{t}(0)\equiv Y_{t}(\mathbf{0}_{T})  \]
 As is common in the causal inference literature on panel data (\citeinb{Chaise-Dhaut_2022_EJ}; \citeinb{Roth-etal_2023_JoE}), we assume SUTVA and consistency to link the observed and potential outcomes. As a result,
\begin{align*}
    Y_{t}= Y_{t}(1)D_t+Y_{t}(0)(1-D_t)
\end{align*}
Our parameters of interest are the average treatment effects on the treated subgroup (ATT) $l$ periods after the initial treatment period,
\[
\tau_l=\E\!\left[Y_{t^*+l}(1)-Y_{t^*+l}(0)\mid D=1\right] = \E\!\left[Y_{t^*+l}(1)-Y_{t^*+l}(0)\mid G=1,\ S=1\right], \quad l=0,1,\dots,T-t^*
\]

To identify $\tau_l$, we make the following three assumptions: 
\begin{assumption}\label{assumption:noanticipation}
    (No Anticipation) For all $t=1,\dots,T,$ $Y_{t}(d_1,\dots,d_T)=Y_{t}(d_1,\dots,d_t)$
\end{assumption}
The no-anticipation assumption requires that future treatments do not affect current outcomes: the potential outcome in period $t$ depends only on the treatment path up to period $t$.  

Let $X \in \mathbbm{R}^q$ denote a vector of time-invariant, pre-treatment covariates of dimension $q$.
\begin{assumption} \label{assumption:overlap}
    (Strong Overlap) There exists \(\epsilon>0\) such that    
        \[
        \Pr(G=g,S=s\mid X=x)>\epsilon
        \]
        for every \((g,s)\in\{0,1\}^2\) and almost every \(x\in \mathcal{X}\), where \(\mathcal X\) denotes the support of \(X\).
\end{assumption}
The strong overlap assumption requires that, at almost every covariate value $x$, each of the four subgroups occurs with probability bounded away from zero: within each covariate stratum, the population contains units from all four subgroups of Figure~\ref{figure:foursubgroup}.\footnote{Assumption \ref{assumption:overlap} is stronger than what identification of $\tau_l$ requires: it would suffice to impose overlap almost everywhere on the support of $X$ conditional on the treated subgroup, rather than on the whole support $\mathcal{X}$. We maintain the stronger version for simplicity. }

The key identifying assumption of the triple-differences design replaces the familiar parallel trends assumption of the difference-in-differences designs with a parallel gaps assumption, following the recent triple-differences literature (\citeinb{Olden_Moen_2022_EJ}; \citeinb{Ortiz-Villavicencio_SantAnna_2025_working}).

\begin{assumption}\label{assumption:parallelgap}
    (Conditional Parallel Gaps) For all $t=2,\dots,T$,
    \begin{align*}
    &\E[Y_{t}(0)-Y_{t-1}(0)\mid G=1,\ S=1,\ X] - \E[Y_{t}(0)-Y_{t-1}(0)\mid G=0,\ S=1,\ X]\\
     &=   \E[Y_{t}(0)-Y_{t-1}(0)\mid G=1,\ S=0,\ X] - \E[Y_{t}(0)-Y_{t-1}(0)\mid G=0,\ S=0,\ X],\ a.s.
    \end{align*}
\end{assumption}
Figure~\ref{figure:ddd_parallel_gaps} illustrates the intuition behind the conditional parallel gaps assumption. The assumption allows each of the two difference-in-differences comparisons used in the triple-differences design to violate the conditional parallel trends assumption, and allows the size of this violation to vary with the covariates $X$. It requires only that, at each covariate value $X=x$, the violation be the same in both comparisons. In our motivating example, the conditional parallel gaps assumption allows for both county- and industry-specific shocks, and allows these shocks to affect households' borrowing and saving behavior differently depending on their characteristics. It requires, however, that county-specific shocks generate the same deviation from parallel trends for tobacco and non-tobacco households with similar covariates, or, equivalently, that industry-specific shocks generate the same deviation from parallel trends for households inside and outside the exposed county with similar covariates.
 
\begin{figure}[htbp]
\centering
\begin{subfigure}[t]{0.48\textwidth}
\centering
\caption*{\large\bfseries Eligible $(S=1)$}
\begin{tikzpicture}[x=0.6cm,y=0.7cm]
  \draw[axis] (0,0) -- (8.6,0) node[right] {\scriptsize Time};
  \draw[axis] (0,0) -- (0,6.2) node[above] {\scriptsize Outcome};
  \draw (1,-0.08) -- (1,0.08) node[below=3pt, font=\scriptsize] {Pre};
  \draw (5,-0.08) -- (5,0.08) node[below=3pt, font=\scriptsize] {Post};
  \coordinate (C0)  at (1,1.2);
  \coordinate (C1)  at (5,2.4);
  \coordinate (T0)  at (1,3.0);
  \coordinate (TB1) at (5,4.2);
  \coordinate (TA1) at (5,5.7);
  \draw[control]   (C0) -- (C1);
  \draw[treated]   (T0) -- (TA1);
  \draw[benchmark] (T0) -- (TB1);
  \fill[blue]         (C0)  circle (1.5pt);
  \fill[blue]         (C1)  circle (1.5pt);
  \fill[red]          (T0)  circle (1.5pt);
  \fill[red]          (TA1) circle (1.5pt);
  \fill[red!75!black] (TB1) circle (1.3pt);
  \draw[bias] ($(TB1)+(0.35,0.05)$) -- ($(TA1)+(0.35,-0.05)$);
  \node[font=\scriptsize, inner sep=1pt, anchor=west]
    at ($(TB1)!0.5!(TA1)+(0.55,0)$) {$\delta_1(x)$};
  \node[anchor=west, font=\scriptsize] at ($(TA1)+(0.55,0)$) {Exposed $(G{=}1,S{=}1)$};
  \node[anchor=west, font=\scriptsize] at ($(TB1)+(0.55,0)$) {Trend under CPT};
  \node[anchor=west, font=\scriptsize] at ($(C1) +(0.55,0)$) {Unexposed $(G{=}0,S{=}1)$};
\end{tikzpicture}
\end{subfigure}
\hfill
\begin{subfigure}[t]{0.48\textwidth}
\centering
\caption*{\large\bfseries Ineligible $(S=0)$}
\begin{tikzpicture}[x=0.6cm,y=0.7cm]
  \draw[axis] (0,0) -- (8.6,0) node[right] {\scriptsize Time};
  \draw[axis] (0,0) -- (0,6.2) node[above] {\scriptsize Outcome};
  \draw (1,-0.08) -- (1,0.08) node[below=3pt, font=\scriptsize] {Pre};
  \draw (5,-0.08) -- (5,0.08) node[below=3pt, font=\scriptsize] {Post};
  \coordinate (C0)  at (1,1.0);
  \coordinate (C1)  at (5,1.5);
  \coordinate (T0)  at (1,2.5);
  \coordinate (TB1) at (5,3.0);
  \coordinate (TA1) at (5,4.5);
  \draw[control]   (C0) -- (C1);
  \draw[treated]   (T0) -- (TA1);
  \draw[benchmark] (T0) -- (TB1);
  \fill[blue]         (C0)  circle (1.5pt);
  \fill[blue]         (C1)  circle (1.5pt);
  \fill[red]          (T0)  circle (1.5pt);
  \fill[red]          (TA1) circle (1.5pt);
  \fill[red!75!black] (TB1) circle (1.3pt);
  \draw[bias] ($(TB1)+(0.35,0.05)$) -- ($(TA1)+(0.35,-0.05)$);
  \node[font=\scriptsize, inner sep=1pt, anchor=west]
    at ($(TB1)!0.5!(TA1)+(0.55,0)$) {$\delta_0(x)$};
  \node[anchor=west, font=\scriptsize] at ($(TA1)+(0.55,0)$) {Exposed $(G{=}1,S{=}0)$};
  \node[anchor=west, font=\scriptsize] at ($(TB1)+(0.55,0)$) {Trend under CPT};
  \node[anchor=west, font=\scriptsize] at ($(C1) +(0.55,0)$) {Unexposed $(G{=}0,S{=}0)$};
\end{tikzpicture}
\end{subfigure}
\caption{Conditional parallel gaps}
\label{figure:ddd_parallel_gaps}
\vspace{0.3em}
\begin{minipage}{0.95\textwidth}
\footnotesize
\textit{Note:} This figure illustrates the conditional parallel gaps assumption at a fixed covariate value \(X=x\). In each panel, the dotted line is the trend the exposed units would follow if conditional parallel trends (CPT) held. In the left panel, \(\delta_1(x)\) is the deviation from conditional parallel trends in untreated outcomes between eligible exposed units, \((G=1,S=1)\), and eligible unexposed units, \((G=0,S=1)\). In the right panel, \(\delta_0(x)\) is the corresponding deviation between ineligible exposed units, \((G=1,S=0)\), and ineligible unexposed units, \((G=0,S=0)\). The assumption allows conditional parallel trends to be violated in the two DiD comparisons used in the triple-differences design, but requires the violation to be the same across the two DiD comparisons: \(\delta_1(x)=\delta_0(x)\).
\end{minipage}
\end{figure}
Although the conditional parallel gaps assumption allows violations of conditional parallel trends, we emphasize that the former is not a weaker version of the latter. Mathematically, neither condition implies the other. A triple-differences design should therefore not be interpreted as a mechanical robustness check on a DiD design: the two designs rest on distinct identifying assumptions, and their credibility depends on the empirical context. Introducing the second difference-in-differences comparison improves credibility when the second comparison is subject to the same parallel-trends violation as the first, but it can introduce bias conditional parallel trends hold but conditional parallel gaps fail.

\begin{theorem}
\label{theorem:identification}
Under Assumptions \ref{assumption:noanticipation},
\ref{assumption:overlap}, and \ref{assumption:parallelgap}, $\tau_l$ is identified for every $l=0,1,\dots,T-t^*$,
\begin{equation*}
\begin{aligned}
\tau_l
&=
\E[\Delta Y_l\mid G=1,S=1] 
-\E\!\left[
    \frac{f_{11}(X)}{f_{01}(X)}
    \Delta Y_l
    \mid G=0,S=1
\right] \\
&\quad
-\E\!\left[
    \frac{f_{11}(X)}{f_{10}(X)}
    \Delta Y_l
    \mid G=1,S=0
\right] 
+\E\!\left[
    \frac{f_{11}(X)}{f_{00}(X)}
    \Delta Y_l
    \mid G=0,S=0
\right] ,
\end{aligned}
\end{equation*}
where $\Delta Y_l\equiv Y_{t^*+l}-Y_{t^*-1}$ and \(f_{gs}(x)\) denotes the conditional density of \(X\) given
$(G,S)=(g,s)$.
\end{theorem}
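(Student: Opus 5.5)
The proof proceeds in three steps: express $\tau_l$ as an outer expectation over the treated covariate distribution of a conditional quantity, use the conditional parallel gaps assumption to replace the unobserved counterfactual trend with observed conditional trends, and then convert each conditional expectation into a density-ratio-weighted expectation over the corresponding untreated subgroup. Throughout, write $m_{gs}(x)\equiv\E[\Delta Y_l\mid G=g,S=s,X=x]$. Strong overlap (Assumption~\ref{assumption:overlap}) guarantees that each $m_{gs}$ is well defined for almost every $x$ in the support of $X$ given $(G,S)=(1,1)$. It also ensures that the densities $f_{gs}$ are mutually absolutely continuous, with $f_{11}/f_{gs}$ bounded. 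This follows from Bayes' rule, $f_{gs}(x)=\Pr(G=g,S=s\mid X=x)f(x)/\Pr(G=g,S=s)$, so the ratio $f_{11}/f_{gs}$ is a ratio of conditional probabilities bounded below by $\epsilon$, times a constant.

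\textbf{Step 1: counterfactual trend.} By consistency and no anticipation (Assumption~\ref{assumption:noanticipation}), the treated units satisfy $Y_{t^*-1}=Y_{t^*-1}(0)$, because under $(\mathbf{0}_{t^*-1},\mathbf{1}_{T-t^*+1})$ the outcome in period $t^*-1$ depends only on the all-zero history up to $t^*-1$. Likewise $Y_{t^*+l}=Y_{t^*+l}(1)$. Hence
\[
\tau_l=\E[\Delta Y_l\mid G=1,S=1]-\E\!\left[Y_{t^*+l}(0)-Y_{t^*-1}(0)\mid G=1,S=1\right].
\]
For the untreated subgroups $(g,s)\neq(1,1)$, we have $Y_t=Y_t(0)$ in every period, so $m_{gs}(x)=\E[Y_{t^*+l}(0)-Y_{t^*-1}(0)\mid G=g,S=s,X=x]$.

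\textbf{Step 2: apply parallel gaps.} Write $Y_{t^*+l}(0)-Y_{t^*-1}(0)=\sum_{t=t^*}^{t^*+l}\bigl(Y_t(0)-Y_{t-1}(0)\bigr)$ and apply Assumption~\ref{assumption:parallelgap} to each one-period increment at $X=x$. Summing over $t$ gives, almost surely,
\[
\E[Y_{t^*+l}(0)-Y_{t^*-1}(0)\mid G=1,S=1,X]=m_{01}(X)+m_{10}(X)-m_{00}(X).
\]
Taking the expectation over $X\mid G=1,S=1$ by iterated expectations yields
\[
\tau_l=\E[\Delta Y_l\mid 11]-\int\bigl(m_{01}+m_{10}-m_{00}\bigr)f_{11}\,dx .
\]

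\textbf{Step 3: change of measure.} For each untreated cell,
\[
\int m_{gs}(x)f_{11}(x)\,dx=\int m_{gs}(x)\frac{f_{11}(x)}{f_{gs}(x)}f_{gs}(x)\,dx=\E\!\left[\frac{f_{11}(X)}{f_{gs}(X)}m_{gs}(X)\,\middle|\, G=g,S=s\right].
\]
By iterated expectations this equals $\E\bigl[\tfrac{f_{11}(X)}{f_{gs}(X)}\Delta Y_l\mid G=g,S=s\bigr]$. Substituting into the expression from Step~2 gives the stated formula. The weights are bounded by Step~0, so integrability follows from integrability of $\Delta Y_l$, which is implicitly assumed.

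The main obstacle is the measure-theoretic bookkeeping. The parallel gaps equality holds only almost surely with respect to the marginal law of $X$, but it must be used under the law of $X$ given $(G,S)=(1,1)$. The conditional expectations $m_{gs}$ are defined only almost everywhere under $f_{gs}$, yet they are integrated against $f_{11}$. Mutual absolute continuity from Step~0 resolves both issues, and I will state it explicitly as the lemma on which Steps~2 and~3 rely.
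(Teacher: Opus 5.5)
Your proposal is correct and follows essentially the same route as the paper. Both arguments telescope the one-period parallel gaps into the long difference, and both use consistency, no anticipation and iterated expectations to write $\tau_l$ as $\E[\Delta Y_l\mid G=1,S=1]-\int (m_{01}+m_{10}-m_{00})f_{11}\,dx$, then reweight each integral by $f_{11}/f_{gs}$. Your explicit use of strong overlap, giving bounded density ratios and absolute continuity of the $(1,1)$ covariate law with respect to each $(g,s)$ law, makes rigorous a step the paper leaves implicit.
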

Theorem \ref{theorem:identification} gives a weighting representation of $\tau_l$. The treated subgroup is left unweighted, and each untreated subgroup $(g,s)$ is reweighted by the density ratio $f_{11}(X)/f_{gs}(X)$. The density ratio is a balancing weight: it aligns the covariate distribution of the untreated subgroup $(g,s)$ with that of the treated subgroup. The identification result thus pins down a balancing weight profile:
\begin{equation}
\label{eq:balancing-weights}
w_{11}(X)=1,
\qquad
w_{gs}(X)=\frac{f_{11}(X)}{f_{gs}(X)}
\quad\text{for }(g,s)\in\{(0,1),(1,0),(0,0)\}.
\end{equation}
This profile yields a sufficient condition for a weighting estimand to equal the ATT, which we call the
\emph{covariate balancing condition}: the estimand’s weighting profile (i)balances the covariate distribution of each untreated subgroup with that ofthe treated subgroup, and (ii) attains this balance while leaving the treated subgroup unweighted. Consequently, an estimator that converges to such a weighting estimand is consistent for the ATT. This condition plays a key role in both our diagnosis of empirical practice in Section \ref{section:empiricaldiagnose} and our construction of the matching estimator in Section \ref{section:matchedtriplediffereces}.

\section{Diagnosis of Empirical Practice}\label{section:empiricaldiagnose}

\noindent
In this section, we diagnose the two leading empirical practices in the applied literature. Consider a sample $\{Y_{i1},Y_{i2},\dots, Y_{iT}, G_i,S_i,X_i\}_{i=1}^n$, drawn independently from the same distribution.
\begin{assumption}\label{assumption:iid}
    (Random Sampling) $\{Y_{i1},Y_{i2},\dots, Y_{iT}, G_i,S_i,X_i\}_{i=1}^n$ is a sample of independent and identically distributed observations from a population $(Y_{1},Y_{2},\dots,Y_{T},G,S,X)$.
\end{assumption}
As documented in our survey of $18$ empirical papers (Table \ref{table:ddd_literature}), empirical practice in the literature largely follows two types of 3WFE regression specifications. In the first specification, which we call the covariate-trend 3WFE regression, the covariates $X_i$ enter with covariate specific trends, allowing for time varying coefficients:
\begin{equation}
\label{eq:spec1}
Y_{it}
= \lambda_{G_iS_i}
+ \lambda_{G_i t}
+ \lambda_{S_i t}
+ \sum_{l\neq -1}
    \tau_l^{CT}\, G_iS_i\1(t-t^*=l)
+ \sum_{r=1}^T X_i'\theta_r\1(t=r)
+ u_{it}.
\end{equation}
In the second specification, which we call the covariate-level 3WFE regression, the covariates $X_i$ enter only in  levels, with time-invariant coefficients:
\begin{equation}
\label{eq:spec2}
Y_{it}
= \lambda_{G_iS_i}
+ \lambda_{G_i t}
+ \lambda_{S_i t}
+ \sum_{l\neq -1}
    \tau_l^{CL}\, G_iS_i\1(t-t^*=l)
+ X_i'\theta
+ u_{it}.
\end{equation}
In both specifications, $\lambda_{G_iS_i}$, $\lambda_{G_i t}$, and $\lambda_{S_i t}$ are the subgroup, exposure-by-time, and eligibility-by-time fixed effects, respectively. The OLS coefficients $\widehat \tau_l^{CT}$ and $\widehat \tau_l^{CL}$ on the triple interaction term $ G_iS_i\1(t-t^*=l)$ are often interpreted as estimates of $\tau_l$, the ATT $l$ periods after the treatment. In a balanced panel, replacing the subgroup fixed effect $\lambda_{G_iS_i}$ with an individual fixed effect $\lambda_i$ yields identical OLS estimates in both the covariate-trend and covariate-level specifications. Because more than two-thirds the papers in our survey use subgroup fixed effects, we focus our diagnosis on the specifications with subgroup fixed effects; since the estimates are identical, the diagnosis below also applies to the specifications with individual fixed effects.

For each specification, our diagnosis consists of two parts. First, we rewrite the corresponding OLS estimand as a weighting estimand and determine whether its weighting profile satisfies the covariate balancing condition. Second, if it does not, we characterize the additional assumptions under which the OLS estimand equals the ATT. Our results show that common empirical practice generally fails to consistently estimate the ATT: the induced weighting estimands do not correctly align the covariate distributions of the untreated subgroups with that of the treated subgroup, and they require additional restrictive assumptions to equal the ATT.

\subsection{Covariate-trend 3WFE Regression}
Recall that $D=GS$ is the treatment indicator. Let $L(D\mid G,S,X)$ denote the population linear projection of $D$ onto $(1,G,S,X)$, let $\kappa\equiv\E[(D-L(D\mid G,S,X))^2]$ denote the mean squared projection residual, and let $p_{gs}=P(G=g,S=s)$ denote the subgroup proportions. Let $\tau_l(X)=\E\!\left[Y_{t^*+l}(1)-Y_{t^*+l}(0)\mid G=1,\ S=1,X\right]  $ denote the conditional ATT, and let $ L_0(\Delta Y_l|G,S,X)$ denote the conditional linear projection of $\Delta Y_l$ on $(1,G,S,X)$ in the untreated population $D=0$.
\begin{lemma}\label{lemma:covariatetrend}
\textbf{(I)} Under Assumptions~\ref{assumption:overlap} and \ref{assumption:iid},
\begin{align*}
\widehat \tau_l^{CT} &\xrightarrow{p}\tau_{l}^{CT}\\
&=\E\!\left[\frac{p_{11}\left(1-L(D\mid G,S,X)\right)}{\kappa}\, \Delta Y_l \;\middle|\; G=1,S=1\right]
-\E\!\left[\frac{p_{01}\,L(D\mid G,S,X)}{\kappa}\, \Delta Y_l \;\middle|\; G=0,S=1\right] \\
&\quad-\E\!\left[\frac{p_{10}\,L(D\mid G,S,X)}{\kappa}\, \Delta Y_l \;\middle|\; G=1,S=0\right]
+\E\!\left[\frac{-p_{00}\,L(D\mid G,S,X)}{\kappa}\, \Delta Y_l \;\middle|\; G=0,S=0\right].
\end{align*}
\textbf{(II)} $\tau_l^{CT}=\tau_l$ if the treatment effect is \emph{homogeneous} in $X$, i.e., $\tau_l(X)=\tau_l$ and
\emph{either} of the following holds:

\begin{enumerate}
    \item Linear conditional means with a common slope.
    \[
        \E[\Delta Y_l \mid G=g,S=s,X]
        = L_0(\Delta Y_l|g,s,X) =  \gamma_{gs} + X'\gamma,
        \qquad (g,s)\in\{(0,0),(0,1),(1,0)\}.
    \]
    \item An inverse linear functional form for generalized propensity scores. Define $e_{gs}(X)=P(G=g,S=s|X)$.
    \begin{align*}
        e_{gs}(X)&=\frac{[(-1)^{g+s}L(D|g,s,X)]^{-1}}{\sum_{g=0}^1\sum_{s=0}^1[gs-(-1)^{g+s}L(D|g,s,X)]^{-1}}\\
        &=
        \frac{
                 [            \lambda_{gs} +(-1)^{g+s}X'\lambda ]^{-1}
        }{
            \displaystyle
            \sum_{g=0}^{1}\sum_{s=0}^{1}
             [            \lambda_{gs} +(-1)^{g+s}X'\lambda ]^{-1}
        }.
    \end{align*}
        
\end{enumerate}
\end{lemma}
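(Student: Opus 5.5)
The plan is to reduce the panel regression \eqref{eq:spec1} to a single cross-sectional regression in first differences, and then to apply Frisch--Waugh--Lovell (FWL) together with a potential-outcome decomposition.

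\emph{Reduction to a cross-sectional regression.} Let $Z_i=(1,G_i,S_i,X_i')'$. The regressors $\lambda_{G_it}$, $\lambda_{S_it}$ and $X_i'\theta_t$ give every element of $Z_i$ a period-specific coefficient; together they include a period-specific intercept. The subgroup effect $\lambda_{G_iS_i}$ adds one time-invariant coefficient on $D_i=G_iS_i$, and the event-time dummies give $D_i$ a free coefficient in every period except $t^*-1$. Hence \eqref{eq:spec1} is a reparametrization of the unrestricted model $Y_{it}=Z_i'\beta_t+D_i\gamma_t+u_{it}$, $t=1,\dots,T$, with $\tau_l^{CT}=\gamma_{t^*+l}-\gamma_{t^*-1}$. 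Because the regressors are the same in every period, pooled OLS separates into $T$ period-by-period cross-sectional regressions (the SUR/identical-regressor argument). OLS is linear in the dependent variable, so $\widehat\tau_l^{CT}$ equals the OLS coefficient on $D_i$ in the regression of $\Delta Y_{il}$ on $(1,G_i,S_i,X_i,D_i)$.

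\emph{Part (I).} By FWL, this coefficient equals $\sum_i (D_i-\widehat L_i)\Delta Y_{il}\big/\sum_i (D_i-\widehat L_i)^2$. By Assumption~\ref{assumption:iid}, a law of large numbers (assuming finite second moments) and Slutsky, it converges to $\E[(D-L(D\mid G,S,X))\Delta Y_l]/\kappa$. We need $\kappa>0$. By Assumption~\ref{assumption:overlap} all four cells have positive mass at almost every $x$, so the interaction $GS$ is not an exact linear function of $(1,G,S,X)$. Conditioning on the four cells, $\E[(D-L)\Delta Y_l]=\sum_{g,s}p_{gs}\E[(gs-L)\Delta Y_l\mid g,s]$, and this is exactly the displayed expression. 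The treated cell carries the factor $1-L$ and the three untreated cells carry $-L$.

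\emph{Part (II).} Write $\Delta Y_l=\Delta Y_l(0)+D\,(Y_{t^*+l}(1)-Y_{t^*+l}(0))$, using Assumption~\ref{assumption:noanticipation} to replace $Y_{t^*-1}$ by $Y_{t^*-1}(0)$. The treatment part contributes $\E[(D-L)D\,\tau_l(X)]/\kappa$. Under homogeneity this equals $\tau_l\E[(D-L)D]/\kappa=\tau_l$, because $\E[(D-L)L]=0$. It therefore suffices to show that the bias $\E[(D-L)\Delta Y_l(0)]=\sum_{g,s}p_{gs}\E[(gs-L)m_{gs}(X)\mid g,s]$ is zero, where $m_{gs}(x)=\E[\Delta Y_l(0)\mid g,s,x]$. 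Telescoping Assumption~\ref{assumption:parallelgap} over $t$ gives $m_{11}=m_{01}+m_{10}-m_{00}$.

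Under condition~1, $m_{GS}(X)=a+bG+cS+dD+X'\gamma$ with $a=\gamma_{00}$, $b=\gamma_{10}-\gamma_{00}$ and $c=\gamma_{01}-\gamma_{00}$. Parallel gaps then force $d=0$, so $m_{GS}(X)$ is linear in $(1,G,S,X)$. It is therefore orthogonal to the projection residual $D-L$, and the bias vanishes.

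Under condition~2, rewrite the bias as $\E\big[\sum_{g,s}e_{gs}(X)(gs-L(D\mid g,s,X))m_{gs}(X)\big]$ and substitute for $m_{11}$. The coefficient on each untreated $m_{gs}$ becomes $-e_{gs}L_{gs}\pm e_{11}(1-L_{11})$. The sign is $+$ for $(0,1)$ and $(1,0)$ and $-$ for $(0,0)$, where $L_{gs}$ denotes $L(D\mid g,s,X)$. The inverse-linear form makes each $e_{gs}L_{gs}$ equal, up to the sign $(-1)^{g+s}$, to the common normalizing factor. Hence every coefficient vanishes pointwise in $X$. Linearity of $L$ in $X$ then gives the second displayed expression.

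The step I expect to be the main obstacle is twofold. First, the reduction must verify that the time-invariant $\lambda_{G_iS_i}$ with the $t^*-1$ normalization reproduces exactly one free $D$-coefficient per period. Second, the sign bookkeeping in condition~2 must be checked. In particular, I need to confirm that the $(-1)^{g+s}$ factors and the term $gs-(-1)^{g+s}L$ in the normalization make $e_{gs}L_{gs}=\pm e_{11}(1-L_{11})$ hold with the correct signs for all three untreated cells.
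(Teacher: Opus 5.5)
Your proposal is correct. Part (I) and condition~2 of Part (II) follow the paper's proof. For (I), that means first-difference FWL followed by conditioning on the four cells. For condition~2, it means substituting $m_{11}=m_{01}+m_{10}-m_{00}$ and showing that the coefficient on each untreated $m_{gs}$ vanishes pointwise.

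Your reduction step is more careful than the paper's, which simply differences the model equation. Your reparametrization argument shows that the pooled OLS coefficient itself equals the first-difference FWL coefficient, without presuming correct specification. The check you flagged does go through. $\lambda_{G_iS_i}$ supplies a free time-invariant coefficient on $D_i$, say $\alpha_D$. The $D$-coefficient is then $\alpha_D$ in period $t^*-1$ and $\alpha_D+\tau^{CT}_{t-t^*}$ in every other period, so it is unrestricted period by period.

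The genuine difference is condition~1. The paper proceeds in three steps:
\begin{enumerate}[label=(\alph*)]
\item It subtracts $L_0(\Delta Y_l\mid G,S,X)$, which is orthogonal to $D-L$.
\item It uses that the untreated residual $\Delta Y_l-L_0$ is orthogonal to $L$ within $D=0$. This reduces the numerator to $p_{11}\E[(1-L_{11})(\Delta Y_l-L_0(\Delta Y_l\mid 1,1,X))\mid D=1]$.
\item It invokes parallel gaps and the additivity $L_0(\cdot\mid 1,1,X)=L_0(\cdot\mid 0,1,X)+L_0(\cdot\mid 1,0,X)-L_0(\cdot\mid 0,0,X)$. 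This writes the bias as a weighted sum of the linear-misspecification errors $m_{gs}-L_0(\cdot\mid g,s,X)$ of the three untreated cells, which condition~1 sets to zero.
\end{enumerate}
You instead observe that condition~1 plus parallel gaps make $m_{GS}(X)$ linear in $(1,G,S,X)$ in all four cells, hence orthogonal to $D-L$.

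Your route has three advantages. It is shorter. It makes the normalization $\E[(D-L)D]=\kappa$ explicit, whereas the paper leaves $\E[\widetilde w_{11}(X)\mid D=1]=1$ implicit. It also puts both cases in one frame: the bias $\E[(D-L)m_{GS}(X)]$ is killed either by linearity of $m$ or by cancellation of the weights. The paper's route buys something different: a bias representation valid without condition~1, which shows exactly which misspecification drives the bias.

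On your remaining sign concern, the identities you need are $e_{11}(1-L_{11})=e_{01}L_{01}=e_{10}L_{10}=-e_{00}L_{00}$. Let $d=\sum_{g,s}[gs-(-1)^{g+s}L_{gs}]^{-1}$. For the untreated cells you need $e_{gs}L_{gs}=-(-1)^{g+s}/d$, together with $e_{11}(1-L_{11})=1/d$. The sign is therefore $-(-1)^{g+s}$, not $(-1)^{g+s}$ as you wrote.

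These identities hold when $e_{gs}=[gs-(-1)^{g+s}L_{gs}]^{-1}/d$. This is what the second ($\lambda$) display encodes, and it is what the paper's proof actually uses. The numerator $[(-1)^{g+s}L_{gs}]^{-1}$ printed in the first display is a typo. Taken literally, the four $e_{gs}$ do not in general sum to one, and $e_{11}(1-L_{11})-e_{01}L_{01}=1/(L_{11}d)\neq 0$.

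So argue from the second display. Your sentence that linearity of $L$ turns the first display into the second is true only after this correction.
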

Part (I) of Lemma \ref{lemma:covariatetrend} shows that the OLS estimator from the covariate-trend 3WFE regression converges to a weighting estimand, in which each of the four subgroups is reweighted by weights proportional to the linear projection residual $D-L(D\mid G,S,X)$. This weighting profile is the triple-differences analogue of the implied regression weights studied by \citet{Chatt_Zubi_2023_biometrika} in cross-sectional settings and by \citet{Caetano-Callaway_2026_wp} in DiD settings. In those two settings, the implied regression weights have an appealing property: they balance the covariate means of the treated and untreated groups. In the triple-differences setting, however, the implied regression weights induced by specification \eqref{eq:spec1} fail the covariate balancing condition in three respects. 

First, they target the wrong balance structure (pooling issue). Lemma~\ref{lemma:regressionweight} in the appendix shows that the implied regression weights balance covariate means between the treated subgroup ($D=1$) and the pooled untreated group ($D=0$). In a triple-differences design this is the wrong balance structure because the untreated group ($D=0$) now pools three distinct untreated subgroups, whereas the covariate balancing condition requires aligning the covariate distribution of \emph{each} of them with the treated subgroup separately. Second, the implied regression weights attain only mean balance rather than the balance of the full covariate distribution required by the covariate balancing conditions (mean only balance issue): Two subgroups can still have different covariate distributions even when their covariate means are balanced. Third, the implied regression weights also reweight the treated subgroup (treated-reweighting issue). The covariate balancing condition leaves the treated subgroup unweighted  while the implied regression weights profile reweights the treated subgroup by the weight $p_{11}(1-L(D\mid G,S,X))/\kappa$.

Part (II) of lemma \ref{lemma:covariatetrend} provides the additional sufficient assumptions for $\tau_l^{CT}$ to equal the ATT. The first is that the treatment effect is homogeneous with respect to the covariates. The second is that either the conditional mean functions of the three untreated subgroups are linear in $X$ with a common slope, differing only in their intercepts, or the generalized propensity scores admit the specific inverse linear functional form. Both requirements are restrictive: the first rules out treatment effect heterogeneity in $X$, and the second imposes a parametric form on either the conditional mean functions or the generalized propensity scores.

\begin{remark}
Specification \eqref{eq:spec1} can be enriched in two ways: (i) by further interacting the covariates with the subgroup indicators $G_i$ and $S_i$, and (ii) by including higher-order terms of the covariates. The first enrichment can attain mean balance between each untreated subgroup and the treated subgroup, addressing the pooling issue. The second enrichment can attain balance of higher-order moments of the covariates, partly addressing the mean-only balance issue. However, the enriched specifications still do not satisfy the covariate balancing condition: first, balance of higher-order moments among the four subgroups does not imply balance of the full covariate distribution, and second, the implied regression weights continue to reweight the treated subgroup. Hence, the enriched specifications do not consistently estimate $\tau_l$ in general.
\end{remark}

\subsection{Covariate-level 3WFE Regression}
\begin{lemma}\label{lemma:covariatelevel}
\textbf{(I)}  Under Assumptions~\ref{assumption:overlap} and \ref{assumption:iid}, 
 \begin{align*}
\widehat\tau_l^{CL}&\xrightarrow{p}\tau_l^{CL}\\ 
&=\E[\Delta Y_l \mid G=1,S=1] -\E[\Delta Y_l \mid G=0,S=1]- \E[\Delta Y_l \mid G=1,S=0] + \E[\Delta Y_l \mid G=0,S=0]
\end{align*}
\textbf{(II)} $\tau_l^{CL}=\tau_l$ \emph{if and only if} the parallel gaps assumption holds \textbf{unconditionally}: 
    \begin{align*}
        &\E[\Delta Y_l(0)\mid G=1,\ S=1] - \E[\Delta Y_l(0)\mid G=0,\ S=1]\\
     &=   \E[\Delta Y_l(0)\mid G=1,\ S=0] - \E[\Delta Y_l(0)\mid G=0,\ S=0]
    \end{align*}
\end{lemma}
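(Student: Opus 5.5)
The plan for part (I) is a Frisch--Waugh--Lovell argument in two stages. The first stage removes the covariate term. In specification \eqref{eq:spec2}, $X_i'\theta$ is time-invariant, so it is absorbed by unit-level fixed effects. The paper notes that, in a balanced panel, replacing $\lambda_{G_iS_i}$ by $\lambda_i$ leaves the OLS estimates unchanged. So I will work with the individual fixed-effect version, where $X_i'\theta$ is collinear with $\lambda_i$ and drops out; $\widehat\tau_l^{CL}$ then equals the coefficient from the 3WFE regression without covariates.

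The second stage computes that covariate-free estimator. I will difference each unit's outcome relative to the base period $t^*-1$, which the omitted event dummy $l=-1$ makes the reference period. The regressors are then saturated in $(G,S)\times t$: the terms $\lambda_{G t}$, $\lambda_{S t}$, and the triple interactions span, for each $t$, the functions $1, G, S, GS$ of the cell. Hence the fitted value for each cell-period is the sample cell mean of $Y_{it}-Y_{i,t^*-1}$. For the post-period coefficient, $\widehat\tau_l^{CL}$ is then exactly the sample DDD of cell means of $\Delta Y_{il}$.

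To be careful, I would verify this through the normal equations. Because the design is saturated in cell$\times$period once unit effects are included, OLS fits cell-period means of the within-unit-demeaned outcomes. The triple-difference coefficient at event time $l$, relative to $-1$, is then the double contrast of $\bar Y_{gs,t^*+l}-\bar Y_{gs,t^*-1}$.

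Consistency follows from the law of large numbers under Assumption~\ref{assumption:iid}, with $p_{gs}>0$ guaranteed by Assumption~\ref{assumption:overlap}. Each cell mean converges to $\E[\Delta Y_l\mid G=g,S=s]$, and the continuous mapping theorem delivers $\tau_l^{CL}$.

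For part (II), I substitute potential outcomes. For the three untreated cells, $\Delta Y_l=\Delta Y_l(0)$. For $(1,1)$, no anticipation (Assumption~\ref{assumption:noanticipation}) gives $Y_{t^*-1}=Y_{t^*-1}(0)$, so
\[
\E[\Delta Y_l\mid 1,1]=\tau_l+\E[\Delta Y_l(0)\mid 1,1].
\]
Therefore $\tau_l^{CL}-\tau_l$ equals the unconditional DDD of $\Delta Y_l(0)$, which is zero if and only if the displayed unconditional parallel gaps equality holds. This gives both directions at once.

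The main obstacle is the first stage: justifying rigorously that $\widehat\tau_l^{CL}$ is numerically invariant to including $X_i'\theta$ with subgroup rather than individual effects. I would handle it by invoking the paper's stated equivalence, or directly by FWL. Partialling out $\lambda_{G_iS_i}$ and the time interactions from the treatment dummies leaves residuals that are orthogonal to any time-invariant regressor, because those residuals sum to zero over $t$ within each unit.
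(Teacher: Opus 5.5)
Your proposal is correct. Part (II) is the paper's argument. Your explicit use of Assumption~\ref{assumption:noanticipation} to get $Y_{t^*-1}=Y_{t^*-1}(0)$ in the treated cell is what the paper builds into its observation equation $Y_t=Y_t(1)D_t+Y_t(0)(1-D_t)$.

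For part (I) your route differs from the paper's.

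\emph{The paper's route.} It works in population: it differences the regression equation between $t^*+l$ and $t^*-1$, notes that $X_i'\theta$ cancels, and reads $\tau_l^{CL}$ off the saturated projection of $\Delta Y_l$ on $(1,G,S,D)$. Two steps are left implicit there:
\begin{itemize}
\item why the panel projection coefficient equals the coefficient of this differenced cross-sectional projection (it does, because saturation makes the panel residual orthogonal to every cell-by-period indicator, so the differenced error is orthogonal to the cell dummies);
\item why $\widehat\tau_l^{CL}$ converges to it.
\end{itemize}

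\emph{Your route.} You pin down the sample estimator exactly. Let $Z$ be the fixed-effect block, $D$ the triple-interaction dummies, and $M_A$ the residual-maker for the columns of $A$.
\begin{itemize}
\item The columns of $M_ZD$ are functions of $(G_i,S_i,t)$ orthogonal to each cell indicator. Hence they sum to zero over $t$ within every unit and are orthogonal to any time-invariant $X_i$.
\item Therefore $M_{[Z,X]}D=M_ZD$, so including the covariates leaves $\widehat\tau_l^{CL}$ unchanged.
\item Saturation of the cell-by-period design then makes $\widehat\tau_l^{CL}$ the sample DDD of cell-mean changes, and the LLN with continuous mapping finishes.
\end{itemize}

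This takes slightly more algebra but is more rigorous. It also proves the numerical equivalence with the covariate-free 3WFE estimator, which the paper asserts in the text without proof. Since the FWL step is self-contained, lead with it rather than with the paper's unproven subgroup-versus-individual fixed effect equivalence.

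Two small points. First, with $T>2$ and unit effects, OLS corresponds to within-unit demeaning, not to differencing against $t^*-1$. The two coincide here only because the saturated fit is determined up to a cell-specific constant, which cancels in the contrast with $t^*-1$. That is exactly your normal-equation check, so let it carry the argument rather than the differencing heuristic. Second, as in the paper, the LLN step tacitly needs $\E|Y_t|<\infty$, which is not among Assumptions~\ref{assumption:overlap} and~\ref{assumption:iid}.
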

Lemma \ref{lemma:covariatelevel} shows that the OLS estimator from the covariate-level 3WFE regression specification converges to the unconditional triple-differences estimand, $\tau_l^{CL}$. This estimand uses the constant weight profile $\widetilde w_{gs}(X)=1$ for all four subgroups and fails the sufficient covariate balancing condition because none of the three untreated subgroups is reweighted. The necessary and sufficient condition for $\tau_l^{CL} $ to equal the ATT is that the parallel gaps assumption holds unconditionally, in which case there is no need to include covariates in the triple-differences design.

In addition, although this 3WFE regression still produces OLS coefficient on $X$, in a balanced panel $\widehat{\tau}_l^{CL}$ is numerically identical to the estimator from the 3WFE regression without covariates.\footnote{In the alternative specification with individual fixed effects, discussed at the beginning of this section, the time-invariant covariates are absorbed by the fixed effects.} Therefore, including covariates in this way neither helps the identification nor improves the efficiency of estimation.

\section{Matched Triple-Differences: Estimation and Inference}\label{section:matchedtriplediffereces}
The diagnosis in Section \ref{section:empiricaldiagnose} shows that the OLS estimators from the two commonly used 3WFE specifications are, in general, inconsistent for the ATT. In this section, we propose a matched triple-differences framework for valid estimation and inference. The core of our framework consists of two parts: (i) an estimation procedure for constructing matching estimators in the triple-differences design; and (ii) a general asymptotic theory and inference that cover a class of matching procedures, such as the commonly used nearest-neighbor matching with a fixed or diverging number of neighbors and kernel matching.

\subsection{Estimation}
Estimation in our framework follows a two-step procedure. Let $\mathcal I_{gs}\equiv\left\{i: G_i=g,\, S_i=s\right\}$, and $n_{gs}=\abs{\mathcal I_{gs}}$, $(g,s)\in\{0,1\}^2$, denote the four subgroups and their sizes. In the first step, we conduct three pairwise covariate matchings; in each, the treated subgroup $\mathcal{I}_{11}$ is matched with one of the three untreated subgroups $\mathcal{I}_{gs}$, $(g,s)\neq(1,1)$. The goal of this step is to obtain a matching weight for every unit in each of the three untreated subgroups. In Figure \ref{figure:matching_workflow_proposed}, we visualize the three pairwise matchings and the resulting matching weights.

To be specific, consider the pairwise covariate matching between the treated subgroup $\mathcal{I}_{11}$ and the untreated subgroup $\mathcal{I}_{gs}$. Each treated unit $i$ is matched to its \emph{matched set} $\mathcal{M}(i)$ which consists of units in the untreated subgroup $\mathcal{I}_{gs}$ with covariates similar to $X_i$. Each untreated unit $j\in \mathcal{M}(i)$ receives a non-negative \emph{pairwise weight} $w_{ij}(\mathcal I_{11},\mathcal I_{gs})$, which measures the contribution of untreated unit $j$ to the matched set of treated unit $i$. The pairwise weights are computed by a prespecified matching algorithm (e.g., nearest-neighbor or kernel matching). An untreated unit $j$ may appear in the matched sets of several treated units and can therefore receive several pairwise weights. Summing all the pairwise weights that unit $j$ receives yields the \emph{matching weight} $w_{j}(\mathcal I_{11},\mathcal I_{gs})$, which measures how intensely untreated unit $j$ is used in the pairwise matching: 
\[
w_{j}(\mathcal I_{11},\mathcal I_{gs}) \equiv \sum_{i\in\mathcal I_{11}} w_{ij}(\mathcal I_{11},\mathcal I_{gs}),
\qquad j\in\mathcal I_{gs},
\]
In Figure \ref{figure:pairwisematching}, we illustrate, through a simplified example, how the matching weights are obtained in the pairwise matching. Carrying out all three pairwise matchings gives each untreated unit $j\in \mathcal{I}_{gs}$ a matching weight $w_{j}(\mathcal I_{11},\mathcal I_{gs})$ and leads to the following weighting profile:
\begin{align*}
w_i=
\begin{cases}
1 & i \in \mathcal I_{11}, \\[2pt]
 w_i\!\left(\mathcal I_{11},\,\mathcal I_{gs}\right) & i \in \mathcal I_{gs},\  (g,s)\neq(1,1)  \\
\end{cases}
\end{align*}
In the second step, we run a weighted 3WFE regression using the weighting profile we obtained in the first step:

\begin{equation}
\label{eq:ddd-regression}
\sqrt{w_i}Y_{it}
=
\sqrt{w_i}\left(\lambda_{G_iS_i}
+\lambda_{G_i t}
+\lambda_{S_i t}
+\sum_{l\neq-1}
    \tau_l^{match} G_iS_i\1(t-t^*=l)
+u_{it}\right).
\end{equation}
The coefficient on the triple interaction term, $\widehat{\tau}_l^{match}$, is our matched triple-differences estimator. It can then be equivalently written as
\begin{align*}
    \widehat \tau_l^{match}&= \frac{1}{n_{11}}
\sum_{i\in\mathcal{I}_{11}}
\Delta Y_{il}
-\frac{1}{n_{11}}
\sum_{i\in\mathcal{I}_{01}}w_i\!\left(\mathcal I_{11},\,\mathcal I_{01}\right)\Delta Y_{il}\\
&-\left(\frac{1}{n_{11}}
\sum_{i\in\mathcal{I}_{10}}w_i\!\left(\mathcal I_{11},\,\mathcal I_{10}\right)\Delta Y_{il}
-\frac{1}{n_{11}}
\sum_{i\in\mathcal{I}_{00}}w_i\!\left(\mathcal I_{11},\,\mathcal I_{00}\right)\Delta Y_{il}\right).
\end{align*}
Our matched triple-differences estimator is closely connected to the covariate balancing condition in Section \ref{section:econometricssetup}. To see this connection, note that in $\widehat{\tau}_l^{match}$, the treated subgroup is unweighted, while the matching weights $w_i\!\left(\mathcal I_{11},\,\mathcal I_{gs}\right)$ approximate the density ratio, so that each untreated subgroup is reweighted to have the approximately the same covariate distribution as the treated subgroup. 

\begin{figure}[htbp]
\centering
\savebox{\figbox}{%
\begin{tikzpicture}[
    >={Latex[length=2.4mm,width=2mm]},
    node distance=1.6cm and 3.4cm,        
    box/.style={
        draw, line width=0.6pt, rounded corners=3pt,
        minimum width=4.8cm, minimum height=1.7cm,
        align=center, font=\small
    },
    treated/.style={box, line width=1.2pt, fill=gray!20},
    header/.style={font=\small\bfseries, align=center},
    match/.style={->, line width=0.7pt},
    lab/.style={font=\footnotesize, inner sep=2.5pt}
]
\node[treated]         (A) {Treated subgroup $(\mathcal{I}_{11})$\\[2pt] $w_i = 1$};
\node[box, right=of A] (B) {Unexposed Eligible $(\mathcal{I}_{01})$\\[2pt] $w_i = w_i(\mathcal{I}_{11},\mathcal{I}_{01})$};
\node[box, below=of A] (C) {Exposed Ineligible $(\mathcal{I}_{10})$\\[2pt] $w_i = w_i(\mathcal{I}_{11},\mathcal{I}_{10})$};
\node[box, below=of B] (D) {Unexposed Ineligible $(\mathcal{I}_{00})$\\[2pt] $w_i = w_i(\mathcal{I}_{11},\mathcal{I}_{00})$};
\node[header, above=0.45cm of A] {Exposed\\ $(G_i = 1)$};
\node[header, above=0.45cm of B] {Unexposed\\ $(G_i = 0)$};
\node[header, rotate=90, anchor=south] at ([xshift=-0.45cm]A.west) {Eligible\\ $(S_i = 1)$};
\node[header, rotate=90, anchor=south] at ([xshift=-0.45cm]C.west) {Ineligible\\ $(S_i = 0)$};
\draw[match] (A) -- node[lab, above]         {pairwise matching 1} (B);
\draw[match] (A) -- node[lab, left]          {pairwise matching 2} (C);
\draw[match] (A) -- node[lab, above, sloped] {pairwise matching 3} (D);
\end{tikzpicture}}
\begin{adjustbox}{max width=\linewidth}\usebox{\figbox}\end{adjustbox}
\caption{The matched triple-differences procedure}
\label{figure:matching_workflow_proposed}
\vspace{0.35em}
\setlength{\notewidth}{\wd\figbox}%
\ifdim\notewidth>\linewidth \setlength{\notewidth}{\linewidth}\fi
\begin{minipage}{\notewidth}
\footnotesize
\raggedright
\emph{Note.} The matched triple-differences procedure performs three separate pairwise matchings. In each pairwise matching, every unit in the treated subgroup \((\mathcal{I}_{11})\) is matched to similar units in one untreated subgroup, based on observed covariates $X_i$.
\end{minipage}
\end{figure}

\begin{figure}[htbp]
\centering
\sbox{\figbox}{%
    \resizebox{!}{0.25\textheight}{%
    \begin{tikzpicture}[
        x=1cm,y=1cm,
        every node/.style={font=\footnotesize},
        unit/.style={
            circle,draw=black!60,fill=white,
            minimum size=6mm,inner sep=0pt,
            font=\normalsize
        },
        pair/.style={
            -{Latex[length=1.8mm,width=1.3mm]},thick
        },
        wt/.style={
            fill=white,inner sep=1.5pt,
            font=\footnotesize
        },
        eq/.style={
            anchor=west,inner sep=0pt,
            font=\footnotesize
        }
    ]
        \fill[myblue!7]
            (5,.625) ellipse (.85cm and 1.05cm);
        \fill[red!7,opacity=.60]
            (5,-.625) ellipse (.85cm and 1.05cm);
 
        \draw[myblue,thick]
            (5,.625) ellipse (.85cm and 1.05cm);
        \draw[red!80!black,thick,dashed]
            (5,-.625) ellipse (.85cm and 1.05cm);
 
        \node[font=\footnotesize\bfseries,align=center,anchor=north]
            at (0,2.95) {Treated\\$\mathcal I_{11}$};
        \node[font=\footnotesize\bfseries,align=center,anchor=north]
            at (5,2.95) {Untreated\\$\mathcal I_{gs}$};
 
        \node[unit,draw=myblue,thick]
            (i1) at (0,1.25) {$i_1$};
        \node[unit,draw=red!80!black,thick]
            (i2) at (0,-1.25) {$i_2$};
 
        \node[unit] (j1) at (5,1.25) {$j_1$};
        \node[unit,draw=black!80,thick]
            (j2) at (5,0) {$j_2$};
        \node[unit] (j3) at (5,-1.25) {$j_3$};
 
        \draw[pair,myblue] (i1) -- (j1);
        \draw[pair,red!80!black,dashed] (i2) -- (j3);
 
        \draw[pair,myblue] (i1) --
            node[wt,text=myblue,pos=.5]
            {$w_{i_1j_2}(\mathcal{I}_{11},\mathcal{I}_{gs})$} (j2);
 
        \draw[pair,red!80!black,dashed] (i2) --
            node[wt,text=red!80!black,pos=.5]
            {$w_{i_2j_2}(\mathcal{I}_{11},\mathcal{I}_{gs})$} (j2);
 
        \node[eq] at (6.3,0) {
            $\begin{aligned}
                w_{j_2}(\mathcal{I}_{11},\mathcal{I}_{gs})
                &=\textcolor{myblue}{w_{i_1j_2}(\mathcal{I}_{11},\mathcal{I}_{gs})}\\[-1pt]
                &\quad+\textcolor{red!80!black}{w_{i_2j_2}(\mathcal{I}_{11},\mathcal{I}_{gs})}
            \end{aligned}$
        };
 
        \coordinate (leg) at ([yshift=-.5cm]current bounding box.south);
        \node[anchor=east,inner sep=0pt] (leg1)
            at ([xshift=-.5cm]leg) {Matched set for $i_1$};
        \draw[myblue,thick]
            ([xshift=-.65cm]leg1.west) -- ([xshift=-.15cm]leg1.west);
 
        \draw[red!80!black,thick,dashed]
            ([xshift=.5cm]leg) -- ([xshift=1cm]leg);
        \node[anchor=west,inner sep=0pt]
            at ([xshift=1.15cm]leg) {Matched set for $i_2$};
    \end{tikzpicture}%
    }%
}%
\begin{adjustbox}{max width=\linewidth}\usebox{\figbox}\end{adjustbox}
\caption{Construction of the matching weights}
\label{figure:pairwisematching}
\vspace{0.35em}
\setlength{\notewidth}{\wd\figbox}%
\ifdim\notewidth>\linewidth \setlength{\notewidth}{\linewidth}\fi
\begin{minipage}{\notewidth}
\footnotesize
\raggedright
\emph{Note.} This figure illustrates how to obtain the matching weight $w_{j_2}(\mathcal{I}_{11},\mathcal{I}_{gs})$ for untreated unit $j_2\in\mathcal{I}_{gs}$ in the pairwise matching between the treated subgroup $\mathcal{I}_{11}$ and untreated subgroup $\mathcal{I}_{gs}$.
\end{minipage}
\end{figure}

\subsection{Asymptotic Theory}
The key novelty of our asymptotic theory is a set of regularity conditions on the pairwise covariate matching used in estimation. These conditions allow our framework to abstract from any specific matching method and to provide a general asymptotic theory for a class of estimators constructed from different matching methods. To avoid clutter, we suppress the dependence of the weights on the pair of subgroups, writing $w_{ij}$ and $w_j$ and we write $\widehat{\tau}_l$ for $\widehat{\tau}_l^{match}$ when no confusion arises.
 
\begin{assumption}[Regularity of the pairwise covariate matching]
\label{assumption:matching}
For each $(g,s)\in\{(0,1),(1,0),(0,0)\}$, the pairwise covariate matching between the treated subgroup $\mathcal I_{11}$ and untreated subgroup $\mathcal I_{gs}$ satisfies the following conditions.
\begin{enumerate}[label=(\roman*)]
\item The pairwise weights $\left\{w_{ij}:i\in\mathcal I_{11},\,j\in\mathcal I_{gs} \right\}$ are constructed using only the subgroup indicators and observed covariates of the treated and untreated subgroups, $(G_i,S_i,X_i)_{i\in \mathcal{I}_{11}\cup \mathcal{I}_{gs}}$. They do not depend on the order in which the observations are supplied: permuting the observations within $\mathcal I_{11}$ and $\mathcal I_{gs}$ permutes the corresponding pairwise weights accordingly. Moreover, for each $i\in\mathcal I_{11}$,
\[
w_{ij}\geq 0 \quad\text{for every }j\in\mathcal I_{gs},
\qquad
\sum_{j\in\mathcal I_{gs}} w_{ij} = 1.
\]
\item \textbf{(Locality)}
\[
\E\!\left[
\sum_{j\in\mathcal I_{gs}} w_{ij}\left\lVert X_j-X_i\right\rVert
\,\middle|\, i\in\mathcal I_{11}
\right] = o(1).
\]

\item \textbf{(Bounded Moments)} For some $\delta>0$,
\[
 \E\!\left[ w_{j}^{4+\delta} \,\middle|\, j\in\mathcal I_{gs} \right] =O(1).
\]

\item \textbf{(Stability)} Let $w_{j}^{(1)}$ denote the matching weight of unit $j$ computed after replacing the first observation $(Y_{11},\dots, Y_{1T}, G_1,S_1,X_1 ) $ with an independent copy $(Y_{1'1},\dots, Y_{1'T}, G_{1'},S_{1'},X_{1'} )$.
\[
\E\!\left[\Bigg(\sum_{j\in\mathcal I_{gs}\setminus\{1\}}
\bigl|\,w_{j}-w_{j}^{(1)}\bigr|\Bigg)^{\!4}\right]
= O(1).
\]
\end{enumerate}

\end{assumption}

Condition (i) restricts the information available to the matching procedure: the pairwise weights depend only on the subgroup indicators and observed covariates, and they are invariant to the order in which observations are supplied. The non-negativity and adding-up requirements imply that each treated unit is matched to a convex combination of untreated units. 

Condition (ii) imposes locality: the covariate discrepancy between a treated unit and its matched untreated units, averaged using the pairwise weights, vanishes as the sample grows. In other words, each treated unit is asymptotically matched to untreated units with nearly identical covariate values. 

Condition (iii) restricts how intensively an untreated unit may be reused across matched sets. The $4+\delta$ moment bound rules out matching procedures that heavily reuse only a handful of untreated units. 

Condition (iv) is a stability requirement on the matching procedure: it rules out procedures in which a single observation can exert a large influence on the overall allocation of the matching weights. This condition cannot simply be removed if we want to establish unconditional asymptotic normality, because under conditions (i)--(iii) alone, asymptotic normality of the matching estimators is guaranteed only conditionally on the subgroup indicators and the covariates. In Appendix \ref{section:example}, we provide an example of a matching procedure that satisfies conditions (i)--(iii) but not (iv), for which the resulting matched triple-differences estimator fails to be asymptotically normal unconditionally.
 
Assumption~\ref{assumption:matching} is satisfied by several commonly used matching procedures: Lemma~\ref{lemma:matchingprocedure} in the Appendix \ref{appendix:matchingprocedure} shows that it holds for nearest-neighbor matching with replacement, with either a fixed or a diverging number of neighbors and for kernel matching. The assumption does, however, exclude some procedures used in the empirical literature. Nearest-neighbor matching \emph{without} replacement violates condition (i), because its pairwise weights depend on the order in which the observations are supplied; local linear matching violates condition (i) as well, because its pairwise weights can be negative.

We also impose the following regularity conditions, which are standard in the matching literature (e.g., \citeinb{Abadie-Imbens_2006_ECMA}).
 
\begin{assumption}[Regularity conditions]\label{assumption:regularity}
For every $(g,s)\in\{0,1\}^2$, $d\in\{0,1\}$, $t=1,\dots,T$, and event time $l=0,1,\dots,T-t^*$:
\begin{enumerate}[label=(\roman*)]
\item the conditional density $f_{gs}$ of $X$ given $(G,S)=(g,s)$ is continuous, has convex and compact support, and is bounded and bounded away from zero on its support;
\item $\E[Y_{t}(d)\mid G=g,S=s,X=x]$ is Lipschitz continuous in $x$;
\item $\mathrm{Var}[\Delta Y_{l}(d)\mid G=g,S=s,X=x]$ is Lipschitz continuous in $x$ and uniformly bounded above and away from zero, that is, $0<\underline{\sigma}^2\leq \mathrm{Var}[\Delta Y_{l}(d)\mid G=g,S=s,X=x]\leq \overline{\sigma}^2$  ;
\item $\E[Y_{t}(d)^4\mid G=g,S=s,X=x]$ is uniformly bounded.
\end{enumerate}
\end{assumption}

For notational simplicity, define $\mu_l^{gs}(x)=\E[\Delta Y_l|G=g,S=s,X=x]$ and $\sigma^2_{gs,l}(x)=Var(\Delta Y_l|G=g,S=s,X=x)$.
\begin{theorem}\label{theorem:asymptotic}
    Under Assumptions \ref{assumption:noanticipation},
\ref{assumption:overlap}, \ref{assumption:parallelgap}, \ref{assumption:iid}, \ref{assumption:matching}, and \ref{assumption:regularity},
\begin{align*}
    \widehat\tau_l-\tau_l\xrightarrow{p} 0, \quad
         (V_{\tau(X)}+V_E)^{-1/2} \sqrt{n_{11}}\left(\widehat{\tau}_l-\tau_l-B \right)\xrightarrow{d}\mathcal{N}(0,1)
    \end{align*}
    where
    \begin{align*}
        V_{\tau(X)}& =\E\left[\left. \left(\mu_l^{11}(X_i)-\mu_l^{01}(X_i)-\mu_l^{10}(X_i)+\mu_l^{00}(X_i)-\tau_l \right)^2\right\vert G_i=1,S_i=1\right] \\
    V_E&= \frac{1}{n_{11}}\left(\sum_{i\in\mathcal{I}_{11}}\sigma^2_{11,l}(X_{i})+\sum_{i\in\mathcal{I}_{01}} w_{i}^2\sigma^2_{01,l}(X_{i})+\sum_{i\in\mathcal{I}_{10}} w_{i}^2\sigma^2_{10,l}(X_{i}) +\sum_{i\in\mathcal{I}_{00}} w_{i}^2\sigma^2_{00,l}(X_{i})\right)\\
   B&=\frac{1}{n_{11}}\isum\left[\1(i\in \mathcal{I}_{11})-\1(i\in \mathcal{I}_{01})w_i\right]\mu_l^{01}(X_i)+\frac{1}{n_{11}}\isum\left[\1(i\in \mathcal{I}_{11})-\1(i\in \mathcal{I}_{10})w_i\right]\mu_l^{10}(X_i)\\ 
    &-\frac{1}{n_{11}}\isum\left[\1(i\in \mathcal{I}_{11})-\1(i\in \mathcal{I}_{00})w_i\right]\mu_l^{00}(X_i) 
    \end{align*}
    
\end{theorem}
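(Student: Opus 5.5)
The plan is the Abadie--Imbens decomposition, applied separately to each of the three pairwise matchings and then combined. Write $\Delta Y_{il}=\mu_l^{G_iS_i}(X_i)+\varepsilon_i$ with $\E[\varepsilon_i\mid G_i,S_i,X_i]=0$. Substituting this into the closed form of $\widehat\tau_l$ and adding and subtracting $\mu_l^{01},\mu_l^{10},\mu_l^{00}$ evaluated at treated covariates gives
\[
\widehat\tau_l-\tau_l-B=\frac{1}{n_{11}}\sum_{i\in\mathcal I_{11}}\bigl(\mu_l^{11}(X_i)-\mu_l^{01}(X_i)-\mu_l^{10}(X_i)+\mu_l^{00}(X_i)-\tau_l\bigr)+\frac{1}{n_{11}}\sum_{i=1}^n c_i\varepsilon_i ,
\]
where $c_i=1$ on $\mathcal I_{11}$, $c_i=-w_i$ on $\mathcal I_{01}\cup\mathcal I_{10}$, and $c_i=w_i$ on $\mathcal I_{00}$. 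The bias term $B$ is exactly the part left over from the fact that $\sum_j w_{ij}\mu^{gs}(X_j)$ and $\mu^{gs}(X_i)$ differ. By Theorem \ref{theorem:identification} and Assumption \ref{assumption:parallelgap}, the conditional expectation of the first summand over the treated subgroup is $0$. Indeed, $\tau_l(x)=\mu^{11}-\mu^{01}-\mu^{10}+\mu^{00}$ by conditional parallel gaps and no anticipation, and averaging this over $f_{11}$ gives $\tau_l$.

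\emph{Consistency.} The first term is an i.i.d.\ mean-zero average and is $o_p(1)$ by the law of large numbers. Because $\varepsilon$ is conditionally mean zero, the second term has conditional variance $n_{11}^{-2}\sum c_i^2\sigma^2(X_i)$. This is $O_p(1/n_{11})$ by Assumption \ref{assumption:matching}(iii), Assumption \ref{assumption:regularity}(iii), and $n_{gs}/n_{11}=O_p(1)$ from strong overlap. For $B$, Lipschitz continuity of $\mu_l^{gs}$ (Assumption \ref{assumption:regularity}(ii)) together with $\sum_j w_{ij}=1$ bounds each component by $L\,n_{11}^{-1}\sum_{i\in\mathcal I_{11}}\sum_j w_{ij}\|X_j-X_i\|$. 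Its expectation is $o(1)$ by locality (ii), so $B=o_p(1)$ by Markov's inequality. Together these give $\widehat\tau_l\to_p\tau_l$.

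\emph{Asymptotic normality.} Scale by $\sqrt{n_{11}}$ and set $A_n=n_{11}^{-1/2}\sum_{\mathcal I_{11}}(\cdots)$ and $E_n=n_{11}^{-1/2}\sum_i c_i\varepsilon_i$. Conditional on $\mathbf{F}=(G_i,S_i,X_i)_{i\le n}$, the weights are fixed by Assumption \ref{assumption:matching}(i), so $E_n$ is a sum of independent mean-zero terms with variance $V_E$. I would verify the Lyapunov condition using fourth moments: $\sum_i |c_i|^{4}\E[\varepsilon_i^4\mid X_i]/(n_{11}V_E)^2$. The numerator is $O_p(n)$ by Assumption \ref{assumption:matching}(iii) and Assumption \ref{assumption:regularity}(iv), and $V_E\ge\underline\sigma^2$, so the ratio is $O_p(1/n)\to0$. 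This yields $V_E^{-1/2}E_n\mid\mathbf F\to_d\mathcal N(0,1)$. $A_n$ is $\mathbf F$-measurable and satisfies a CLT with variance $V_{\tau(X)}$. The two would combine through the characteristic function argument of Abadie and Imbens (2006): $\E[e^{iuA_n+ivE_n}]=\E[e^{iuA_n}\E(e^{ivE_n}\mid\mathbf F)]$, with the inner factor $\approx e^{-v^2V_E/2}$.

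The main obstacle is this last step. It requires $V_E$ to concentrate, i.e.\ $V_E-\E[V_E]\to_p0$, or at least to be asymptotically independent of $A_n$. Otherwise the mixture $\E[e^{iuA_n}e^{-v^2V_E/2}]$ need not factor. This is exactly where Assumption \ref{assumption:matching}(iv) enters, as the counterexample in Appendix \ref{section:example} shows. My approach is an Efron--Stein bound on $V_E$ as a function of the $n$ i.i.d.\ observations. By exchangeability from condition (i), the bound reduces to $n\,\E[(V_E-V_E^{(1)})^2]$. Replacing observation 1 changes $V_E$ by at most $n_{11}^{-1}\bar\sigma^2\sum_j|w_j^2-w_j^{(1)2}|$ plus the own-term of unit 1. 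The first part can be written as $n_{11}^{-1}\sum_j|w_j-w_j^{(1)}|(w_j+w_j^{(1)})$ and controlled by Hölder's inequality, combining the fourth-moment stability (iv) with the $4+\delta$ moment bound (iii). This gives $\E[(V_E-V_E^{(1)})^2]=O(n^{-2})$, hence $\mathrm{Var}(V_E)=O(1/n)$. A further subtlety is that $\sum_j|w_j-w_j^{(1)}|$ has bounded fourth moment but may involve many $j$, so the Hölder step needs care. Once concentration holds, $V_E$ can be replaced by a deterministic sequence, and the joint limit gives $(V_{\tau(X)}+V_E)^{-1/2}\sqrt{n_{11}}(\widehat\tau_l-\tau_l-B)\to_d\mathcal N(0,1)$.
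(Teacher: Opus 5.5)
Your proposal follows the paper's proof in substance. Both use the same split of $\widehat\tau_l-\tau_l-B$ into an effect-heterogeneity average and a weighted residual sum, the same fourth-moment Lyapunov bound, and the same Efron--Stein concentration of $V_E$ from conditions (iii) and (iv) of Assumption~\ref{assumption:matching} (the paper's Lemmas~\ref{lemma:varianceconvergence} and \ref{lemma:weightproperty}(ii)). The paper packages the last step as a martingale CLT over a $2n$-term array (covariates first, then residuals) instead of your conditional CLT plus characteristic functions, but it needs exactly the same inputs.

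One correction to the H\"older step you flag. Bounding $w_j+w_j^{(1)}$ by the maxima and using $\E[\max_j w_j^4]\le\E[\sum_{j\in\mathcal I_{gs}} w_j^4]=O(n)$ only gives $\E\bigl[\bigl(\sum_j|w_j^2-(w_j^{(1)})^2|\bigr)^2\bigr]=O(n^{1/2})$. Hence $\mathrm{Var}(V_E)=O(n^{-1/2})$, not $O(n^{-1})$. Conditions (iii)--(iv) do not control which $j$ are affected by the replacement, so your $O(n^{-2})$ bound for $\E[(V_E-V_E^{(1)})^2]$ does not follow. The weaker rate still suffices.
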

Theorem \ref{theorem:asymptotic} establishes consistency and asymptotic normality for the class of matched triple-differences estimators that fall within our framework. It extends classical asymptotic results for matching estimators, such as those of \cite{Abadie-Imbens_2006_ECMA}, both to the triple-differences setting and to a class of matching procedures. 
 
Theorem \ref{theorem:asymptotic} also shows that our matched triple-differences estimator carries a finite-sample bias term $B$. The bias term does not affect consistency, but it can affect inference: $\sqrt{n_{11}}B$ need not be asymptotically negligible, so the matched triple-differences estimator may require bias correction for valid inference. This issue parallels the one first pointed out by \cite{Abadie-Imbens_2006_ECMA} for cross-sectional matching estimators. In some cases, bias correction is not required. For example, if covariates are discrete,  we can match on them exactly, and matching bias is zero. In addition, the results of \cite{Viel_2025_wp} imply that, for nearest-neighbor matching and kernel matching with properly chosen tuning parameters, the matching bias is root-$n$ negligible when the dimension of the continuous covariate is at most $3$, so bias correction is not required\footnote{Let $r$ denote the number of continuous covariates and $(g,s)\in\{(0,1),(1,0),(0,0)\}$. The matching bias is root-$n$ negligible if (i) for nearest-neighbor matching, $M=o(n_{gs}^{2/3})$ when $r=1$, $M=o(n_{gs}^{1/2})$ when $r=2$, and $M=o(n_{gs}^{1/4})$ when $r=3$; and (ii) for kernel matching, $h=n_{gs}^{-a}$ with $1/4<a<1$ when $r=1$, $1/4<a<1/2$ when $r=2$, and $1/4<a<1/3$ when $r=3$.}.

When bias correction is required, we can correct our matched triple-differences estimator by adapting the bias-correction procedure in the cross-sectional setting \citep{Abadie-Imbens_2011_JBES,Lin_Han_2025_JOE} to each of the three terms of the bias $B$ in Theorem \ref{theorem:asymptotic}. To conduct the bias correction, we estimate the three unknown conditional mean functions $\mu_l^{gs}$, $(g,s)\in\{(0,1),(1,0),(0,0)\}$, nonparametrically by $\widehat{\mu}_l^{gs}$, using off-the-shelf methods such as series or spline estimators. The bias-corrected matched triple-differences estimator then takes the form:
\begin{align*}
    \widehat{\tau}_l^{bc}&= \widehat{\tau}_l -\widehat{B} \\
    &=\widehat{\tau}_l-\left(\frac{1}{n_{11}}\isum\left[\1(i\in \mathcal{I}_{11})-\1(i\in \mathcal{I}_{01})w_i\right]\widehat\mu_l^{01}(X_i)+\frac{1}{n_{11}}\isum\left[\1(i\in \mathcal{I}_{11})-\1(i\in \mathcal{I}_{10})w_i\right]\widehat\mu_l^{10}(X_i)\right.\\
    &\left.-\frac{1}{n_{11}}\isum\left[\1(i\in \mathcal{I}_{11})-\1(i\in \mathcal{I}_{00})w_i\right]\widehat\mu_l^{00}(X_i)\right)
\end{align*}
\begin{theorem}\label{theorem:biascorrection}
    Under Assumptions \ref{assumption:noanticipation},
\ref{assumption:overlap}, \ref{assumption:parallelgap}, \ref{assumption:iid}, \ref{assumption:matching}, \ref{assumption:regularity}, and \ref{assumption:biascorrection},
\begin{align*}
     \sqrt{n_{11}}(\widehat{B}-B)\xrightarrow{p} 0,\quad
    (V_{\tau(X)}+V_E)^{-1/2} \sqrt{n_{11}}\left(\widehat{\tau}_l-\tau_l-\widehat B \right)\xrightarrow{d}\mathcal{N}(0,1).
\end{align*}
\end{theorem}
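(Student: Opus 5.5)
The second claim of Theorem \ref{theorem:biascorrection} follows from the first together with Theorem \ref{theorem:asymptotic}. Write
\[
\sqrt{n_{11}}(\widehat\tau_l-\tau_l-\widehat B)=\sqrt{n_{11}}(\widehat\tau_l-\tau_l-B)-\sqrt{n_{11}}(\widehat B-B).
\]
Theorem \ref{theorem:asymptotic} gives asymptotic normality of the first term after scaling by $(V_{\tau(X)}+V_E)^{-1/2}$. The strong overlap and variance bounds in Assumptions \ref{assumption:overlap} and \ref{assumption:regularity}(iii) keep $V_E\geq \underline\sigma^2>0$, since the treated term alone contributes at least $\underline\sigma^2$. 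Hence $(V_{\tau(X)}+V_E)^{-1/2}=O_p(1)$. Slutsky's lemma then transfers the limit, provided $\sqrt{n_{11}}(\widehat B-B)=o_p(1)$. The work is therefore in this first claim.

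For the first claim, note that $\widehat B-B$ splits into three pieces, one for each untreated subgroup $(g,s)$. Each piece is
\[
R_{gs}\equiv \frac{1}{n_{11}}\sum_{i}\left[\1(i\in\mathcal I_{11})-\1(i\in\mathcal I_{gs})w_i\right]\bigl(\widehat\mu_l^{gs}(X_i)-\mu_l^{gs}(X_i)\bigr).
\]
Set $\widehat r=\widehat\mu_l^{gs}-\mu_l^{gs}$. Using $w_j=\sum_{i\in\mathcal I_{11}}w_{ij}$ and the adding-up property $\sum_j w_{ij}=1$ from Assumption \ref{assumption:matching}(i), rewrite the piece as
\[
R_{gs}=\frac{1}{n_{11}}\sum_{i\in\mathcal I_{11}}\sum_{j\in\mathcal I_{gs}}w_{ij}\bigl(\widehat r(X_i)-\widehat r(X_j)\bigr).
\]
This is a weighted average of the increments of the estimation error across matched pairs.

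Next, bound each increment by the mean value theorem. For some smoothness order, the increment is at most $\sup_x\|\nabla\widehat r(x)\|\cdot\|X_i-X_j\|$. This uses the convexity of the support in Assumption \ref{assumption:regularity}(i). Therefore
\[
\sqrt{n_{11}}|R_{gs}|\le \sup_x\|\nabla \widehat\mu_l^{gs}-\nabla\mu_l^{gs}\|\cdot \sqrt{n_{11}}\,\frac{1}{n_{11}}\sum_{i\in\mathcal I_{11}}\sum_j w_{ij}\|X_j-X_i\|.
\]
The deferred Assumption \ref{assumption:biascorrection}, which I expect states the needed rates, should be used as follows. The uniform gradient error of the nonparametric estimator is $o_p(n^{-a})$. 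The matching discrepancy $\frac{1}{n_{11}}\sum_{i}\sum_j w_{ij}\|X_j-X_i\|$ is $O_p(n^{-b})$, with $a+b\ge 1/2$.

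The matching discrepancy can be bounded in expectation by the Locality condition, Assumption \ref{assumption:matching}(ii), together with Markov's inequality. Locality only gives $o(1)$, so the product requires that the gradient error itself be $o_p(n^{-1/2})$ uniformly, or that a sharper locality rate hold. This is the main obstacle: matching locality and the estimator's convergence rate must combine to beat $\sqrt{n_{11}}$.

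A second issue is dependence, since $\widehat\mu$ is estimated on the same sample. The bound above is a deterministic sup-norm inequality, however, so it avoids any need for independence between $\widehat\mu$ and the weights. If instead the assumption is stated in terms of a Hölder or $L_2$ rate on $\widehat\mu$ rather than a gradient rate, I would fall back on a Lipschitz-type modulus of $\widehat r$ with the same product argument.

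Summing the three pieces gives $\sqrt{n_{11}}(\widehat B-B)=o_p(1)$. Applying Slutsky's lemma as described at the start completes the proof.
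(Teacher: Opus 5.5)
Your proposal is correct, and for the first claim it takes a shorter route than the paper.

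Both arguments begin with the same rewriting of each piece of $\widehat B-B$ as $\frac{1}{n_{11}}\sum_{i\in\mathcal I_{11}}\sum_{j\in\mathcal I_{gs}}w_{ij}\bigl(\widehat r(X_i)-\widehat r(X_j)\bigr)$. Both also use only deterministic sup-norm bounds, so reusing the sample to build $\widehat\mu_l^{gs}$ causes no difficulty.

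The paper then expands $\mu_l^{gs}$ and $\widehat\mu_l^{gs}$ separately to order $p-1$ around $X_i$. The two order-$p$ remainders are bounded by $O_p(1)\norm{X_j-X_i}^p$, using the bounded $p$-th derivatives in Assumption \ref{assumption:biascorrection}(ii)--(iii). The difference of the Taylor polynomials is bounded by $\sum_{r=1}^{p-1}O_p(n^{-\gamma_r})\norm{X_j-X_i}^r$, and every moment condition in Assumption \ref{assumption:biascorrection}(iv) is used.

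Your mean-value bound on $\widehat r=\widehat\mu_l^{gs}-\mu_l^{gs}$ needs only the $r=1$ layer of that assumption:
\begin{itemize}
\item Part (iii) gives $\max_{m\in\Lambda_1}\norm{\partial^m\widehat\mu_l^{gs}-\partial^m\mu_l^{gs}}_\infty=O_p(n^{-\gamma_1})$.
\item Part (iv) gives $\E[\sum_{j}w_{ij}\norm{X_j-X_i}\mid i\in\mathcal I_{11}]=o(n^{-1/2+\gamma_1})$. By Markov's inequality, permutation invariance (Assumption \ref{assumption:matching}(i)), and $n/n_{11}=O_p(1)$, the averaged discrepancy is then $o_p(n^{-1/2+\gamma_1})$.
\end{itemize}
This is your pair of rates with $a+b=1/2$, except that the little-$o$ sits on the discrepancy factor rather than the gradient factor. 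The product is still $o_p(n^{-1/2})$. If $p=1$ there is no $\gamma_1$, but then $\norm{\nabla\widehat r}_\infty=O_p(1)$, and the first line of part (iv) gives a first moment of $o(n^{-1/2})$, so the same product closes.

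The paper bounds everything in absolute value and never exploits cancellation in the signed moments $\sum_j w_{ij}(X_j-X_i)^m$. Its higher-order expansion therefore buys nothing here beyond following the Abadie--Imbens and Lin--Han template. Your route shows that for $p\ge 2$ the $r\ge 2$ rates and the order-$p$ conditions in parts (iii)--(iv) are not needed for this theorem.

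Your explicit bound $V_E\ge\underline\sigma^2$ keeps the normalizer bounded in the Slutsky step, a detail the paper leaves implicit. It follows from Assumption \ref{assumption:regularity}(iii) alone; overlap is not needed for it.
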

Theorem \ref{theorem:biascorrection} shows that the bias can be estimated with an error of order smaller than $n_{11}^{-1/2}$. Consequently, the bias-corrected matched triple-differences estimator is $\sqrt{n_{11}}$-consistent and asymptotically normal, and the correction leaves the asymptotic variance unchanged, so the variance estimator in Section \ref{subsection:varianceestimation} still applies to our bias-corrected estimator.

\begin{remark}
    Although our bias-corrected estimator is motivated by removing the matching bias, equation (\ref{eq:AIPW}) shows that it has the structure of an augmented inverse probability weighting (AIPW) estimator. The first term is the outcome regression part and the remaining three terms form the weighting (IPW) part, with the inverse propensity score weights replaced by the matching weights. This parallels the novel insight of \cite{Lin-Ding-Han_2023_ECMA} and \cite{Lin_Han_2025_JOE}, who point out that the bias-corrected matching estimator in the cross-sectional setting has the structure of an AIPW estimator and provide conditions under which it enjoys the desirable properties of AIPW estimators such as double robustness and attainment of the semiparametric efficiency bound. We conjecture that, under suitable assumptions, our bias-corrected estimator $\widehat{\tau}_{l}^{bc}$ is also doubly robust and semiparametrically efficient. 
\begin{equation}
\label{eq:AIPW}
\begin{split}
        \widehat{\tau}_l^{bc}&=
\frac{1}{n_{11}}\sum_{i\in\mathcal{I}_{11}} \left[
 \Delta Y_{il}
-\widehat{\mu}_l^{01}(X_i)
-\widehat{\mu}_l^{10}(X_i)
+\widehat{\mu}_l^{00}(X_i)
\right]\\
&-\frac{1}{n_{11}}\sum_{i\in\mathcal{I}_{01}}
w_i\left[\Delta Y_{il}-\widehat{\mu}_l^{01}(X_i)\right]-\frac{1}{n_{11}}\sum_{i\in\mathcal{I}_{10}}
w_i\left[\Delta Y_{il}-\widehat{\mu}_l^{10}(X_i)\right]+\frac{1}{n_{11}}\sum_{i\in\mathcal{I}_{00}}
w_i\left[\Delta Y_{il}-\widehat{\mu}_l^{00}(X_i)\right].
\end{split}
\end{equation}
\end{remark}

\subsection{Variance Estimation}\label{subsection:varianceestimation}
Because our estimator is obtained from a weighted 3WFE regression, it is tempting to use the cluster-robust standard error from that regression. Previous studies (\citeinb{Abadie-Spiess_2022_JASA}; \citeinb{Liu_Vazquez_Bare_2026_wp}) show, however, that this standard error is inconsistent because it treats the matching weights as fixed. In this subsection, we propose a unified variance estimator that accounts for the matching step and is valid for all matching procedures within our framework. 

For each untreated subgroup $(g,s)\in \{(0,1),(1,0),(0,0)\}$, let $\widehat{\Delta Y_{il}^{gs}}\equiv\sum_{j\in \mathcal{I}_{gs}}w_{ij}\Delta Y_{jl}$ denote the matched outcome change from subgroup $(g,s)$ for treated unit $i$, and let $\widetilde w_j^2\equiv\sum_{i\in \mathcal{I}_{11}}w_{ij}^2$ denote the sum of the squared pairwise weights received by untreated unit $j$. We construct our variance estimator by estimating each of the two components of the asymptotic variance in Theorem~\ref{theorem:asymptotic},  the effect-heterogeneity component, $V_{\tau(X)}$, and the conditional-variance component, $V_E$. 
\begin{align*}
    \widehat{V}_{\tau(X)} & =  \frac{1}{n_{11}}\sum_{i\in \mathcal{I}_{11}}\left(\Delta Y_{il}-  \widehat {\Delta Y_{il}^{01}}-\widehat {\Delta Y_{il}^{10}} + \widehat {\Delta Y_{il}^{00}}-\widehat{\tau}_l \right)^2,  \\
    \widehat{V}_E & =\frac{1}{n_{11}}\sum_{j\in  \mathcal{I}_{01}} \left(  w_j^2-\widetilde w_j^2\right)\widehat{\sigma}_{01,l}^2(X_j) + \frac{1}{n_{11}}\sum_{j\in  \mathcal{I}_{10}} \left( w_j^2-\widetilde w_j^2 \right)\widehat{\sigma}_{10,l}^2(X_j) + \frac{1}{n_{11}}\sum_{j\in  \mathcal{I}_{00}} \left(  w_j^2-\widetilde w_j^2 \right)\widehat{\sigma}_{00,l}^2(X_j).
\end{align*}
Here, $\widehat{\sigma}_{gs,l}^2(X_j)$ is an estimator of the unknown conditional variance function, constructed following the strategy proposed by \cite{Abadie-Imbens_2006_ECMA}. For each untreated subgroup, we conduct an additional $K$-nearest-neighbor matching within the same subgroup and denote by $k_m(j)$ the index of the $m$-th nearest neighbor of unit $j$. The estimator $\widehat{\sigma}_{gs,l}^2(X_j)$ is then defined as:
\[
\widehat{\sigma}_{gs,l}^2(X_j)
=
\frac{K}{K+1}
\left(
   \Delta Y_{jl}-\frac{1}{K}\sum_{m=1}^K \Delta Y_{k_m(j)l}
\right)^2.
\]
Theorem~\ref{theorem:varianceestimation} shows that our variance estimator is consistent under the same assumptions as Theorem~\ref{theorem:asymptotic}. 
\begin{theorem}\label{theorem:varianceestimation}
    Under Assumptions \ref{assumption:noanticipation},
\ref{assumption:overlap}, \ref{assumption:parallelgap}, \ref{assumption:iid}, \ref{assumption:matching}, and \ref{assumption:regularity},
  \begin{align*}
      \frac{\widehat V_{\tau(X)}+\widehat V_E}{V_{\tau(X)}+V_E  } \xrightarrow{p} 1.
  \end{align*}
\end{theorem}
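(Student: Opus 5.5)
The plan is to show that the numerator and the denominator of the ratio in Theorem~\ref{theorem:varianceestimation} differ by $o_p(1)$, and that the denominator is bounded away from zero. The second part is immediate. By Assumption~\ref{assumption:regularity}(iii),
\[
V_E\geq \frac{1}{n_{11}}\sum_{i\in\mathcal I_{11}}\sigma^2_{11,l}(X_i)\geq \underline\sigma^2>0 .
\]
So it suffices to prove $\widehat V_{\tau(X)}+\widehat V_E-(V_{\tau(X)}+V_E)\xrightarrow{p}0$.

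The key structural observation is that $\widehat V_{\tau(X)}$ does not estimate $V_{\tau(X)}$ alone. Because the matched outcomes carry noise, it also picks up part of $V_E$, and $\widehat V_E$ is built to supply exactly the remainder. Write $\Delta Y_{jl}=\mu_l^{gs}(X_j)+\varepsilon_j$ for $j\in\mathcal I_{gs}$. The residual inside $\widehat V_{\tau(X)}$ for treated unit $i$ then decomposes into four pieces:
\begin{enumerate}[label=(\alph*)]
\item the heterogeneity term $h_i=\mu_l^{11}(X_i)-\mu_l^{01}(X_i)-\mu_l^{10}(X_i)+\mu_l^{00}(X_i)-\tau_l$;
\item the noise term $e_i=\varepsilon_i-\sum_{j\in\mathcal I_{01}}w_{ij}\varepsilon_j-\sum_{j\in\mathcal I_{10}}w_{ij}\varepsilon_j+\sum_{j\in\mathcal I_{00}}w_{ij}\varepsilon_j$;
\item the discrepancy terms $r_i=\sum_j w_{ij}\bigl(\mu_l^{gs}(X_i)-\mu_l^{gs}(X_j)\bigr)$, with the appropriate signs across the three subgroups;
\item the constant $\tau_l-\widehat\tau_l$.
\end{enumerate}

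I will handle the pieces as follows.
\begin{enumerate}[label=(\roman*)]
\item The term $\frac1{n_{11}}\sum_i h_i^2\to V_{\tau(X)}$ by the law of large numbers, since the $h_i$ are bounded on the compact support.
\item The constant $\tau_l-\widehat\tau_l=o_p(1)$ by Theorem~\ref{theorem:asymptotic}.
\item For the discrepancy term, Lipschitz continuity of $\mu^{gs}_l$ and compactness of the support give $|r_i|\leq C$ and $|r_i|\leq L\sum_j w_{ij}\|X_j-X_i\|$. Hence $\frac1{n_{11}}\sum_i r_i^2\leq C L\frac1{n_{11}}\sum_i\sum_jw_{ij}\|X_j-X_i\|=o_p(1)$ by locality, Assumption~\ref{assumption:matching}(ii).
\item All cross products involving (ii) or (iii) then vanish by Cauchy--Schwarz, once $\frac1{n_{11}}\sum_i e_i^2=O_p(1)$ is shown.
\item The cross product $\frac1{n_{11}}\sum_i h_ie_i$ has conditional mean zero given $\mathbf{X}=(G_i,S_i,X_i)_{i=1}^n$. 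Its conditional variance is at most $\frac{C}{n_{11}^2}\bigl(n_{11}+\sum_j w_j^2\bigr)=O_p(n^{-1})$ by Assumption~\ref{assumption:matching}(iii).
\end{enumerate}

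The main term is $\frac1{n_{11}}\sum_ie_i^2$. Because the $\varepsilon$'s are conditionally independent across units, its conditional mean is
\[
\frac1{n_{11}}\Bigl[\sum_{i\in\mathcal I_{11}}\sigma^2_{11,l}(X_i)+\sum_{(g,s)\neq(1,1)}\sum_{j\in\mathcal I_{gs}}\widetilde w_j^2\sigma^2_{gs,l}(X_j)\Bigr].
\]
To show concentration, I will split $\sum_ie_i^2$ into diagonal parts and an off-diagonal quadratic form. The off-diagonal form is $\sum_{j\neq k}\varepsilon_j\varepsilon_k a_{jk}$ with $a_{jk}=\sum_iw_{ij}w_{ik}$. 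Using $w_{ij}\leq 1$ and $\sum_k w_{ik}=1$, we get
\[
\sum_{j,k}a_{jk}^2\leq\sum_j w_j\sum_k a_{jk}=\sum_jw_j^2 ,
\]
so its conditional variance is $O_p(n^{-1})$ after dividing by $n_{11}^2$. The diagonal part, $\sum_j\widetilde w_j^2(\varepsilon_j^2-\sigma^2)$, is controlled the same way, using the bounded fourth moments in Assumption~\ref{assumption:regularity}(iv) and $\widetilde w_j^2\leq w_j^2$. Altogether,
\[
\widehat V_{\tau(X)}=V_{\tau(X)}+\frac1{n_{11}}\Bigl[\sum\sigma^2_{11,l}+\sum_j\widetilde w_j^2\sigma^2_{gs,l}(X_j)\Bigr]+o_p(1).
\]
Since $V_E$ uses $w_j^2$ where this expression uses $\widetilde w_j^2$, it remains to show
\[
\frac1{n_{11}}\sum_{j\in\mathcal I_{gs}}(w_j^2-\widetilde w_j^2)\bigl(\widehat\sigma^2_{gs,l}(X_j)-\sigma^2_{gs,l}(X_j)\bigr)\xrightarrow{p}0 .
\]

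I expect this last step to be the main obstacle. The weights are random, unbounded, and may be heavy-tailed, and the $\widehat\sigma^2$'s are dependent through shared within-subgroup neighbors. I would condition on $\mathbf X$ and follow \cite{Abadie-Imbens_2006_ECMA}.
\begin{enumerate}[label=(\alph*)]
\item The conditional mean of $\widehat\sigma^2_{gs,l}(X_j)$ equals
\[
\tfrac{K}{K+1}\Bigl[\sigma^2(X_j)+\tfrac1{K^2}\sum_m\sigma^2(X_{k_m(j)})+\bigl(\mu(X_j)-\bar\mu_{k(j)}\bigr)^2\Bigr].
\]
Lipschitz continuity of $\sigma^2$ and $\mu$ reduces the bias to a multiple of the within-subgroup $K$-nearest-neighbor distances. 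These distances are uniformly $o_p(1)$ on average because the support is compact and the density is bounded below.
\item The weighted bias sum is therefore bounded by $\frac1{n_{11}}\sum_j w_j^2\,d_{K}(j)$. Using Hölder with the $4+\delta$ moment of Assumption~\ref{assumption:matching}(iii) and the moment bounds on $d_K(j)$, this term is $o_p(1)$.
\item For the stochastic part, each unit is among the $K$ nearest neighbors of at most a bounded number of units, a geometric fact in $\mathbbm R^q$. Hence the conditional covariance matrix of the $\widehat\sigma^2$'s has bounded row sums. The conditional variance is then at most $\frac{C}{n_{11}^2}\sum_j w_j^4=O_p(n^{-1})$, again by Assumption~\ref{assumption:matching}(iii) and the fourth moments of the outcomes.
\end{enumerate}
Chebyshev's inequality conditional on $\mathbf X$ and dominated convergence then finish the argument.
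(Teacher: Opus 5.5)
Your proposal is correct and follows essentially the same route as the paper. You use the same split of $\widehat V_{\tau(X)}$ into heterogeneity, matching-discrepancy and noise parts. You use conditional Chebyshev given the covariates to show that the noise part concentrates on the $\widetilde w_j^2$-weighted conditional variances. And you use the Abadie--Imbens bias argument plus Stone's lemma for $\widehat\sigma^2_{gs,l}$ to supply the $w_j^2-\widetilde w_j^2$ remainder.

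The differences are only technical:
\begin{itemize}
\item you use Cauchy--Schwarz where the paper uses an exact identity that subtracts $(\widehat\tau_l-\tau_l)^2$;
\item you bound the off-diagonal form by $\sum_{j,k}a_{jk}^2\le\sum_j w_j^2$, which is sharper than the paper's $\max_r w_r$ bound;
\item you use H\"older with nearest-neighbor distance moments where the paper uses its uniform max-distance lemma.
\end{itemize}
You also make explicit the lower bound $V_E\geq\underline\sigma^2$, which the paper leaves implicit.
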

 Our variance estimation approach builds on that of \citet{Abadie-Imbens_2006_ECMA} for the cross-sectional setting but extends it in two directions. First, our variance estimator is tailored to the data structure and estimation procedure of the triple-differences design. This structure generates additional interaction terms between the residuals of different untreated subgroups. Our proof must handle these terms, which have no counterpart in the two-group cross-sectional setting. Second, the consistency proof for the variance estimator in \citet{Abadie-Imbens_2006_ECMA} relies on properties specific to nearest-neighbor matching with a fixed number of neighbors, so their proof cannot be applied directly here. In stead, we develop a refined proof strategy that is valid for the class of matching procedures characterized by Assumption~\ref{assumption:matching}, which includes nearest-neighbor matching with a fixed number of neighbors as a special case.

\section{Simulation}\label{section:simulation}
In this section, we provide the Monte Carlo evidence on (i) the inconsistency of the OLS estimators from the two 3WFE regression specifications studied in Section \ref{section:empiricaldiagnose} and (ii) the finite-sample performance of our proposed matched triple-differences estimators, their bias-corrected versions, and the proposed variance estimator.

We simulate a panel of $n=5,000$ units observed over four periods, $t=1,2,3,4$. Each unit belongs to one of four subgroups indexed by $(G,S)\in\{0,1\}^2$. The treatment is implemented simultaneously in period $t^*=3$ and only the exposed eligible subgroup, $(G,S)=(1,1)$, is treated. 

Following the data generating process (DGP) of \cite{Otsu_Rai_2017_JASA}, we generate each unit's $q$-dimensional covariate vector $X_i=(X_{i1},\dots, X_{iq})'$:
\begin{align*}
    X_{ij}&=\xi_i\frac{\abs{\zeta_{ij}}}{\norm{\zeta_i}},\quad \text{if } j=1,\dots,q \\ 
    \xi_i&\sim \text{Uniform}(0,1), \quad \zeta_i\sim \mathcal{N}(0,I_q)
\end{align*}
Let $\norm{X}$ denote the Euclidean norm of $X$. Under our DGP,  $\norm{X}$ follows a uniform distribution over $(0,1)$.

Each unit is assigned to one of the four subgroups according to a multinomial logit model:
\begin{align*}
    P(G_i=g,S_i=s|X_i)=\frac{\exp(\alpha_{gs}+\gamma_{gs}(\norm{X_i}-0.5))}{\sum_{(g,s)}\exp(\alpha_{gs}+\gamma_{gs}(\norm{X_i}-0.5)) }
\end{align*}
The intercepts $\alpha_{gs}$ control the relative sizes of the four subgroups. We consider two sets of intercepts. The first gives the four subgroups approximately equal sizes: $\alpha_{11}=\alpha_{10}=\alpha_{01}=\alpha_{00}=0$. The second makes the exposed ineligible subgroup $(1,0)$ smaller than the other three subgroups: $\alpha_{11}=0,  \alpha_{10}=-0.776, \alpha_{01}=0.228,\alpha_{00}=0.270 $. The targeted shares are approximately $25\%$, $11\%$, $30\%$, and $34\%$ for subgroups $(1,1)$, $(1,0)$, $(0,1)$, and $(0,0)$, respectively. This second set of intercepts is designed to mimic the subgroup sizes in our empirical application in Section \ref{section:empiricalapplication}.

The slopes $\gamma_{gs}$ determine how selection into the four subgroups depends on the covariates. We fix the slopes  at $\gamma_{11}=0,\gamma_{10}=-2 ,\gamma_{01}=-0.5, \gamma_{00}=-3$. We choose these values so that selection on the covariate is stronger along the eligibility dimension than along the exposure dimension. This mirrors many triple-differences designs, in which the the policy is typically implemented along the exposure dimension (e.g., at the county or state level). Because the policy is usually implemented quasi-randomly, selection on individual covariates is weaker long the exposure dimension; along the eligibility dimension (e.g., growing tobacco, or being pregnant or not), selection on individual covariates is usually stronger.

The potential outcomes are defined as
\begin{align*}
    Y_{it}(0)&=\alpha_i + 2 \beta_0 G_i(\norm{X_i}-0.5)t +\beta_0 S_i (\norm{X_i}-0.5)t +v_{it}(0)\\
    Y_{it}(1)&=\tau + \alpha_i + 2 \beta_1 G_i(\norm{X_i}-0.5)t +\beta_1 S_i (\norm{X_i}-0.5)t +v_{it}(1)\\
    \alpha_i&\sim \mathcal{N}(0,1), \quad v_{it}(0)\sim\mathcal{N}(0,1), \quad v_{it}(1)\sim \mathcal{N}(0,1)
\end{align*}
with $\alpha_i$, $v_{it}(0)$, $v_{it}(1)$, $X_i$ mutually independent. This specification satisfies the conditional parallel gaps assumption but violates the conditional parallel trends assumption. We set $\tau=1$ and consider two choices of $\beta_0$ and $\beta_1$. The first sets $\beta_0=\beta_1=2$, so the treatment effect is homogeneous with respect to $X$. The second sets $\beta_0=2,\beta_1=6$, so the treatment effect is heterogeneous with respect to $X$.

Based on the discussion above, we consider four main simulation designs. The first and second designs have equal subgroup sizes with a homogeneous/heterogeneous treatment effect. The third and fourth designs have a small $(1,0)$ subgroup size with a homogeneous/heterogeneous treatment effect.

We report results for $\tau_0$, the treatment effect in the implementation period $t^*=3$, based on $1,000$ repetitions with covariate dimension $q=2$. For each design, we report the OLS estimators from the two 3WFE regression specifications studied in Section \ref{section:empiricaldiagnose}, the matched triple-differences estimators under three matching methods, and their corresponding bias-corrected estimators. The three matching methods and their tuning parameters are: nearest-neighbor matching with one neighbor; nearest neighbor matching with a diverging number of neighbors, set to $n_{gs}^{1/3}$; and kernel matching with the Epanechnikov kernel and a bandwidth rule based on Stata's kmatch implementation, which sets the bandwidth to $1.5$ times the $90$th percentile of the non-zero pairwise distances from one-to-one nearest-neighbor matching with replacement. Bias correction uses a second-order series estimator. For each of the eight estimators, we report the Monte Carlo bias, the root mean squared error (RMSE), the Monte Carlo standard deviation, the average standard error computed by our proposed variance estimator, and the empirical coverage rate and length of the corresponding $95\%$ confidence interval. 

The results reported in Tables \ref{table:design1} to \ref{table:design4} confirm our theoretical findings in Sections \ref{section:empiricaldiagnose} and \ref{section:matchedtriplediffereces}. In all four designs, the OLS estimators from the two 3WFE regression specifications exhibit substantial bias, ranging from $14\%$ to $81\%$ of the true treatment effect. By contrast, our proposed estimators have negligible bias in all four designs. For nearest-neighbor matching with a diverging number of neighbors, the bias-corrected estimator has smaller bias than the uncorrected one in all four designs. Our proposed variance estimator performs well: in almost all combinations of the four designs and six matching estimators, the average standard error is close to the Monte Carlo standard deviation, and the coverage rate of the $95\%$ confidence interval is close to the nominal rate. Nearest-neighbor matching with a diverging number of neighbors and the kernel matching are more efficient than nearest-neighbor matching with one neighbor, with smaller standard errors and shorter confidence intervals.

\begin{table}[htbp]
\centering
\caption{Balanced Subgroup Size, Homogeneous Treatment Effect}
\label{table:design1}
\small
\begin{tabular}{lrrrrrr}
\toprule
& \multicolumn{1}{c}{Bias} & \multicolumn{1}{c}{RMSE} & \multicolumn{1}{c}{MCSD} & \multicolumn{1}{c}{Avg.~SE} & \multicolumn{1}{c}{Coverage} & \multicolumn{1}{c}{CI length} \\
\midrule
Cov. Trend ($X'\theta_t$) & $0.8084$ & $0.8131$ & $0.0879$ & $0.0874$ & $0.000$ & $0.3426$ \\
Cov. Level ($X'\theta$) & $0.7008$ & $0.7079$ & $0.1001$ & $0.0988$ & $0.000$ & $0.3875$ \\
\midrule
NN, $M = 1$ & $0.0060$ & $0.1203$ & $0.1202$ & $0.1225$ & $0.954$ & $0.4801$ \\
\quad bias-corrected & $0.0025$ & $0.1203$ & $0.1203$ & $0.1225$ & $0.953$ & $0.4801$ \\
\addlinespace
NN, $M = n_{gs}^{1/3}$ & $0.0238$ & $0.0964$ & $0.0935$ & $0.0946$ & $0.944$ & $0.3709$ \\
\quad bias-corrected & $0.0052$ & $0.0944$ & $0.0943$ & $0.0946$ & $0.950$ & $0.3709$ \\
\addlinespace
Kernel & $0.0086$ & $0.1029$ & $0.1026$ & $0.1040$ & $0.954$ & $0.4077$ \\
\quad bias-corrected & $0.0037$ & $0.1027$ & $0.1027$ & $0.1040$ & $0.954$ & $0.4077$ \\
\bottomrule
\end{tabular}
\par\smallskip
\begin{minipage}{\linewidth}\footnotesize
\emph{Notes:} This table reports the simulation results for design 1 with balanced subgroup size and homogeneous treatment effect. We run $1,000$ Monte Carlo replications, each with $n = 5,000$ units, $q = 2$ covariates, and $4$ periods. The treatment effect is $\tau_0 = 1$. The two OLS estimators are estimated using the two regression specifications studied in section \ref{section:empiricaldiagnose}. The three matching estimators use (i) nearest-neighbor matching with one neighbor, (ii) nearest-neighbor matching with a diverging number of neighbors, set to $n_{gs}^{1/3}$, and (iii) the kernel matching estimator with the Epanechnikov kernel with a bandwidth rule based on Stata's kmatch implementation, which is set to be $1.5$ times the $90$th percentile of non-zero pairwise distances from matching with replacement. The three bias-corrected matching estimators use the corresponding matching methods, and the bias correction is performed using a second-order series estimator.
\end{minipage}
\end{table}

\begin{table}[htbp]
\centering
\caption{Balanced Subgroup Size, Heterogeneous Treatment Effect}
\label{table:design2}
\small
\begin{tabular}{lrrrrrr}
\toprule
& \multicolumn{1}{c}{Bias} & \multicolumn{1}{c}{RMSE} & \multicolumn{1}{c}{MCSD} & \multicolumn{1}{c}{Avg.~SE} & \multicolumn{1}{c}{Coverage} & \multicolumn{1}{c}{CI length} \\
\midrule
Cov. Trend ($X'\theta_t$) & $1.1133$ & $1.1472$ & $0.2768$ & $0.2768$ & $0.016$ & $1.0851$ \\
Cov. Level ($X'\theta$) & $0.6864$ & $0.7601$ & $0.3268$ & $0.3302$ & $0.454$ & $1.2943$ \\
\midrule
NN, $M = 1$ & $-0.0085$ & $0.2900$ & $0.2901$ & $0.2993$ & $0.957$ & $1.1731$ \\
\quad bias-corrected & $-0.0120$ & $0.2901$ & $0.2900$ & $0.2992$ & $0.958$ & $1.1727$ \\
\addlinespace
NN, $M = n_{gs}^{1/3}$ & $0.0093$ & $0.2839$ & $0.2839$ & $0.2894$ & $0.949$ & $1.1343$ \\
\quad bias-corrected & $-0.0092$ & $0.2834$ & $0.2834$ & $0.2888$ & $0.951$ & $1.1320$ \\
\addlinespace
Kernel & $-0.0058$ & $0.2856$ & $0.2857$ & $0.2921$ & $0.950$ & $1.1451$ \\
\quad bias-corrected & $-0.0107$ & $0.2856$ & $0.2856$ & $0.2920$ & $0.948$ & $1.1446$ \\
\bottomrule
\end{tabular}
\par\smallskip
\begin{minipage}{\linewidth}\footnotesize
\emph{Notes:} This table reports the simulation results for design 2 with balanced subgroup size and heterogeneous treatment effect. We run $1,000$ Monte Carlo replications, each with $N = 5,000$ units, $q = 2$ covariates, and $4$ periods. The treatment effect is $\tau_0 = 4.87$. The two OLS estimators are estimated using the two regression specifications studied in section \ref{section:empiricaldiagnose}. The three matching estimators use (i) nearest-neighbor matching with one neighbor, (ii) nearest-neighbor matching with a diverging number of neighbors, set to $n_{gs}^{1/3}$, and (iii) the kernel matching estimator with the Epanechnikov kernel with a bandwidth rule based on Stata's kmatch implementation, which is set to be $1.5$ times the $90$th percentile of non-zero pairwise distances from matching with replacement. The three bias-corrected matching estimators use the corresponding matching methods, and the bias correction is performed using a second-order series estimator.
\end{minipage}
\end{table}

\begin{table}[htbp]
\centering
\caption{Small $(1,0)$ Subgroup, Homogeneous Treatment Effect}
\label{table:design3}
\small
\begin{tabular}{lrrrrrr}
\toprule
& \multicolumn{1}{c}{Bias} & \multicolumn{1}{c}{RMSE} & \multicolumn{1}{c}{MCSD} & \multicolumn{1}{c}{Avg.~SE} & \multicolumn{1}{c}{Coverage} & \multicolumn{1}{c}{CI length} \\
\midrule
Cov. Trend ($X'\theta_t$) & $0.8118$ & $0.8192$ & $0.1102$ & $0.1102$ & $0.000$ & $0.4318$ \\
Cov. Level ($X'\theta$) & $0.7087$ & $0.7196$ & $0.1253$ & $0.1242$ & $0.001$ & $0.4869$ \\
\midrule
NN, $M = 1$ & $0.0134$ & $0.1537$ & $0.1532$ & $0.1496$ & $0.951$ & $0.5866$ \\
\quad bias-corrected & $0.0011$ & $0.1541$ & $0.1542$ & $0.1496$ & $0.947$ & $0.5865$ \\
\addlinespace
NN, $M = n_{gs}^{1/3}$ & $0.0461$ & $0.1386$ & $0.1307$ & $0.1257$ & $0.922$ & $0.4927$ \\
\quad bias-corrected & $-0.0003$ & $0.1345$ & $0.1346$ & $0.1257$ & $0.942$ & $0.4927$ \\
\addlinespace
Kernel & $0.0206$ & $0.1409$ & $0.1394$ & $0.1353$ & $0.942$ & $0.5303$ \\
\quad bias-corrected & $0.0020$ & $0.1407$ & $0.1408$ & $0.1353$ & $0.941$ & $0.5303$ \\
\bottomrule
\end{tabular}
\par\smallskip
\begin{minipage}{\linewidth}\footnotesize
\emph{Notes:} This table reports the simulation results for design 3 with a small $(1,0)$ subgroup and homogeneous treatment effect. We run $1,000$ Monte Carlo replications, each with $n = 5,000$ units, $q = 2$ covariates, and $4$ periods. The treatment effect is $\tau_0 = 1$. The two OLS estimators are estimated using the two regression specifications studied in section \ref{section:empiricaldiagnose}. The three matching estimators use (i) nearest-neighbor matching with one neighbor, (ii) nearest-neighbor matching with a diverging number of neighbors, set to $n_{gs}^{1/3}$, and (iii) the kernel matching estimator with the Epanechnikov kernel with a bandwidth rule based on Stata's kmatch implementation, which is set to be $1.5$ times the $90$th percentile of non-zero pairwise distances from matching with replacement. The three bias-corrected matching estimators use the corresponding matching methods, and the bias correction is performed using a second-order series estimator.
\end{minipage}
\end{table}

\begin{table}[htbp]
\centering
\caption{Small $(1,0)$ Subgroup, Heterogeneous Treatment Effect}
\label{table:design4}
\small
\begin{tabular}{lrrrrrr}
\toprule
& \multicolumn{1}{c}{Bias} & \multicolumn{1}{c}{RMSE} & \multicolumn{1}{c}{MCSD} & \multicolumn{1}{c}{Avg.~SE} & \multicolumn{1}{c}{Coverage} & \multicolumn{1}{c}{CI length} \\
\midrule
Cov. Trend ($X'\theta_t$) & $1.2256$ & $1.2709$ & $0.3361$ & $0.3394$ & $0.055$ & $1.3304$ \\
Cov. Level ($X'\theta$) & $0.6947$ & $0.7501$ & $0.2830$ & $0.2883$ & $0.336$ & $1.1300$ \\
\midrule
NN, $M = 1$ & $-0.0005$ & $0.2714$ & $0.2715$ & $0.2708$ & $0.949$ & $1.0615$ \\
\quad bias-corrected & $-0.0128$ & $0.2722$ & $0.2720$ & $0.2705$ & $0.946$ & $1.0604$ \\
\addlinespace
NN, $M = n_{gs}^{1/3}$ & $0.0322$ & $0.2613$ & $0.2595$ & $0.2590$ & $0.954$ & $1.0152$ \\
\quad bias-corrected & $-0.0142$ & $0.2611$ & $0.2608$ & $0.2579$ & $0.954$ & $1.0111$ \\
\addlinespace
Kernel, & $0.0067$ & $0.2632$ & $0.2633$ & $0.2632$ & $0.951$ & $1.0316$ \\
\quad bias-corrected & $-0.0119$ & $0.2640$ & $0.2638$ & $0.2629$ & $0.953$ & $1.0304$ \\
\bottomrule
\end{tabular}
\par\smallskip
\begin{minipage}{\linewidth}\footnotesize
\emph{Notes:} This table reports the simulation results for design 4 with a small $(1,0)$ subgroup and heterogeneous treatment effect. We run $1,000$ Monte Carlo replications, each with $n = 5,000$ units, $q = 2$ covariates, and $4$ periods. The treatment effect is $\tau_0 = 3.13$. The two OLS estimators are estimated using the two regression specifications studied in section \ref{section:empiricaldiagnose}. The three matching estimators use (i) nearest-neighbor matching with one neighbor, (ii) nearest-neighbor matching with a diverging number of neighbors, set to $n_{gs}^{1/3}$, and (iii) the kernel matching estimator with the Epanechnikov kernel with a bandwidth rule based on Stata's kmatch implementation, which is set to be $1.5$ times the $90$th percentile of non-zero pairwise distances from matching with replacement. The three bias-corrected matching estimators use the corresponding matching methods, and the bias correction is performed using a second-order series estimator.
\end{minipage}
\end{table}

\newpage
\section{Empirical Application}\label{section:empiricalapplication}
In this section, we revisit our motivating example from \cite{Cai_2016_AEJ}, who studies how the introduction of an agricultural insurance program affects rural households' borrowing and saving behavior in $12$ tobacco-producing counties in Jiangxi Province, China. In these counties, some rural households rely on tobacco cultivation as their main source of income (hereafter, tobacco households), while others rely on other crops such as rice (hereafter, non-tobacco households). In 2003, one of the 12 counties, Guangchang County (hereafter, the exposed county), collaborated with the People's Insurance Company of China (PICC) to design and launch a tobacco production insurance program. Under the program, a household was eligible for an insurance payout if a covered weather disaster reduced its tobacco yield by at least $30\%$. Only tobacco households in the exposed county were offered the insurance; non-tobacco households in the exposed county and households in the other counties were not eligible to purchase it. Because purchase was compulsory, all tobacco households in the exposed county were covered.

\cite{Cai_2016_AEJ} uses a household-level panel of 5,746 households from the 12 counties, including more than 3,000 tobacco households (approximately $1,200$ of them in the exposed county) and about $2,200$ non-tobacco households. The dataset has two parts. The first part is an administrative dataset on household saving and borrowing information from the Rural Credit Cooperative (RCC), the main rural bank in China. The second part is RCC annual survey data on the background information of those households such as demographic information about household heads, agricultural land allocation information, and revenue from the crop production. The panel covers the years 2001 to 2008; because the insurance program was implemented in 2003, it contains two pre-treatment years (2001--2002), and six post-treatment years (2003--2008). 

\cite{Cai_2016_AEJ} uses a triple-differences design to estimate the effect of the program on household borrowing and saving. As discussed in our motivating example in Section \ref{section:econometricssetup}, a triple-differences design accounts for both the county-specific and industry-specific shocks, either of which could confound a standard difference-in-differences design. In our empirical application, we focus on the dynamic effects of insurance provision on the saving behavior of tobacco households in the exposed county. We use the same three saving outcomes as \cite{Cai_2016_AEJ}: net saving, defined as the annual increase in total savings; the saving rate, defined as the ratio of net saving to current household income; and the flexible-term saving ratio, defined as the ratio of net saving in checking accounts to total net saving.

We first analyze the three outcomes using an event-study regression similar to that in \cite{Cai_2016_AEJ}, which corresponds to the covariate-level 3WFE specification in Section \ref{section:empiricaldiagnose}:
\begin{align*}
   Y_{it}
= \lambda_{G_iS_i}
+ \lambda_{G_i t}
+ \lambda_{S_i t}
+ \sum_{l\neq -1}
    \tau_l^{CL}\, G_iS_i\1(t-t^*=l)
+ X_i'\theta
+ u_{it},
\end{align*}
where $\lambda_{G_iS_i}$, $\lambda_{G_i t}$, and $\lambda_{S_i t}$ are subgroup, exposed-county-by-time, and tobacco-by-time fixed effects, respectively, and $X_i$ is a vector of time-invariant household characteristics, including the age and education level of the household head and household size. 

We then implement our matched triple-differences estimator and construct confidence intervals using the variance estimator proposed in Section \ref{section:matchedtriplediffereces}. We match exactly on the education level of the household head and treat the head's age and household size as continuous variables because they take many values. Our matching method is bias-corrected kernel matching with an Epanechnikov kernel. The bandwidth follows the rule implemented in Stata's kmatch: $1.5$ times the $90$th percentile of the non-zero pairwise distances from one-to-one nearest-neighbor matching with replacement. The bias correction uses a second-order series estimator. For both the 3WFE regression and the matching estimator, the baseline year is $2002$, the year before the program was implemented.

Figure \ref{figure:empiricalcaikernelbc} plots the event-study estimators from the 3WFE regression and kernel matching\footnote{In appendix \ref{appendix:additionalempirical}, we also report the estimates from the kernel matching estimator without bias correction and bias-corrected nearest-neighbor matching estimator with a diverging number of neighbors. The main results are not sensitive to the choice of matching methods and whether the bias-correction is applied.}.  For net saving (Panel A), the two estimators lead to qualitatively different conclusions. Specifically, the 3WFE regression finds an sign-reversing pattern that the program increases net saving at years $0$ and $3$ and then decreases it at year $5$. In contrast, the kernel matching estimator finds no statistically significant effect on net saving in any post-treatment years, which is more plausible. Recall from Section \ref{section:empiricaldiagnose} that this 3WFE specification is consistent if and only if the parallel gaps assumption holds unconditionally. This suggests that, for the net saving, the parallel gaps assumption may be more plausible after conditioning on household characteristics. For the other two outcomes (Panels B and C), the kernel matching estimator and the 3WFE regression estimator yield similar results, although the kernel matching estimator finds larger effects on the flexible-term saving ratio in years $2$ and $3$. Overall, the matched triple-differences estimates can differ from the 3WFE estimates both in magnitude and in their qualitative conclusions.

\begin{figure}[p]
\centering
\setstretch{1}
\setlength{\parindent}{0pt}
\caption{Event-study estimates: 3WFE regression and matched triple-differences estimator}
\label{figure:empiricalcaikernelbc}
\vspace{0.04in}
\begin{tikzpicture}[every node/.style={font=\footnotesize}]
\begin{axis}[
    width=0.78\linewidth, height=0.24\textheight,
    xmin=-2.35, xmax=5.35, ymin=-0.40027801463779, ymax=0.34504899909701,
    xtick={-2,-1,0,1,2,3,4,5},
    xlabel={Event time (year $-$ 2003)}, ylabel={Treatment effect},
    axis lines=left, axis line style={black,thin},
    tick align=outside, tick style={black,thin},
    tick label style={font=\footnotesize},
    label style={font=\footnotesize},
    xlabel style={at={(axis description cs:0.5,-0.17)},anchor=north},
    ylabel style={at={(axis description cs:-0.11,0.5)},anchor=south},
    scaled y ticks=false,
    yticklabel style={/pgf/number format/fixed,/pgf/number format/precision=3},
    legend style={at={(1.02,0.5)},anchor=west,draw=none,
                  font=\footnotesize,inner sep=1pt},
    legend cell align=left, legend columns=1,
    clip=true,
]
\addplot[black,thin,no marks,forget plot] coordinates {(-2.35,0) (5.35,0)};
\addplot[black,dashed,thin,no marks,forget plot] coordinates {(-1,-0.40027801463779) (-1,0.34504899909701)};

\addplot[
    only marks,mark=*,mark size=2.1pt,color={rgb,255:red,0;green,76;blue,153},
    error bars/.cd,y dir=both,y explicit,
    error bar style={line width=0.65pt},error mark options={rotate=90,mark size=2.3pt,line width=0.65pt}
] coordinates {
(-2.065,0.03278396501074) += (0,0.033038353351182) -= (0,0.033038353351182)
(-0.065,0.07723661597553) += (0,0.051545595431263) -= (0,0.051545595431263)
(0.935,-0.043751965344329) += (0,0.040673623325812) -= (0,0.040673623325812)
(1.935,0.032237012146934) += (0,0.053347809186021) -= (0,0.053347809186021)
(2.935,0.17860639235518) += (0,0.099240990749347) -= (0,0.099240990749347)
(3.935,0.010624681265614) += (0,0.084083444079536) -= (0,0.084083444079536)
(4.935,-0.19722942837161) += (0,0.12107520493823) -= (0,0.12107520493823)
};
\addlegendentry{Regression}
\addplot[forget plot,only marks,mark=*,mark size=2.1pt,color={rgb,255:red,0;green,76;blue,153}] coordinates {(-1.065,0)};

\addplot[
    only marks,mark=triangle*,mark size=2.1pt,color={rgb,255:red,190;green,52;blue,44},
    error bars/.cd,y dir=both,y explicit,
    error bar style={line width=0.65pt},error mark options={rotate=90,mark size=2.3pt,line width=0.65pt}
] coordinates {
(-1.935,-0.0072926863016265) += (0,0.038654032950374) -= (0,0.038654032950374)
(0.065,0.0033408336817603) += (0,0.052404924595985) -= (0,0.052404924595985)
(1.065,-0.0019253958329706) += (0,0.047644232644645) -= (0,0.047644232644645)
(2.065,0.015527154522128) += (0,0.063152640502042) -= (0,0.063152640502042)
(3.065,-0.012531710609) += (0,0.090321288477436) -= (0,0.090321288477436)
(4.065,-0.056798746382658) += (0,0.078135846351844) -= (0,0.078135846351844)
(5.065,-0.075814057075813) += (0,0.11710117087035) -= (0,0.11710117087035)
};
\addlegendentry{Matched DDD}
\addplot[forget plot,only marks,mark=triangle*,mark size=2.1pt,color={rgb,255:red,190;green,52;blue,44}] coordinates {(-0.935,0)};

\end{axis}
\end{tikzpicture}\par
\vspace{-0.02in}
{\footnotesize Panel A: Net saving}\par
\vspace{0.05in}

\begin{tikzpicture}[every node/.style={font=\footnotesize}]
\begin{axis}[
    width=0.78\linewidth, height=0.24\textheight,
    xmin=-2.35, xmax=5.35, ymin=-0.13464693760165, ymax=0.066716038222654,
    xtick={-2,-1,0,1,2,3,4,5},
    xlabel={Event time (year $-$ 2003)}, ylabel={Treatment effect},
    axis lines=left, axis line style={black,thin},
    tick align=outside, tick style={black,thin},
    tick label style={font=\footnotesize},
    label style={font=\footnotesize},
    xlabel style={at={(axis description cs:0.5,-0.17)},anchor=north},
    ylabel style={at={(axis description cs:-0.11,0.5)},anchor=south},
    scaled y ticks=false,
    yticklabel style={/pgf/number format/fixed,/pgf/number format/precision=3},
    legend style={at={(1.02,0.5)},anchor=west,draw=none,
                  font=\footnotesize,inner sep=1pt},
    legend cell align=left, legend columns=1,
    clip=true,
]
\addplot[black,thin,no marks,forget plot] coordinates {(-2.35,0) (5.35,0)};
\addplot[black,dashed,thin,no marks,forget plot] coordinates {(-1,-0.13464693760165) (-1,0.066716038222654)};

\addplot[
    only marks,mark=*,mark size=2.1pt,color={rgb,255:red,0;green,76;blue,153},
    error bars/.cd,y dir=both,y explicit,
    error bar style={line width=0.65pt},error mark options={rotate=90,mark size=2.3pt,line width=0.65pt}
] coordinates {
(-2.065,-0.0083879405501005) += (0,0.023449453918368) -= (0,0.023449453918368)
(-0.065,0.023063620532447) += (0,0.022853104228507) -= (0,0.022853104228507)
(0.935,0.014396292948598) += (0,0.025536221754226) -= (0,0.025536221754226)
(1.935,-0.032938609492962) += (0,0.030063560432264) -= (0,0.030063560432264)
(2.935,-0.024354776486304) += (0,0.028987026928152) -= (0,0.028987026928152)
(3.935,-0.04001878515996) += (0,0.03117962236484) -= (0,0.03117962236484)
(4.935,-0.065892770571912) += (0,0.030702135926055) -= (0,0.030702135926055)
};
\addlegendentry{Regression}
\addplot[forget plot,only marks,mark=*,mark size=2.1pt,color={rgb,255:red,0;green,76;blue,153}] coordinates {(-1.065,0)};

\addplot[
    only marks,mark=triangle*,mark size=2.1pt,color={rgb,255:red,190;green,52;blue,44},
    error bars/.cd,y dir=both,y explicit,
    error bar style={line width=0.65pt},error mark options={rotate=90,mark size=2.3pt,line width=0.65pt}
] coordinates {
(-1.935,-0.0059915426586062) += (0,0.023468253643622) -= (0,0.023468253643622)
(0.065,0.0148217072547) += (0,0.02394563832467) -= (0,0.02394563832467)
(1.065,0.01115741914584) += (0,0.026431346625161) -= (0,0.026431346625161)
(2.065,-0.039257319286807) += (0,0.03176410827417) -= (0,0.03176410827417)
(3.065,-0.02801425477034) += (0,0.031174009477146) -= (0,0.031174009477146)
(4.065,-0.04102811553772) += (0,0.031813313816065) -= (0,0.031813313816065)
(5.065,-0.066929290403756) += (0,0.031769118306759) -= (0,0.031769118306759)
};
\addlegendentry{Matched DDD}
\addplot[forget plot,only marks,mark=triangle*,mark size=2.1pt,color={rgb,255:red,190;green,52;blue,44}] coordinates {(-0.935,0)};

\end{axis}
\end{tikzpicture}\par
\vspace{-0.02in}
{\footnotesize Panel B: Saving rate}\par
\vspace{0.05in}

\begin{tikzpicture}[every node/.style={font=\footnotesize}]
\begin{axis}[
    width=0.78\linewidth, height=0.24\textheight,
    xmin=-2.35, xmax=5.35, ymin=-0.12680145612051, ymax=0.24111493392685,
    xtick={-2,-1,0,1,2,3,4,5},
    xlabel={Event time (year $-$ 2003)}, ylabel={Treatment effect},
    axis lines=left, axis line style={black,thin},
    tick align=outside, tick style={black,thin},
    tick label style={font=\footnotesize},
    label style={font=\footnotesize},
    xlabel style={at={(axis description cs:0.5,-0.17)},anchor=north},
    ylabel style={at={(axis description cs:-0.11,0.5)},anchor=south},
    scaled y ticks=false,
    yticklabel style={/pgf/number format/fixed,/pgf/number format/precision=3},
    legend style={at={(1.02,0.5)},anchor=west,draw=none,
                  font=\footnotesize,inner sep=1pt},
    legend cell align=left, legend columns=1,
    clip=true,
]
\addplot[black,thin,no marks,forget plot] coordinates {(-2.35,0) (5.35,0)};
\addplot[black,dashed,thin,no marks,forget plot] coordinates {(-1,-0.12680145612051) (-1,0.24111493392685)};

\addplot[
    only marks,mark=*,mark size=2.1pt,color={rgb,255:red,0;green,76;blue,153},
    error bars/.cd,y dir=both,y explicit,
    error bar style={line width=0.65pt},error mark options={rotate=90,mark size=2.3pt,line width=0.65pt}
] coordinates {
(-2.065,-0.031466247576672) += (0,0.040632916047508) -= (0,0.040632916047508)
(-0.065,0.006780677822692) += (0,0.04200750436697) -= (0,0.04200750436697)
(0.935,0.030510667282213) += (0,0.04027750711002) -= (0,0.04027750711002)
(1.935,0.049415648042741) += (0,0.048055316651295) -= (0,0.048055316651295)
(2.935,0.045013756686105) += (0,0.045749248829915) -= (0,0.045749248829915)
(3.935,0.070427588164144) += (0,0.052713895580232) -= (0,0.052713895580232)
(4.935,0.14103572272288) += (0,0.053576951374135) -= (0,0.053576951374135)
};
\addlegendentry{Regression}
\addplot[forget plot,only marks,mark=*,mark size=2.1pt,color={rgb,255:red,0;green,76;blue,153}] coordinates {(-1.065,0)};

\addplot[
    only marks,mark=triangle*,mark size=2.1pt,color={rgb,255:red,190;green,52;blue,44},
    error bars/.cd,y dir=both,y explicit,
    error bar style={line width=0.65pt},error mark options={rotate=90,mark size=2.3pt,line width=0.65pt}
] coordinates {
(-1.935,-0.026180834651252) += (0,0.047448326311303) -= (0,0.047448326311303)
(0.065,0.0055140263956653) += (0,0.05200625198073) -= (0,0.05200625198073)
(1.065,0.034105963281522) += (0,0.043579512138134) -= (0,0.043579512138134)
(2.065,0.082572994502872) += (0,0.050482823855256) -= (0,0.050482823855256)
(3.065,0.073966511152735) += (0,0.052421800101064) -= (0,0.052421800101064)
(4.065,0.081444440576871) += (0,0.055482952567311) -= (0,0.055482952567311)
(5.065,0.13609657334295) += (0,0.056392364553265) -= (0,0.056392364553265)
};
\addlegendentry{Matched DDD}
\addplot[forget plot,only marks,mark=triangle*,mark size=2.1pt,color={rgb,255:red,190;green,52;blue,44}] coordinates {(-0.935,0)};

\end{axis}
\end{tikzpicture}\par
\vspace{-0.02in}
{\footnotesize Panel C: Flexible-term saving ratio}\par
\vspace{0.05in}

\vspace{0.02in}
\begin{minipage}{\linewidth}
\scriptsize
\textit{Notes:} This figure plots event-study estimates from the covariate-level 3WFE regression and from the matched triple-differences estimator with bias-corrected kernel matching. Event time $-1$ (2002) is the reference period. Bars show 95\% confidence intervals. The covariates included in the analysis are: the age and education level of the household head and household size. Cross-sectional sample sizes for panels A, B, and C are 3,736 (29,888 observations), 3,332 households (26,656 observations), and 3,545 households (28,360 observations), respectively. Matching uses an Epanechnikov kernel and bias correction uses the second-order series estimator.
\par
\vspace{0.25\baselineskip}
\end{minipage}
\end{figure}

\newpage

\section{Conclusion}\label{section:conclusion}
We formally analyze two commonly used empirical strategies in triple-differences designs: augmenting the 3WFE regression specification with either covariate trends or covariate levels. These strategies are often used when the parallel gaps assumption is believed to hold conditionally on covariates. Our analysis shows that both generally lead to inconsistent estimators of the ATT. First, we show that the weights of the regression estimands for both specifications fail to align the covariate distribution of each untreated subgroup with that of the treated subgroup so neither satisfies the covariate balancing condition under which the estimands would equal the ATT. Second, we characterize the additional assumptions under which each estimand equals the ATT. For the covariate-trend specification, the sufficient conditions require the treatment effect to be homogeneous in the covariates, together with a strong parametric assumption on either the conditional mean function or the propensity score. For the covariate-level specification, the necessary and sufficient condition is that the parallel gaps assumption holds unconditionally, in which case there is no need to add covariates.

To address these limitations, we propose a matched triple-differences framework for valid estimation and inference. Our two-step estimation procedure first conducts three separate pairwise matchings to satisfy the covariate balancing condition and then runs a weighted 3WFE regression, which yields the matched triple-differences estimator. We then develop a novel asymptotic framework to establish the consistency and asymptotic normality of our estimator for a class of matching procedures, including kernel matching and nearest neighbor matching with a fixed or diverging number of neighbors. In addition, we propose a bias-corrected matched triple-differences estimator and a valid variance estimator that accounts for the matching step. Monte Carlo simulations confirm our theoretical results and show that the proposed estimators and variance estimator perform well in finite samples. Revisiting \cite{Cai_2016_AEJ}, we find that, compared with the regression specification used in that paper, our matched triple-differences estimators can yield treatment effect estimates of a different magnitude and lead to different qualitative conclusions.

For future work, we envision extending our proposed matching estimators to settings with staggered adoption and to more flexible sampling schemes, such as repeated cross sections. Another interesting direction is to adapt recent developments in the matching literature, such as \cite{Lin-Ding-Han_2023_ECMA} and \cite{Lin_Han_2025_JOE}, to establish the double robustness of our proposed bias-corrected matching estimator and to characterize when it attains the semiparametric efficiency bound.

\bibliography{matching_DDD_references}

@article{Abadie-Imbens_2006_ECMA,
  title={Large sample properties of matching estimators for average treatment effects},
  author={Abadie, Alberto and Imbens, Guido W.},
  journal={Econometrica},
  volume={74},
  number={1},
  pages={235--267},
  year={2006},
  publisher={Wiley}
}

@article{Abadie-Imbens_2011_JBES,
  title={Bias-corrected matching estimators for average treatment effects},
  author={Abadie, Alberto and Imbens, Guido W.},
  journal={Journal of Business \& Economic Statistics},
  volume={29},
  number={1},
  pages={1--11},
  year={2011},
  publisher={Taylor \& Francis}
}

@article{Abadie-Spiess_2022_JASA,
  title={Robust post-matching inference},
  author={Abadie, Alberto and Spiess, Jann},
  journal={Journal of the American Statistical Association},
  volume={117},
  number={538},
  pages={983--995},
  year={2022},
  publisher={Taylor & Francis}
}

@article{Blandol_et_al_2026_Restud,
    author  = {Christine Blandhol and John Bonney and Magne Mogstad and Alexander Torgovitsky},
  title   = {When is {TSLS} Actually {LATE}?},
  journal = {The Review of Economic Studies},
  volume = {},
  number = {},
  pages   = {},
  year    = {2026},
  publisher={Oxford Academic}
}

@article{Caetano-Callaway_2026_wp,
  title={Difference-in-Differences when Parallel Trends Holds Conditional on Covariates},
  author={Caetano, Carolina and Callaway, Brantly},
  journal={Working Paper},
  volume={},
  number={},
  pages={},
  year={2026},
  publisher={Arxiv}
}

@article{Cai_2016_AEJ,
  title={Insurance impact on production and financial decisions},
  author={Jing Cai},
  journal={American Economic Journal: Economic Policy},
  volume={8},
  number={2},
  pages={44-88},
  year={2016},
  publisher={American Economic Association}
}

@article{Chaise-Dhaut_2022_EJ,
    author = {Chaisemartin, Cl{\'e}ment de and D’Haultf{\oe}uille, Xavier},
    title = {Two-way fixed effects and differences-in-differences with heterogeneous treatment effects: a survey},
    journal = {The Econometrics Journal}, year = {2022}, month = {06}
}

@article{Chatt_Zubi_2023_biometrika,
  title={On the implied weights of linear regression for causal inference},
  author={Chattopadhyay, Ambarish and Zubizarreta, Jose R},
  journal={Biometrika},
  volume={110},
  number={3},
  pages={615-629},
  year={2023},
  publisher={Oxford Academic}
}

@article{Gruber_1994_AER,
  title={The Incidence of Mandated Maternity Benefits},
  author={Gruber, Jonathan},
  journal={American Economic Review},
  volume={84},
  number={3},
  pages={622-641},
  year={1994},
  publisher={American Economic Association}
}

@article{Heckman-Ichimura-Todd_1997_ReStud,
    author = {Heckman, James J. and Ichimura, Hidehiko and Todd, Petra E.},
    title = {Matching As An Econometric Evaluation Estimator: Evidence from Evaluating a Job Training Programme},
    journal = {The Review of Economic Studies}, volume = {64}, number = {4}, pages = {605-654}, year = {1997}, month = {10}
}

@article{Heckman-etal_1998_ECMA,
  title={Characterizing selection bias using experimental data},
  author={Heckman, James J. and Ichimura, Hidehiko and Smith, Jeffrey, A. and Todd Petra},
  journal={Econometrica},
  volume={66},
  number={5},
  pages={1017-1098},
  year={1998},
  publisher={Wiley}
}

@article{Leventer_2025_wp,
  title={Conditional Triple Difference-in-Differences},
  author={Leventer, Dor},
  journal={Working Paper},
  volume={},
  number={},
  pages={},
  year={2025},
  publisher={arXiv}
}

@article{Lin-Ding-Han_2023_ECMA,
  author = {Lin, Zhexiao and Ding, Peng and Han, Fang},
  title = {Estimation Based on Nearest Neighbor Matching: From Density Ratio to Average Treatment Effect},
  journal = {Econometrica},
  volume = {91},
  number = {6},
  pages = {2187-2217},
  year = {2023}
}

@article{Lin_Han_2025_JOE,
  author = {Lin, Zhexiao and Han, Fang},
  title = {On regression-adjusted imputation estimators of average treatment effects},
  journal = {Journal of Econometrics},
  volume = {251},
  number = {},
  pages = {106080},
  year = {2025}
}

@article{Liu_Vazquez_Bare_2026_wp,
  title={Post-Matching Two-Way Fixed Effects Estimation},
  author={Liu, Yihong and Vazquez-Bare, Gonzalo},
  journal={Working Paper},
  volume={},
  number={},
  pages={},
  year={2026},
  publisher={arXiv}
}

@article{Meng_et_al_2026_wp,
  title={Simple and Reliable Inference for Matching Estimators: A Single-Matching Variance Approach},
  author={Meng, Xiang and Smith, Aaron and Miratrix, Luke},
  journal={Working Paper},
  volume={},
  number={},
  pages={},
  year={2026},
  publisher={}
}

@article{Olden_Moen_2022_EJ,
  author = {Olden, Andreas and M{\o}en, Jarle},
  title = {The Triple Difference Estimator},
  journal = {The Econometrics Journal},
  volume = {25},
  number = {3},
  pages = {531-553},
  year = {2022}
}

@article{Ortiz-Villavicencio_SantAnna_2025_working,
  author = {Ortiz-Villavicencio, Marcelo and Sant'Anna, Pedro H.C.},
  title = {Better Understanding Triple Differences Estimators},
  journal = {arXiv},
  volume = {},
  number = {},
  pages = {1-60},
  year = {2025}
}

@article{Otsu_Rai_2017_JASA,
  author = {Otsu, Taisuke and Rai, Yoshiyasu},
  title = {Bootstrap Inference of Matching Estimators for Average Treatment Effects},
  journal = {Journal of the American Statistical Association},
  volume = {112},
  number = {520},
  pages = {1720-1732},
  year = {2017}
}

@article{Roth-etal_2023_JoE,
  title={What’s trending in difference-in-differences? A synthesis of the recent econometrics literature},
  author={Roth, Johnathan and Sant'Anna, Pedro HC and Bilinski, Alyssa and Poe, John},
  journal={Journal of Econometrics},
  volume={235},
  number={2},
  pages={2218-2244},
  year={2023},
  publisher={Elsevier}
}

@article{Sloczynski_2022_Restat,
  title = {Interpreting OLS estimands when treatment effects are heterogeneous: smaller groups get larger weights},
  author = {Tymon, S{\l}oczy\'nski},
  journal = {Review of Economics and Statistics},
  volume = {104},
  number = {3},
  pages = {501-509},
  year = {2022},
  publisher={MIT Press}
}

@article{Sloczynski_2026_Restud,
  title = {When Should We (Not) Interpret Linear IV Estimands as LATE?},
  author = {Tymon, S{\l}oczy\'nski},
  journal = {Working Paper},
  volume = {},
  number = {},
  pages = {},
  year = {2026},
  publisher={arXiv},
}

@article{Strezhnev_2023_wp,
  title={Decomposing triple-differences regression under staggered adoption},
  author={Strezhnev, Anton},
  journal={Working Paper},
  volume={},
  number={},
  pages={},
  year={2023},
  publisher={arXiv}
}

@article{Viel_2025_wp,
  title={Convergence rate for Nearest Neighbour matching: geometry of the domain and higher-order regularity},
  author={Viel, Simon and Truquet, Lionel and Yamane, Ikko},
  journal={Working Paper},
  volume={},
  number={},
  pages={},
  year={2025},
  publisher={arXiv}
}

\newpage

\section*{Appendix}
\appendix

\numberwithin{table}{section}
\renewcommand{\thetable}{\thesection\arabic{table}}
\renewcommand{\theHtable}{appendix.\thesection.\arabic{table}}

\section{Survey of Empirical Literature}

Table \ref{table:ddd_literature} lists 18 papers published between 2010 and 2026 in top economics journals using the two types of 3WFE regression specifications analyzed in this paper. The journals we consider are, in alphabetical order: American Economic Journal: Applied Economics (AEJAE), American Economic Journal: Economic Policy (AEJEP), American Economic Review (AER), Economic Journal (EJ), Journal of Public Economics (JPubE), Quarterly Journal of Economics (QJE), Review of Economics and Statistics (REStat), Review of Economic Studies (REStud). The column $X_i'\theta$ indicates that the paper add the covariate levels to the 3WFE regression and column $\sum_{r=1}^T X_i'\theta_r\1(t=r) $ indicates that the paper adds covariate trends to the 3WFE regression.

\begin{sidewaystable}[htbp]
\centering
\caption{Survey of Empirical Literature}
\label{table:ddd_literature}

\begin{tabular}{p{0.4cm} p{4.2cm} p{9.5cm} p{1.3cm} p{2.0cm} p{1.0cm} p{4.0cm}}
\hline
& Author(s) & Title & Year & Journal & $X_i'\theta$ &  $\sum_{r=1}^T X_i'\theta_r\1(t=r) $ \\
\hline
1 & Akosa Antwi, Moriya, Simon
& Effects of Federal Policy to Insure Young Adults: Evidence from the 2010 Affordable Care Act's Dependent-Coverage Mandate
& 2013 & AEJEP & \checkmark & \\[0.6em]

2 & Bahaj and Reis
& Central Bank Swap Lines: Evidence on the Effects of the Lender of Last Resort
& 2022 & REStud & & \checkmark \\[0.6em]

3 & Cabral, Carey, Miller
& The Impact of Provider Payments on Health Care Utilization of Low-Income Individuals: Evidence from Medicare and Medicaid
& 2025 & AEJEP & \checkmark & \\[0.6em]

4 & Cai
& The Impact of Insurance Provision on Household Production and Financial Decisions
& 2016 & AEJEP &\checkmark & \\[0.6em]

5 & Clemens and Wither
& The Minimum Wage and the Great Recession: Evidence of Effects on the Employment and Income Trajectories of Low-Skilled Workers
& 2019 & JPubE & & \checkmark \\[0.6em]

6 & Ellen, Hartley, Lin, You
& The Bronx Is Burning: Urban Disinvestment Effects of Fair Access to Insurance Requirements Plans
& 2026 & REStat (forthcoming) & &\checkmark \\[0.6em]

7 & Garthwaite, Gross, Notowidigdo
& Public Health Insurance, Labor Supply, and Employment Lock
& 2014 & QJE &\checkmark & \\[0.6em]

8 & Ginja, Karimi, Xiao
& Employer Responses to Family Leave Programs
& 2023 & AEJAE &\checkmark &\\[0.6em]

9 & Harrison and Scorse
& Multinationals and Anti-Sweatshop Activism
& 2010 & AER & &\checkmark\\
\hline
\end{tabular}

\end{sidewaystable}

\begin{sidewaystable}[!htbp]
\centering
\addtocounter{table}{-1}
\caption{Survey of Empirical Literature (continued)}

\begin{tabular}{p{0.4cm} p{4.2cm} p{9.5cm} p{1.3cm} p{2.0cm} p{1.0cm} p{4.0cm}}
\hline
& Author(s) & Title & Year & Journal & $X_i'\theta$ &  $\sum_{r=1}^T X_i'\theta_r\1(t=r) $ \\
\hline
10 & Kantor and Whalley
& Moonshot: Public R\&D and Growth
& 2025 & AER & & \checkmark \\[0.6em]

11 & Kaplan, Piedra, Seira
& Entry Regulation and Business Start-ups: Evidence from Mexico
& 2011 & JPubE & &\checkmark \\[0.6em]

12 & Kuka and Shenhav
& Long-Run Effects of Incentivizing Work after Childbirth
& 2024 & AER &\checkmark &\checkmark \\[0.6em]

13 & Mastrorocco and Ornaghi
& Who Watches the Watchmen? Local News and Police Behavior in the United States
& 2025 & AEJEP & &\checkmark \\[0.6em]

14 & Milner
& The Impact of State-Provided Education: Evidence from the 1870 Education Act
& 2025 & EJ & &\checkmark \\[0.6em]

15 & Mitrunen
& War Reparations, Structural Change, and Intergenerational Mobility
& 2025 & QJE & &\checkmark \\[0.6em]

16 & Rabat\'{e}
& Can I Stay or Should I Go? Mandatory Retirement and the Labor-Force Participation of Older Workers
& 2019 & JPubE &\checkmark & \\[0.6em]

17 & Walker
& The Transitional Costs of Sectoral Reallocation: Evidence from the Clean Air Act and the Workforce
& 2013 & QJE & &\checkmark \\[0.6em]

18 & Wettstein
& Retirement Lock and Prescription Drug Insurance: Evidence from Medicare Part D
& 2020 & AEJEP & &\checkmark \\
\hline
\end{tabular}

\end{sidewaystable}

\section{Examples of Matching Procedures}\label{appendix:matchingprocedure}
\setcounter{assumption}{0}
\renewcommand{\theassumption}{B.\arabic{assumption}}
\renewcommand{\theHassumption}{B.\arabic{assumption}}

\subsection{Nearest Neighbor Matching with Fixed Neighbors}
In nearest-neighbor matching with fixed neighbors (\citeinb{Abadie-Imbens_2006_ECMA}), each treated unit is matched to its nearest $M$ (fixed)  neighbors in the untreated subgroup. For each $i\in\mathcal I_{11}$, the matched set $\mathcal M_{gs}(i)$ contains the $M$ comparison units whose covariate values are closest to $X_i$. Writing it formally,
$\mathcal{M}_{gs}(i)=\{j_1(i,g,s),\dots, j_M(i,g,s) \}$,
where $j_m(i,g,s)$ is the index of the $m$-th nearest neighbor of unit $i$ among units in the untreated group $(g,s)$
\[
\sum_{l:(G_l,S_l)=(g,s)}\mathbbm{1}\{\norm{X_l-X_i}\leq \norm{X_j-X_i}\}=m.
\]
The corresponding pairwise weight $w_{ij}^{fix}$ equals $\frac{1}{M}$ if $j\in\mathcal M_{gs}(i)$ and equals $0$ if $j\notin\mathcal M_{gs}(i)$:
\[
w_{ij}^{fix} = \frac{1}{M}\, \1\!\left\{j\in\mathcal M_{gs}(i)\right\},
\qquad j\in\mathcal I_{gs}.
\]
The matching weight of untreated unit $j$ is therefore
\[
w_{j}^{fix} = \frac{1}{M} \sum_{i\in\mathcal I_{11}} \1\!\left\{j\in\mathcal M_{gs}(i)\right\},
\]
that is, the number of times unit $j$ is used as a match, divided by $M$.

\subsection{Nearest Neighbor Matching with Diverging Neighbors}
Nearest neighbor matching with diverging neighbors (\cite{Lin-Ding-Han_2023_ECMA}) has the same matching algorithm and the corresponding pairwise weight $w_{ij}^{diverge}$ and matching weight $w_j^{diverge}$ as the fixed neighbors. The only differences is that the number of the neighbors $M$ is diverging with the untreated group sample size $n_{gs}$. To be specific $M$ satisfies the the following assumption.
\begin{assumption}\label{assumption:NNMdiverge}
    As $n\rightarrow \infty$,
    \[\frac{Mlog(n_{gs})}{n_{gs}}\rightarrow 0, \quad M\rightarrow \infty \]
\end{assumption}

\subsection{Kernel Matching}
Kernel matching (\cite{Heckman-Ichimura-Todd_1997_ReStud}, \cite{Heckman-etal_1998_ECMA}) each reference unit is matched to a kernel-weighted combination of comparison units, with the weights determined by the proximity of their covariate values. Writing formally, let $K:\mathbb R^q\rightarrow[0,\infty)$ be a multivariate kernel function and let $h>0$ be the bandwidth that depends on the the sample size of the untreated group $n_{gs}$. The pairwise weight correspond to kernel matching is:
\[
w_{ij}^{kernel}
= \frac{K\!\left(\frac{X_j - X_i}{h}\right)}
       {\sum_{l\in\mathcal I_{gs}} K\!\left(\frac{X_l - X_i}{h}\right)},
\qquad j\in\mathcal I_{gs}.
\]
For each $i\in\mathcal I_{11}$, the matched set consists of the comparison units
that receive positive kernel weight:
\[
\mathcal M_{gs}(i)
=
\left\{
j\in\mathcal I_{gs}:
K\!\left(\frac{X_j - X_i}{h}\right)>0
\right\}.
\]
The matching weight of comparison unit $j$ is therefore
\[
w_{j}^{kernel} = \sum_{i\in\mathcal I_{11}}
\frac{K\!\left(\frac{X_j - X_i}{h}\right)}
     {\sum_{l\in\mathcal I_{gs}} K\!\left(\frac{X_l - X_i}{h}\right)},
\]
We impose the following assumption on the Kernel function and the bandwidth, which are common in kernel matching literature:
\begin{assumption}\label{assumption:kernel}
    (i) The kernel function has a compact support. There exists $0<r<R<\infty$ such that the kernel function $K$ satisfies: $0 \leq K(u) \leq \overline{K},  $ $\underline{K} \leq K(u) \leq \overline{K}$, if $\norm{u}\leq r$, and $K(u)=0$ if $\norm{u}>R$, where $\underline{K}$ and $\overline{K}$ are positive real numbers.\\
    (ii) As $n\rightarrow \infty$,
    \[h\rightarrow 0, \quad n_{gs}h^q\rightarrow \infty \]
\end{assumption}

\begin{lemma}\label{lemma:matchingprocedure}
    Under assumption \ref{assumption:regularity}(i), we have the following results:\\
    (a) $w_{ij}^{fix}$ and $w_j^{fix}$ satisfy assumption \ref{assumption:matching}.\\
    (b) If assumption \ref{assumption:NNMdiverge} holds, then $w_{ij}^{diverge}$ and $w_j^{diverge}$ satisfies assumption \ref{assumption:matching}. \\
    (c) If assumption \ref{assumption:kernel} holds, then $w_{ij}^{kernel}$ and $w_j^{kernel}$ satisfy assumption \ref{assumption:matching}.
\end{lemma}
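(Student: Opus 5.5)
The plan is to verify the four conditions of Assumption~\ref{assumption:matching} separately for each of the three procedures, treating each pairwise matching $(\mathcal I_{11},\mathcal I_{gs})$ in isolation. Condition (i) is immediate in all three cases. The weights are functions of $(G_i,S_i,X_i)$ only, they are permutation-equivariant because distances and kernel values do not depend on labels, and they are nonnegative. The adding-up holds: for nearest-neighbor matching, $M$ matches each carry weight $1/M$; for kernel matching, the denominator normalizes the row. Ties occur with probability zero because $f_{gs}$ is continuous by Assumption~\ref{assumption:regularity}(i). For kernel matching we also need the denominator to be positive with probability approaching one; this follows because $n_{gs}h^q\to\infty$ and the density is bounded below on a convex compact support. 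On the null event where the denominator vanishes we adopt the convention of 1-NN weights.

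For locality (ii), the weighted discrepancy is bounded by the matching radius. In the NN cases, $\sum_j w_{ij}\|X_j-X_i\|\le \|X_{j_M(i)}-X_i\|$. We will use the standard bound $\E\|X_{j_M(i)}-X_i\|^k\lesssim (M/n_{gs})^{k/q}$, as in Abadie and Imbens (2006, Lemma 1), with $M$ allowed to grow as in Lin, Ding and Han (2023). Here we need the density to be bounded away from zero and the support to be convex compact, so that a ball of radius $r$ around any support point captures mass $\gtrsim r^q$. This bound vanishes since $M/n_{gs}\to0$. For the kernel, compact support of $K$ gives the bound $Rh\to0$ directly. In each case $n_{gs}\to\infty$ almost surely by strong overlap.

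For bounded moments (iii), write $w_j=\sum_{i\in\mathcal I_{11}}w_{ij}$. In the NN case, $w_j=K_M(j)/M$, where $K_M(j)$ counts how often $j$ is among the $M$ nearest neighbors of treated units. Conditional on $X_j$ and the untreated sample, the probability that a given treated unit uses $j$ is the ratio of the treated mass of a catchment region to $M$ over $n_{gs}$ times a density ratio bounded by Assumption~\ref{assumption:regularity}(i). So $K_M(j)$ is binomial with mean $O(n_{11}M/n_{gs})=O(M)$. Its moments of every order over $M^{k}$ are then bounded, following Abadie and Imbens (2006, Lemma 3) for fixed $M$, with a Chernoff bound on the catchment-region volume for diverging $M$ (this is where $M\log n_{gs}/n_{gs}\to0$ enters). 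For the kernel, the denominator satisfies $\sum_l K((X_l-X_i)/h)\ge \underline K\, N_r(X_i)$, where $N_r(X_i)$ is the number of untreated points within distance $rh$ of $X_i$. We then use a uniform lower bound $N_r(x)\gtrsim n_{gs}h^q$ over a grid, valid with overwhelming probability under $n_{gs}h^q\to\infty$. This gives $w_j\lesssim n_{11}^{-1}(n_{gs}h^q)^{-1}\cdot\#\{i\in \mathcal{I}_{11}:\|X_i-X_j\|\le Rh\}$ times $n_{11}$-scale factors, so $w_j$ is bounded by a ratio of binomial counts with means of the same order. On the complement event, a crude bound $w_j\le n_{11}$ combined with an exponentially small probability controls the $4+\delta$ moment.

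Stability (iv) is the step I expect to be the main obstacle. If observation $1$ is a treated unit, replacing it changes only row $1$ of the pairwise weights, so $\sum_j|w_j-w_j^{(1)}|\le \sum_j (w_{1j}+w^{(1)}_{1j})=2$. If observation $1$ belongs to $\mathcal I_{gs}$, removing $X_1$ and inserting $X_{1'}$ affects only treated units for which $X_1$ or $X_{1'}$ is a match. For NN, each such treated unit shifts at most $2/M$ in total weight. The sum is therefore bounded by $2(K_M(1)+K_M(1'))/M$, whose fourth moment is bounded by the moment calculation in (iii). If observation $1$ belongs to another subgroup, nothing changes.

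For the kernel, inserting or removing one comparison point perturbs every normalized row within radius $Rh$. For each such treated $i$ the $\ell_1$ change of row $i$ is at most $2K(\cdot)/\sum_l K(\cdot)\lesssim 1/N_r(X_i)$. The total change is then bounded by the number of treated units within $Rh$ of $X_1$ or $X_{1'}$, divided by $n_{gs}h^q$. This is again a bounded ratio of counts with the same fourth-moment control as in (iii). Finally, a change in subgroup membership of observation $1$ is handled as a removal plus an insertion, which at most doubles the bound.
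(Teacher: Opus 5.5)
Your treatment of conditions (i), (ii) and (iv), and of (iii) for nearest-neighbor matching, follows the paper's route. Where you sketch the catchment-region moment bound, the paper simply cites Abadie--Imbens and Lin--Ding--Han. The paper's Lemma~\ref{lemma:reallocationbound} is your stability computation.

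The gap is in condition (iii) for kernel matching. You claim $N_r(x)\gtrsim n_{gs}h^q$ uniformly over a grid ``with overwhelming probability under $n_{gs}h^q\to\infty$'', and you control the complement with $w_j\le n_{11}$. Neither step holds under Assumption~\ref{assumption:kernel}(ii).
\begin{itemize}
\item A grid at scale $rh$ has order $h^{-q}$ points, and each local count falls short with probability of order $e^{-cn_{gs}h^q}$. The union bound therefore needs $n_{gs}h^q\gtrsim\log(1/h)$.
\item If $n_{gs}h^q=o(\log n_{gs})$, the expected number of empty cells of width $rh$ diverges, so the uniform bound is false.
\item Even when the failure probability does vanish, multiplying it by $n_{11}^{4+\delta}$ requires it to be $o(n^{-4-\delta})$, i.e.\ $n_{gs}h^q\ge C\log n_{gs}$.
\end{itemize}
As written, your argument proves (iii) only under a strictly stronger bandwidth condition than the lemma assumes.

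To repair it, localize and drop the crude bound.
\begin{itemize}
\item Given $X_j$, only treated units in $B(X_j,Rh)$ contribute to $w_j$. The set $B(X_j,Rh)\cap\mathcal X$ is covered by a number of balls of radius $rh/2$, centered in $\mathcal X$, that is bounded independently of $h$.
\item Let $E$ be the event that each such ball contains at least $cn_{gs}h^q$ comparison units other than $j$. Then $P(E^c\mid X_j)\lesssim e^{-cn_{gs}h^q}$.
\item Let $T=\#\{i\in\mathcal I_{11}:\norm{X_i-X_j}\le Rh\}$. On $E$ you get $w_j\lesssim T/(n_{gs}h^q)$, and on $E^c$ you get $w_j\le T$, since $w_{ij}\le 1$.
\item $T$ is independent of $E$ given $X_j$, so $\E[T^5\1(E^c)]\lesssim (n_{gs}h^q)^5e^{-cn_{gs}h^q}\to 0$. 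This bounds the fifth moment without any $\log$ condition.
\end{itemize}

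The paper avoids uniformity differently. It expands $w_j^5$ over 5-tuples of treated units; given the comparison sample, the $w_{ij}$ are i.i.d.\ in $i$ and lie in $[0,1]$. It then uses a \emph{pointwise} Chernoff bound $\E[(N_x+1)^{-k}]\le C_k(n_{gs}h^q)^{-k}$ and integrates via H\"older on $B(X_j,Rh)$ and Tonelli.

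Your kernel stability step inherits the same issue only because you pass from $2K(\cdot)/\sum_l K(\cdot)$ to $1/N_r(X_i)$, and that detour is unnecessary. Summing your own row bound over treated units gives exactly $\sum_{i\in\mathcal I_{11}}2K((X_1-X_i)/h)/\sum_l K((X_l-X_i)/h)=2w_1$, and similarly $2w^{(1)}_{1'}$ for the insertion. So (iv) reduces to the moment bound in (iii), as in Lemma~\ref{lemma:reallocationbound}.
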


\section{Necessity of Condition (iv) in Assumption \ref{assumption:matching}}\label{section:example}
In this section, we construct a matching procedure that satisfies conditions (i)--(iii) of Assumption \ref{assumption:matching} but violates condition (iv), and we provide simulation evidence that the resulting matched triple-differences estimator is not asymptotically normal unconditionally.
 
Consider the following \emph{switch matching rule}. Let $T_n=\frac{\overline{X}_{11}}{\sqrt{\widehat{V}(\overline{X}_{11})/n_{11}}}$ denote the standardized sample mean of $X$ in the treated cell. If $\abs{T_n}<z_{0.75}$, we construct the matched triple-differences estimator using one-nearest-neighbor matching, following the procedure in Section \ref{section:matchedtriplediffereces}; if $\abs{T_n}\geq z_{0.75}$, we use fifty-nearest-neighbor matching instead. Here $z_{0.75}$ denotes the $75$th percentile of the standard normal distribution. Because $T_n$ is asymptotically standard normal, each of the two matching procedures is used in roughly half of all repeated samples.
 
By construction, the switch rule satisfies condition (i). The rule switches between two nearest-neighbor procedures, each of which satisfies conditions (ii) and (iii) by Lemma \ref{lemma:matchingprocedure}; the law of iterated expectations then implies that the switch rule satisfies conditions (ii) and (iii) as well. Condition (iv), however, is violated: one can show that, for some constant $C>0$,
\begin{align*}
    \E\!\left[\Bigg(\sum_{j\in\mathcal I_{gs}\setminus\{1\}}
\bigl|\,w_{j}-w_{j}^{(1)}\bigr|\Bigg)^{\!4}\right]\geq Cn^{7/2}.
\end{align*}
 
Next, we provide simulation evidence that the centered and normalized matched triple-differences estimator $(V_{\tau(X)}+V_{E})^{-1/2}\sqrt{n_{11}}\left(\widehat{\tau}_l-\tau_l-B\right)$ constructed under the switch rule is not asymptotically normal. The data-generating process follows Design 2 in Section \ref{section:simulation}, with two modifications. First, the logit models for $G$ and $S$ both have an intercept of $-1.5$ and slopes $\gamma_G=-\gamma_S=1$, so that the treated cell contains a smaller share of observations:
\begin{align*}
    P(G_i=1 \mid X_i)=\frac{\exp(-1.5+\gamma_G X_i)}{1+\exp(-1.5+\gamma_G X_i)},
\qquad
P(S_i=1 \mid X_i)=\frac{\exp(-2+\gamma_S X_i)}{1+\exp(-2+\gamma_S X_i)}.
\end{align*}
Second, the coefficient on the covariate-specific trend is reduced from $10$ to $2$, so that effect heterogeneity is smaller:
\begin{align*}
    Y_{it}(0)&=\alpha_i -2X_i t\,\1(G_i=1)-2X_i t\,\1(S_i=1)+u_{it}, \\
    Y_{it}(1)&=\alpha_i +2X_i t\,\1(G_i=1)+2X_i t\,\1(S_i=1)+v_{it}.
\end{align*}
The simulation uses a sample size of $10{,}000$ and $10{,}000$ Monte Carlo replications. We plot the empirical density of the centered and normalized estimator against the standard normal density and report the Kolmogorov--Smirnov test for normality.

Figure \ref{figure:exampledensity} shows a sizable discrepancy between the empirical density of the centered and normalized matched triple-differences estimator and the standard normal density. The Kolmogorov--Smirnov statistic is $0.0424$, with a $p$-value below $0.001$. Together, these results provide strong evidence that, under the switch matching rule, the matched triple-differences estimator is not asymptotically normal unconditionally.

\begin{figure}
    \centering
    \includegraphics[scale=0.2]{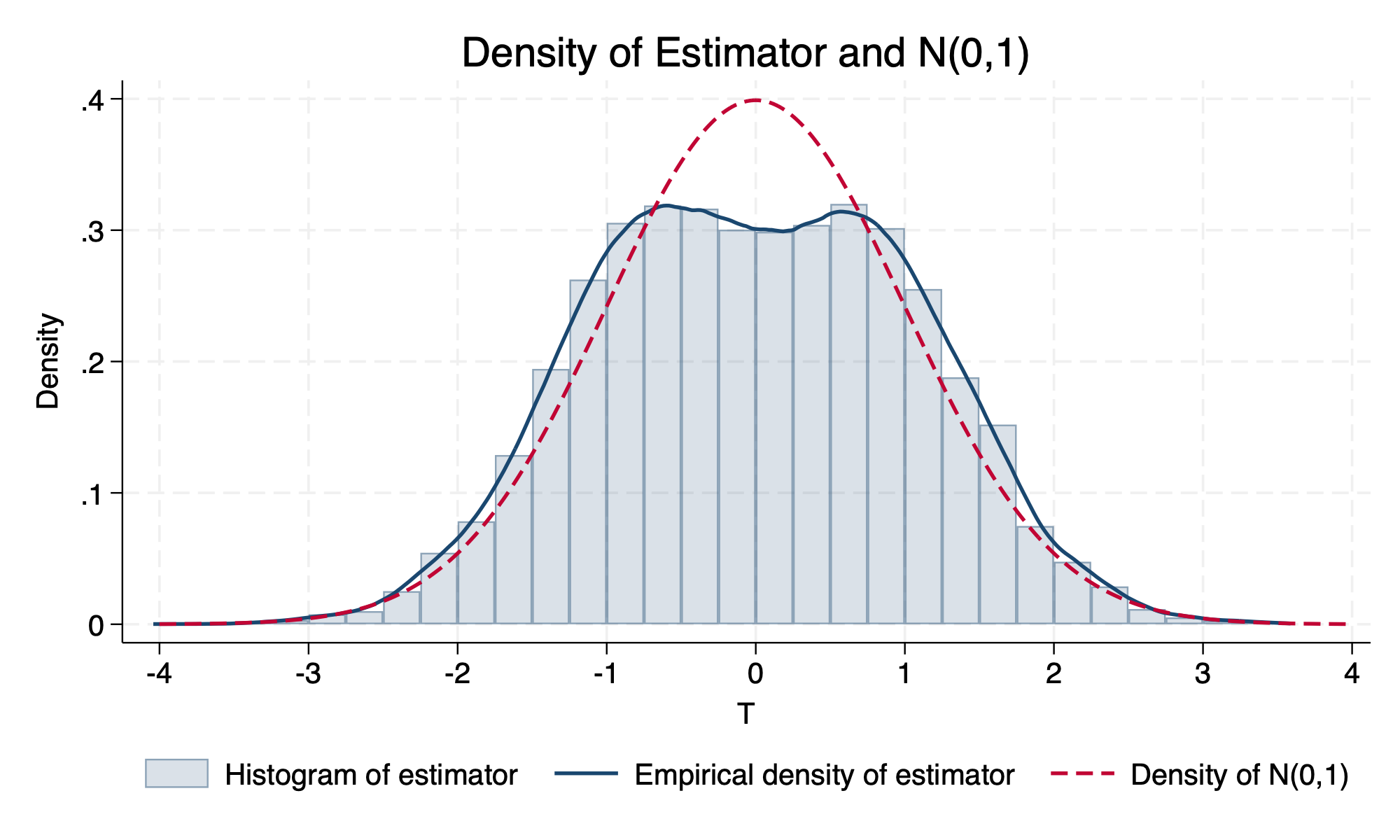}
    \caption{Density of the centered and normalized estimator under the switch rule versus $N(0,1)$}
    \label{figure:exampledensity}
\end{figure}

\section{Additional Assumptions for Bias Correction}\label{section:biascorrection}
\setcounter{assumption}{0}
\renewcommand{\theassumption}{D.\arabic{assumption}}
\renewcommand{\theHassumption}{D.\arabic{assumption}}
In this section, we present additional assumption for establishing the asymptotic normality for the bias-corrected matched triple-differences estimator.
 
\begin{assumption}[Bias correction]\label{assumption:biascorrection}
\begin{enumerate}[label=(\roman*)]
\item The support of $X$ is a Cartesian product of compact intervals.
\item For every $(g,s)\in \{(0,1),(1,0),(0,0)\}$ and every event time $l$, $\mu_l^{gs}$ is $p$ times continuously differentiable.
\item For every $(g,s)\in \{(0,1),(1,0),(0,0)\}$ and every event time $l$, the estimator $\widehat{\mu}_l^{gs}$ satisfies
    \begin{align*}
        \max_{m\in \Lambda_p }\norm{\partial^{m}\widehat{\mu}_l^{gs}}_{\infty}&=O_p(1), \\
        \max_{m\in \Lambda_r}\norm{\partial^{m}\widehat{\mu}_l^{gs}-\partial^{m}\mu_l^{gs}}_{\infty}&=O_p(n^{-\gamma_r})
        \quad\text{for every } r\in \{1,\dots,p-1 \},
    \end{align*}
    where, for $r\in \{1,\dots,p\}$, $\Lambda_r$ denotes the set of all $q$-dimensional vectors of non-negative integers $m=(m_1,\dots,m_q)$ with $\sum_{i=1}^q m_i=r$, and $\gamma_1,\dots,\gamma_{p-1}$ are positive constants.
\item For every $(g,s)\in \{(0,1),(1,0),(0,0)\}$, the pairwise weights in addition satisfy
    \begin{align*}
        \E\!\left[\sum_{j\in\mathcal I_{gs}} w_{ij}\left\lVert X_j-X_i\right\rVert^p\,\middle|\, i\in\mathcal I_{11}\right]& = o(n^{-1/2}), \\
        \E\!\left[\sum_{j\in\mathcal I_{gs}} w_{ij}\left\lVert X_j-X_i\right\rVert^r\,\middle|\, i\in\mathcal I_{11}\right] &= o(n^{-1/2+\gamma_r})\quad \text{for every } r\in \{1,\dots,p-1 \}.
    \end{align*}
\end{enumerate}
\end{assumption}
 
Assumption \ref{assumption:biascorrection} is similar to Assumption 3.5$'$ of \cite{Lin_Han_2025_JOE}. Condition (iii) can be verified for standard nonparametric regression methods, such as series and spline estimators, provided the conditional means are smooth as required by condition (ii). Condition (iv) is a locality requirement, similar to condition (ii) of Assumption \ref{assumption:matching}, but it requires the weighted moments of the covariate discrepancy to shrink fast enough for the bias-corrected estimator to attain $\sqrt{n}$-consistency.
 
\section{Additional Empirical Application Results}\label{appendix:additionalempirical}
In this section, we present the additional results for our empirical application. Figure \ref{figure:empiricalcaikernel} shows the event study plot using the kernel matching estimator with the same bandwidth choice as section \ref{section:empiricalapplication} without bias correction. Figure \ref{figure:empiricalcainndivbc} shows the event study estimator using the bias-corrected nearest neighbor matching estimator with diverging neighbors ($M=n_{gs}^{1/3}$). The change of the matching methods does not change most of our main results.
\begin{figure}[p]
\centering
\setstretch{1}
\setlength{\parindent}{0pt}
\caption{Matched DDD and Regression event studies: Kernel matching.}
\label{figure:empiricalcaikernel}
\vspace{0.04in}
\begin{tikzpicture}[every node/.style={font=\footnotesize}]
\begin{axis}[
    width=0.78\linewidth, height=0.24\textheight,
    xmin=-2.35, xmax=5.35, ymin=-0.40027801463779, ymax=0.34504899909701,
    xtick={-2,-1,0,1,2,3,4,5},
    xlabel={Event time (year $-$ 2003)}, ylabel={Treatment effect},
    axis lines=left, axis line style={black,thin},
    tick align=outside, tick style={black,thin},
    tick label style={font=\footnotesize},
    label style={font=\footnotesize},
    xlabel style={at={(axis description cs:0.5,-0.17)},anchor=north},
    ylabel style={at={(axis description cs:-0.11,0.5)},anchor=south},
    scaled y ticks=false,
    yticklabel style={/pgf/number format/fixed,/pgf/number format/precision=3},
    legend style={at={(1.02,0.5)},anchor=west,draw=none,
                  font=\footnotesize,inner sep=1pt},
    legend cell align=left, legend columns=1,
    clip=true,
]
\addplot[black,thin,no marks,forget plot] coordinates {(-2.35,0) (5.35,0)};
\addplot[black,dashed,thin,no marks,forget plot] coordinates {(-1,-0.40027801463779) (-1,0.34504899909701)};

\addplot[
    only marks,mark=*,mark size=2.1pt,color={rgb,255:red,0;green,76;blue,153},
    error bars/.cd,y dir=both,y explicit,
    error bar style={line width=0.65pt},error mark options={rotate=90,mark size=2.3pt,line width=0.65pt}
] coordinates {
(-2.065,0.03278396501074) += (0,0.033038353351182) -= (0,0.033038353351182)
(-0.065,0.07723661597553) += (0,0.051545595431263) -= (0,0.051545595431263)
(0.935,-0.043751965344329) += (0,0.040673623325812) -= (0,0.040673623325812)
(1.935,0.032237012146934) += (0,0.053347809186021) -= (0,0.053347809186021)
(2.935,0.17860639235518) += (0,0.099240990749347) -= (0,0.099240990749347)
(3.935,0.010624681265614) += (0,0.084083444079536) -= (0,0.084083444079536)
(4.935,-0.19722942837161) += (0,0.12107520493823) -= (0,0.12107520493823)
};
\addlegendentry{Regression}
\addplot[forget plot,only marks,mark=*,mark size=2.1pt,color={rgb,255:red,0;green,76;blue,153}] coordinates {(-1.065,0)};

\addplot[
    only marks,mark=triangle*,mark size=2.1pt,color={rgb,255:red,190;green,52;blue,44},
    error bars/.cd,y dir=both,y explicit,
    error bar style={line width=0.65pt},error mark options={rotate=90,mark size=2.3pt,line width=0.65pt}
] coordinates {
(-1.935,-0.0045184127945945) += (0,0.038579343534654) -= (0,0.038579343534654)
(0.065,0.010486062913504) += (0,0.052371191362512) -= (0,0.052371191362512)
(1.065,-0.0057368667783195) += (0,0.047502994625915) -= (0,0.047502994625915)
(2.065,0.0088612619268516) += (0,0.063082628118996) -= (0,0.063082628118996)
(3.065,-0.024498987729501) += (0,0.090248125941978) -= (0,0.090248125941978)
(4.065,-0.078637613461037) += (0,0.077979145942919) -= (0,0.077979145942919)
(5.065,-0.11730109477229) += (0,0.11690348152878) -= (0,0.11690348152878)
};
\addlegendentry{Matched DDD}
\addplot[forget plot,only marks,mark=triangle*,mark size=2.1pt,color={rgb,255:red,190;green,52;blue,44}] coordinates {(-0.935,0)};

\end{axis}
\end{tikzpicture}\par
\vspace{-0.02in}
{\footnotesize Panel A: Net saving}\par
\vspace{0.05in}

\begin{tikzpicture}[every node/.style={font=\footnotesize}]
\begin{axis}[
    width=0.78\linewidth, height=0.24\textheight,
    xmin=-2.35, xmax=5.35, ymin=-0.13464693760165, ymax=0.066716038222654,
    xtick={-2,-1,0,1,2,3,4,5},
    xlabel={Event time (year $-$ 2003)}, ylabel={Treatment effect},
    axis lines=left, axis line style={black,thin},
    tick align=outside, tick style={black,thin},
    tick label style={font=\footnotesize},
    label style={font=\footnotesize},
    xlabel style={at={(axis description cs:0.5,-0.17)},anchor=north},
    ylabel style={at={(axis description cs:-0.11,0.5)},anchor=south},
    scaled y ticks=false,
    yticklabel style={/pgf/number format/fixed,/pgf/number format/precision=3},
    legend style={at={(1.02,0.5)},anchor=west,draw=none,
                  font=\footnotesize,inner sep=1pt},
    legend cell align=left, legend columns=1,
    clip=true,
]
\addplot[black,thin,no marks,forget plot] coordinates {(-2.35,0) (5.35,0)};
\addplot[black,dashed,thin,no marks,forget plot] coordinates {(-1,-0.13464693760165) (-1,0.066716038222654)};

\addplot[
    only marks,mark=*,mark size=2.1pt,color={rgb,255:red,0;green,76;blue,153},
    error bars/.cd,y dir=both,y explicit,
    error bar style={line width=0.65pt},error mark options={rotate=90,mark size=2.3pt,line width=0.65pt}
] coordinates {
(-2.065,-0.0083879405501005) += (0,0.023449453918368) -= (0,0.023449453918368)
(-0.065,0.023063620532447) += (0,0.022853104228507) -= (0,0.022853104228507)
(0.935,0.014396292948598) += (0,0.025536221754226) -= (0,0.025536221754226)
(1.935,-0.032938609492962) += (0,0.030063560432264) -= (0,0.030063560432264)
(2.935,-0.024354776486304) += (0,0.028987026928152) -= (0,0.028987026928152)
(3.935,-0.04001878515996) += (0,0.03117962236484) -= (0,0.03117962236484)
(4.935,-0.065892770571912) += (0,0.030702135926055) -= (0,0.030702135926055)
};
\addlegendentry{Regression}
\addplot[forget plot,only marks,mark=*,mark size=2.1pt,color={rgb,255:red,0;green,76;blue,153}] coordinates {(-1.065,0)};

\addplot[
    only marks,mark=triangle*,mark size=2.1pt,color={rgb,255:red,190;green,52;blue,44},
    error bars/.cd,y dir=both,y explicit,
    error bar style={line width=0.65pt},error mark options={rotate=90,mark size=2.3pt,line width=0.65pt}
] coordinates {
(-1.935,-0.0081265253494569) += (0,0.023426805222027) -= (0,0.023426805222027)
(0.065,0.018524730169452) += (0,0.02387724512331) -= (0,0.02387724512331)
(1.065,0.012846374035813) += (0,0.02625886988281) -= (0,0.02625886988281)
(2.065,-0.041110218730918) += (0,0.031551723372956) -= (0,0.031551723372956)
(3.065,-0.030719869786783) += (0,0.031091420698654) -= (0,0.031091420698654)
(4.065,-0.048407812094209) += (0,0.031739619493544) -= (0,0.031739619493544)
(5.065,-0.071365500653015) += (0,0.03162119865168) -= (0,0.03162119865168)
};
\addlegendentry{Matched DDD}
\addplot[forget plot,only marks,mark=triangle*,mark size=2.1pt,color={rgb,255:red,190;green,52;blue,44}] coordinates {(-0.935,0)};

\end{axis}
\end{tikzpicture}\par
\vspace{-0.02in}
{\footnotesize Panel B: Saving rate}\par
\vspace{0.05in}

\begin{tikzpicture}[every node/.style={font=\footnotesize}]
\begin{axis}[
    width=0.78\linewidth, height=0.24\textheight,
    xmin=-2.35, xmax=5.35, ymin=-0.12680145612051, ymax=0.24111493392685,
    xtick={-2,-1,0,1,2,3,4,5},
    xlabel={Event time (year $-$ 2003)}, ylabel={Treatment effect},
    axis lines=left, axis line style={black,thin},
    tick align=outside, tick style={black,thin},
    tick label style={font=\footnotesize},
    label style={font=\footnotesize},
    xlabel style={at={(axis description cs:0.5,-0.17)},anchor=north},
    ylabel style={at={(axis description cs:-0.11,0.5)},anchor=south},
    scaled y ticks=false,
    yticklabel style={/pgf/number format/fixed,/pgf/number format/precision=3},
    legend style={at={(1.02,0.5)},anchor=west,draw=none,
                  font=\footnotesize,inner sep=1pt},
    legend cell align=left, legend columns=1,
    clip=true,
]
\addplot[black,thin,no marks,forget plot] coordinates {(-2.35,0) (5.35,0)};
\addplot[black,dashed,thin,no marks,forget plot] coordinates {(-1,-0.12680145612051) (-1,0.24111493392685)};

\addplot[
    only marks,mark=*,mark size=2.1pt,color={rgb,255:red,0;green,76;blue,153},
    error bars/.cd,y dir=both,y explicit,
    error bar style={line width=0.65pt},error mark options={rotate=90,mark size=2.3pt,line width=0.65pt}
] coordinates {
(-2.065,-0.031466247576672) += (0,0.040632916047508) -= (0,0.040632916047508)
(-0.065,0.006780677822692) += (0,0.04200750436697) -= (0,0.04200750436697)
(0.935,0.030510667282213) += (0,0.04027750711002) -= (0,0.04027750711002)
(1.935,0.049415648042741) += (0,0.048055316651295) -= (0,0.048055316651295)
(2.935,0.045013756686105) += (0,0.045749248829915) -= (0,0.045749248829915)
(3.935,0.070427588164144) += (0,0.052713895580232) -= (0,0.052713895580232)
(4.935,0.14103572272288) += (0,0.053576951374135) -= (0,0.053576951374135)
};
\addlegendentry{Regression}
\addplot[forget plot,only marks,mark=*,mark size=2.1pt,color={rgb,255:red,0;green,76;blue,153}] coordinates {(-1.065,0)};

\addplot[
    only marks,mark=triangle*,mark size=2.1pt,color={rgb,255:red,190;green,52;blue,44},
    error bars/.cd,y dir=both,y explicit,
    error bar style={line width=0.65pt},error mark options={rotate=90,mark size=2.3pt,line width=0.65pt}
] coordinates {
(-1.935,-0.027154497723546) += (0,0.047446848098552) -= (0,0.047446848098552)
(0.065,0.0084671514860413) += (0,0.051969209321479) -= (0,0.051969209321479)
(1.065,0.036294101775876) += (0,0.04348937529638) -= (0,0.04348937529638)
(2.065,0.06581290751656) += (0,0.0501066383043) -= (0,0.0501066383043)
(3.065,0.06689455987042) += (0,0.052231537503117) -= (0,0.052231537503117)
(4.065,0.080887034468495) += (0,0.055444317893088) -= (0,0.055444317893088)
(5.065,0.14138397016649) += (0,0.056169194055113) -= (0,0.056169194055113)
};
\addlegendentry{Matched DDD}
\addplot[forget plot,only marks,mark=triangle*,mark size=2.1pt,color={rgb,255:red,190;green,52;blue,44}] coordinates {(-0.935,0)};

\end{axis}
\end{tikzpicture}\par
\vspace{-0.02in}
{\footnotesize Panel C: Flexible-term saving ratio}\par
\vspace{0.05in}

\vspace{0.02in}
\begin{minipage}{\linewidth}
\scriptsize
\textit{Notes:} This figure plots the event study estimators from the regression and matching. The covariates included in the analysis are:  age and education level of household head, and household size. Cross sectional sample sizes for panel A, B, and C are 3,736 (29,888 observations), 3,332 households (26,656 observations), and 3,545 households (28,360 observations). The matching method used is the Epanechnikov kernel matching without bias correction.
\par
\vspace{0.25\baselineskip}
\noindent\begin{minipage}[t][6\baselineskip][t]{\linewidth}
\strut
\end{minipage}
\end{minipage}
\end{figure}

\begin{figure}[p]
\centering
\setstretch{1}
\setlength{\parindent}{0pt}
\caption{Matched DDD and Regression event studies: Bias-corrected Nearest-neighbor matching.}
\label{figure:empiricalcainndivbc}
\vspace{0.04in}
\begin{tikzpicture}[every node/.style={font=\footnotesize}]
\begin{axis}[
    width=0.78\linewidth, height=0.24\textheight,
    xmin=-2.35, xmax=5.35, ymin=-0.40027801463779, ymax=0.34504899909701,
    xtick={-2,-1,0,1,2,3,4,5},
    xlabel={Event time (year $-$ 2003)}, ylabel={Treatment effect},
    axis lines=left, axis line style={black,thin},
    tick align=outside, tick style={black,thin},
    tick label style={font=\footnotesize},
    label style={font=\footnotesize},
    xlabel style={at={(axis description cs:0.5,-0.17)},anchor=north},
    ylabel style={at={(axis description cs:-0.11,0.5)},anchor=south},
    scaled y ticks=false,
    yticklabel style={/pgf/number format/fixed,/pgf/number format/precision=3},
    legend style={at={(1.02,0.5)},anchor=west,draw=none,
                  font=\footnotesize,inner sep=1pt},
    legend cell align=left, legend columns=1,
    clip=true,
]
\addplot[black,thin,no marks,forget plot] coordinates {(-2.35,0) (5.35,0)};
\addplot[black,dashed,thin,no marks,forget plot] coordinates {(-1,-0.40027801463779) (-1,0.34504899909701)};

\addplot[
    only marks,mark=*,mark size=2.1pt,color={rgb,255:red,0;green,76;blue,153},
    error bars/.cd,y dir=both,y explicit,
    error bar style={line width=0.65pt},error mark options={rotate=90,mark size=2.3pt,line width=0.65pt}
] coordinates {
(-2.065,0.03278396501074) += (0,0.033038353351182) -= (0,0.033038353351182)
(-0.065,0.07723661597553) += (0,0.051545595431263) -= (0,0.051545595431263)
(0.935,-0.043751965344329) += (0,0.040673623325812) -= (0,0.040673623325812)
(1.935,0.032237012146934) += (0,0.053347809186021) -= (0,0.053347809186021)
(2.935,0.17860639235518) += (0,0.099240990749347) -= (0,0.099240990749347)
(3.935,0.010624681265614) += (0,0.084083444079536) -= (0,0.084083444079536)
(4.935,-0.19722942837161) += (0,0.12107520493823) -= (0,0.12107520493823)
};
\addlegendentry{Regression}
\addplot[forget plot,only marks,mark=*,mark size=2.1pt,color={rgb,255:red,0;green,76;blue,153}] coordinates {(-1.065,0)};

\addplot[
    only marks,mark=triangle*,mark size=2.1pt,color={rgb,255:red,190;green,52;blue,44},
    error bars/.cd,y dir=both,y explicit,
    error bar style={line width=0.65pt},error mark options={rotate=90,mark size=2.3pt,line width=0.65pt}
] coordinates {
(-1.935,-0.012801587014608) += (0,0.044354735953006) -= (0,0.044354735953006)
(0.065,0.0058417971755921) += (0,0.052506760836412) -= (0,0.052506760836412)
(1.065,0.015467949917381) += (0,0.056613844350808) -= (0,0.056613844350808)
(2.065,0.014919957390283) += (0,0.067698675313837) -= (0,0.067698675313837)
(3.065,-0.0026846506111998) += (0,0.088771480341196) -= (0,0.088771480341196)
(4.065,-0.05521046061814) += (0,0.081414310516548) -= (0,0.081414310516548)
(5.065,-0.048797220957064) += (0,0.12183620995834) -= (0,0.12183620995834)
};
\addlegendentry{Matched DDD}
\addplot[forget plot,only marks,mark=triangle*,mark size=2.1pt,color={rgb,255:red,190;green,52;blue,44}] coordinates {(-0.935,0)};

\end{axis}
\end{tikzpicture}\par
\vspace{-0.02in}
{\footnotesize Panel A: Net saving}\par
\vspace{0.05in}

\begin{tikzpicture}[every node/.style={font=\footnotesize}]
\begin{axis}[
    width=0.78\linewidth, height=0.24\textheight,
    xmin=-2.35, xmax=5.35, ymin=-0.13464693760165, ymax=0.066716038222654,
    xtick={-2,-1,0,1,2,3,4,5},
    xlabel={Event time (year $-$ 2003)}, ylabel={Treatment effect},
    axis lines=left, axis line style={black,thin},
    tick align=outside, tick style={black,thin},
    tick label style={font=\footnotesize},
    label style={font=\footnotesize},
    xlabel style={at={(axis description cs:0.5,-0.17)},anchor=north},
    ylabel style={at={(axis description cs:-0.11,0.5)},anchor=south},
    scaled y ticks=false,
    yticklabel style={/pgf/number format/fixed,/pgf/number format/precision=3},
    legend style={at={(1.02,0.5)},anchor=west,draw=none,
                  font=\footnotesize,inner sep=1pt},
    legend cell align=left, legend columns=1,
    clip=true,
]
\addplot[black,thin,no marks,forget plot] coordinates {(-2.35,0) (5.35,0)};
\addplot[black,dashed,thin,no marks,forget plot] coordinates {(-1,-0.13464693760165) (-1,0.066716038222654)};

\addplot[
    only marks,mark=*,mark size=2.1pt,color={rgb,255:red,0;green,76;blue,153},
    error bars/.cd,y dir=both,y explicit,
    error bar style={line width=0.65pt},error mark options={rotate=90,mark size=2.3pt,line width=0.65pt}
] coordinates {
(-2.065,-0.0083879405501005) += (0,0.023449453918368) -= (0,0.023449453918368)
(-0.065,0.023063620532447) += (0,0.022853104228507) -= (0,0.022853104228507)
(0.935,0.014396292948598) += (0,0.025536221754226) -= (0,0.025536221754226)
(1.935,-0.032938609492962) += (0,0.030063560432264) -= (0,0.030063560432264)
(2.935,-0.024354776486304) += (0,0.028987026928152) -= (0,0.028987026928152)
(3.935,-0.04001878515996) += (0,0.03117962236484) -= (0,0.03117962236484)
(4.935,-0.065892770571912) += (0,0.030702135926055) -= (0,0.030702135926055)
};
\addlegendentry{Regression}
\addplot[forget plot,only marks,mark=*,mark size=2.1pt,color={rgb,255:red,0;green,76;blue,153}] coordinates {(-1.065,0)};

\addplot[
    only marks,mark=triangle*,mark size=2.1pt,color={rgb,255:red,190;green,52;blue,44},
    error bars/.cd,y dir=both,y explicit,
    error bar style={line width=0.65pt},error mark options={rotate=90,mark size=2.3pt,line width=0.65pt}
] coordinates {
(-1.935,-0.004967074575513) += (0,0.023275989710374) -= (0,0.023275989710374)
(0.065,0.012726526604046) += (0,0.024825967964392) -= (0,0.024825967964392)
(1.065,0.012783708673778) += (0,0.02669912742715) -= (0,0.02669912742715)
(2.065,-0.035565952317581) += (0,0.033331138965196) -= (0,0.033331138965196)
(3.065,-0.026311100976547) += (0,0.031582659227105) -= (0,0.031582659227105)
(4.065,-0.038604147184364) += (0,0.032803121596613) -= (0,0.032803121596613)
(5.065,-0.069614514669872) += (0,0.034144072396003) -= (0,0.034144072396003)
};
\addlegendentry{Matched DDD}
\addplot[forget plot,only marks,mark=triangle*,mark size=2.1pt,color={rgb,255:red,190;green,52;blue,44}] coordinates {(-0.935,0)};

\end{axis}
\end{tikzpicture}\par
\vspace{-0.02in}
{\footnotesize Panel B: Saving rate}\par
\vspace{0.05in}

\begin{tikzpicture}[every node/.style={font=\footnotesize}]
\begin{axis}[
    width=0.78\linewidth, height=0.24\textheight,
    xmin=-2.35, xmax=5.35, ymin=-0.12680145612051, ymax=0.24111493392685,
    xtick={-2,-1,0,1,2,3,4,5},
    xlabel={Event time (year $-$ 2003)}, ylabel={Treatment effect},
    axis lines=left, axis line style={black,thin},
    tick align=outside, tick style={black,thin},
    tick label style={font=\footnotesize},
    label style={font=\footnotesize},
    xlabel style={at={(axis description cs:0.5,-0.17)},anchor=north},
    ylabel style={at={(axis description cs:-0.11,0.5)},anchor=south},
    scaled y ticks=false,
    yticklabel style={/pgf/number format/fixed,/pgf/number format/precision=3},
    legend style={at={(1.02,0.5)},anchor=west,draw=none,
                  font=\footnotesize,inner sep=1pt},
    legend cell align=left, legend columns=1,
    clip=true,
]
\addplot[black,thin,no marks,forget plot] coordinates {(-2.35,0) (5.35,0)};
\addplot[black,dashed,thin,no marks,forget plot] coordinates {(-1,-0.12680145612051) (-1,0.24111493392685)};

\addplot[
    only marks,mark=*,mark size=2.1pt,color={rgb,255:red,0;green,76;blue,153},
    error bars/.cd,y dir=both,y explicit,
    error bar style={line width=0.65pt},error mark options={rotate=90,mark size=2.3pt,line width=0.65pt}
] coordinates {
(-2.065,-0.031466247576672) += (0,0.040632916047508) -= (0,0.040632916047508)
(-0.065,0.006780677822692) += (0,0.04200750436697) -= (0,0.04200750436697)
(0.935,0.030510667282213) += (0,0.04027750711002) -= (0,0.04027750711002)
(1.935,0.049415648042741) += (0,0.048055316651295) -= (0,0.048055316651295)
(2.935,0.045013756686105) += (0,0.045749248829915) -= (0,0.045749248829915)
(3.935,0.070427588164144) += (0,0.052713895580232) -= (0,0.052713895580232)
(4.935,0.14103572272288) += (0,0.053576951374135) -= (0,0.053576951374135)
};
\addlegendentry{Regression}
\addplot[forget plot,only marks,mark=*,mark size=2.1pt,color={rgb,255:red,0;green,76;blue,153}] coordinates {(-1.065,0)};

\addplot[
    only marks,mark=triangle*,mark size=2.1pt,color={rgb,255:red,190;green,52;blue,44},
    error bars/.cd,y dir=both,y explicit,
    error bar style={line width=0.65pt},error mark options={rotate=90,mark size=2.3pt,line width=0.65pt}
] coordinates {
(-1.935,-0.035907612692554) += (0,0.056654786131993) -= (0,0.056654786131993)
(0.065,0.0027644551583313) += (0,0.063656042844038) -= (0,0.063656042844038)
(1.065,0.032666455185212) += (0,0.050281596125639) -= (0,0.050281596125639)
(2.065,0.074529172681313) += (0,0.051547267760758) -= (0,0.051547267760758)
(3.065,0.062265555909227) += (0,0.066354693944396) -= (0,0.066354693944396)
(4.065,0.07893610113998) += (0,0.061250524302551) -= (0,0.061250524302551)
(5.065,0.13162352714768) += (0,0.061365148667294) -= (0,0.061365148667294)
};
\addlegendentry{Matched DDD}
\addplot[forget plot,only marks,mark=triangle*,mark size=2.1pt,color={rgb,255:red,190;green,52;blue,44}] coordinates {(-0.935,0)};

\end{axis}
\end{tikzpicture}\par
\vspace{-0.02in}
{\footnotesize Panel C: Flexible-term saving ratio}\par
\vspace{0.05in}

\vspace{0.02in}
\begin{minipage}{\linewidth}
\scriptsize
\textit{Notes:} This figure plots the event study estimators from the regression and matching. The covariates included in the analysis are:  age and education level of household head, and household size. Cross sectional sample sizes for panel A, B, and C are 3,736 (29,888 observations), 3,332 households (26,656 observations), and 3,545 households (28,360 observations). The matching method used is the nearest neighbor matching estimators ($M=n_{gs}^{1/3}$) with quadratic bias correction using the series estimator.
\par
\vspace{0.25\baselineskip}
\noindent\begin{minipage}[t][6\baselineskip][t]{\linewidth}
\strut
\end{minipage}
\end{minipage}
\end{figure}

\section{Auxiliary Lemma}
\begin{lemma}\label{lemma:regressionweight}
 The induced weights by regression specification $(2)$ satisfy,  
 \begin{align*}
    &\E\left[\left.\frac{(1-p_{11})L(D|G,S,X)}{\E[(D-L(D|G,S,X))^2]} X \right| D=0\right] =\E\left[\left.\frac{p_{01}L(D|G,S,X)}{\E[(D-L(D|G,S,X))^2]} X \right| G=0,S=1\right]\\
    &+ \E\left[\left.\frac{p_{10}L(D|G,S,X)}{\E[(D-L(D|G,S,X))^2]} X \right| G=1,S=0\right] 
    +\E\left[\left.\frac{p_{00}L(D|G,S,X)}{\E[(D-L(D|G,S,X))^2]} X \right| G=0,S=0\right]\\
    &\E\left[\left.\frac{p_{11}\left(1-L(D|G,S,X)\right)}{\E[(D-L(D|G,S,X))^2]} X \right| G=1,S=1\right]=\E\left[\left.\frac{(1-p_{11})L(D|G,S,X)}{\E[(D-L(D|G,S,X))^2]} X \right| D=0\right] \\
 \end{align*}
\end{lemma}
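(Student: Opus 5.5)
The plan is to derive both displayed identities from the population normal equations that define the linear projection $L(D\mid G,S,X)$ of $D$ onto $(1,G,S,X)$. Write $e\equiv D-L(D\mid G,S,X)$ for the projection residual. The defining orthogonality conditions are $\E[e]=0$, $\E[eG]=0$, $\E[eS]=0$ and $\E[eX]=0$. Since $\kappa=\E[e^2]$ is a positive constant, it can be carried through every expectation unchanged, so I will work with numerators only and divide by $\kappa$ at the end.

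The first identity is purely an accounting statement, obtained by iterating expectations over the partition of $\{D=0\}$ into the three untreated cells. For any integrable $Z$,
\[
\E[Z\1(D=0)]=\sum_{(g,s)\neq(1,1)}p_{gs}\E[Z\mid G=g,S=s].
\]
Dividing by $P(D=0)=1-p_{11}$ gives
\[
(1-p_{11})\E[Z\mid D=0]=\sum_{(g,s)\neq(1,1)} p_{gs}\E[Z\mid G=g,S=s].
\]
Applying this with $Z=L(D\mid G,S,X)X/\kappa$ yields the first display directly. Nothing specific to the projection is used here.

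For the second identity, I use $\E[eX]=0$ and split by $D$. On $\{D=1\}$ the residual is $e=1-L$, and on $\{D=0\}$ it is $e=-L$. This gives
\[
0=\E[eX]=p_{11}\E[(1-L)X\mid D=1]-(1-p_{11})\E[LX\mid D=0].
\]
Since $\{D=1\}=\{G=1,S=1\}$, rearranging and dividing by $\kappa$ gives exactly the second display. Both sides are well defined by the moment conditions implicit in Assumption~\ref{assumption:regularity} (compact support of $X$), together with invertibility of $\E[(1,G,S,X')'(1,G,S,X')]$. That invertibility follows from Assumption~\ref{assumption:overlap} and nondegenerate $X$.

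The only step needing care is to confirm that $\E[eX]=0$ is the correct orthogonality condition. This requires that $L$ is the projection onto $(1,G,S,X)$, so that $X$ is among the regressors. I would also note, via Frisch--Waugh, that this $L$ is the relevant one for specification~\eqref{eq:spec1} after first-differencing relative to $t^*-1$.

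As a consistency check, the same splitting applied to $\E[e]=0$ shows that the total weights on the treated cell and on the pooled untreated group both equal one. Together, the two displays then say that the implied weights balance the mean of $X$ between $D=1$ and the pooled $D=0$, but not between the treated cell and each untreated cell separately.
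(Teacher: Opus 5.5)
Your proposal is correct and follows essentially the same route as the paper: the first display comes from the law of total probability over the three untreated cells, and the second comes from splitting $\E[(D-L(D\mid G,S,X))X]=0$ on $\{D=1\}$ and $\{D=0\}$, using $D(1-L)-(1-D)L=D-L$. One small point on your side check, which lies outside the lemma: $\E[e]=0$ alone only shows that the two total weights are equal to each other, and showing that each equals one also needs $\kappa=\E[eD]$, which follows from $\E[eL]=0$.
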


\begin{lemma}\label{lemma:convergence}
Consider a subset of the data with two subgroups, the treated subgroup $(1,1)$ and one of the comparison group $(g,s)$. And we conduct the pairwise matching procedure described in section \ref{section:matchedtriplediffereces} satisfying assumption \ref{assumption:matching}. Let $(Z_i)_{i=1}^n$ be a sequence of i.i.d random variables with $\E[Z_i|G_i=g,S_i=s]=\mu_Z(g,s)$, $\E[Z_i|G_i=g,S_i=s,X_i=x]=\mu_Z(g,s,x)$ and $\V[Z_i|G_i=g,S_i=s,X_i=x]=\sigma^2_Z(g,s,x)$ with both functions being Lipschitz-continuous and with $\E[Z_i^4|G_i=g,S_i=s,X_i=x]$ uniformly bounded over $x$. Then

\[ T_n= \frac{1}{N_{11}}\sum_{i\in \mathcal{I}_{11}}Z_i - \frac{1}{n_{11}}\sum_{i\in \mathcal{I}_{gs}}w_i  Z_i\plim  \mu_{Z}(1,1)-\E[\mu_Z(g,s,X)|G=1,S=1]\]

\end{lemma}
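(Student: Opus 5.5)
The plan is to split $T_n$ into a term that depends only on the covariates and group labels, and a term driven by the conditional noise of the comparison units. Condition on $\mathbf{C}\equiv(G_i,S_i,X_i)_{i=1}^n$; by Assumption \ref{assumption:matching}(i) the weights $w_{ij}$ and $w_i$ are $\mathbf{C}$-measurable. Write $Z_i=\mu_Z(G_i,S_i,X_i)+\varepsilon_i$ with $\E[\varepsilon_i\mid\mathbf{C}]=0$. Using $\sum_{j}w_{ij}=1$ and $w_j=\sum_{i\in\mathcal I_{11}}w_{ij}$, I will decompose
\begin{align*}
T_n &= \frac{1}{n_{11}}\sum_{i\in\mathcal I_{11}}\bigl(Z_i-\mu_Z(g,s,X_i)\bigr)
+\frac{1}{n_{11}}\sum_{i\in\mathcal I_{11}}\sum_{j\in\mathcal I_{gs}}w_{ij}\bigl(\mu_Z(g,s,X_i)-\mu_Z(g,s,X_j)\bigr)
-\frac{1}{n_{11}}\sum_{j\in\mathcal I_{gs}}w_j\varepsilon_j \\
&\equiv A_n+B_n-R_n .
\end{align*}

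\emph{Term $A_n$.} This is an i.i.d.\ average over the treated subgroup. Since $n_{11}/n\to p_{11}>0$ almost surely (overlap gives $p_{11}>0$), the weak law of large numbers applied within the subgroup gives
\[
A_n\plim \mu_Z(1,1)-\E[\mu_Z(g,s,X)\mid G=1,S=1].
\]
The needed moments are finite because $\mu_Z$ is Lipschitz on the compact support and $\E[Z^4\mid\cdot]$ is uniformly bounded.

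\emph{Term $B_n$.} Let $L$ be the Lipschitz constant of $\mu_Z(g,s,\cdot)$. Then
\[
|B_n|\le \frac{L}{n_{11}}\sum_{i\in\mathcal I_{11}}\sum_{j}w_{ij}\|X_j-X_i\|.
\]
By exchangeability of the treated units, which follows from permutation invariance in Assumption \ref{assumption:matching}(i), the expectation of the right-hand side equals $L\,\E\bigl[\sum_j w_{ij}\|X_j-X_i\|\mid i\in\mathcal I_{11}\bigr]$. This is $o(1)$ by locality, Assumption \ref{assumption:matching}(ii). Markov's inequality then gives $B_n=o_p(1)$.

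\emph{Term $R_n$.} Conditional on $\mathbf{C}$, the $\varepsilon_j$ are independent with mean zero and variances bounded by a constant $\bar\sigma^2$, since $\sigma_Z^2$ is continuous on a compact set. Hence
\[
\E[R_n^2\mid\mathbf{C}]\le \frac{\bar\sigma^2}{n_{11}^2}\sum_{j\in\mathcal I_{gs}}w_j^2 .
\]
Taking expectations, and using exchangeability over $\mathcal I_{gs}$ together with the moment bound in Assumption \ref{assumption:matching}(iii), which implies $\E[w_j^2\mid j\in\mathcal I_{gs}]=O(1)$, the right-hand side is $O(n_{gs}/n_{11}^2)=O(1/n)$. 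One technical point: $n_{11}$ is random. I will handle it by working on the event $\{n_{11}\ge p_{11}n/2\}$, whose probability tends to one, or equivalently by replacing $1/n_{11}$ with $1/n$ up to an $O_p(1)$ factor. Chebyshev's inequality then gives $R_n=o_p(1)$.

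Combining the three terms yields the claim. The main obstacle I expect is the bookkeeping in $B_n$ and $R_n$. Assumption \ref{assumption:matching}(ii)--(iii) are stated as expectations conditional on a single unit belonging to a subgroup, while the sums have random index sets and a random normalization $n_{11}$. Converting the bounds requires the exchangeability argument and the truncation on $n_{11}$ just described. Condition (iv) of Assumption \ref{assumption:matching} is not needed for this consistency result.
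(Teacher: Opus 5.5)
Your proposal is correct and follows essentially the same route as the paper. The paper writes $T_n=\frac{n}{n_{11}}(T_{1,n}+T_{2,n}+T_{3,n})$, with $T_{1,n}$ a pooled noise term bounded by its variance, $T_{2,n}$ a treated-sample mean-difference term handled by the WLLN, and $T_{3,n}$ a matching-discrepancy term killed by Lipschitz continuity, locality and Markov's inequality; it differs from yours only in grouping the treated-unit noise with the comparison noise rather than with the WLLN term, and it handles the random normalization exactly as you suggest, by working with $1/n$ (where $\E[\frac1n\sum_{i\in\mathcal I_{11}}a_i]=p_{11}\E[a_i\mid i\in\mathcal I_{11}]$ holds exactly) and then absorbing $n/n_{11}=O_p(1)$.
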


\begin{lemma}\label{lemma:varianceconvergence}
Under assumption \ref{assumption:iid}, \ref{assumption:matching}, and \ref{assumption:regularity}
\begin{align*}
     \frac{1}{n}\sum_{i\in\mathcal{I}_{11}}\sigma^2_{11,l}(X_{i})- \E\left[\frac{1}{n}\sum_{i\in\mathcal{I}_{11}}\sigma^2_{11,l}(X_{i}) \right] &=o_p(1) \\
     \frac{1}{n_{11}}\sum_{i\in\mathcal{I}_{11}}\sigma^2_{11,l}(X_{i})- \E\left[\frac{1}{n_{11}}\sum_{i\in\mathcal{I}_{11}}\sigma^2_{11,l}(X_{i}) \right] &=o_p(1) \\
\end{align*}
and for $(g,s)\in\{(0,1),(1,0),(0,0) \}$
\begin{align*}
    \frac{1}{n}\sum_{i\in\mathcal{I}_{gs}} w_{i}^2\sigma^2_{gs,l}(X_{i})-\E\left[ \frac{1}{n}\sum_{i\in\mathcal{I}_{gs}} w_{i}^2\sigma^2_{gs,l}(X_{i})\right] & =o_p(1) \\
     \frac{1}{n_{11}}\sum_{i\in\mathcal{I}_{gs}} w_{i}^2\sigma^2_{gs,l}(X_{i})-\E\left[ \frac{1}{n_{11}}\sum_{i\in\mathcal{I}_{gs}} w_{i}^2\sigma^2_{gs,l}(X_{i})\right] & =o_p(1) \\
\end{align*}
\end{lemma}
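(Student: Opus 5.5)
The plan is to prove the $1/n$-normalized statements by a variance bound (Chebyshev), and then pass to the $1/n_{11}$-normalized versions using $n_{11}/n\xrightarrow{p}p_{11}>0$.

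\textbf{Treated sum.} Put $F_{11}=\frac1n\sum_{i}\1(i\in\mathcal I_{11})\sigma^2_{11,l}(X_i)$. This is an average of i.i.d.\ terms (Assumption~\ref{assumption:iid}) bounded by $\overline\sigma^2$ (Assumption~\ref{assumption:regularity}(iii)). Hence $\V(F_{11})\le\overline\sigma^4/n$, and Chebyshev's inequality gives the first claim.

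\textbf{Matched sums: Efron--Stein bound.} Fix $(g,s)$ and let $F=\frac1n\sum_{j\in\mathcal I_{gs}}w_j^2\sigma^2_{gs,l}(X_j)$. This is a function of $n$ i.i.d.\ observations. By Assumption~\ref{assumption:matching}(i), the weights are permutation-equivariant, so $F$ is symmetric in the observations. The Efron--Stein inequality then gives
\[
\V(F)\le \tfrac n2\,\E\big[(F-F^{(1)})^2\big],
\]
where $F^{(1)}$ is $F$ recomputed after replacing observation $1$ by an independent copy. It therefore suffices to show
\[
\E\big[(nF-nF^{(1)})^2\big]=o(n).
\]

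\textbf{Splitting the difference.} Write $nF-nF^{(1)}=A+R$.
\begin{itemize}
\item $R$ collects the terms of unit $1$ and of its copy. It is bounded by $\overline\sigma^2(w_1^2+w_{1'}^{(1)2})$, so $\E R^2=O(1)$ by the moment bound of Assumption~\ref{assumption:matching}(iii), since $4<4+\delta$.
\item $A$ collects the remaining units. Using $|w_j^2-w_j^{(1)2}|=|w_j-w_j^{(1)}|(w_j+w_j^{(1)})$, we get
\[
|A|\le\overline\sigma^2\Big(\sum_{j\in\mathcal I_{gs}\setminus\{1\}}|w_j-w_j^{(1)}|\Big)\,M_n,\qquad M_n=\max_j\big(w_j+w_j^{(1)}\big).
\]
\end{itemize}

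\textbf{The main obstacle: controlling $M_n$.} Cauchy--Schwarz gives
\[
\E A^2\le\overline\sigma^4\Big(\E\Big[\big(\textstyle\sum_{j}|w_j-w_j^{(1)}|\big)^4\Big]\Big)^{1/2}\big(\E M_n^4\big)^{1/2}.
\]
The first factor is $O(1)$ by the stability condition, Assumption~\ref{assumption:matching}(iv). For the second factor, bound the maximum by a sum:
\[
\E\max_j w_j^4\le\Big(\E\sum_{j\in\mathcal I_{gs}}w_j^{4+\delta}\Big)^{4/(4+\delta)}=O\big(n^{4/(4+\delta)}\big),
\]
using Jensen's inequality and Assumption~\ref{assumption:matching}(iii) (the sum is $n\Pr(j\in\mathcal I_{gs})\,\E[w_j^{4+\delta}\mid j\in\mathcal I_{gs}]$ by exchangeability). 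The same bound holds for the $w_j^{(1)}$. Hence
\[
\E A^2=O\big(n^{2/(4+\delta)}\big)=o(n).
\]
Combining with $\E R^2=O(1)$ gives $\V(F)=o(1)$, which proves the $1/n$ claims. The strict moment exponent $4+\delta$ is exactly what makes this step work.

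\textbf{Switching to $1/n_{11}$.} Let $H$ denote either $F_{11}$ or $F$; the $1/n_{11}$-normalized quantity is $(n/n_{11})H$. Since $n_{11}/n\to p_{11}>0$ almost surely, and $H-\E H=o_p(1)$ with $\E H=O(1)$,
\[
\frac n{n_{11}}H=\frac{\E H}{p_{11}}+o_p(1).
\]
It remains to show
\[
\E\Big[\frac n{n_{11}}H\Big]-\frac{\E H}{p_{11}}\to0.
\]
Split on the event $E=\{n_{11}\ge p_{11}n/2\}$.
\begin{itemize}
\item On $E$, the ratio $n/n_{11}$ is bounded and converges, and $H$ is uniformly integrable because $\E H^2=O(1)$ by the moment bound. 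Dominated convergence along the lines of Vitali's theorem then gives the convergence.
\item On $E^c$, which has exponentially small probability by Hoeffding's inequality, $n/n_{11}\le n$ (with the convention that the estimator is set to $0$ when $n_{11}=0$). By Cauchy--Schwarz, the contribution is at most $n\,(\E H^2)^{1/2}\Pr(E^c)^{1/2}\to0$.
\end{itemize}
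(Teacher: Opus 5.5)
Your proposal is correct and follows essentially the paper's route. The paper, like you, uses Chebyshev for the treated sum. For the matched sums it combines Efron--Stein with the stability condition, the factorization $|w_j^2-(w_j^{(1)})^2|=|w_j-w_j^{(1)}|(w_j+w_j^{(1)})$, Cauchy--Schwarz, and bounding the maximum by the sum; the paper packages these steps as Lemma~\ref{lemma:weightproperty}(ii). It then passes from the $1/n$ to the $1/n_{11}$ normalization using $n/n_{11}\to 1/p_{11}$ together with an exponentially small bad event. The only difference there is that the paper uses Lemma~\ref{lemma:subgroupproportion} and Cauchy--Schwarz where you use uniform integrability.

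One small correction: the strict exponent $4+\delta$ is not what makes the Efron--Stein step work. The cruder bound $\E[\max_j w_j^4]\le\E[\sum_{j\in\mathcal I_{gs}} w_j^4]=O(n)$, which is what the paper uses, already gives $\E[A^2]=O(n^{1/2})=o(n)$. Your Jensen argument only sharpens this to $O(n^{2/(4+\delta)})$.
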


\begin{lemma}\label{lemma:weightproperty}
    Under assumption \ref{assumption:matching}, for any $(g,s)\in\{(0,1),(1,0),(0,0) \} $ the pairwise weight $w_{ij}$, $w_{ij'}$ and the the matching weight $w_j$, $w_{j'}$ induced by the pairwise matching between the treated subgroup $(1,1)$, and the untreated subgroup $(g,s)$ satisfies:

    (i) \begin{align*}
        &\E\left[\frac{1}{n^2}\sum_{j\in\mathcal{I}_{gs}}w_{j}^2\right] = O(n^{-1}), \quad  \frac{1}{n^2}\sum_{j\in\mathcal{I}_{gs}}w_{j}^2 = O_p(n^{-1}) \\
        &\sum_{i\in \mathcal{I}_{11}}w_{ij}w_{ij'} \leq \max_{j\in \mathcal{I}_{gs}} w_j,\ \quad \max_{j\in \mathcal{I}_{gs}} w_j \leq \left(\sum_{j\in\mathcal{I}_{gs}}w_{j}^2 \right)=O_p(n^{1/2})
    \end{align*}
    
    (ii) \[ \E\left[\left(\sum_{j\in \mathcal{I}_{gs\setminus\{1\}}}\abs{w_j^2-\left(w_j^{(1)}\right)^2}\right)^2 \right]  =O (n^{1/2})  \]
    (iii) For $r< 4+\delta$, we have: 
    \[ \E[w_j^{r}|j\in \mathcal{I}_{gs}]=O(1), \quad  \frac{1}{n_{gs}}\sum_{i\in \mathcal{I}_{gs}}w_j^r=O(1) \]
    (iv) \begin{align*}
        \frac{1}{n_{11}}\sum_{i\in \mathcal{I}_{11}}\sum_{j\in \mathcal{I}_{gs}}w_{ij}\norm{X_i-X_j} =o_p(1), \quad  \frac{1}{n_{11}}\sum_{i\in \mathcal{I}_{11}}\left(\sum_{j\in \mathcal{I}_{gs}}w_{ij}\norm{X_i-X_j}\right)^2 =o_p(1)
    \end{align*}
\end{lemma}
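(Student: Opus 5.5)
The plan is to prove the four parts essentially separately. Each part uses only the structural conditions in Assumption~\ref{assumption:matching} (nonnegativity, adding up, exchangeability under permutations, locality, bounded moments, stability), together with compactness of the support from Assumption~\ref{assumption:regularity}(i) for the last bound. Two facts are used throughout. First, by condition (i) the weights are permutation-equivariant within subgroups, so conditional on the subgroup labels the $w_j$, $j\in\mathcal I_{gs}$, are exchangeable, and likewise the treated-unit quantities for $i\in\mathcal I_{11}$. Hence every sum over a subgroup has expectation equal to the (random) subgroup size times a single conditional expectation of the type appearing in Assumption~\ref{assumption:matching}. Second, from $\sum_j w_{ij}=1$ and $w_{ij}\ge 0$ we get $0\le w_{ij}\le 1$ and $\sum_{j}w_j=n_{11}$.

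I would prove (iii) first, since (i) and (ii) use it. For $r<4+\delta$, Lyapunov's inequality gives $\E[w_j^r\mid j\in\mathcal I_{gs}]\le \E[w_j^{4+\delta}\mid j\in\mathcal I_{gs}]^{r/(4+\delta)}=O(1)$ by condition (iii). By exchangeability, the expectation of $\frac1{n_{gs}}\sum_{j\in\mathcal I_{gs}}w_j^r$ equals this same conditional moment. It is therefore $O(1)$ in expectation, and $O_p(1)$ by Markov's inequality, which is the sense in which I read the second claim.

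For (i), exchangeability and $n_{gs}\le n$ give
\[
\E\Big[n^{-2}\sum_{j\in\mathcal I_{gs}}w_j^2\Big]\le n^{-1}\,\E[w_j^2\mid j\in\mathcal I_{gs}]=O(n^{-1}),
\]
and Markov's inequality yields the $O_p(n^{-1})$ statement. The inequality for cross products follows from $w_{ij'}\le 1$:
\[
\sum_{i\in\mathcal I_{11}}w_{ij}w_{ij'}\le\sum_{i\in\mathcal I_{11}}w_{ij}=w_j\le\max_{j\in\mathcal I_{gs}}w_j .
\]
For the maximum, I would use the bound $\E[\max_j w_j^4]\le \E[\sum_{j\in\mathcal I_{gs}}w_j^4]=O(n)$ from (iii); this gives $\max_j w_j=O_p(n^{1/4})\subset O_p(n^{1/2})$. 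The deterministic comparison with the sum of squares is simply $\max_j w_j\le(\sum_j w_j^2)^{1/2}$. The bound $E[\max_j w_j^4]=O(n)$ is reused in (ii).

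For (ii), write $|a^2-b^2|=|a-b|(a+b)$ and bound $a+b$ by $M\equiv\max_j w_j+\max_j w_j^{(1)}$. This gives
\[
\Big(\sum_{j\in\mathcal I_{gs}\setminus\{1\}}\bigl|w_j^2-(w_j^{(1)})^2\bigr|\Big)^2\le M^2\Big(\sum_{j}\bigl|w_j-w_j^{(1)}\bigr|\Big)^2 .
\]
Taking expectations and applying Cauchy--Schwarz bounds the expectation by $(\E M^4)^{1/2}\bigl(\E(\sum_j|w_j-w_j^{(1)}|)^4\bigr)^{1/2}$. The second factor is $O(1)$ by the stability condition (iv). The first factor is $O(n^{1/2})$: the replaced sample has the same distribution as the original, so $\E M^4\le 8\,\E[\max_j w_j^4]\cdot 2=O(n)$.

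For (iv), exchangeability across $i\in\mathcal I_{11}$ shows that the expectation of the first average equals the locality quantity in condition (ii), which is $o(1)$. Nonnegativity and Markov's inequality then give $o_p(1)$. For the second average, note that $\sum_j w_{ij}\|X_i-X_j\|\le \mathrm{diam}(\mathcal X)<\infty$, because the weights form a convex combination and the support is compact by Assumption~\ref{assumption:regularity}(i). Hence each squared term is at most $\mathrm{diam}(\mathcal X)$ times the unsquared term, and the second claim reduces to the first.

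The step I expect to need the most care is the repeated use of exchangeability when the subgroup sizes $n_{gs}$, $n_{11}$ are random. I would condition on the vector of subgroup labels, apply permutation equivariance within subgroups, and then integrate out the labels. The moment conditions are stated conditionally on $j\in\mathcal I_{gs}$ (respectively $i\in\mathcal I_{11}$), so they transfer directly to this setting.
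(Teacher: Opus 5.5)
Your proposal is correct and follows essentially the same route as the paper's proof. It uses Lyapunov plus Markov for (i) and (iii); the factorization $|a^2-b^2|=(a+b)|a-b|$ with Cauchy--Schwarz, the $C_r$ inequality and $\max_j w_j^4\le\sum_j w_j^4$ for (ii); and Markov plus the compact-support bound $(\sum_j w_{ij}\|X_i-X_j\|)^2\le C\sum_j w_{ij}\|X_i-X_j\|$ for (iv).

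The differences are minor. You obtain the sharper $\max_j w_j=O_p(n^{1/4})$. You also average over the random $n_{11}$ or $n_{gs}$ and assert that the expectation equals the conditional moment, which holds only up to a size-biasing factor. The paper avoids this by normalizing by $n$, where $\E[\sum_{j\in\mathcal I_{gs}}f_j]=n\,p_{gs}\,\E[f_1\mid 1\in\mathcal I_{gs}]$ holds exactly, and then using $n/n_{11}=O_p(1)$.
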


\begin{lemma}\label{lemma:subgroupproportion}
    For any $(g,s)\in \{0,1\}^2$, under assumption \ref{assumption:iid}, we have:
    \begin{align*}
        \E\left[\left(\frac{n}{n_{gs}}-\frac{1}{p_{gs}} \right)^2 \right]=O(n^{-1})
    \end{align*}
\end{lemma}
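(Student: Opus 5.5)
The statement to prove is Lemma~\ref{lemma:subgroupproportion}, which is elementary. Write $n_{gs}=\sum_{i=1}^n \1(G_i=g,S_i=s)$. By Assumption~\ref{assumption:iid}, $n_{gs}\sim\mathrm{Binomial}(n,p_{gs})$, and Assumption~\ref{assumption:overlap} gives $p_{gs}\geq \epsilon>0$. Set $\hat p=n_{gs}/n$. Then
\[
\frac{n}{n_{gs}}-\frac{1}{p_{gs}}=\frac{p_{gs}-\hat p}{\hat p\,p_{gs}} .
\]
The numerator has second moment $p_{gs}(1-p_{gs})/n=O(n^{-1})$. The only difficulty is that $\hat p$ can be small or zero, so the ratio can blow up.

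\textbf{Convention for $n_{gs}=0$.} On the event $n_{gs}=0$ the ratio $n/n_{gs}$ is undefined. I would adopt the standard convention that it is set to $0$ (equivalently, one works on $\{n_{gs}\geq1\}$). This event has probability $(1-p_{gs})^n$, which is exponentially small, so it contributes at most $p_{gs}^{-2}(1-p_{gs})^n=o(n^{-1})$. Under any bounded convention the contribution is similarly negligible.

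\textbf{Split on a good event.} Let $A=\{\hat p\geq p_{gs}/2\}$.

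On $A$, the denominator satisfies $\hat p\,p_{gs}\geq p_{gs}^2/2$. Hence
\[
\E\!\left[\left(\frac{n}{n_{gs}}-\frac{1}{p_{gs}}\right)^2\1_A\right]\leq \frac{4}{p_{gs}^4}\E[(\hat p-p_{gs})^2]=O(n^{-1}).
\]

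On $A^c\cap\{n_{gs}\geq1\}$, use the crude bound
\[
\left|\frac{n}{n_{gs}}-\frac{1}{p_{gs}}\right|\leq n+\frac{1}{p_{gs}} ,
\]
so the squared term is at most $O(n^2)$. By Hoeffding's inequality, $P(A^c)\leq \exp(-n p_{gs}^2/2)$. The contribution is therefore $O(n^2e^{-cn})=o(n^{-1})$.

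Summing the pieces gives $O(n^{-1})$. The only real obstacle is controlling the lower tail of $\hat p$, and the Chernoff/Hoeffding bound handles it.
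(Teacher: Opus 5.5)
Your proposal is correct and follows essentially the same route as the paper: split on the event $\{n_{gs}\geq np_{gs}/2\}$, bound the good-event piece by $4(1-p_{gs})/(np_{gs}^3)$, and control the bad-event piece with the crude bound $(n+1/p_{gs})^2$ times the Hoeffding tail $\exp(-np_{gs}^2/2)$. Your explicit treatment of $\{n_{gs}=0\}$ is a small improvement. The paper's bound $n/n_{gs}\leq n$ on the bad event silently assumes $n_{gs}\geq 1$, and it also leaves the requirement $p_{gs}>0$ implicit, which you correctly supply via strong overlap.
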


 \begin{lemma}\label{lemma:conditionalchebyshev}
     Let $T_n$ be a sequence of random variables such that $\E[T_n|\G,\bs,\X]=0$ and $Var(T_n|\G,\bs,\X)\xrightarrow{p}0$, then $T_n\xrightarrow{p}0$.
 \end{lemma}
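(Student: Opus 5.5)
This is a conditional Chebyshev argument. Write $\mathcal F\equiv(\G,\bs,\X)$ for the conditioning information. The plan is to show that for every fixed $\varepsilon>0$ we have $\Pr(|T_n|>\varepsilon)\to 0$. We do this by first bounding the conditional tail probability given $\mathcal F$, and then averaging over $\mathcal F$. The one wrinkle is that the conditional variance is only assumed to converge in probability, not in $L^1$. So its expectation need not vanish, and we cannot simply apply the unconditional Chebyshev inequality. The fix is to use the fact that conditional probabilities are bounded by one.

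\textbf{Step 1 (conditional bound).} Since $\E[T_n\mid\mathcal F]=0$, the conditional Chebyshev inequality gives, almost surely,
\[
\Pr\bigl(|T_n|>\varepsilon\mid\mathcal F\bigr)\le \min\Bigl\{1,\ \varepsilon^{-2}\,Var(T_n\mid\mathcal F)\Bigr\}\equiv Q_n .
\]

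\textbf{Step 2 (convergence of the bound).} By hypothesis $Var(T_n\mid\mathcal F)\xrightarrow{p}0$. The map $v\mapsto\min\{1,v/\varepsilon^2\}$ is continuous, so the continuous mapping theorem gives $Q_n\xrightarrow{p}0$. Moreover $0\le Q_n\le 1$, so the family $\{Q_n\}$ is uniformly integrable. Bounded convergence (in its convergence-in-probability form) therefore yields $\E[Q_n]\to 0$.

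\textbf{Step 3 (unconditioning).} By the law of iterated expectations,
\[
\Pr(|T_n|>\varepsilon)=\E\bigl[\Pr(|T_n|>\varepsilon\mid\mathcal F)\bigr]\le \E[Q_n]\to 0 .
\]
Since $\varepsilon>0$ was arbitrary, $T_n\xrightarrow{p}0$.

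The only step needing care is Step 2: truncating at one is exactly what avoids assuming that $\E[Var(T_n\mid\mathcal F)]\to 0$.
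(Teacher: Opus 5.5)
Your proof is correct and takes the same route as the paper: conditional Chebyshev, truncation of the bound at one, and averaging over $(\G,\bs,\X)$. The paper does your bounded-convergence step by hand via $\E[Z_n]\le\delta+P(Z_n>\delta)$, and your chain $\Pr(|T_n|>\varepsilon)\le\E[Q_n]$, with the minimum inside the expectation, states that argument more cleanly than the paper's intermediate bound $\min\{1,\E[Var(T_n\mid\G,\bs,\X)]/\epsilon^2\}$, which need not vanish.
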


 \begin{lemma} \label{lemma:reallocationbound}
 For the nearest neighbor matching with fixed neighbor, nearest neighbor matching with diverging neighbor, and the kernel matching introduced in appendix A. The matching weight $w_j$ generated by those three procedures all satisfies:
 \begin{align*}
     \sum_{j\in \mathcal{I}_{gs}\setminus\{1\}}\abs{w_j-w_j^{(1)}}\leq w_1+w_1^{(1)}+\1(1\in \mathcal{I}_{11}) +\1(1'\in \mathcal{I}_{11})
 \end{align*} 
 This implies that the nearest neighbor matching with fixed neighbor, nearest neighbor matching with diverging neighbor, and the kernel matching all satisfies assumption \ref{assumption:matching}, which is:
 \begin{align*}
     \E\left[\left(\sum_{j\in \mathcal{I}_{gs}\setminus\{1\}}\abs{w_j-w_j^{(1)}}\right)^4 \right] =O(1)
 \end{align*}
 \end{lemma}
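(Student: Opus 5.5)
The plan is to prove the deterministic reallocation inequality first, by a case analysis on the subgroups containing the removed observation $1$ and its replacement $1'$. The fourth-moment bound then follows from it. Throughout, $w_1$ is the matching weight of unit $1$ in the original sample and $w_1^{(1)}$ is the matching weight of the replacement unit $1'$ in the perturbed sample. Both are taken to be zero when the unit is not in $\mathcal I_{gs}$. Only units in $\mathcal I_{11}\cup\mathcal I_{gs}$ enter the pairwise matching, so a unit outside these two subgroups has no effect.

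\textbf{Step 1: splitting the swap.} I would view the swap $1\to 1'$ as two moves: first remove $1$, then add $1'$, passing through the intermediate sample that contains neither. By the triangle inequality, $\sum_{j\in\mathcal I_{gs}\setminus\{1\}}|w_j-w_j^{(1)}|$ is at most the sum of the changes caused by the removal and by the addition. By symmetry, an addition is just a removal read backwards. So it suffices to bound the effect of removing a single unit $k$:
\begin{enumerate}[label=(\alph*)]
\item If $k\in\mathcal I_{11}$, the untreated pool is unchanged. Only row $k$ of pairwise weights disappears, so the total change in the $w_j$ is $\sum_j w_{kj}=1$. This yields the terms $\1(1\in\mathcal I_{11})$ and $\1(1'\in\mathcal I_{11})$.
\item If $k\in\mathcal I_{gs}$, every treated row $i$ may change. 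I would show that for each $i$,
\[
\sum_{j\neq k}|w_{ij}-\tilde w_{ij}|\le w_{ik},
\]
where $\tilde w_{ij}$ denotes the weights after removing $k$. Summing over $i$ then gives $w_k$.
\end{enumerate}

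\textbf{Step 2: the per-row bound in case (b).} For nearest-neighbor matching, with a fixed or diverging $M$, ties occur with probability zero because $f_{gs}$ is continuous under Assumption~\ref{assumption:regularity}(i). If $k\notin\mathcal M(i)$, the row is unchanged. Otherwise $k$ is replaced by the $(M+1)$-th neighbor, which changes exactly one weight by $1/M=w_{ik}$. For kernel matching, write $S_i=\sum_l K((X_l-X_i)/h)$ and $K_k=K((X_k-X_i)/h)$. For $j\neq k$, each weight increases by $K_j\bigl(1/(S_i-K_k)-1/S_i\bigr)$. These increments sum to $K_k/S_i=w_{ik}$.

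The main obstacle I expect is here: the case $S_i-K_k=0$, where removing $k$ leaves unit $i$ with no untreated unit inside the kernel support. I would handle it by adopting the convention that such a row falls back to nearest-neighbor matching. Alternatively, I would note that this event has probability decaying fast enough under Assumption~\ref{assumption:kernel}(ii) that it can be absorbed into the bound. This needs checking against how the paper defines kernel weights on empty windows.

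\textbf{Step 3: the moment bound.} Combining the cases gives the stated inequality. Applying $(a+b+c+d)^4\le 4^3(a^4+b^4+c^4+d^4)$ yields
\[
\E\Bigl[\bigl(\textstyle\sum|w_j-w_j^{(1)}|\bigr)^4\Bigr]\le 64\bigl(\E[w_1^4]+\E[(w_1^{(1)})^4]+2\bigr).
\]
By exchangeability under Assumption~\ref{assumption:iid} and the invariance in condition~(i), $w_1^{(1)}$ has the same distribution as $w_1$. Moreover,
\[
\E[w_1^4]\le \E[w_j^4\mid j\in\mathcal I_{gs}] \le \bigl(\E[w_j^{4+\delta}\mid j\in\mathcal I_{gs}]\bigr)^{4/(4+\delta)}=O(1).
\]
This uses the bounded-moment condition~(iii) for each of the three procedures. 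That condition must be established independently of condition~(iv), for example via the standard moment bounds on the number of times a unit is used as a match, as in \citet{Abadie-Imbens_2006_ECMA} and \citet{Lin-Ding-Han_2023_ECMA}, to avoid circularity with Lemma~\ref{lemma:matchingprocedure}. The bound is therefore $O(1)$, which establishes condition~(iv) of Assumption~\ref{assumption:matching}.
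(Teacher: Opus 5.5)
Your proposal is correct and follows the paper's proof essentially step for step. It uses the same remove-then-add split through an intermediate sample and the same per-removal bound of $\1(k\in\mathcal I_{11})$ or $w_k$, obtained from the identical nearest-neighbor and kernel row computations. It closes with the same $C_r$-plus-exchangeability fourth-moment bound, resting on condition (iii) as established in Lemma~\ref{lemma:matchingprocedure}.

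The paper ignores the empty-window case you flag, and your fallback convention settles it exactly. If $k$ is the only comparison unit in $i$'s kernel window, then $w_{ik}=1$, so any fallback row changes the remaining weights by exactly $1=w_{ik}$.

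One implicit assumption you share with the paper is that the swap leaves $M$ and $h$ unchanged. This holds if they are indexed by $n$, but not if they are functions of $n_{gs}$, as Appendix~\ref{appendix:matchingprocedure} states. When exactly one of $1$ and $1'$ lies in $\mathcal I_{gs}$, an integer-valued $M(n_{gs})$ can then jump, which perturbs every treated row and makes the left-hand side of order $n/M$.
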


\begin{lemma}\label{lemma:maximumdistancelimit}
If we use nearest-neighbor matching to match each unit in subgroup $(g,s)$ to units within the same subgroup, then for each fixed $m$, the sample maximum of the norm of the matching discrepancy for the $m$th nearest neighbor is $o_p(1)$. Specifically, let $k_m(i)$ denote the index of the $m$th closest match within the same subgroup for unit $i$ in subgroup $(g,s)$. Then
\[
\max_{i\in \mathcal{I}_{gs}}\left\|X_i - X_{k_m(i)}\right\| = o_p(1).
\]
\end{lemma}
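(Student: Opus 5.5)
The plan is to prove the claim separately for each fixed $m$, using a covering argument that exploits compactness of the support together with a density bounded away from zero (Assumption \ref{assumption:regularity}(i)). Write $f_{gs}\ge \underline f>0$ on $\mathcal X_{gs}\equiv\operatorname{supp}(X\mid G=g,S=s)$, which is compact and convex. Fix $\varepsilon>0$. We must show that $P\bigl(\max_{i\in\mathcal I_{gs}}\|X_i-X_{k_m(i)}\|>\varepsilon\bigr)\to 0$.

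First, cover $\mathcal X_{gs}$ by finitely many balls. Take a finite collection of points $x_1,\dots,x_{N_\varepsilon}$ in $\mathcal X_{gs}$ such that the balls $B(x_r,\varepsilon/4)$ cover $\mathcal X_{gs}$. Convexity gives a uniform lower bound on the volume of each intersection $B(x_r,\varepsilon/4)\cap\mathcal X_{gs}$. Indeed, a convex body $K$ with nonempty interior satisfies a cone condition, so $\lambda(B(x,\rho)\cap K)\ge c\rho^q$ for all $x\in K$ and all $\rho$ below the diameter. Combined with the density bound, this yields
\[
\pi_\varepsilon\equiv\min_r P\bigl(X\in B(x_r,\varepsilon/4)\mid G=g,S=s\bigr)>0.
\]
Here $\pi_\varepsilon$ depends on $\varepsilon$ but not on $n$.

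Second, reduce the event to a statement about counts. Let $A_n$ be the event that every cell $B(x_r,\varepsilon/4)$ contains at least $m+1$ units of $\mathcal I_{gs}$. On $A_n$, each $i\in\mathcal I_{gs}$ lies in some cell $B(x_r,\varepsilon/4)$. That cell contains at least $m$ other units of $\mathcal I_{gs}$, each within distance $\varepsilon/2$ of $X_i$. The $m$-th nearest within-subgroup neighbor therefore satisfies $\|X_i-X_{k_m(i)}\|\le\varepsilon/2<\varepsilon$. Hence the maximum exceeds $\varepsilon$ only on $A_n^c$.

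Third, bound $P(A_n^c)$. Condition on $n_{gs}$. Given $n_{gs}$, the covariates in $\mathcal I_{gs}$ are i.i.d. draws from $f_{gs}$ by Assumption \ref{assumption:iid}. A union bound over the $N_\varepsilon$ cells gives
\[
P(A_n^c\mid n_{gs})\le N_\varepsilon\,P\bigl(\mathrm{Bin}(n_{gs},\pi_\varepsilon)\le m\bigr)\le N_\varepsilon\sum_{k=0}^m\binom{n_{gs}}{k}(1-\pi_\varepsilon)^{n_{gs}-k}.
\]
The right-hand side is at most $N_\varepsilon (m+1)\, n_{gs}^m(1-\pi_\varepsilon)^{n_{gs}-m}$, which tends to $0$ as $n_{gs}\to\infty$. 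By Lemma \ref{lemma:subgroupproportion}, or simply the law of large numbers, $n_{gs}/n\to p_{gs}>0$, so $n_{gs}\to\infty$ in probability. Dominated convergence then gives $P(A_n^c)\to 0$, which completes the argument.

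The main obstacle is the uniform volume lower bound in the first step near the boundary of $\mathcal X_{gs}$. This is exactly where convexity of the support is needed. Without convexity, thin cusps of the support could make some cells have vanishing probability, and points in them would have distant $m$-th neighbors. The plan is to invoke the standard fact that a compact convex set with nonempty interior contains, for each of its points, a cone of fixed aperture and height, which gives the volume bound directly. Compactness is what makes $N_\varepsilon$ finite. A degenerate support without interior would contradict the existence of a bounded density, so that case does not arise.
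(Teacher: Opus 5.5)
Your proof is correct, but it takes a different route from the paper's.

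The paper conditions on $(\G,\bs)$ and takes a union bound over the $n_{gs}$ units. It then applies Markov's inequality to the $r$-th power of the normalized discrepancy $n_{gs}^{1/q}\|X_i-X_{k_m(i)}\|$, whose moments are bounded by Lemma 2 of \citet{Abadie-Imbens_2006_ECMA}. Choosing $r$ with $r\xi>1$ makes the resulting bound $C\epsilon^{-r}n_{gs}^{1-r\xi}$ vanish, and dominated convergence removes the conditioning.

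You instead take the union bound over a fixed finite $\varepsilon/4$-net of the support and control each cell count with an exponentially small binomial lower tail. Your argument is elementary and self-contained, and for the $o_p(1)$ claim it needs only compactness of the support. The paper's route imports an external moment bound, which itself relies on the density bounds and regularity of the support. In exchange it yields a rate: $\max_{i}\|X_i-X_{k_m(i)}\|=o_p(n_{gs}^{\xi-1/q})$ for every $\xi>0$.

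One correction to your closing paragraph: the step you single out as the main obstacle is not one. Because $\varepsilon$ is fixed, there are only finitely many centers $x_r$, each lying in the (closed) support. Every ball $B(x_r,\varepsilon/4)$ therefore has positive probability by the definition of the support, so $\pi_\varepsilon>0$ is simply a minimum of finitely many positive numbers. Neither convexity, the cone condition, nor $f_{gs}\ge\underline f$ is needed here, and thin cusps cannot give zero mass to a fixed-radius cell centered in the support.

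Those conditions only matter if you let the cell radius shrink with $n_{gs}$. In that case the uniform bound $\lambda(B(x,\rho)\cap\mathcal X_{gs})\ge c\rho^q$, together with the density lower bound, is exactly what is required. A refinement of your covering argument with radius of order $(\log n_{gs}/n_{gs})^{1/q}$ would then give a rate even sharper than the paper's.
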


\begin{lemma}\label{lemma:residualbound}
    Define $\epsilon_{il}^{gs}=\Delta Y_{il}- \E[\Delta Y_{il}|G_i=g,S_i=s,X_i]=\Delta Y_{il} -\mu_l^{gs}(X_i)$, then for any $(g,s)\in \{0,1\}^2$, under assumption \ref{assumption:regularity}, there exists some constant $C>0$, such that
    \begin{align*}
        \E\left[\left.\left(\epsilon_{il}^{gs}\right)^4\right\lvert \G,\bs,\X,G_i=g,S_i=s \right]<C
    \end{align*}
\end{lemma}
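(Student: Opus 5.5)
The statement to prove is Lemma~\ref{lemma:residualbound}: a uniform bound on the conditional fourth moment of the residual $\epsilon_{il}^{gs}=\Delta Y_{il}-\mu_l^{gs}(X_i)$. The plan has three steps: reduce the conditioning set, expand the residual with an elementary convexity inequality, and bound each piece by Assumption~\ref{assumption:regularity}(iv).

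\textbf{Step 1: reduce the conditioning set.} By random sampling (Assumption~\ref{assumption:iid}), $(Y_{i1},\dots,Y_{iT})$ is independent of $\{(G_k,S_k,X_k)\}_{k\neq i}$ given $(G_i,S_i,X_i)$. Hence
\[
\E\!\left[(\epsilon_{il}^{gs})^4\mid \G,\bs,\X,G_i=g,S_i=s\right]=\E\!\left[(\epsilon_{il}^{gs})^4\mid G_i=g,S_i=s,X_i\right].
\]
It therefore suffices to bound the right-hand side uniformly in $X_i=x$ on the support.

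\textbf{Step 2: express $\Delta Y_{il}$ in potential outcomes.} Within subgroup $(g,s)$, $D_t=gs\1(t\geq t^*)$ is deterministic, so consistency gives
\[
\Delta Y_{il}=Y_{i,t^*+l}(d)-Y_{i,t^*-1}(0), \qquad d=gs.
\]
Here no anticipation (Assumption~\ref{assumption:noanticipation}) ensures that the treated path's pre-period outcome equals $Y_{t^*-1}(0)$. Thus $\Delta Y_{il}$ is a difference of two potential outcomes.

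\textbf{Step 3: bound the fourth moment.} Apply $(a+b)^4\le 8(a^4+b^4)$ to the decomposition $\epsilon=\Delta Y-\mu$ to get
\[
(\epsilon_{il}^{gs})^4\le 8(\Delta Y_{il})^4+8\,\mu_l^{gs}(X_i)^4.
\]
For the first piece, apply the same inequality again:
\[
(\Delta Y_{il})^4\le 8\,Y_{i,t^*+l}(d)^4+8\,Y_{i,t^*-1}(0)^4.
\]
Taking conditional expectations, each term is uniformly bounded, say by $M$, by Assumption~\ref{assumption:regularity}(iv).

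For the second piece, Jensen's inequality gives $\mu_l^{gs}(x)^4\le \E[(\Delta Y_l)^4\mid G=g,S=s,X=x]\le 16M$. Combining the two pieces yields the constant $C=128M+128M+1$, which depends only on the bound in Assumption~\ref{assumption:regularity}(iv).

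\textbf{Main obstacle.} The only point needing care is Step~1: justifying that conditioning on the full design $(\G,\bs,\X)$ reduces to conditioning on unit $i$'s own covariates. This follows directly from the i.i.d.\ structure. Everything else is elementary moment bounding.
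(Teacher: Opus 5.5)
Your proof is correct and follows essentially the same route as the paper: drop the conditioning on the other units by independence, apply $(a+b)^4\le 8(a^4+b^4)$ to $\Delta Y_{il}-\mu_l^{gs}(X_i)$, and bound both pieces using Assumption~\ref{assumption:regularity}(iv). The only difference is that you spell out two steps the paper leaves implicit: writing $\Delta Y_{il}$ as a difference of potential outcomes, and bounding $\mu_l^{gs}(X_i)^4$ by Jensen's inequality.
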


\begin{lemma}\label{lemma:equivalentestimator}
    The estimator $\widehat{\tau}_l$ can be written as the following two equivalent ways:
    \begin{align*}
         \widehat \tau_l&= \frac{1}{n_{11}}\sum_{i\in\mathcal{I}_{11}}\Delta Y_{il} -\frac{1}{n_{11}}\sum_{i\in\mathcal{I}_{01}}w_i\Delta Y_{il}-\left(\frac{1}{n_{11}}\sum_{i\in\mathcal{I}_{10}}w_i\Delta Y_{il} -\frac{1}{n_{11}}\sum_{i\in\mathcal{I}_{00}}w_i\Delta Y_{il}\right) \\
         &=\frac{1}{n_{11}}\sum_{i\in \mathcal{I}_{11}}\left(\Delta Y_{il}-  \widehat {\Delta Y_{il}^{01}}-\widehat {\Delta Y_{il}^{10}} + \widehat {\Delta Y_{il}^{00}} \right),
    \end{align*}
    where, $\widehat {\Delta Y_{il}^{gs}} =\sum_{j\in \mathcal{I}_{gs}}w_{ij}\Delta Y_{jl} $
\end{lemma}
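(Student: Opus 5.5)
The claim is a pure algebraic identity, so I will argue directly from the definitions, with no probabilistic arguments. The plan has two parts. First, I will show that the coefficient on the triple interaction in the weighted regression \eqref{eq:ddd-regression} equals the first expression, a weighted double-difference of subgroup means of $\Delta Y_{il}$. Second, I will show that the first expression equals the second by exchanging the order of summation. I will do the second part first, because it is immediate.

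\emph{From the first expression to the second.} For each untreated subgroup $(g,s)$, substitute the definition $w_j=\sum_{i\in\mathcal I_{11}}w_{ij}$ and interchange the two finite sums:
\[
\frac{1}{n_{11}}\sum_{j\in\mathcal I_{gs}}w_j\Delta Y_{jl}=\frac{1}{n_{11}}\sum_{i\in\mathcal I_{11}}\sum_{j\in\mathcal I_{gs}}w_{ij}\Delta Y_{jl}=\frac{1}{n_{11}}\sum_{i\in\mathcal I_{11}}\widehat{\Delta Y^{gs}_{il}}.
\]
Doing this for $(0,1)$, $(1,0)$ and $(0,0)$, and collecting terms under the common sum over $\mathcal I_{11}$ with signs $+,-,-,+$, gives the second expression. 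The adding-up condition in Assumption~\ref{assumption:matching}(i) is not needed here. Only the definition of $w_j$ is used.

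\emph{From the regression to the first expression.} Since every regressor in \eqref{eq:ddd-regression} varies only at the $(G,S,t)$ level, the weighted least-squares problem decouples: it fits weighted cell means. The model has subgroup effects, $G\times t$ effects, $S\times t$ effects, and a saturated set of $GS\times\1(t-t^*=l)$ dummies for $l\neq-1$. Its fitted values for the $4T$ cell means $\bar Y^{w}_{gs,t}=\sum_{i\in\mathcal I_{gs}}w_iY_{it}/\sum_{i\in\mathcal I_{gs}}w_i$ are exact, because the parameter count matches the number of cells after normalizations. Concretely, the model restricted to $(g,s)\neq(1,1)$, plus the $(1,1)$ cell at $t=t^*-1$, is an additive-in-$(GS, Gt, St)$ structure. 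With the treated interaction dummies free at every $t\neq t^*-1$, the $(1,1)$ cells at those periods are fit exactly. Imposing zero residuals on all cells and differencing, the fixed effects cancel and give
\[
\widehat\tau_l=\bigl(\bar Y^{w}_{11,t^*+l}-\bar Y^{w}_{11,t^*-1}\bigr)-\bigl(\bar Y^{w}_{01,\cdot}\bigr)-\bigl(\bar Y^{w}_{10,\cdot}\bigr)+\bigl(\bar Y^{w}_{00,\cdot}\bigr),
\]
where each untreated term is the corresponding weighted change $\bar Y^{w}_{gs,t^*+l}-\bar Y^{w}_{gs,t^*-1}$. I will check that the untreated cells are also fit exactly: the $G\times t$, $S\times t$ and $GS$ effects span exactly the saturated space for the three untreated cells over $T$ periods, up to one restriction per period, and that restriction is the triple interaction, which is free only for the treated cell.

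Finally, I will use $w_i=1$ on $\mathcal I_{11}$, so that $\sum_{\mathcal I_{11}}w_i=n_{11}$. For $(g,s)\neq(1,1)$, the adding-up condition gives $\sum_{j\in\mathcal I_{gs}}w_j=\sum_{i\in\mathcal I_{11}}\sum_j w_{ij}=n_{11}$, so every weighted mean carries the denominator $n_{11}$. This yields the first expression.

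The main obstacle is the exact-fit/saturation argument: verifying that the fixed-effect design in \eqref{eq:ddd-regression} leaves no residual at the cell level. Units with $w_i=0$ must also be handled; they simply drop out. I would first try the cleanest route, a Frisch–Waugh argument on cell-level data with cell weights $\sum_{i\in\mathcal I_{gs}}w_i$, counting rank explicitly.
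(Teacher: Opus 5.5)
Your proposal is correct. Its second half substitutes $w_j=\sum_{i\in\mathcal I_{11}}w_{ij}$ and interchanges the finite sums for each untreated subgroup, and this is exactly the paper's proof, which likewise uses only the definition of $w_j$.

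Where you differ is in what is taken as given. The paper's proof begins ``by definition'' from the first expression. It treats the assertion in Section~\ref{section:matchedtriplediffereces}, that the coefficient in \eqref{eq:ddd-regression} ``can be equivalently written as'' that expression, as the definition of $\widehat\tau_l$, and never verifies it. Your first half supplies this link, and it is sound. All regressors are constant within $(g,s,t)$ cells, so the weighted least-squares problem reduces to weighted least squares on the $4T$ cell means $\bar Y^{w}_{gs,t}$ with cell weights $W_{gs}=\sum_{i\in\mathcal I_{gs}}w_i$, and the design is saturated at the cell level.

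What you call the main obstacle can be settled by a short rank check. Consider the $T-1$ contrasts
\[
c_t(m)=(m_{11t}-m_{11,t^*-1})-(m_{01t}-m_{01,t^*-1})-(m_{10t}-m_{10,t^*-1})+(m_{00t}-m_{00,t^*-1}),\qquad t\neq t^*-1.
\]
\begin{itemize}
\item They vanish on the span of the $GS$, $G\times t$ and $S\times t$ effects, which has dimension $3T+1$.
\item The treated dummy for period $t'$ satisfies $c_t=\1(t=t')$, so the dummies are independent modulo that span.
\item Hence the full design spans all $4T$ cells, the fitted cell means equal $\bar Y^{w}_{gs,t}$, and $\widehat\tau_l=c_{t^*+l}(\bar Y^{w})$.
\item This holds even though the fixed effects themselves are not separately identified.
\end{itemize}

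Your route also shows exactly where the adding-up condition of Assumption~\ref{assumption:matching}(i) enters. It is not needed for the sum interchange. It is needed to get $W_{gs}=n_{11}>0$ for each untreated cell, which rules out empty weighted cells and produces the common denominator $n_{11}$ in the first expression.

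In short, the paper's argument is shorter because it relies on an unproved identification of the regression coefficient. Yours proves that the weighted 3WFE coefficient equals both displayed formulas.
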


\section{Proof of Theorems}\label{appendix:theoremproofs}
\begin{proof} Proof of Theorem \ref{theorem:identification}\\

Notice that: $Y_t(0)-Y_{t'}(0) =\sum_{g=0}^{t+1-t'} Y_{t-g}(0)-Y_{t-g-1}(0)$, for any $1\leq t'<t\leq T$, so
\begin{align*}
    &\E[ Y_t(0)-Y_{t'}(0)|G=1,S=1,X]-\E[ Y_t(0)-Y_{t'}(0)|G=0,S=1,X]\\
        &-[\E[Y_t(0)-Y_{t'}(0)|G=1,S=0,X]-\E[ Y_t(0)-Y_{t'}(0)|G=0,S=0,X] \\
        &=\sum_{g=0}^{t+1-t'} \left(\E[ Y_{t-g}(0)-Y_{t-g-1}(0)|G=1,S=1,X]-\E[ Y_{t-g}(0)-Y_{t-g-1}(0)|G=0,S=1,X] \right. \\
       &\left. -\E[Y_{t-g}(0)-Y_{t-g-1}(0)|G=1,S=0,X]-\E[ Y_{t-g}(0)-Y_{t-g-1}(0)|G=0,S=0,X]   \right) \\
       & =0
\end{align*}
The last line follows from assumption \ref{assumption:parallelgap}.
\label{proof:identification}
    \begin{equation*}
    \begin{split}
         \tau_l&=\E[Y_{t^*+l}(1)-Y_{t^*+l}(0)|G=1,S=1]\\
        &=\E[ \E[Y_{t^*+l}(1)-Y_{t^*+l}(0)|G_i=1,S=1,X]|G=1,S=1]\\
        &=\E[ \E[Y_{t^*+l}(1)-Y_{t^*+l}(0)|G=1,S=1,X]|G=1,S=1]\\
        &+\E[\E[ Y_{t^*+l}(0)-Y_{t^*-1}(0)|G=1,S=1,X]-\E[ Y_{t^*+l}(0)-Y_{t^*-1}(0)|G=0,S=1,X]|G=1,S=1]\\
        &-\E[\E[Y_{t^*+l}(0)-Y_{t^*-1}(0)|G=1,S=0,X]-\E[Y_{t^*+l}(0)-Y_{t^*-1}(0)|G=0,S=0,X] |G=1,S=1]\\
        &=\E[\E[Y_{t^*+l}(1)-Y_{t^*-1}(0)|G=1,S=1,X]-\E[ Y_{t^*+l}(0)-Y_{t^*-1}(0)|G=0,S=1,X]|G=1,S=1]\\
        &-\E[\E[Y_{t^*+l}(0)-Y_{t^*-1}(0)|G=1,S=0,X]-\E[Y_{t^*+l}(0)-Y_{t^*-1}(0)|G=0,S=0,X] |G=1,S=1]\\
        &=\E[\E[Y_{t^*+l}-Y_{t^*-1}|G=1,S=1,X]-\E[Y_{t^*+l}-Y_{t^*-1}|G=0,S=1,X]|G=1,S=1]\\
        &-\E[\E[Y_{t^*+l}-Y_{t^*-1}|G=1,S=0,X]-\E[Y_{t^*+l}-Y_{t^*-1}|G=0,S=0,X] |G=1,S=1]\\
        &=   \E[Y_{t^*+l}-Y_{t^*-1}|G=1,S=1]-\E[\E[Y_{t^*+l}-Y_{t^*-1}|G=0,S=1,X]|G=1,S=1]\\
            &-\E[Y_{t^*+l}-Y_{t^*-1}|G=1,S=0,X]|G=1,S=1] +\E[\E[Y_{t^*+l}-Y_{t^*-1}|G=0,S=0,X_i]|G=1,S=1] 
    \end{split} 
    \end{equation*}
    The first equality follows from law of iterated expectation. The second equality follows from the equation above and law of iterated expectation. The third equality follows by canceling out $\E[Y_{t}(0)|G=1,S=1]$. The fourth inequality follows from the assumption \ref{assumption:overlap}. The last equality follows from applying law of iterated expectation to the first term.
    
   On the other hand, for each $(g,s)\in \{(0,1),(1,0),(0,0) \}$, we have:
    \begin{align*}
     &\E[\E[Y_{t^*+l}-Y_{t^*-1}|G=g,S=1,s]|G=1,S=1]\\
     &=   \E\!\left[
    \E[\Delta Y_l\mid G=g,S=s,X]
    \mid G=1,S=1
\right]\\
&=\int_{\mathcal{X}} \E[\Delta Y_l\mid G=g,S=s,X=x]f_{11}(x)dx \\
&=\int_{\mathcal{X}} \E[\Delta Y_l\mid G=g,S=s,X=x]\frac{f_{11}(x)}{f_{gs}(x)}f_{gs}(x)dx \\
&=\E\left[\left. \E[\Delta Y_l\mid G=g,S=s,X] \frac{f_{11}(X)}{f_{gs}(X)} \right\lvert G=g,S=s  \right] \\
&=\E\left[\left. \E\left[\left. \frac{f_{11}(X)}{f_{gs}(X)} \Delta Y_l  \right\lvert G=g,S=s,X\right]  \right\lvert G=g,S=s  \right]\\
&=\E\left[\left. \frac{f_{11}(X)}{f_{gs}(X)} \Delta Y_l    \right\lvert G=g,S=s  \right]
    \end{align*}
This proves theorem \ref{theorem:identification}.
\end{proof}

\begin{proof}
Proof of theorem \ref{theorem:asymptotic}.
\begin{align*}
    \widehat \tau_l&= \frac{1}{n_{11}}\sum_{i\in\mathcal{I}_{11}}\Delta Y_{il} -\frac{1}{n_{11}}\sum_{i\in\mathcal{I}_{01}}w_i\Delta Y_{il}-\left(\frac{1}{n_{11}}\sum_{i\in\mathcal{I}_{10}}w_i\Delta Y_{il} -\frac{1}{n_{11}}\sum_{i\in\mathcal{I}_{00}}w_i\Delta Y_{il}\right) \\
    &=\frac{1}{n_{11}}\sum_{i\in\mathcal{I}_{11}}\Delta Y_{il} -\frac{1}{n_{11}}\sum_{i\in\mathcal{I}_{01}}w_i\Delta Y_{il} +\left( \frac{1}{n_{11}}\sum_{i\in\mathcal{I}_{11}}\Delta Y_{il} -\frac{1}{n_{11}}\sum_{i\in\mathcal{I}_{10}}w_i\Delta Y_{il}\right) \\
    &-\left(\frac{1}{n_{11}}\sum_{i\in\mathcal{I}_{11}}\Delta Y_{il} -\frac{1}{n_{11}}\sum_{i\in\mathcal{I}_{00}}w_i\Delta Y_{il}\right) \\
    &\xrightarrow{p} \E[\Delta Y_l |G=1,S=1]-\E[\E[\Delta Y_l|G=0,S=1,X]|G=1,S=1] \\
    & +\E[\Delta Y_l |G=1,S=1]-\E[\E[\Delta Y_l|G=1,S=0,X]|G=1,S=1] \\
    &  -\left(\E[\Delta Y_l |G=1,S=1]-\E[\E[\Delta Y_l|G=0,S=0,X]|G=1,S=1] \right) \\
    &= \E[\Delta Y_l |G=1,S=1]-\E[\E[\Delta Y_l|G=0,S=1,X]|G=1,S=1] \\
    &- \E[\E[\Delta Y_l|G=1,S=0,X]|G=1,S=1] +\E[\E[\Delta Y_l|G=0,S=0,X]|G=1,S=1] \\
    &=\tau_l
\end{align*}
    The convergence results follows from lemma \ref{lemma:convergence}. The last equality follows from the identification results.

   For notation compactness define $\mu^{gs}_l(x)=\E[\Delta Y_l|G=g,S=s,X=x]$, $\epsilon_{l}^{gs}(X_i)=\Delta Y_{il} -\mu_l^{G_iS_i}(X_i)$, $\sigma_{gs,l}^2(X_i)=Var[\Delta Y_l|G=g,S=s,X=x]$. To establish the asymptotic normality, notice we can write $\widehat \tau_l$ as    \begin{align*}
\widehat{\tau}_l-B
&=\frac{1}{n_{11}}\isum\left[
\1(i\in \mathcal{I}_{11})-\1(i\in \mathcal{I}_{01})w_i-\1(i\in \mathcal{I}_{10})w_i+\1(i\in \mathcal{I}_{00})w_i
\right]\Delta Y_{il} \\
&=\frac{1}{n_{11}}\isum \1(i\in \mathcal{I}_{11})(\mu_l^{11}(X_i)-\mu_l^{01}(X_i)-\mu_l^{10}(X_i)+\mu_{l}^{00}(X_i))\\
&+\frac{1}{n_{11}}\isum \left[
\1(i\in \mathcal{I}_{11})-\1(i\in \mathcal{I}_{01})w_i-\1(i\in \mathcal{I}_{10})w_i+\1(i\in \mathcal{I}_{00})w_i
\right](\Delta Y_{il} -\mu_l^{G_iS_i}(X_i))
\end{align*}

where, 
\begin{align*}
    B&=\frac{1}{n_{11}}\isum\left[\1(i\in \mathcal{I}_{11})-\1(i\in \mathcal{I}_{01})w_i\right]\mu_l^{01}(X_i)+\frac{1}{n_{11}}\isum\left[\1(i\in \mathcal{I}_{11})-\1(i\in \mathcal{I}_{10})w_i\right]\mu_l^{10}(X_i)\\ 
    &-\frac{1}{n_{11}}\isum\left[\1(i\in \mathcal{I}_{11})-\1(i\in \mathcal{I}_{00})w_i\right]\mu_l^{00}(X_i) 
\end{align*}
Define $V= V_{\tau(X)} +\E\left[V_{E} \right]$, where,
\begin{align*}
    V_{\tau(X)}& =\E\left[\left. \left(\mu_l^{11}(X_i)-\mu_l^{01}(X_i)-\mu_l^{10}(X_i)+\mu_l^{00}(X_i)-\tau_l \right)^2\right\vert G_i=1,S_i=1\right] \\
    V_{E}&= \frac{1}{n_{11}}\left(\sum_{i-n\in\mathcal{I}_{11}}\sigma^2_{11,l}(X_{i-n})+\sum_{i-n\in\mathcal{I}_{01}} w_{i-n}^2\sigma^2_{01,l}(X_{i-n})+\sum_{i-n\in\mathcal{I}_{10}} w_{i-n}^2\sigma^2_{10,l}(X_{i-n}) \right. \\
    & \left. +\sum_{i-n\in\mathcal{I}_{00}} w_{i-n}^2\sigma^2_{00,l}(X_{i-n})\right)
\end{align*}
Then, 
\begin{align*}
    &\frac{\widehat{\tau}_l-\tau_l-B  }{\sqrt{V/n_{11}}}\\
    &=\frac{1}{\sqrt{n_{11}V}}\isum \1(i\in \mathcal{I}_{11})(\mu_l^{11}(X_i)-\mu_l^{01}(X_i)-\mu_l^{10}(X_i)+\mu_{l}^{00}(X_i)-\tau_l) \\
    & +\frac{1}{\sqrt{n_{11}V}}\isum \left[
\1(i\in \mathcal{I}_{11})-\1(i\in \mathcal{I}_{01})w_i-\1(i\in \mathcal{I}_{10})w_i+\1(i\in \mathcal{I}_{00})w_i
\right]\epsilon_{l}^{G_iS_i}(X_i)
\end{align*}
Define
\begin{align*}
Z_{n,i}
=
\begin{cases}
\displaystyle
\frac{1}{\sqrt{n_{11}V}}
\1\{i\in \mathcal{I}_{11}\}
\left(
\mu_l^{11}(X_i)-\mu_l^{01}(X_i)
-\mu_l^{10}(X_i)+\mu_l^{00}(X_i)-\tau_l
\right),
& \text{if } 1\le i\le n,\\[0.8em]
\begin{aligned}
\frac{1}{\sqrt{n_{11}V}}
\Bigl[&
\1\{i-n\in \mathcal{I}_{11}\}
-\1\{i-n\in \mathcal{I}_{01}\}w_{i-n}\\
&-\1\{i-n\in \mathcal{I}_{10}\}w_{i-n}
+\1\{i-n\in \mathcal{I}_{00}\}w_{i-n}
\Bigr]
\epsilon_l^{G_{i-n}S_{i-n}}(X_{i-n}),
\end{aligned}
& \text{if } n+1\le i\le 2n.
\end{cases}
\end{align*}

so that
\[\frac{\widehat{\tau}_l-\tau_l-B  }{\sqrt{V/n_{11}}}=\sum_{i=1}^{2n}Z_{n,i}.\]
Define the $\sigma$-fields:
\begin{align*}
\mathcal{F}_{n,i}&=\begin{cases}
\sigma(\G,\bs,X_1,\ldots,X_i) &\text{if } 1\le i\le n\\
\sigma(\G,\bs,\X,\Delta Y_{1l},\ldots,\Delta{Y}_{i-n,l}) &\text{if } n+1\le i\le 2n.
\end{cases}
\end{align*}
Then, the sequence
\[\left\{\left(\sum_{j=1}^i Z_{n,j},\mathcal{F}_{n,i}\right),1\le i\le 2n\right\}\]
is a martingale. Now, for $1\le i\le n$,
\begin{align*}
&\E[Z_{n,i}^2|\mathcal{F}_{n,i-1}]\\
&=\E\left[\left.\left(\frac{1}{\sqrt{n_{11}V}}
\1(i\in \mathcal{I}_{11})
\left(
\mu_l^{11}(X_i)-\mu_l^{01}(X_i)
-\mu_l^{10}(X_i)+\mu_l^{00}(X_i)-\tau_l
\right)\right)^2\right\vert\G,\bs,X_1,\ldots,X_{i-1}\right]\\
&=\frac{\1(i\in \mathcal{I}_{11})}{n_{11}V}\E\left[\left.\left(\mu_l^{11}(X_i)-\mu_l^{01}(X_i)
-\mu_l^{10}(X_i)+\mu_l^{00}(X_i)-\tau_l \right)^2\right\vert G_i=1,S_i=1\right]
\end{align*}
and for $n+1\le i\le 2n$,
\begin{align*}
&\E[Z_{n,i}^2|\mathcal{F}_{n,i-1}]=\E\left[\left(\frac{1}{\sqrt{n_{11}V}}
\left[\1\{i-n\in \mathcal{I}_{11}\}-\1\{i-n\in \mathcal{I}_{01}\}w_{i-n} \right.\right.\right.\\ 
&\left.\left.\left.\left.  -\1\{i-n\in \mathcal{I}_{10}\}w_{i-n}+\1\{i-n\in \mathcal{I}_{00}\}w_{i-n}\right]\epsilon_l^{G_{i-n}S_{i-n}}(X_{i-n})\right)^2\right\vert\mathcal{F}_{n,i-1}\right]\\
&=\frac{\1(i-n\in \mathcal{I}_{11})}{n_{11}V}\sigma^2(1,1,X_{i-n})+\frac{\1(i-n\in \mathcal{I}_{01})}{n_{11}V}\cdot w_{i-n}^2\sigma^2(0,1,X_{i-n})\\
&+\frac{\1(i-n\in \mathcal{I}_{10})}{n_{11}V}\cdot w_{i-n}^2\sigma^2(1,0,X_{i-n})+\frac{\1(i-n\in \mathcal{I}_{00})}{n_{11}V}\cdot w_{i-n}^2\sigma^2(0,0,X_{i-n})
\end{align*}
It follows that:
\begin{align*}
\sum_{i=1}^{2n}\E[Z_{n,i}^2|\mathcal{F}_{n,i}]&=\frac{1}{V}\left(\E\left[\left.\left(\mu_l^{11}(X_i)-\mu_l^{01}(X_i)
-\mu_l^{10}(X_i)+\mu_l^{00}(X_i)-\tau_l \right)^2\right\vert G_i=1,S_i=1\right]\right.\\
&\left.+\frac{1}{n_{11}}\left(\sum_{i-n\in\mathcal{I}_{11}}\sigma^2(1,1,X_{i-n})+\sum_{i-n\in\mathcal{I}_{01}} w_{i-n}^2\sigma^2(0,1,X_{i-n})+\sum_{i-n\in\mathcal{I}_{10}} w_{i-n}^2\sigma^2(1,0,X_{i-n}) \right. \right.\\
&\left.\left.+\sum_{i-n\in\mathcal{I}_{00}} w_{i-n}^2\sigma^2(0,0,X_{i-n})\right)\right)\\
&\plim 1
\end{align*}
The last line follows from lemma \ref{lemma:varianceconvergence}.

On the other hand, under finite moments, for $1\le i\le n$,
\begin{align*}
\E[Z_{n,i}^4|\mathcal{F}_{n,i}]&=\E\left[\left.\left(\frac{1}{\sqrt{n_{11}V}}
\1(i\in \mathcal{I}_{11})
\left(
\mu_l^{11}(X_i)-\mu_l^{01}(X_i)
-\mu_l^{10}(X_i)+\mu_l^{00}(X_i)-\tau_l
\right)\right)^4\right\vert\G,\bs,X_1,\ldots,X_{i-1}\right]\\
&=\frac{\1(i\in\mathcal{I}_{11})}{n_{11}^2V^2}\E\left[\left.\left(\mu_l^{11}(X_i)-\mu_l^{01}(X_i)
-\mu_l^{10}(X_i)+\mu_l^{00}(X_i)-\tau_l\right)^4\right\vert G_i=1,S_i=1\right]\\
&\le \frac{125\1(i\in\mathcal{I}_{11})}{n_{11}^2V^2}\E\left[\left(\mu_l^{11}(X_i) \right)^4+\left(\mu_l^{01}(X_i) \right)^4  + \left(\mu_l^{10}(X_i) \right)^4+\left(\mu_l^{00}(X_i) \right)^4 +\tau_l^4  \mid G_i=1,S_i=1 \right] \\
&\leq \frac{125\1(i\in\mathcal{I}_{11})}{n_{11}^2\underline{\sigma}^2}C
\end{align*}
The last inequality follows from assumption \ref{assumption:regularity}.\\
For $n+1\le i\le 2n$,
\begin{align*}
&\E[Z_{n,i}^4|\mathcal{F}_{n,i-1}]=\E\left[\left(\frac{1}{\sqrt{n_{11}V}}
\left[\1\{i-n\in \mathcal{I}_{11}\}-\1\{i-n\in \mathcal{I}_{01}\}w_{i-n} \right.\right.\right.\\ 
&\left.\left.\left.\left.  -\1\{i-n\in \mathcal{I}_{10}\}w_{i-n}+\1\{i-n\in \mathcal{I}_{00}\}w_{i-n}\right]\epsilon_l^{G_{i-n}S_{i-n}}(X_{i-n})\right)^4\right\vert\mathcal{F}_{n,i-1}\right]\\
&=\frac{\1(i-n\in \mathcal{I}_{11})}{n_{11}^2V^2}\E\left[\left.\left(\epsilon_l^{11}(X_{i-n})\right)^4\right\vert G_{i-n}=1,S_{i-n}=1,X_{i-n}\right]\\
&+\frac{\1(i-n\in \mathcal{I}_{01})}{n_{11}^2V^2}w_{i-n}^4\E\left[\left.\left(\epsilon_l^{01}(X_{i-n})\right)^4\right\vert G_{i-n}=0,S_{i-n}=1,X_{i-n}\right]\\
&+\frac{\1(i-n\in \mathcal{I}_{10})}{n_{11}^2V^2}w_{i-n}^4\E\left[\left.\left(\epsilon_l^{10}(X_{i-n})\right)^4\right\vert G_{i-n}=1,S_{i-n}=0,X_{i-n}\right]\\
&+\frac{\1(i-n\in \mathcal{I}_{00})}{n_{11}^2V^2}w_{i-n}^4\E\left[\left.\left(\epsilon_l^{00}(X_{i-n})\right)^4\right\vert G_{i-n}=0,S_{i-n}=0,X_{i-n}\right]\\
&\le \frac{C}{\underline{\sigma}^2}\left(\frac{\1(i-n\in \mathcal{I}_{11})}{n_{11}^2}+\frac{\1(i-n\in \mathcal{I}_{01})}{n^2}w_{i-n}^4+\frac{\1(i-n\in \mathcal{I}_{10})}{n_{11}^2}w_{i-n}^4 + \frac{\1(i-n\in \mathcal{I}_{00})}{n_{11}^2}w_{i-n}^4 \right)
\end{align*}
for some finite constant $A_4$. Thus,
\begin{align*}
\sum_{i=1}^{2n}\E[Z_{n,i}^4|\mathcal{F}_{n,i-1}]&\le \frac{125C\1(i\in\mathcal{I}_{11})}{n_{11}\underline{\sigma}^2}\\
&+\frac{C}{n_{11}\underline{\sigma}^2}\left(\frac{1}{n_{11}}\sum_{i=n+1}^{2n}\1(i\in\mathcal{I}_{11}) \right) + \frac{C}{n_{11}\underline{\sigma}^2}\left(\frac{1}{n_{11}}\sum_{i=n+1}^{2n}\1(i\in\mathcal{I}_{01})w_{i-n}^4 \right) \\
&\frac{C}{n_{11}\underline{\sigma}^2}\left(\frac{1}{n_{11}}\sum_{i=n+1}^{2n}\1(i\in\mathcal{I}_{10})w_{i-n}^4 \right) + \frac{C}{n_{11}\underline{\sigma}^2}\left(\frac{1}{n_{11}}\sum_{i=n+1}^{2n}\1(i\in\mathcal{I}_{00})w_{i-n}^4 \right) \\
&\plim 0
\end{align*}
by Lemma \ref{lemma:weightproperty}(iii). Thus, by the martingale central limit theorem 
\begin{align*}
   \sum_{i=1}^{2n} Z_{n,i}\xrightarrow{d}\mathcal{N}(0,1)
\end{align*}
Therefore, by lemma \ref{lemma:varianceconvergence} and the Slutsky theorem,
\begin{align*}
   (V_{\tau(X)}+V_{E})^{-1/2} \sqrt{n_{11}}\left(\widehat{\tau}_l-\tau_l-B \right) = \frac{\sqrt{V}}{\sqrt{V_{\tau(X)}+V_{E}}}\frac{\widehat{\tau}_l-\tau_l-B  }{\sqrt{V/n_{11}}}\xrightarrow{d}\mathcal{N}(0,1)
\end{align*}
which completes the proof.
\end{proof}

\begin{proof}
    Proof of theorem \ref{theorem:varianceestimation}
    \begin{align*}
        \widehat V_{\tau(X)}&= \frac{1}{n_{11}}\sum_{i\in \mathcal{I}_{11}}\left(\Delta Y_{il}-  \widehat {\Delta Y_{il}^{01}}-\widehat {\Delta Y_{il}^{10}} + \widehat {\Delta Y_{il}^{00}}-\widehat{\tau}_l \right)^2  \\
        &=\frac{1}{n_{11}}\sum_{i\in \mathcal{I}_{11}}\left(\Delta Y_{il}-  \widehat {\Delta Y_{il}^{01}}-\widehat {\Delta Y_{il}^{10}} + \widehat {\Delta Y_{il}^{00}}-\tau_l -(\widehat{\tau}_l - \tau_l) \right)^2 \\
        &=\left(\frac{1}{n_{11}}\sum_{i\in \mathcal{I}_{11}}\left(\Delta Y_{il}-  \widehat {\Delta Y_{il}^{01}}-\widehat {\Delta Y_{il}^{10}} + \widehat {\Delta Y_{il}^{00}}-\tau_l \right)^2 \right) + (\tau_l-\widehat{\tau}_l)^2\\
        &-\frac{2(\widehat{\tau}_l-\tau_l)}{n_{11}}\sum_{i\in \mathcal{I}_{11}}\left(\Delta Y_{il}-  \widehat {\Delta Y_{il}^{01}}-\widehat {\Delta Y_{il}^{10}} + \widehat {\Delta Y_{il}^{00}}-\tau_l \right) \\
        &=\left(\frac{1}{n_{11}}\sum_{i\in \mathcal{I}_{11}}\left(\Delta Y_{il}-  \widehat {\Delta Y_{il}^{01}}-\widehat {\Delta Y_{il}^{10}} + \widehat {\Delta Y_{il}^{00}}-\tau_l \right)^2 \right) - (\tau_l-\widehat{\tau}_l)^2\\
        &=\frac{1}{n_{11}}\sum_{i\in \mathcal{I}_{11}}\left(\Delta Y_{il}-  \widehat {\Delta Y_{il}^{01}}-\widehat {\Delta Y_{il}^{10}} + \widehat {\Delta Y_{il}^{00}}-\tau_l \right)^2 +o_p(1)
    \end{align*}
    The third equality follows from lemma \ref{lemma:equivalentestimator}. The last equality follows from the consistency of $\widehat \tau_l$.
    \begin{align*}
        &\frac{1}{n_{11}}\sum_{i\in \mathcal{I}_{11}}\left(\Delta Y_{il}-  \widehat {\Delta Y_{il}^{01}}-\widehat {\Delta Y_{il}^{10}} + \widehat {\Delta Y_{il}^{00}}-\tau_l \right)^2 \\
        &=\frac{1}{n_{11}}\sum_{i\in \mathcal{I}_{11}}\left(\Delta Y_{il} -\sum_{j\in \mathcal{I}_{01}}w_{ij}\Delta Y_{jl} - \sum_{j\in \mathcal{I}_{10}}w_{ij}\Delta Y_{jl} +  \sum_{j\in \mathcal{I}_{00}}w_{ij}\Delta Y_{jl}-\tau_l\right)^2 \\
        &=\frac{1}{n_{11}}\sum_{i\in \mathcal{I}_{11}}\left(\mu_l^{11}(X_i) +\epsilon_{il}^{11} -\sum_{j\in \mathcal{I}_{01}}w_{ij}\left(\mu_{l}^{01}(X_j) +\epsilon_{jl}^{01} \right) - \sum_{j\in \mathcal{I}_{10}}w_{ij}\left(\mu_{l}^{10}(X_j) +\epsilon_{jl}^{10} \right)\right. \\
        &\left.+  \sum_{j\in \mathcal{I}_{00}}w_{ij}\left(\mu_{l}^{00}(X_j) +\epsilon_{jl}^{00} \right)-\tau_l\right)^2 \allowdisplaybreaks \\
        &=\frac{1}{n_{11}}\sum_{i\in \mathcal{I}_{11}} \left(  \mu_l^{11}(X_i)-\sum_{j\in \mathcal{I}_{01}}w_{ij}\mu_l^{01}(X_j) - \sum_{j\in \mathcal{I}_{10}}w_{ij}\mu_l^{10}(X_j) + \sum_{j\in \mathcal{I}_{00}}w_{ij}\mu_l^{00}(X_j) -\tau  \right.    \\
        &\left. \epsilon_{il}^{11} -\sum_{j\in \mathcal{I}_{01}}w_{ij}\epsilon_{jl}^{01} -\sum_{j\in \mathcal{I}_{10}}w_{ij}\epsilon_{jl}^{10} + \sum_{j\in \mathcal{I}_{00}}w_{ij}\epsilon_{jl}^{00}  \right)^2 \\
        &=\frac{1}{n_{11}}\sum_{i\in \mathcal{I}_{11}}\left(  \mu_l^{11}(X_i)-\sum_{j\in \mathcal{I}_{01}}w_{ij}\mu_l^{01}(X_j) - \sum_{j\in \mathcal{I}_{10}}w_{ij}\mu_l^{10}(X_j) + \sum_{j\in \mathcal{I}_{00}}w_{ij}\mu_l^{00}(X_j) -\tau_l  \right)^2 \\
        &+\frac{1}{n_{11}}\sum_{i\in \mathcal{I}_{11}}\left(\epsilon_{il}^{11} -\sum_{j\in \mathcal{I}_{01}}w_{ij}\epsilon_{jl}^{01} -\sum_{j\in \mathcal{I}_{10}}w_{ij}\epsilon_{jl}^{10} + \sum_{j\in \mathcal{I}_{00}}w_{ij}\epsilon_{jl}^{00} \right)^2 \\
        &+\frac{2}{n_{11}}\sum_{i\in \mathcal{I}_{11}}\left(  \mu_l^{11}(X_i)-\sum_{j\in \mathcal{I}_{01}}w_{ij}\mu_l^{01}(X_j) - \sum_{j\in \mathcal{I}_{10}}w_{ij}\mu_l^{10}(X_j) + \sum_{j\in \mathcal{I}_{00}}w_{ij}\mu_l^{00}(X_j) -\tau_l  \right)\\
        &\left(\epsilon_{il}^{11} -\sum_{j\in \mathcal{I}_{01}}w_{ij}\epsilon_{jl}^{01} -\sum_{j\in \mathcal{I}_{10}}w_{ij}\epsilon_{jl}^{10} + \sum_{j\in \mathcal{I}_{00}}w_{ij}\epsilon_{jl}^{00} \right)
    \end{align*}
    \textbf{First term:}
    \begin{align*}
        &\frac{1}{n_{11}}\sum_{i\in \mathcal{I}_{11}}\left(  \mu_l^{11}(X_i)-\sum_{j\in \mathcal{I}_{01}}w_{ij}\mu_l^{01}(X_j) - \sum_{j\in \mathcal{I}_{10}}w_{ij}\mu_l^{10}(X_j) + \sum_{j\in \mathcal{I}_{00}}w_{ij}\mu_l^{00}(X_j) -\tau_l  \right)^2 \\
        &=\frac{1}{n_{11}}\sum_{i\in \mathcal{I}_{11}}\left(\mu_l^{11}(X_i)-\mu_l^{01}(X_i)-\mu_{l}^{10}(X_i)+\mu_{l}^{00}(X_i)-\tau_l \right. \\
        &\left.+\sum_{j\in \mathcal{I}_{01}}w_{ij}\left(\mu_l^{01}(X_i)- \mu_l^{01}(X_j)\right) + \sum_{j\in \mathcal{I}_{10}}w_{ij}\left(\mu_l^{10}(X_i)- \mu_l^{10}(X_j)\right) - \sum_{j\in \mathcal{I}_{00}}w_{ij}\left(\mu_l^{00}(X_i)- \mu_l^{00}(X_j)\right)   \right)^2 \\
        &=\frac{1}{n_{11}}\sum_{i\in \mathcal{I}_{11}}\left( \mu_l^{11}(X_i)-\mu_l^{01}(X_i)-\mu_{l}^{10}(X_i)+\mu_{l}^{00}(X_i)-\tau_l \right)^2\\
        &+\frac{1}{n_{11}}\sum_{i\in \mathcal{I}_{11}}\left( \sum_{j\in \mathcal{I}_{01}}w_{ij}\left(\mu_l^{01}(X_i)- \mu_l^{01}(X_j)\right) + \sum_{j\in \mathcal{I}_{10}}w_{ij}\left(\mu_l^{10}(X_i)- \mu_l^{10}(X_j)\right) - \sum_{j\in \mathcal{I}_{00}}w_{ij}\left(\mu_l^{00}(X_i)- \mu_l^{00}(X_j)\right)  \right)^2 \\
        &+\frac{2}{n_{11}}\sum_{i\in \mathcal{I}_{11}}\left( \mu_l^{11}(X_i)-\mu_l^{01}(X_i)-\mu_{l}^{10}(X_i)+\mu_{l}^{00}(X_i)-\tau_l \right) \\
        &\left( \sum_{j\in \mathcal{I}_{01}}w_{ij}\left(\mu_l^{01}(X_i)- \mu_l^{01}(X_j)\right) + \sum_{j\in \mathcal{I}_{10}}w_{ij}\left(\mu_l^{10}(X_i)- \mu_l^{10}(X_j)\right) - \sum_{j\in \mathcal{I}_{00}}w_{ij}\left(\mu_l^{00}(X_i)- \mu_l^{00}(X_j)\right)  \right)
    \end{align*}
   Notice that:
    \begin{align*}
        &\frac{1}{n_{11}}\sum_{i\in \mathcal{I}_{11}}\left( \sum_{j\in \mathcal{I}_{01}}w_{ij}\left(\mu_l^{01}(X_i)- \mu_l^{01}(X_j)\right) + \sum_{j\in \mathcal{I}_{10}}w_{ij}\left(\mu_l^{10}(X_i)- \mu_l^{10}(X_j)\right) - \sum_{j\in \mathcal{I}_{00}}w_{ij}\left(\mu_l^{00}(X_i)- \mu_l^{00}(X_j)\right)  \right)^2 \\
        &\leq \frac{3}{n_{11}}\sum_{i\in \mathcal{I}_{11}}\left( \sum_{j\in \mathcal{I}_{01}}w_{ij}\left(\mu_l^{01}(X_i)- \mu_l^{01}(X_j)\right) \right)^2 +\left( \sum_{j\in \mathcal{I}_{10}}w_{ij}\left(\mu_l^{10}(X_i)- \mu_l^{10}(X_j)\right)  \right)^2 \\
        &+ \left( \sum_{j\in \mathcal{I}_{00}}w_{ij}\left(\mu_l^{00}(X_i)- \mu_l^{00}(X_j)\right)  \right)^2 \\
        &\leq \frac{3}{n_{11}}\sum_{i\in \mathcal{I}_{11}}\left( \sum_{j\in \mathcal{I}_{01}}w_{ij}\abs{\left(\mu_l^{01}(X_i)- \mu_l^{01}(X_j)\right) }\right)^2 +\left( \sum_{j\in \mathcal{I}_{10}}w_{ij}\abs{\left(\mu_l^{10}(X_i)- \mu_l^{10}(X_j)\right)}  \right)^2 \\
        &+ \left( \sum_{j\in \mathcal{I}_{00}}w_{ij}\abs{\left(\mu_l^{00}(X_i)- \mu_l^{00}(X_j)\right)}  \right)^2 \\
        &\leq \frac{3C}{n_{11}}\sum_{i\in \mathcal{I}_{11}}\left( \sum_{j\in \mathcal{I}_{01}}w_{ij}\norm{X_i-X_j}\right)^2 +\left( \sum_{j\in \mathcal{I}_{10}}w_{ij}\norm{X_i-X_j}  \right)^2 + \left( \sum_{j\in \mathcal{I}_{00}}w_{ij}\norm{X_i-X_j}  \right)^2 \\
        &=o_p(1)
    \end{align*}
    The first inequality follows from $(a+b+c)^2\leq 3(a^2+b^2+c^2)$, the second inequality follows from triangular inequality, the third inequality follows from the Lipschitz continuity of the  conditional mean function. The last equality follows from lemma \ref{lemma:weightproperty}(iv).
    In addition, 
    \begin{align*}
        &\frac{2}{n_{11}}\sum_{i\in \mathcal{I}_{11}}\abs{\left( \mu_l^{11}(X_i)-\mu_l^{01}(X_i)-\mu_{l}^{10}(X_i)+\mu_{l}^{00}(X_i)-\tau_l \right) }\\
        &\abs{\left( \sum_{j\in \mathcal{I}_{01}}w_{ij}\left(\mu_l^{01}(X_i)- \mu_l^{01}(X_j)\right) + \sum_{j\in \mathcal{I}_{10}}w_{ij}\left(\mu_l^{10}(X_i)- \mu_l^{10}(X_j)\right) - \sum_{j\in \mathcal{I}_{00}}w_{ij}\left(\mu_l^{00}(X_i)- \mu_l^{00}(X_j)\right)  \right)}\\
        &\leq \frac{2C}{n_{11}}\sum_{j\in \mathcal{I}_{11}}\sum_{j\in \mathcal{I}_{01}}w_{ij}\abs{\left(\mu_l^{01}(X_i)- \mu_l^{01}(X_j)\right)}   + \sum_{j\in \mathcal{I}_{10}}w_{ij}\abs{\left(\mu_l^{10}(X_i)- \mu_l^{10}(X_j)\right)} \\
        &+\sum_{j\in \mathcal{I}_{00}}w_{ij}\abs{\left(\mu_l^{00}(X_i)- \mu_l^{00}(X_j)\right)}  \\
        &\leq \frac{2C'}{n_{11}}\sum_{j\in \mathcal{I}_{11}}\sum_{j\in \mathcal{I}_{01}}w_{ij}\norm{X_i-X_j}   + \sum_{j\in \mathcal{I}_{10}}w_{ij}\norm{X_i-X_j} +\sum_{j\in \mathcal{I}_{00}}w_{ij}\norm{X_i-X_j} \\
        &=o_p(1)
    \end{align*}
    This implies that
    \begin{align*}
        &\frac{2}{n_{11}}\sum_{i\in \mathcal{I}_{11}}\left( \mu_l^{11}(X_i)-\mu_l^{01}(X_i)-\mu_{l}^{10}(X_i)+\mu_{l}^{00}(X_i)-\tau_l \right) \\
        &\left( \sum_{j\in \mathcal{I}_{01}}w_{ij}\left(\mu_l^{01}(X_i)- \mu_l^{01}(X_j)\right) + \sum_{j\in \mathcal{I}_{10}}w_{ij}\left(\mu_l^{10}(X_i)- \mu_l^{10}(X_j)\right) - \sum_{j\in \mathcal{I}_{00}}w_{ij}\left(\mu_l^{00}(X_i)- \mu_l^{00}(X_j)\right)  \right) =o_p(1)
    \end{align*}
    Therefore, combine the results above,
    \begin{align*}
        &\frac{1}{n_{11}}\sum_{i\in \mathcal{I}_{11}}\left(  \mu_l^{11}(X_i)-\sum_{j\in \mathcal{I}_{01}}w_{ij}\mu_l^{01}(X_j) - \sum_{j\in \mathcal{I}_{10}}w_{ij}\mu_l^{10}(X_j) + \sum_{j\in \mathcal{I}_{00}}w_{ij}\mu_l^{00}(X_j) -\tau_l  \right)^2 \\
        &=\frac{1}{n_{11}}\sum_{i\in \mathcal{I}_{11}}\left(\mu_l^{11}(X_i)-\mu_l^{01}(X_i)-\mu_{l}^{10}(X_i)+\mu_{l}^{00}(X_i)-\tau_l\right)^2 +o_p(1)
    \end{align*}
    Notice that by WLLN,
\begin{align*}
    &\frac{1}{n_{11}}\sum_{i\in \mathcal{I}_{11}}\left(\mu_l^{11}(X_i)-\mu_l^{01}(X_i)-\mu_{l}^{10}(X_i)+\mu_{l}^{00}(X_i)-\tau_l\right)^2 \\
    &\xrightarrow{p} V_{\tau(X)}= \E\left[\left(\mu_l^{11}(X_i)-\mu_l^{01}(X_i)-\mu_{l}^{10}(X_i)+\mu_{l}^{00}(X_i)-\tau_l\right)^2 |G_i=1,S_i=1\right]
\end{align*}
 Therefore,
 \begin{align*}
     &\frac{1}{n_{11}}\sum_{i\in \mathcal{I}_{11}}\left(  \mu_l^{11}(X_i)-\sum_{j\in \mathcal{I}_{01}}w_{ij}\mu_l^{01}(X_j) - \sum_{j\in \mathcal{I}_{10}}w_{ij}\mu_l^{10}(X_j) + \sum_{j\in \mathcal{I}_{00}}w_{ij}\mu_l^{00}(X_j) -\tau_l  \right)^2 =V_{\tau(X)}+o_p(1)
 \end{align*}
    \textbf{Second term:}
    \begin{align*}
    &\frac{1}{n_{11}}\sum_{i\in \mathcal{I}_{11}}\left(\epsilon_{il}^{11} -\sum_{j\in \mathcal{I}_{01}}w_{ij}\epsilon_{jl}^{01} -\sum_{j\in \mathcal{I}_{10}}w_{ij}\epsilon_{jl}^{10} + \sum_{j\in \mathcal{I}_{00}}w_{ij}\epsilon_{jl}^{00} \right)^2 \\
    &=\frac{1}{n_{11}}\sum_{i\in \mathcal{I}_{11}}\left( \left(\epsilon_{il}^{11}\right)^2 + \sum_{j\in \mathcal{I}_{01}}w_{ij}^2\left(\epsilon_{jl}^{01} \right)^2 +\sum_{j\in \mathcal{I}_{10}}w_{ij}^2\left(\epsilon_{jl}^{10} \right)^2 + \sum_{j\in \mathcal{I}_{00}}w_{ij}^2\left(\epsilon_{jl}^{00} \right)^2 \right) \\
    &\quad+\frac{1}{n_{11}}\sum_{i\in \mathcal{I}_{11}}\left(\sum_{j\in \mathcal{I}_{01}}\sum_{\substack{j'\in\mathcal{I}_{01}\\ j'\neq j}}w_{ij}w_{ij'}\epsilon_{jl}^{01}\epsilon_{j'l}^{01} +\sum_{j\in \mathcal{I}_{10}}\sum_{\substack{j'\in\mathcal{I}_{10}\\ j'\neq j}}w_{ij}w_{ij'}\epsilon_{jl}^{10}\epsilon_{j'l}^{10} + \sum_{j\in \mathcal{I}_{00}}\sum_{\substack{j'\in\mathcal{I}_{00}\\ j'\neq j}}w_{ij}w_{ij'}\epsilon_{jl}^{00}\epsilon_{j'l}^{00}  \right)\\
    &\quad+ \frac{2}{n_{11}}\sum_{i\in\mathcal{I}_{11}}\left(\sum_{j\in \mathcal{I}_{01}}\sum_{j'\in\mathcal{I}_{10}} w_{ij}w_{ij'} \epsilon_{jl}^{01}\epsilon_{j'l}^{10} - \sum_{j\in \mathcal{I}_{01}}\sum_{j'\in\mathcal{I}_{00}} w_{ij}w_{ij'} \epsilon_{jl}^{01}\epsilon_{j'l}^{00} - \sum_{j\in \mathcal{I}_{10}}\sum_{j'\in\mathcal{I}_{00}} w_{ij}w_{ij'} \epsilon_{jl}^{10}\epsilon_{j'l}^{00}  \right)   \\
    &\quad-\frac{2}{n_{11}}\sum_{i\in\mathcal{I}_{11}}\left(\sum_{j\in \mathcal{I}_{01}} w_{ij}\epsilon_{il}^{11}\epsilon_{jl}^{01} + \sum_{j\in \mathcal{I}_{10}} w_{ij}\epsilon_{il}^{11}\epsilon_{jl}^{10} - \sum_{j\in \mathcal{I}_{00}} w_{ij}\epsilon_{il}^{11}\epsilon_{jl}^{00}  \right)
\end{align*}
 \begin{align*}
     &\E\left[\left.\frac{1}{n_{11}}\sum_{i\in \mathcal{I}_{11}}\left(\epsilon_{il}^{11}\right)^2-\sigma_{11,l}^2(X_i)\right\lvert\G,\bs,\X \right] \\
     &=\frac{1}{n_{11}}\sum_{i\in \mathcal{I}_{11}} \E\left[\left(\epsilon_{il}^{11}\right)^2-\sigma_{11,l}^2(X_i)\mid G_i,S_i,X_i \right] \\
     &=0
 \end{align*}
 \begin{align*}
     Var\left(\left.\frac{1}{n_{11}}\sum_{i\in \mathcal{I}_{11}}\left(\epsilon_{il}^{11}\right)^2-\sigma_{11,l}^2(X_i)\right\lvert \G,\bs,\X \right) & = \frac{1}{n_{11}}Var\left(\left.\left(\epsilon_{il}^{11}\right)^2-\sigma_{11,l}^2(X_i)\right\lvert \G,\bs,\X,G_i=1,S_i=1 \right) \\
     &=\frac{1}{n_{11}}Var\left(\left.\left(\epsilon_{il}^{11}\right)^2\right\lvert \G,\bs,\X,G_i=1,S_i=1 \right) \\
     &\leq\frac{1}{n_{11}}\E\left[\left.\left(\epsilon_{il}^{11}\right)^4\right\lvert \G,\bs,\X,G_i=1,S_i=1 \right]\\
     &\leq \frac{C}{n_{11}} \\
     &=o(1)
 \end{align*}
 The last inequality follows from lemma \ref{lemma:residualbound}. Then by lemma \ref{lemma:conditionalchebyshev},
 \begin{align*}
     \frac{1}{n_{11}}\sum_{i\in \mathcal{I}_{11}}\left(\epsilon_{il}^{11}\right)^2-\sigma_{11,l}^2(X_i) =o_p(1)
 \end{align*}
     For any $(g,s)\in \{(0,1),(1,0),(0,0)\}$,
\begin{align*}
    &\E\left[\left. \frac{1}{n_{11}}\sum_{i\in \mathcal{I}_{11}}\sum_{j\in \mathcal{I}_{gs}}w_{ij}^2\left[\left(\epsilon_{jl}^{gs} \right)^2-\sigma^2_{gs,l}(X_j) \right] \right\lvert \G,\bs,\X  \right] \\
    &=\frac{1}{n_{11}}\sum_{i\in \mathcal{I}_{11}}\sum_{j\in \mathcal{I}_{gs}}w_{ij}^2\E\left[\left. \left(\epsilon_{jl}^{gs} \right)^2-\sigma^2_{gs,l}(X_j)  \right\lvert G_j,S_j,X_j  \right] \\
    &=0
\end{align*}
    
\begin{align*}
    &Var\left(\left. \frac{1}{n_{11}}\sum_{i\in \mathcal{I}_{11}}\sum_{j\in \mathcal{I}_{gs}}w_{ij}^2\left[\left(\epsilon_{jl}^{gs} \right)^2-\sigma^2_{gs,l}(X_j) \right] \right\lvert \G,\bs,\X \right) \\
    &=Var\left(\left. \frac{1}{n_{11}}\sum_{j\in \mathcal{I}_{gs}}\sum_{i\in \mathcal{I}_{11}}  w_{ij}^2\left[\left(\epsilon_{jl}^{gs} \right)^2-\sigma^2_{gs,l}(X_j) \right] \right\lvert \G,\bs,\X \right) \\
    &= \frac{1}{n_{11}^2}\sum_{j \in \mathcal{I}_{gs}}Var\left(\left.\sum_{i\in \mathcal{I}_{11}}w_{ij}^2\left(\epsilon_{jl}^{gs} \right)^2  \right\lvert \G,\bs,\X \right) \\
    &\leq \frac{1}{n_{11}^2}\sum_{j \in \mathcal{I}_{gs}}\E\left[\left.\left(\sum_{i\in \mathcal{I}_{11}}w_{ij}^2\left(\epsilon_{jl}^{gs} \right)^2\right)^2  \right\lvert \G,\bs,\X \right] \\
    &=\frac{1}{n_{11}^2}\sum_{j \in \mathcal{I}_{gs}}\E\left[\left.\sum_{i\in \mathcal{I}_{11}}\sum_{i'\in \mathcal{I}_{11}}  w_{ij}^2 w_{i'j}^2 \left(\epsilon_{jl}^{gs} \right)^4 \right\lvert \G,\bs,\X \right] \\
    &=\frac{1}{n_{11}^2}\sum_{j \in \mathcal{I}_{gs}}\sum_{i\in \mathcal{I}_{11}}\sum_{i'\in \mathcal{I}_{11}}  w_{ij}^2 w_{i'j}^2  \E\left[\left. \left(\epsilon_{jl}^{gs} \right)^4 \right\lvert \G,\bs,\X \right] \\
    &\leq \frac{C}{n_{11}^2}\sum_{j \in \mathcal{I}_{gs}}\sum_{i\in \mathcal{I}_{11}}\sum_{i'\in \mathcal{I}_{11}}  w_{ij}^2 w_{i'j}^2 \\
    &\leq \frac{C}{n_{11}^2}\sum_{j \in \mathcal{I}_{gs}}\sum_{i\in \mathcal{I}_{11}} w_{ij} \sum_{i'\in \mathcal{I}_{11}}  w_{i'j} \\
    &=\frac{C}{n_{11}^2}\sum_{j \in \mathcal{I}_{gs}} w_j^2 \\
    &=o_p(1)
\end{align*}
This last line follows from lemma \ref{lemma:weightproperty}(i). By lemma \ref{lemma:conditionalchebyshev},
\[\frac{1}{n_{11}}\sum_{i\in \mathcal{I}_{11}}\sum_{j\in \mathcal{I}_{gs}}w_{ij}^2\left[\left(\epsilon_{jl}^{gs} \right)^2-\sigma^2_{gs,l}(X_j) \right]  = o_p(1)   \]
    For $(g,s)\in \{(0,1),(1,0),(0,0)\}$,
    \begin{align*}
        &\E\left[\left. \frac{1}{n_{11}}\sum_{i\in \mathcal{I}_{11}}\sum_{j\in \mathcal{I}_{gs}}\sum_{\substack{j'\in\mathcal{I}_{gs}\\ j'\neq j}}w_{ij}w_{ij'}\epsilon_{jl}^{gs}\epsilon_{j'l}^{gs} \right\lvert \G,\bs,\X  \right] \\
        &=\frac{1}{n_{11}}\sum_{i\in \mathcal{I}_{11}}\sum_{j\in \mathcal{I}_{gs}}\sum_{\substack{j'\in\mathcal{I}_{gs}\\ j'\neq j}}w_{ij}w_{ij'} \E\left[\left. \epsilon_{jl}^{gs}\epsilon_{j'l}^{gs} \right\lvert \G,\bs,\X  \right] \\
        &=\frac{1}{n_{11}}\sum_{i\in \mathcal{I}_{11}}\sum_{j\in \mathcal{I}_{gs}}\sum_{\substack{j'\in\mathcal{I}_{gs}\\ j'\neq j}}w_{ij}w_{ij'} \E\left[\left. \epsilon_{jl}^{gs} \right\lvert \G,\bs,\X  \right]\E\left[\left. \epsilon_{j'l}^{gs} \right\lvert \G,\bs,\X  \right] \\
        &=0
    \end{align*}
    On the other hand,
    \begin{align*}
        &Var\left(\left. \frac{1}{n_{11}}\sum_{i\in \mathcal{I}_{11}}\sum_{j\in \mathcal{I}_{gs}}\sum_{\substack{j'\in\mathcal{I}_{gs}\\ j'\neq j}}w_{ij}w_{ij'}\epsilon_{jl}^{gs}\epsilon_{j'l}^{gs} \right\lvert \G,\bs,\X \right)\\
        &=\frac{1}{n_{11}^2}Var\left(\left. \sum_{j\in \mathcal{I}_{gs}}\sum_{\substack{j'\in\mathcal{I}_{gs}\\ j'\neq j}} \sum_{i\in \mathcal{I}_{11}}w_{ij}w_{ij'} \epsilon_{jl}^{gs}\epsilon_{j'l}^{gs} \right\lvert \G,\bs,\X \right) \\
        &=\frac{1}{n_{11}^2}\sum_{j\in \mathcal{I}_{gs}}\sum_{\substack{j'\in\mathcal{I}_{gs}\\ j'\neq j}} Var\left(\left. \sum_{i\in \mathcal{I}_{11}}w_{ij}w_{ij'} \epsilon_{jl}^{gs}\epsilon_{j'l}^{gs} \right\lvert \G,\bs,\X \right) \\
        &\leq\frac{1}{n_{11}^2}\sum_{j\in \mathcal{I}_{gs}}\sum_{\substack{j'\in\mathcal{I}_{gs}\\ j'\neq j}}\left( \sum_{i\in \mathcal{I}_{11}}w_{ij}w_{ij'}\right)^2\E\left[\left. \left( \epsilon_{jl}^{gs}\right)^2\left(\epsilon_{j'l}^{gs}\right)^2 \right\lvert \G,\bs,\X \right]  \\
        &\leq \frac{C}{n_{11}^2}\sum_{j\in \mathcal{I}_{gs}}\sum_{\substack{j'\in\mathcal{I}_{gs}\\ j'\neq j}}\left( \sum_{i\in \mathcal{I}_{11}}w_{ij}w_{ij'}\right)^2  \allowdisplaybreaks \\
        &\leq \frac{C}{n_{11}^2}\max_{r\in \mathcal{I}_{gs}}w_r\sum_{j\in \mathcal{I}_{gs}}\sum_{\substack{j'\in\mathcal{I}_{gs}\\ j'\neq j}} \sum_{i\in \mathcal{I}_{11}}w_{ij}w_{ij'} \\
        &\leq\frac{C}{n_{11}^2}\max_{r\in \mathcal{I}_{gs}}w_r\sum_{i\in \mathcal{I}_{11}}  \sum_{j\in \mathcal{I}_{gs}}w_{ij}  \sum_{j'\in\mathcal{I}_{gs}} w_{ij'} \\
        &=\frac{C}{n_{11}}O_p(n^{1/2}) \\
        &=o_p(1)
    \end{align*}
    The second equality follows because all the cross product term has zero conditional covariance.  By lemma \ref{lemma:conditionalchebyshev},
\begin{align*}
    \frac{1}{n_{11}}\sum_{i\in \mathcal{I}_{11}}\sum_{j\in \mathcal{I}_{gs}}\sum_{\substack{j'\in\mathcal{I}_{gs}\\ j'\neq j}}w_{ij}w_{ij'}\epsilon_{jl}^{gs}\epsilon_{j'l}^{gs} =o_p(1)
\end{align*}
Therefore,
\begin{align*}
    \frac{1}{n_{11}}\sum_{i\in \mathcal{I}_{11}}\left(\sum_{j\in \mathcal{I}_{01}}\sum_{\substack{j'\in\mathcal{I}_{01}\\ j'\neq j}}w_{ij}w_{ij'}\epsilon_{jl}^{01}\epsilon_{j'l}^{01} +\sum_{j\in \mathcal{I}_{10}}\sum_{\substack{j'\in\mathcal{I}_{10}\\ j'\neq j}}w_{ij}w_{ij'}\epsilon_{jl}^{10}\epsilon_{j'l}^{10} + \sum_{j\in \mathcal{I}_{00}}\sum_{\substack{j'\in\mathcal{I}_{00}\\ j'\neq j}}w_{ij}w_{ij'}\epsilon_{jl}^{00}\epsilon_{j'l}^{00}  \right) =o_p(1)
\end{align*}
For $(g,s)\in \{(0,1),(1,0),(0,0)  \}$ and $(g',s')\in \{(0,1),(1,0),(0,0)\}$
\begin{align*}
    &\E\left[\left. \frac{1}{n_{11}}\sum_{i\in \mathcal{I}_{11}}\sum_{j\in \mathcal{I}_{gs}}\sum_{j\in \mathcal{I}_{g's'}} w_{ij}w_{ij'}\epsilon_{jl}^{gs}\epsilon_{j'l}^{g's'} \right\lvert \G,\bs,\X \right] \\&=\frac{1}{n_{11}}\sum_{i\in \mathcal{I}_{11}}\sum_{j\in \mathcal{gs}}\sum_{j\in \mathcal{g's'}} w_{ij}w_{ij'} \E\left[\left. \epsilon_{jl}^{gs}\epsilon_{j'l}^{g's'} \right\lvert \G,\bs,\X \right] \\
    &=\frac{1}{n_{11}}\sum_{i\in \mathcal{I}_{11}}\sum_{j\in \mathcal{gs}}\sum_{j\in \mathcal{g's'}} w_{ij}w_{ij'} \E\left[\left. \epsilon_{jl}^{gs} \right\lvert \G,\bs,\X \right] \E\left[\left. \epsilon_{j'l}^{g's'} \right\lvert \G,\bs,\X \right]\\
    &=0
\end{align*}
On the other hand,
\begin{align*}
    &Var\left(\left. \frac{1}{n_{11}}\sum_{i\in \mathcal{I}_{11}}\sum_{j\in \mathcal{I}_{gs}}\sum_{j'\in \mathcal{I}_{g's'}} w_{ij}w_{ij'}\epsilon_{jl}^{gs}\epsilon_{j'l}^{g's'} \right\lvert \G,\bs,\X \right) \\
    &=\frac{1}{n_{11}^2}Var\left(\left. \sum_{j\in \mathcal{I}_{gs}}\sum_{j'\in \mathcal{I}_{g's'}}\sum_{i\in \mathcal{I}_{11}} w_{ij}w_{ij'}\epsilon_{jl}^{gs}\epsilon_{j'l}^{g's'} \right\lvert \G,\bs,\X \right) \\
    &=\frac{1}{n_{11}^2}\sum_{j\in \mathcal{I}_{gs}}\sum_{j'\in \mathcal{I}_{g's'}}  Var\left(\left. \sum_{i\in \mathcal{I}_{11}} w_{ij}w_{ij'}\epsilon_{jl}^{gs}\epsilon_{j'l}^{g's'} \right\lvert \G,\bs,\X \right) \\
    &=\frac{1}{n_{11}^2}\sum_{j\in \mathcal{I}_{gs}}\sum_{j'\in \mathcal{I}_{g's'}}\left(\sum_{i\in \mathcal{I}_{11}} w_{ij}w_{ij'} \right)^2 Var\left(\left. \epsilon_{jl}^{gs}\epsilon_{j'l}^{g's'} \right\lvert \G,\bs,\X \right) \\
    &\leq \frac{1}{n_{11}^2}\sum_{j\in \mathcal{I}_{gs}}\sum_{j'\in \mathcal{I}_{g's'}}\left(\sum_{i\in \mathcal{I}_{11}} w_{ij}w_{ij'} \right)^2\E\left[\left.\left( \epsilon_{jl}^{gs}\right)^2\left(\epsilon_{j'l}^{g's'}\right)^2 \right\lvert \G,\bs,\X \right] \allowdisplaybreaks \\ 
    &\leq  \frac{C}{n_{11}^2}\sum_{j\in \mathcal{I}_{gs}}\sum_{j'\in \mathcal{I}_{g's'}}\left(\sum_{i\in \mathcal{I}_{11}} w_{ij}w_{ij'} \right)^2 \\
    &\leq \frac{C}{n_{11}^2}\sum_{j\in \mathcal{I}_{gs}}\sum_{j'\in \mathcal{I}_{g's'}}w_j\sum_{i\in \mathcal{I}_{11}} w_{ij}w_{ij'}  \\
    &\leq \frac{C}{n_{11}^2}\max_{r\in \mathcal{I}_{gs}}w_r\sum_{j\in \mathcal{I}_{gs}}\sum_{j'\in \mathcal{I}_{g's'}}\sum_{i\in \mathcal{I}_{11}} w_{ij}w_{ij'} \\
    &=\frac{C}{n_{11}^2}\max_{r\in \mathcal{I}_{gs}}w_r\sum_{i\in \mathcal{I}_{11}}  \sum_{j\in \mathcal{I}_{gs}}w_{ij}  \sum_{j'\in \mathcal{I}_{g's'}} w_{ij'} \\
    &=\frac{C}{n_{11}}\max_{r\in \mathcal{I}_{gs}}w_r \\
    &=\frac{C}{n_{11}}O_p(n^{1/2}) \\
    &=o_p(1)
\end{align*}
 By lemma \ref{lemma:conditionalchebyshev},
 \begin{align*}
     \frac{1}{n_{11}}\sum_{i\in \mathcal{I}_{11}}\sum_{j\in \mathcal{I}_{gs}}\sum_{j'\in \mathcal{I}_{g's'}} w_{ij}w_{ij'}\epsilon_{jl}^{gs}\epsilon_{j'l}^{g's'} = o_p(1)
 \end{align*}
 This implies that,
 \begin{align*}
     \frac{2}{n_{11}}\sum_{i\in\mathcal{I}_{11}}\left(\sum_{j\in \mathcal{I}_{01}}\sum_{j'\in\mathcal{I}_{10}} w_{ij}w_{ij'} \epsilon_{jl}^{01}\epsilon_{j'l}^{10} - \sum_{j\in \mathcal{I}_{01}}\sum_{j'\in\mathcal{I}_{00}} w_{ij}w_{ij'} \epsilon_{jl}^{01}\epsilon_{j'l}^{00} - \sum_{j\in \mathcal{I}_{10}}\sum_{j'\in\mathcal{I}_{00}} w_{ij}w_{ij'} \epsilon_{jl}^{10}\epsilon_{j'l}^{00}  \right) =o_p(1)
 \end{align*}

 For $(g,s)\in \{(0,1),(1,0),(0,0)  \}$, 
 \begin{align*}
     \E\left[\left. \frac{1}{n_{11}}\sum_{i\in\mathcal{I}_{11}}\sum_{j\in \mathcal{I}_{gs}} w_{ij}\epsilon_{il}^{11}\epsilon_{jl}^{gs}  \right\lvert \G,\bs,\X \right] &= \frac{1}{n_{11}}\sum_{i\in\mathcal{I}_{11}}\sum_{j\in \mathcal{I}_{gs}} w_{ij} \E\left[\left. \epsilon_{il}^{11}\epsilon_{jl}^{gs}  \right\lvert \G,\bs,\X \right] \\
     &=\frac{1}{n_{11}}\sum_{i\in\mathcal{I}_{11}}\sum_{j\in \mathcal{I}_{gs}} w_{ij} \E\left[\left. \epsilon_{il}^{11}  \right\lvert \G,\bs,\X \right] \E\left[\left. \epsilon_{jl}^{gs}  \right\lvert \G,\bs,\X \right] \\
     &=0
 \end{align*}
On the other hand,
\begin{align*}
     &Var\left(\left. \frac{1}{n_{11}}\sum_{i\in\mathcal{I}_{11}}\sum_{j\in \mathcal{I}_{gs}} w_{ij}\epsilon_{il}^{11}\epsilon_{jl}^{gs}  \right\lvert \G,\bs,\X \right) \\
     &=\frac{1}{n_{11}^2}\sum_{i\in\mathcal{I}_{11}} \sum_{j\in \mathcal{I}_{gs}}w_{ij}^2Var\left(\left.  \epsilon_{il}^{11}\epsilon_{jl}^{gs}  \right\lvert \G,\bs,\X \right) \\
     &\leq \frac{1}{n_{11}^2}\sum_{i\in\mathcal{I}_{11}} \sum_{j\in \mathcal{I}_{gs}}w_{ij}^2 \E\left[\left. \left( \epsilon_{il}^{11}\right)^2\left(\epsilon_{jl}^{gs}\right)^2 \right\lvert \G,\bs,\X \right]\\
     &\leq \frac{C}{n_{11}^2}\sum_{i\in\mathcal{I}_{11}} \sum_{j\in \mathcal{I}_{gs}}w_{ij}^2 \\
     &\leq  \frac{C}{n_{11}^2} \sum_{i\in\mathcal{I}_{11}} \sum_{j\in \mathcal{I}_{gs}} w_{ij}\\
     &=\frac{C}{n_{11}} \\
     &=o_p(1)
\end{align*}
 By lemma \ref{lemma:conditionalchebyshev}, 
 \begin{align*}
     \frac{1}{n_{11}}\sum_{i\in\mathcal{I}_{11}}\sum_{j\in \mathcal{I}_{gs}} w_{ij}\epsilon_{il}^{11}\epsilon_{jl}^{gs} =o_p(1)
 \end{align*}
 This implies that:
 \begin{align*}
     \frac{2}{n_{11}}\sum_{i\in\mathcal{I}_{11}}\left(\sum_{j\in \mathcal{I}_{01}} w_{ij}\epsilon_{il}^{11}\epsilon_{jl}^{01} + \sum_{j\in \mathcal{I}_{10}} w_{ij}\epsilon_{il}^{11}\epsilon_{jl}^{10} - \sum_{j\in \mathcal{I}_{00}} w_{ij}\epsilon_{il}^{11}\epsilon_{jl}^{00}  \right) =o_p(1)
 \end{align*}
 Therefore, 
\begin{align*}
     &\frac{1}{n_{11}}\sum_{i\in \mathcal{I}_{11}}\left(\epsilon_{il}^{11} -\sum_{j\in \mathcal{I}_{01}}w_{ij}\epsilon_{jl}^{01} -\sum_{j\in \mathcal{I}_{10}}w_{ij}\epsilon_{jl}^{10} + \sum_{j\in \mathcal{I}_{00}}w_{ij}\epsilon_{jl}^{00} \right)^2 \\
    &=\frac{1}{n_{11}}\sum_{i\in \mathcal{I}_{11}}\left( \sigma^2_{11,l}(X_i) + \sum_{j\in \mathcal{I}_{01}}w_{ij}^2\sigma_{01,l}^2(X_j) +\sum_{j\in \mathcal{I}_{10}}w_{ij}^2\sigma_{10,l}^2(X_j) + \sum_{j\in \mathcal{I}_{00}}w_{ij}^2\sigma_{00,l}^2(X_j) \right) +o_p(1)
\end{align*}
\textbf{Third term:}
\begin{align*}
    &\frac{2}{n_{11}}\sum_{i\in \mathcal{I}_{11}}\left(  \mu_l^{11}(X_i)-\sum_{j\in \mathcal{I}_{01}}w_{ij}\mu_l^{01}(X_j) - \sum_{j\in \mathcal{I}_{10}}w_{ij}\mu_l^{10}(X_j) + \sum_{j\in \mathcal{I}_{00}}w_{ij}\mu_l^{00}(X_j) -\tau_l  \right)\\
        &\left(\epsilon_{il}^{11} -\sum_{j\in \mathcal{I}_{01}}w_{ij}\epsilon_{jl}^{01} -\sum_{j\in \mathcal{I}_{10}}w_{ij}\epsilon_{jl}^{10} + \sum_{j\in \mathcal{I}_{00}}w_{ij}\epsilon_{jl}^{00} \right)
\end{align*}
Define $\Phi_{1,i}= \mu_l^{11}(X_i)-\sum_{j\in \mathcal{I}_{01}}w_{ij}\mu_l^{01}(X_j) - \sum_{j\in \mathcal{I}_{10}}w_{ij}\mu_l^{10}(X_j) + \sum_{j\in \mathcal{I}_{00}}w_{ij}\mu_l^{00}(X_j) -\tau_l $.
First notice that
\begin{align*}
    \abs{\Phi_{1,i}}&=\abs{ \mu_l^{11}(X_i)-\sum_{j\in \mathcal{I}_{01}}w_{ij}\mu_l^{01}(X_j) - \sum_{j\in \mathcal{I}_{10}}w_{ij}\mu_l^{10}(X_j) + \sum_{j\in \mathcal{I}_{00}}w_{ij}\mu_l^{00}(X_j) -\tau_l }\\
    &\leq \abs{\mu_l^{11}(X_i)} +\sum_{j\in \mathcal{I}_{01}}w_{ij}\abs{\mu_l^{01}(X_j)}+
    \sum_{j\in \mathcal{I}_{01}}w_{ij}\abs{\mu_l^{10}(X_j)} +\sum_{j\in \mathcal{I}_{01}}w_{ij}\abs{\mu_l^{00}(X_j)} +\abs{\tau_l}\\
    &\leq C
\end{align*}
\begin{align*}
    \E\left[\left. \frac{1}{n_{11}} \sum_{i\in \mathcal{I}_{11}} \Phi_{1,i} \epsilon_{il}^{11}   \right| \G,\bs,\X \right] =\frac{1}{n_{11}} \sum_{i\in \mathcal{I}_{11}} \Phi_{1,i}\E\left[\left. \epsilon_{il}^{11}\right|\G,\bs,\X \right] =0
\end{align*}
On the other hand,
\begin{align*}
     Var\left(\left. \frac{1}{n_{11}} \sum_{i\in \mathcal{I}_{11}} \Phi_{1,i} \epsilon_{il}^{11}   \right| \G,\bs,\X \right)&=\frac{1}{n_{11}^2}Var\left(\left. \sum_{i\in \mathcal{I}_{11}} \Phi_{1,i} \epsilon_{il}^{11}   \right| \G,\bs,\X \right)\\
     &=\frac{1}{n_{11}^2}\sum_{i\in \mathcal{I}_{11}} \Phi_{1,i}^2 Var\left(\left.   \epsilon_{il}^{11}   \right| \G,\bs,\X \right) \\
     &\leq \frac{C}{n_{11}^2}\sum_{i\in \mathcal{I}_{11}} \sigma^2_{11,l}(X_i)\\
     &\leq \frac{C\overline{\sigma}^2}{n_{11}}\\
     &=o_p(1)
\end{align*}
By lemma \ref{lemma:conditionalchebyshev},
\begin{align*}
    \frac{1}{n_{11}} \sum_{i\in \mathcal{I}_{11}} \Phi_{1,i} \epsilon_{il}^{11}=o_p(1)
\end{align*}
For $(g,s)\in \{(0,1),(1,0),(0,0)\}$,
\begin{align*}
    \E\left[\left. \frac{1}{n_{11}} \sum_{i\in \mathcal{I}_{11}} \Phi_{1,i} \sum_{j\in \mathcal{I}_{gs}}w_{ij}\epsilon_{jl}^{gs}  \right| \G,\bs,\X \right] &= \frac{1}{n_{11}} \sum_{i\in \mathcal{I}_{11}} \Phi_{1,i}\sum_{j\in \mathcal{I}_{gs}}w_{ij} \E\left[\left. \epsilon_{jl}^{gs}\right|\G,\bs,\X \right] =0
\end{align*}
On the other hand
\begin{align*}
    &Var\left(\left. \frac{1}{n_{11}} \sum_{i\in \mathcal{I}_{11}} \Phi_{1,i} \sum_{j\in \mathcal{I}_{gs}}w_{ij}\epsilon_{jl}^{gs}  \right| \G,\bs,\X \right)\\
    &=\frac{1}{n_{11}^2}Var\left(\left.  \sum_{j\in \mathcal{I}_{gs}} \sum_{i\in \mathcal{I}_{11}} \Phi_{1,i} w_{ij}\epsilon_{jl}^{gs}  \right| \G,\bs,\X \right) \\
    &=\frac{1}{n_{11}^2}\sum_{j\in \mathcal{I}_{gs}} Var\left(\left.   \sum_{i\in \mathcal{I}_{11}} \Phi_{1,i} w_{ij}\epsilon_{jl}^{gs}  \right| \G,\bs,\X \right) \\
    &=\frac{1}{n_{11}^2}\sum_{j\in \mathcal{I}_{gs}}\left( \sum_{i\in \mathcal{I}_{11}} \Phi_{1,i} w_{ij}\right)^2   Var\left(\left.   \epsilon_{jl}^{gs}  \right| G_i=g,S_i=s,X_i \right) \allowdisplaybreaks \\
    &=\frac{C\overline{\sigma}^2}{n_{11}^2}\sum_{j\in \mathcal{I}_{gs}}\left( \sum_{i\in \mathcal{I}_{11}} w_{ij}\right)^2 \\
    &=\frac{C\overline{\sigma}^2}{n_{11}^2}\sum_{j\in \mathcal{I}_{gs}}w_j^2 \\
    &=\frac{n_{gs}C\overline{\sigma}^2}{n_{11}^2}\frac{1}{n_{gs}}\sum_{j\in \mathcal{I}_{gs}}w_j^2\\
    &=\frac{1}{n_{11}}\frac{n_{gs}C\overline{\sigma}^2}{n_{11}}O(1)\\
    &=o_p(1)
\end{align*}
By lemma \ref{lemma:conditionalchebyshev},
\begin{align*}
    \frac{1}{n_{11}} \sum_{i\in \mathcal{I}_{11}} \Phi_{1,i} \sum_{j\in \mathcal{I}_{gs}}w_{ij}\epsilon_{jl}^{gs}  =o_p(1)
\end{align*}
Therefore,
\begin{align*}
    &\frac{2}{n_{11}}\sum_{i\in \mathcal{I}_{11}}\left(  \mu_l^{11}(X_i)-\sum_{j\in \mathcal{I}_{01}}w_{ij}\mu_l^{01}(X_j) - \sum_{j\in \mathcal{I}_{10}}w_{ij}\mu_l^{10}(X_j) + \sum_{j\in \mathcal{I}_{00}}w_{ij}\mu_l^{00}(X_j) -\tau_l  \right)\\
        &\left(\epsilon_{il}^{11} -\sum_{j\in \mathcal{I}_{01}}w_{ij}\epsilon_{jl}^{01} -\sum_{j\in \mathcal{I}_{10}}w_{ij}\epsilon_{jl}^{10} + \sum_{j\in \mathcal{I}_{00}}w_{ij}\epsilon_{jl}^{00} \right)\\
        &=\frac{1}{n_{11}} \sum_{i\in \mathcal{I}_{11}} \Phi_{1,i} \left(\epsilon_{il}^{11} -\sum_{j\in \mathcal{I}_{01}}w_{ij}\epsilon_{jl}^{01} -\sum_{j\in \mathcal{I}_{10}}w_{ij}\epsilon_{jl}^{10} + \sum_{j\in \mathcal{I}_{00}}w_{ij}\epsilon_{jl}^{00} \right) \\
        &=o_p(1)
\end{align*}
Combine the results from the three terms, we have:
\begin{align*}
    \widehat{V}_{\tau(X)}&=\frac{1}{n_{11}}\sum_{i\in \mathcal{I}_{11}}\left(\Delta Y_{il}-  \widehat {\Delta Y_{il}^{01}}-\widehat {\Delta Y_{il}^{10}} + \widehat {\Delta Y_{il}^{00}}-\tau_l \right)^2 +o_p(1) \\
    &=V_{\tau(X)}+\frac{1}{n_{11}}\sum_{i\in \mathcal{I}_{11}}\left( \sigma^2_{11,l}(X_i) + \sum_{j\in \mathcal{I}_{01}}w_{ij}^2\sigma_{01,l}^2(X_j) +\sum_{j\in \mathcal{I}_{10}}w_{ij}^2\sigma_{10,l}^2(X_j) + \sum_{j\in \mathcal{I}_{00}}w_{ij}^2\sigma_{00,l}^2(X_j) \right) +o_p(1) 
\end{align*}

Next, for $(g,s)\in \{(0,1),(1,0),(0,0) \} $,
\begin{align*}
    \widehat{\sigma}_{gs,l}^2(X_j)&=\frac{K}{K+1}\left(\Delta Y_{jl}-\frac{1}{K}\sum_{m=1}^K \Delta Y_{k_{m(j)}l} \right)^2 \\
    &=\frac{K}{K+1}\left(\Delta Y_{jl}-\mu_{l}^{gs}(X_j)- \frac{1}{K}\sum_{m=1}^K\left[\Delta Y_{k_{m(j)}l}- \mu_l^{gs}(X_{{k_{m(j)}}})\right] +\frac{1}{K}\sum_{m=1}^K\left[\mu_l^{gs}(X_j) - \mu_l^{gs}(X_{{k_{m(j)}}}) \right] \right)^2 \\
    &=\frac{K}{K+1}\left(\epsilon_{jl}^{gs}- \frac{1}{K}\sum_{m=1}^K\epsilon_{k_{m(j)}l}^{gs} +\psi_{K,j} \right)^2
\end{align*}
where $\psi_{K,j} =\frac{1}{K}\sum_{m=1}^K\left[\mu_l^{gs}(X_j) - \mu_l^{gs}(X_{{k_{m(j)}}})\right]  $
\begin{align*}
   \E\left[\left. \widehat{\sigma}_{gs,l}^2(X_j) \right\lvert \G,\bs,\X \right]  &=\frac{K}{K+1}\E\left[\left. \left(\epsilon_{jl}^{gs}- \frac{1}{K}\sum_{m=1}^K\epsilon_{k_{m(j)}l}^{gs} +\psi_{K,j} \right)^2 \right\lvert \G,\bs,\X \right] \\
   &=\frac{K}{K+1}\left( \E\left[\left. \left(\epsilon_{jl}^{gs} \right)^2 \right\lvert \G,\bs,\X \right] + \frac{1}{K^2}\sum_{m=1}^K \E\left[\left. \left(\epsilon_{k_{m(j)}l}^{gs} \right)^2 \right\lvert \G,\bs,\X \right] +\psi_{K,j}^2  \right) \\
   &=\frac{K}{K+1}\left( \sigma^2_{gs,l}(X_j) + \frac{1}{K^2}\sum_{m=1}^K \sigma^2_{gs,l}(X_{k_m(j)}) +\psi_{K,j}^2  \right) \\
   &=\sigma^2_{gs,l}(X_j)+\frac{1}{K+1}\sum_{m=1}^K \left( \sigma^2_{gs,l}(X_{k_m(j)})-\sigma^2_{gs,l}(X_j) \right) +\frac{K}{K+1}\psi_{K,j}^2
\end{align*}

\begin{align*}
   \abs{\psi_{K,j}} &=\abs{\frac{1}{K}\sum_{m=1}^K\left[\mu_l^{gs}(X_j) - \mu_l^{gs}(X_{{k_{m(j)}}})\right]} \leq\frac{1}{K}\sum_{m=1}^K\abs{\left[\mu_l^{gs}(X_j) - \mu_l^{gs}(X_{{k_{m(j)}}})\right]} \\
   &\leq \frac{C}{K}\sum_{m=1}^K\norm{X_j-X_{k_{m(j)}}} \leq C\cdot \max_{r\in \mathcal{I}_{gs}}\norm{X_r-X_{k_{m(r)}} } =o_p(1)
\end{align*}
The last equality follows from lemma \ref{lemma:maximumdistancelimit}.

\begin{align*}
   & \abs{ \frac{1}{n_{11}}\sum_{j\in \mathcal{I}_{gs}} w_j^2\left(\E\left[\left. \widehat{\sigma}_{gs,l}^2(X_j)\right|\G,\bs,\X\right]-\sigma_{gs,l}^2(X_j) \right) }\\
   &=\abs{ \frac{1}{n_{11}}\sum_{j\in \mathcal{I}_{gs}} w_j^2\left(\frac{1}{K+1}\sum_{m=1}^K \left( \sigma^2_{gs,l}(X_{k_m(j)})-\sigma^2_{gs,l}(X_j) \right) +\frac{K}{K+1}\psi_{K,j}^2\right) } \\
   &=\abs{ \frac{1}{n_{11}}\sum_{j\in \mathcal{I}_{gs}} w_j^2\left(\frac{1}{K+1}\sum_{m=1}^K \left( \sigma^2_{gs,l}(X_{k_m(j)})-\sigma^2_{gs,l}(X_j) \right) \right) }+\abs{\frac{1}{n_{11}}\sum_{j\in \mathcal{I}_{gs}} w_j^2\left( \frac{K}{K+1}\psi_{K,j}^2\right)} \\
   &\leq  \frac{1}{n_{11}}\sum_{j\in \mathcal{I}_{gs}} w_j^2\left(\frac{1}{K+1}\sum_{m=1}^K \abs{ \left( \sigma^2_{gs,l}(X_{k_m(j)})-\sigma^2_{gs,l}(X_j) \right)} \right) +\frac{1}{n_{11}}\frac{K}{K+1}\sum_{j\in\mathcal{I}_{gs}}w_j^2\abs{\psi_{K,j} }^2 \\
   &\leq \frac{C}{n_{11}}\frac{1}{K+1}\sum_{j\in \mathcal{I}_{gs}} w_j^2\left(\sum_{m=1}^K \norm{X_j-X_{k_{m(j)}} } \right) +\frac{CKn_{gs}}{(K+1)n_{11}}\left(\max_{r\in \mathcal{I}_{gs}}\norm{X_r-X_{k_{m(r)}} }\right)^2 \frac{1}{n_{gs}}  \sum_{j\in\mathcal{I}_{gs}}w_j^2   \\
   &=\frac{CKn_{gs}}{(K+1)n_{11}}\max_{r\in \mathcal{I}_{gs}}\norm{X_r-X_{k_{m(r)}} } \frac{1}{n_{gs}} \sum_{j\in \mathcal{I}_{gs}} w_j^2+o_p(1) \\
   &=o_p(1)
\end{align*}
This implies that,
 \begin{align*}
     \frac{1}{n_{11}}\sum_{j\in \mathcal{I}_{gs}} w_j^2\left(\E\left[\left. \widehat{\sigma}_{gs,l}^2(X_j)\right|\G,\bs,\X\right]-\sigma_{gs,l}^2(X_j) \right)=o_p(1)
 \end{align*}
Since the support of $X$ is compact so $\abs{\psi_{K,j}}$ is uniformly bounded.
\begin{align*}
&\E\left[\left.\frac{1}{n_{11}}\sum_{j\in \mathcal{I}_{gs}} w_j^2\left(\widehat{\sigma}_{gs,l}^2(X_j)-\E\left[\left. \widehat{\sigma}_{gs,l}^2(X_j)\right|\G,\bs,\X\right] \right)\right\lvert \G,\bs,\X \right] \\
&=\frac{1}{n_{11}}\sum_{j\in \mathcal{I}_{gs}} w_j^2\left(\E\left[\left. \widehat{\sigma}_{gs,l}^2(X_j)\right|\G,\bs,\X\right]-\E\left[\left. \widehat{\sigma}_{gs,l}^2(X_j)\right|\G,\bs,\X\right] \right)\\
&=0
\end{align*}

On the other hand,
\begin{align*}
   & Var\left(\left.\frac{1}{n_{11}}\sum_{j\in \mathcal{I}_{gs}} w_j^2\left(\widehat{\sigma}_{gs,l}^2(X_j)-\E\left[\left. \widehat{\sigma}_{gs,l}^2(X_j)\right|\G,\bs,\X\right] \right)\right\lvert \G,\bs,\X \right) \\
    &= Var\left(\left.\frac{1}{n_{11}}\sum_{j\in \mathcal{I}_{gs}} w_j^2\widehat{\sigma}_{gs,l}^2(X_j) \right\lvert \G,\bs,\X \right) \\
    &=\frac{K^2}{n_{11}^2(K+1)^2} Var\left(\left.\sum_{j\in \mathcal{I}_{gs}} w_j^2\left(\epsilon_{jl}^{gs}- \frac{1}{K}\sum_{m=1}^K\epsilon_{k_{m(j)}l}^{gs} +\psi_{K,j} \right)^2 \right\lvert \G,\bs,\X \right) \\
    &=\frac{K^2}{n_{11}^2(K+1)^2}\left( \sum_{j\in \mathcal{I}_{gs}}  Var\left(\left. w_j^2\left(\epsilon_{jl}^{gs}- \frac{1}{K}\sum_{m=1}^K\epsilon_{k_{m(j)}l}^{gs} +\psi_{K,j} \right)^2 \right\lvert \G,\bs,\X \right) \right. \\
    &+\left.\sum_{j\in \mathcal{I}_{gs}}\sum_{\substack{j'\in \mathcal{I}_{gs} \\j'\neq j}}Cov\left(\left.  w_j^2\left(\epsilon_{jl}^{gs}- \frac{1}{K}\sum_{m=1}^K\epsilon_{k_{m(j)}l}^{gs} +\psi_{K,j} \right)^2,  w_{j'}^2\left(\epsilon_{j'l}^{gs}- \frac{1}{K}\sum_{m=1}^K\epsilon_{k_{m(j')}l}^{gs} +\psi_{K,j'} \right)^2   \right\lvert \G,\bs,\X \right) \right) 
\end{align*}
 Now for the variance term,
 \begin{align*}
     &\sum_{j\in \mathcal{I}_{gs}}  Var\left(\left. w_j^2\left(\epsilon_{jl}^{gs}- \frac{1}{K}\sum_{m=1}^K\epsilon_{k_{m(j)}l}^{gs} +\psi_{K,j} \right)^2 \right\lvert \G,\bs,\X \right) \\
     &\leq  \sum_{j\in \mathcal{I}_{gs}} w_j^4\E\left[\left. \left(\epsilon_{jl}^{gs}- \frac{1}{K}\sum_{m=1}^K\epsilon_{k_{m(j)}l}^{gs} +\psi_{K,j} \right)^4 \right\lvert \G,\bs,\X \right] \\
     &\leq  27\sum_{j\in \mathcal{I}_{gs}} w_j^4 \left( \E\left[\left. \left(\epsilon_{jl}^{gs} \right)^4 \right\lvert \G,\bs,\X \right] + \E\left[\left. \left(\frac{1}{K}\sum_{m=1}^K\epsilon_{k_{m(j)}l}^{gs}  \right)^4 \right\lvert \G,\bs,\X \right] + \E\left[\left. \left(\psi_{K,j} \right)^4 \right\lvert \G,\bs,\X \right] \right) \\
     &\leq   27\sum_{j\in \mathcal{I}_{gs}} w_j^4 \left( \E\left[\left. \left(\epsilon_{jl}^{gs} \right)^4 \right\lvert \G,\bs,\X \right] +\frac{1}{K}\sum_{m=1}^K \E\left[\left. \left(\epsilon_{k_{m(j)}l}^{gs}  \right)^4 \right\lvert \G,\bs,\X \right] + \E\left[\left. \left(\psi_{K,j} \right)^4 \right\lvert \G,\bs,\X \right] \right)\\
     &\leq 27C \sum_{j\in \mathcal{I}_{gs}} w_j^4
 \end{align*}
 The third inequality follows from Jensen's inequality.
 \begin{align*}
     &\frac{K^2}{n_{11}^2(K+1)^2}\sum_{j\in \mathcal{I}_{gs}}  Var\left(\left. w_j^2\left(\epsilon_{jl}^{gs}- \frac{1}{K}\sum_{m=1}^K\epsilon_{k_{m(j)}l}^{gs} +\psi_{K,j} \right)^2 \right\lvert \G,\bs,\X \right) \\
     &\leq \frac{27n_{gs}C K^2}{n_{11}^2(K+1)^2} \frac{1}{n_{gs}} \sum_{j\in \mathcal{I}_{gs}} w_j^4 \\
     &=o_p(1)
 \end{align*}
 For the covariance term, by Cauchy Schwarz inequality and the derivation above
 \begin{align*}
     &\abs{Cov\left(\left.  w_j^2\left(\epsilon_{jl}^{gs}- \frac{1}{K}\sum_{m=1}^K\epsilon_{k_{m(j)}l}^{gs} +\psi_{K,j} \right)^2,  w_{j'}^2\left(\epsilon_{j'l}^{gs}- \frac{1}{K}\sum_{m=1}^K\epsilon_{k_{m(j')}l}^{gs} +\psi_{K,j'} \right)^2   \right\lvert \G,\bs,\X \right) }\\
     &\leq \sqrt{Var\left(\left. w_j^2\left(\epsilon_{jl}^{gs}- \frac{1}{K}\sum_{m=1}^K\epsilon_{k_{m(j)}l}^{gs} +\psi_{K,j} \right)^2 \right\lvert \G,\bs,\X \right) } \sqrt{Var\left(\left. w_{j'}^2\left(\epsilon_{j'l}^{gs}- \frac{1}{K}\sum_{m=1}^K\epsilon_{k_{m(j')}l}^{gs} +\psi_{K,j'} \right)^2 \right\lvert \G,\bs,\X \right) } \\
     &\leq Cw_j^2w_{j'}^2
 \end{align*}

On the other hand, recall $\mathcal K_K(j)$ and $\mathcal K_K(j')$ are the within-subgroup matched sets of unit $j$, and $j'$, define $\overline{\mathcal{K}}_K(j) = \{j\} \cup K_K(j) $, and $\overline{\mathcal{K}}_K(j') = \{j'\} \cup K_K(j') $ if $ \overline{\mathcal{K}}_K(j)\cap \overline{\mathcal{K}}_K(j') =\phi $, then notice that the two terms in the covariance are independent after conditioning on $(\G,\bs,\X)$, which implies that the covariance is none zero if and only if $\overline{\mathcal{K}}_K(j)\cap \overline{\mathcal{K}}_K(j') \neq\phi  $. Therefore,
\begin{align*}
    &\abs{ \sum_{j\in \mathcal{I}_{gs}}\sum_{\substack{j'\in \mathcal{I}_{gs} \\j'\neq j}}Cov\left(\left.  w_j^2\left(\epsilon_{jl}^{gs}- \frac{1}{K}\sum_{m=1}^K\epsilon_{k_{m(j)}l}^{gs} +\psi_{K,j} \right)^2,  w_{j'}^2\left(\epsilon_{j'l}^{gs}- \frac{1}{K}\sum_{m=1}^K\epsilon_{k_{m(j')}l}^{gs} +\psi_{K,j'} \right)^2   \right\lvert \G,\bs,\X \right) }\\
    &=\left\lvert\sum_{j\in \mathcal{I}_{gs}}\sum_{\substack{j'\in \mathcal{I}_{gs} \\j'\neq j}}Cov\left(\left.  w_j^2\left(\epsilon_{jl}^{gs}- \frac{1}{K}\sum_{m=1}^K\epsilon_{k_{m(j)}l}^{gs} +\psi_{K,j} \right)^2,  w_{j'}^2\left(\epsilon_{j'l}^{gs}- \frac{1}{K}\sum_{m=1}^K\epsilon_{k_{m(j')}l}^{gs} +\psi_{K,j'} \right)^2   \right\lvert \G,\bs,\X \right) \right.\\
    &\left.\1(\overline{\mathcal{K}}_K(j)\cap \overline{\mathcal{K}}_K(j') \neq\phi )\right\lvert \\
    &\leq \sum_{j\in \mathcal{I}_{gs}}\sum_{\substack{j'\in \mathcal{I}_{gs} \\j'\neq j}}\abs{Cov\left(\left.  w_j^2\left(\epsilon_{jl}^{gs}- \frac{1}{K}\sum_{m=1}^K\epsilon_{k_{m(j)}l}^{gs} +\psi_{K,j} \right)^2,  w_{j'}^2\left(\epsilon_{j'l}^{gs}- \frac{1}{K}\sum_{m=1}^K\epsilon_{k_{m(j')}l}^{gs} +\psi_{K,j'} \right)^2   \right\lvert \G,\bs,\X \right)} \\
    &\1(\overline{\mathcal{K}}_K(j)\cap \overline{\mathcal{K}}_K(j') \neq\phi ) \\
    &\leq C\sum_{j\in \mathcal{I}_{gs}}\sum_{\substack{j'\in \mathcal{I}_{gs} \\j'\neq j}} w_j^2w_{j'}^2\1(\overline{\mathcal{K}}_K(j)\cap \overline{\mathcal{K}}_K(j') \neq\phi  ) \\
    &\leq C\sum_{j\in \mathcal{I}_{gs}}\sum_{\substack{j'\in \mathcal{I}_{gs} \\j'\neq j}} \frac{w_j^4+w_{j'}^4}{2}\1(\overline{\mathcal{K}}_K(j)\cap \overline{\mathcal{K}}_K(j') \neq\phi ) \\
    &\leq \frac{C}{2} \sum_{j\in \mathcal{I}_{gs}}\sum_{\substack{j'\in \mathcal{I}_{gs} }} w_j^4\1(\overline{\mathcal{K}}_K(j)\cap \overline{\mathcal{K}}_K(j') \neq\phi ) + \frac{C}{2} \sum_{\substack{j'\in \mathcal{I}_{gs} }}\sum_{j\in \mathcal{I}_{gs}} w_{j'}^4\1(\overline{\mathcal{K}}_K(j')\cap \overline{\mathcal{K}}_K(j) \neq\phi ) \\
    &=C \sum_{j\in \mathcal{I}_{gs}}\sum_{\substack{j'\in \mathcal{I}_{gs} }} w_j^4\1(\overline{\mathcal{K}}_K(j)\cap \overline{\mathcal{K}}_K(j') \neq\phi ) \\
    &=C \sum_{j\in \mathcal{I}_{gs}}w_j^4 \sum_{\substack{j'\in \mathcal{I}_{gs} }} \1(\overline{\mathcal{K}}_K(j)\cap \overline{\mathcal{K}}_K(j') \neq\phi) \\
\end{align*}
   Notice that $\sum_{\substack{j'\in \mathcal{I}_{gs} }} \1(\overline{\mathcal{K}}_K(j)\cap \overline{\mathcal{K}}_K(j') \neq\phi) $ is the total number of the sets that has a nonempty intersection with $\overline{\mathcal{K}}_K(i)$. This is bounded by,
   \begin{align*}
       \sum_{\substack{j'\in \mathcal{I}_{gs} }} \1(\overline{\mathcal{K}}_K(j)\cap \overline{\mathcal{K}}_K(j') \neq\phi) \leq \sum_{r\in \overline{\mathcal{K}}_K(j) } K_K(r) \leq \sum_{r\in \overline{\mathcal{K}}_K(j) } \max_{r\in \mathcal{I}_{gs}} K_K(r) = (J+1) \max_{r\in \mathcal{I}_{gs}} K_K(r)\leq (K+1)K\gamma_q
   \end{align*}
   The last line follows from stone's lemma. \\
   Therefore,
    \begin{align*}
     &\frac{K^2}{n_{11}^2(K+1)^2} \abs{\sum_{j\in \mathcal{I}_{gs}}\sum_{\substack{j'\in \mathcal{I}_{gs} \\j'\neq j}}Cov\left(\left.  w_j^2\left(\epsilon_{jl}^{gs}- \frac{1}{K}\sum_{m=1}^K\epsilon_{k_{m(j)}l}^{gs} +\psi_{K,j} \right)^2,  w_{j'}^2\left(\epsilon_{j'l}^{gs}- \frac{1}{K}\sum_{m=1}^K\epsilon_{k_{m(j')}l}^{gs} +\psi_{K,j'} \right)^2   \right\lvert \G,\bs,\X \right)} \\
     &\leq  \frac{CK^2}{n_{11}^2(K+1)^2}  \sum_{j\in \mathcal{I}_{gs}}w_j^4 \sum_{\substack{j'\in \mathcal{I}_{gs} }} \1(\overline{\mathcal{K}}_K(j)\cap \overline{\mathcal{K}}_K(j') \neq\phi) \\
     &\leq \frac{C\gamma_dK^3n_{gs}}{n_{11}^2(K+1)} \frac{1}{n_{gs}} \sum_{j\in \mathcal{I}_{gs}}w_j^4\\
     &=o_p(1)
    \end{align*}
 So,  
   \begin{align*}
       Var\left(\left.\frac{1}{n_{11}}\sum_{j\in \mathcal{I}_{gs}} w_j^2\left(\widehat{\sigma}_{gs,l}^2(X_j)-\E\left[\left. \widehat{\sigma}_{gs,l}^2(X_j)\right|\G,\bs,\X\right] \right)\right\lvert \G,\bs,\X \right) =o(1)
   \end{align*}
   By lemma \ref{lemma:conditionalchebyshev}, 
   \begin{align*}
       \frac{1}{n_{11}}\sum_{j\in \mathcal{I}_{gs}} w_j^2\left(\widehat{\sigma}_{gs,l}^2(X_j)-\E\left[\left. \widehat{\sigma}_{gs,l}^2(X_j)\right|\G,\bs,\X\right] \right) = o_p(1)
   \end{align*}
   Therefore, 
   \begin{align*}
       & \frac{1}{n_{11}}\sum_{j\in \mathcal{I}_{gs}} w_j^2\widehat{\sigma}_{gs,l}^2(X_j)-\frac{1}{n_{11}}\sum_{j\in \mathcal{I}_{gs}} w_j^2\sigma_{gs,l}^2(X_j) \\
       &=\frac{1}{n_{11}}\sum_{j\in \mathcal{I}_{gs}} w_j^2\left(\widehat{\sigma}_{gs,l}^2(X_j)-\sigma_{gs,l}^2(X_j)\right)\\
       &=\frac{1}{n_{11}}\sum_{j\in \mathcal{I}_{gs}} w_j^2\left(\widehat{\sigma}_{gs,l}^2(X_j)-\E\left[\left. \widehat{\sigma}_{gs,l}^2(X_j)\right|\G,\bs,\X\right] + \E\left[\left. \widehat{\sigma}_{gs,l}^2(X_j)\right|\G,\bs,\X\right] - \sigma_{gs,l}^2(X_j)\right)\\
       &=\frac{1}{n_{11}}\sum_{j\in \mathcal{I}_{gs}} w_j^2\left(\widehat{\sigma}_{gs,l}^2(X_j)-\E\left[\left. \widehat{\sigma}_{gs,l}^2(X_j)\right|\G,\bs,\X\right]\right)+\frac{1}{n_{11}}\sum_{j\in \mathcal{I}_{gs}} w_j^2\left(  \E\left[\left. \widehat{\sigma}_{gs,l}^2(X_j)\right|\G,\bs,\X\right] - \sigma_{gs,l}^2(X_j)\right) \\
       &=o_p(1)
   \end{align*}
   Lastly, denote $\widetilde w_j^2=\sum_{i\in \mathcal{I}_{11}}w_{ij}^2$, so
    \begin{align*}
      \widetilde w_{j}^2=\sum_{i\in \mathcal{I}_{11}}w_{ij}^2 \leq \left(\sum_{i\in \mathcal{I}_{11}}w_{ij} \right)^2=w_j^2
   \end{align*}
   This implies that, 
   \begin{align*}
       \frac{1}{n_{gs}}\sum_{j\in \mathcal{I}_{gs}} \widetilde w_j^2 = O(1), \quad \frac{1}{n_{gs}}\sum_{j\in \mathcal{I}_{gs}} \widetilde w_j^4 = O(1)
   \end{align*}
 Also notice that
   \begin{align*}
        \frac{1}{n_{11}}\sum_{i\in \mathcal{I}_{11}}\sum_{j\in\mathcal{I}_{gs}}w_{ij}^2\widehat\sigma^2_{gs,l}(X_j)&=\frac{1}{n_{11}}\sum_{j\in\mathcal{I}_{gs}} \sum_{i\in \mathcal{I}_{11}}  w_{ij}^2\widehat\sigma^2_{gs,l}(X_j) =\frac{1}{n_{11}}\sum_{j\in\mathcal{I}_{gs}}\widetilde w_j^2\widehat\sigma^2_{gs,l}(X_j) \\
        \frac{1}{n_{11}}\sum_{i\in \mathcal{I}_{11}}\sum_{j\in\mathcal{I}_{gs}}w_{ij}^2\sigma^2_{gs,l}(X_j)&=\frac{1}{n_{11}}\sum_{j\in\mathcal{I}_{gs}} \sum_{i\in \mathcal{I}_{11}}  w_{ij}^2\sigma^2_{gs,l}(X_j) =\frac{1}{n_{11}}\sum_{j\in\mathcal{I}_{gs}}\widetilde w_j^2\sigma^2_{gs,l}(X_j)
        \end{align*}
   Then following the same method as above, we could show that,
   \begin{align*}
       \frac{1}{n_{11}}\sum_{i\in \mathcal{I}_{11}}\sum_{j\in\mathcal{I}_{gs}}w_{ij}^2\widehat\sigma^2_{gs,l}(X_j) -\frac{1}{n_{11}}\sum_{i\in \mathcal{I}_{11}}\sum_{j\in\mathcal{I}_{gs}}w_{ij}^2\sigma^2_{gs,l}(X_j) =o_p(1)
   \end{align*}
   This implies that,
   \begin{align*}
        \widehat{V}_{E}&=\frac{1}{n_{11}}\sum_{j\in  \mathcal{I}_{01}} \left(  w_j^2-\widetilde w_j^2 \right)\widehat{\sigma}_{01,l}^2(X_j) + \frac{1}{n_{11}}\sum_{j\in  \mathcal{I}_{10}} \left(  w_j^2-\widetilde w_j^2 \right)\widehat{\sigma}_{10,l}^2(X_j) + \frac{1}{n_{11}}\sum_{j\in  \mathcal{I}_{00}} \left(  w_j^2-\widetilde w_j^2 \right)\widehat{\sigma}_{00,l}^2(X_j) \\
        &=\frac{1}{n_{11}}\sum_{j\in  \mathcal{I}_{01}} \left(  w_j^2-\widetilde w_j^2 \right)\sigma_{01,l}^2(X_j) + \frac{1}{n_{11}}\sum_{j\in  \mathcal{I}_{10}} \left(  w_j^2-\widetilde w_j^2 \right)\sigma_{10,l}^2(X_j) + \frac{1}{n_{11}}\sum_{j\in  \mathcal{I}_{00}} \left(  w_j^2-\widetilde w_j^2 \right)\sigma_{00,l}^2(X_j) +o_p(1)
   \end{align*}
   Combining all the results above,
   \begin{align*}
      &\widehat{V}_{\tau(X)}+ \widehat{V}_{E}\\
      &=\frac{1}{n_{11}}\sum_{i\in \mathcal{I}_{11}}\left(\Delta Y_{il}-  \widehat {\Delta Y_{il}^{01}}-\widehat {\Delta Y_{il}^{10}} + \widehat {\Delta Y_{il}^{00}}-\tau_l \right)^2 +o_p(1) \\
    &=V_{\tau(X)}+\frac{1}{n_{11}}\sum_{i\in \mathcal{I}_{11}} \sigma^2_{11,l}(X_i) +\frac{1}{n_{11}} \sum_{j\in \mathcal{I}_{01}}\widetilde w_{j}^2\sigma_{01,l}^2(X_j) +\frac{1}{n_{11}}\sum_{j\in \mathcal{I}_{10}}\widetilde w_{j}^2\sigma_{10,l}^2(X_j) +\frac{1}{n_{11}} \sum_{j\in \mathcal{I}_{00}}\widetilde w_{j}^2\sigma_{00,l}^2(X_j) \\
    &+\frac{1}{n_{11}}\sum_{j\in  \mathcal{I}_{01}} \left(  w_j^2-\widetilde w_j^2 \right)\sigma_{01,l}^2(X_j) + \frac{1}{n_{11}}\sum_{j\in  \mathcal{I}_{10}} \left(  w_j^2-\widetilde w_j^2 \right)\sigma_{10,l}^2(X_j) + \frac{1}{n_{11}}\sum_{j\in  \mathcal{I}_{00}} \left(  w_j^2-\widetilde w_j^2 \right)\sigma_{00,l}^2(X_j) +o_p(1)\\
    &=V_{\tau(X)}+V_{E}+o_p(1)
   \end{align*}
   This completes the proof.
\end{proof}

\begin{proof}
    Proof of theorem \ref{theorem:biascorrection}.\\
    For any $(g,s)\in \{(0,1),(1,0),(0,0) \}$,
    \begin{align*}
        &\abs{\frac{1}{n_{11}}\isum\left[\1(i\in \mathcal{I}_{11})-\1(i\in \mathcal{I}_{gs})w_i\right]\left(\mu_l^{gs}(X_i)-\widehat\mu_l^{gs}(X_i) \right)}  \\
        &=\abs{\frac{1}{n_{11}}\sum_{i\in \mathcal{I}_{11}}\sum_{j\in \mathcal{I}_{gs}}w_{ij}\left[\mu_{l}^{gs}(X_i)-\mu_{l}^{gs}(X_j)-\left(\widehat\mu_{l}^{gs}(X_i)-\widehat\mu_{l}^{gs}(X_j) \right) \right]}\\
        &\leq \frac{1}{n_{11}}\sum_{i\in \mathcal{I}_{11}}\sum_{j\in \mathcal{I}_{gs}}w_{ij}\abs{\mu_{l}^{gs}(X_i)-\mu_{l}^{gs}(X_j)-\left(\widehat\mu_{l}^{gs}(X_i)-\widehat\mu_{l}^{gs}(X_j) \right) }
    \end{align*}
    The first equality follows from the same logic as the proof of lemma \ref{lemma:equivalentestimator}.\\
    By Taylor expansion to $k-1$-th order,
    \begin{align*}
        \abs{\mu_{l}^{gs}(X_j)-\mu_{l}^{gs}(X_i)-\sum_{r=1}^{p-1}\frac{1}{r!}\sum_{m\in \Lambda_r}\partial^m\mu^{gs}_l(X_i)(X_j-X_i)^m}&\leq \max_{m\in\Lambda_p}\norm{\partial^m\mu^{gs}_l(x)}_{\infty}\frac{1}{p!}\sum_{m\in \Lambda_p}\norm{X_j-X_i}^p \\
        &= O(1)\norm{X_j-X_i}^p
    \end{align*}
    Similarly,
    \begin{align*}
        \abs{\widehat\mu_{l}^{gs}(X_j)-\widehat\mu_{l}^{gs}(X_i)-\sum_{r=1}^{p-1}\frac{1}{r!}\sum_{m\in \Lambda_r}\partial^m\widehat\mu^{gs}_l(X_i)(X_j-X_i)^m}&\leq \max_{m\in\Lambda_p}\norm{\partial^m\widehat\mu^{gs}_l(x)}_{\infty}\frac{1}{p!}\sum_{m\in \Lambda_p}\norm{X_j-X_i}^p \\
        &= O_p(1)\norm{X_j-X_i}^p
    \end{align*}
    Lastly,
    \begin{align*}
     &\abs{   \sum_{r=1}^{p-1}\frac{1}{r!}\sum_{m\in \Lambda_r}\left(\partial^m\widehat\mu^{gs}_l(X_i)-\partial^m\mu^{gs}_l(X_i) \right)(X_j-X_i)^m} \\
     &\leq \sum_{r=1}^{p-1} \max_{t\in \Lambda_r}\norm{\partial^{m}\widehat{\mu}^{gs}(x)-\partial^{m}\mu^{gs}(x)}_{\infty}\frac{1}{r!}\sum_{m\in \Lambda_r}\norm{X_j-X_i}^r \\
     &\leq \sum_{r=1}^{p-1} O_p(-\gamma_r)\norm{X_j-X_i}^r 
    \end{align*}
    Now,
    \begin{align*}
        &\frac{1}{n_{11}}\sum_{i\in \mathcal{I}_{11}}\sum_{j\in \mathcal{I}_{gs}}w_{ij}\abs{\mu_{l}^{gs}(X_i)-\mu_{l}^{gs}(X_j)-\left(\widehat\mu_{l}^{gs}(X_i)-\widehat\mu_{l}^{gs}(X_j) \right) }\\
        &\leq \frac{1}{n_{11}}\sum_{i\in \mathcal{I}_{11}}\sum_{j\in \mathcal{I}_{gs}}w_{ij}\abs{ \mu_{l}^{gs}(X_j)-\mu_{l}^{gs}(X_i)-\sum_{r=1}^{p-1}\frac{1}{r!}\sum_{m\in \Lambda_r}\partial^m\mu^{gs}_l(X_i)(X_j-X_i)^m }\\
        &+ \frac{1}{n_{11}}\sum_{i\in \mathcal{I}_{11}}\sum_{j\in \mathcal{I}_{gs}}w_{ij}\abs{\widehat\mu_{l}^{gs}(X_j)-\widehat\mu_{l}^{gs}(X_i)-\sum_{r=1}^{p-1}\frac{1}{r!}\sum_{m\in \Lambda_r}\partial^m\widehat\mu^{gs}_l(X_i)(X_j-X_i)^m } \\
        &+\frac{1}{n_{11}}\sum_{i\in \mathcal{I}_{11}}\sum_{j\in \mathcal{I}_{gs}}w_{ij} \abs{   \sum_{r=1}^{p-1}\frac{1}{r!}\sum_{m\in \Lambda_r}\left(\partial^m\widehat\mu^{gs}_l(X_i)-\partial^m\mu^{gs}_l(X_i) \right)(X_j-X_i)^m} \\
        &\leq O_p(1)\frac{1}{n_{11}}\sum_{i\in \mathcal{I}_{11}}\sum_{j\in \mathcal{I}_{gs}}w_{ij} \norm{X_j-X_i}^p +O_p(1) \frac{1}{n_{11}}\sum_{i\in \mathcal{I}_{11}}\sum_{j\in \mathcal{I}_{gs}}w_{ij} \norm{X_j-X_i}^p \\
        &+\sum_{r=1}^{p-1} O_p(-\gamma_r) \left( \frac{1}{n_{11}}\sum_{i\in \mathcal{I}_{11}}\sum_{j\in \mathcal{I}_{gs}}w_{ij}\norm{X_j-X_i}^r \right)\\
        &=o_p(n^{-1/2})+o_p(n^{-1/2})+\sum_{r=1}^{p-1} O_p(-\gamma_r)o_p(n^{-1/2}+\gamma_r)\\
        &=o_p(n^{-1/2})
    \end{align*}
    The first equality follows from assumption \ref{assumption:biascorrection} and the similar argument as the proof of lemma \ref{lemma:weightproperty}(iv).\\
    Therefore,
    \begin{align*}
        \abs{B-\widehat{B}}&\leq \abs{\frac{1}{n_{11}}\isum\left[\1(i\in \mathcal{I}_{11})-\1(i\in \mathcal{I}_{gs})w_i\right]\left(\mu_l^{01}(X_i)-\widehat\mu_l^{01}(X_i) \right)} \\
        &+\abs{\frac{1}{n_{11}}\isum\left[\1(i\in \mathcal{I}_{11})-\1(i\in \mathcal{I}_{gs})w_i\right]\left(\mu_l^{10}(X_i)-\widehat\mu_l^{10}(X_i) \right)} \\
        &+\abs{\frac{1}{n_{11}}\isum\left[\1(i\in \mathcal{I}_{11})-\1(i\in \mathcal{I}_{gs})w_i\right]\left(\mu_l^{00}(X_i)-\widehat\mu_l^{00}(X_i) \right)} \\
        &=o_p(n^{-1/2})
    \end{align*}
    By Slutsky theorem, this shows that $\sqrt{n_{11}}(B-\widehat{B})\xrightarrow{p} 0 $.\\
    Lastly,
    \begin{align*}
         &(V_{\tau(X)}+V_{E})^{-1/2} \sqrt{n_{11}}\left(\widehat{\tau}_l-\tau_l-\widehat B \right)\\
        & = (V_{\tau(X)}+V_{E})^{-1/2} \sqrt{n_{11}}\left(\widehat{\tau}_l-\tau_l- B \right) + (V_{\tau(X)}+V_{E})^{-1/2} \sqrt{n_{11}}(B-\widehat{B}) \\
        &\xrightarrow{d}\mathcal{N}(0,1)
    \end{align*}
    The convergence result follows by the argument above and theorem \ref{theorem:asymptotic}. This completes the proof.
\end{proof}

\section{Proof of Lemmas}\label{Appendix:lemmaproofs}
\begin{proof}
    Proof of lemma \ref{lemma:covariatetrend}. \\
    To prove (I), notice that the covariate trend 3WFE specification can be written as:
        \begin{align*}
    &Y_{it}
    = \sum_{g,s} \alpha_{gs}\1(G_i=g,S_i=s)
    + \sum_{g} \alpha_{gt}\1(G_i=g)
    + \sum_{s} \gamma_{st}\1(S_i=s)
    + \sum_{l\neq -1}
        \tau_l^{CT} D_i\1(t-t^*=l)
    +   X_i'\theta_t
    + u_{it}\\
 & \implies \Delta Y_{il}=(\alpha_{0,t^*+l}-\alpha_{0,t^*-1} + \gamma_{0,t^*+l}-\gamma_{0,t^*-1}) +(\alpha_{0,t^*+l}-\alpha_{0,t^*-1})G_i +(\gamma_{0,t^*+l}-\gamma_{0,t^*-1})S_i+\tau_l^{CT} D_i\\
 &+X_i(\theta_{t^*+l}-\theta_{t^*-1})+(u_{i,t^*+l}-u_{i,t^*-1})
\end{align*}
   So $\widehat \tau_l^{CT}$ is the coefficient of $D$ of the population linear projection of $\Delta Y_l$ on $1,G,S,D,X$. Apply the FWL theorem, we have:
   \begin{align*}
       \tau_l^{CT}&=\frac{\E[(D-L( D|G,S,X))\Delta Y_l]}{\E[(D-L( D|G,S,X))^2]  }\\
       &=\E\left[\left.\frac{p_{11}\left(1-L(D|G,S,X)\right)}{\E[(D-L(D|G,S,X))^2]} \Delta Y_l \right| G=1,S=1\right]-\E\left[\left.\frac{p_{01}L(D|G,S,X)}{\E[(D-L(D|G,S,X))^2]} \Delta Y_l \right| G=0,S=1\right] \\
         &-\E\left[\left.\frac{p_{10}L(D|G,S,X)}{\E[(D-L(D|G,S,X))^2]} \Delta Y_l \right| G=1,S=0\right] +\E\left[\left.\frac{-p_{00}L(D|G,S,X)}{\E[(D-L(D|G,S,X))^2]} \Delta Y_l \right| G=0,S=0\right]
   \end{align*}
   The second equality follows from law of iterated expectation.\\
   To prove (II), we need to prove two cases. We first consider the case where the conditional mean is linear with a common slope. 
   From the proof above,
\begin{align*}
    \tau_l^{CT}&=\frac{\E[(D-L( D|G,S,X))\Delta Y_l]}{\E[(D-L( D|G,S,X))^2]  }
\end{align*}
First consider the numerator,
\begin{align*}
    &\E[(D-L( D|G,S,X))\Delta Y_l]\\
    &=\E[(D-L( D|G,S,X))(\Delta Y_l-L_0(\Delta Y_l|G,S,X))]  \\
    &= \E[D(D-L( D|G,S,X))(\Delta Y_l-L_0(\Delta Y_l|G,S,X))]\\
    &+ \E[(1-D)(D-L( D|G,S,X))(\Delta Y_l-L_0(\Delta Y_l|G,S,X))] \\
    &=\E[D(D-L( D|G,S,X))(\Delta Y_l-L_0(\Delta Y_l|G,S,X))]\\
    &-\E[L( D|G,S,X))(\Delta Y_l-L_0(\Delta Y_l|G,S,X))|D=0](1-p_{11})\\
    &=\E[D(D-L( D|G,S,X))(\Delta Y_l-L_0(\Delta Y_l|G,S,X))] \\
    &=\E[(1-L( D|1,1,X))(\Delta Y_l-L_0(\Delta Y_l|1,1,X))|D=1]p_{11}
\end{align*}
The first equality follows from the fact that $\E[(D-L( D|G,S,X))L_0(\Delta Y_l|G,S,X)]=0$. The fourth equality follows from the fact that $\E[L( D|G,S,X))(\Delta Y_l-L_0(\Delta Y_l|G,S,X)|D=0]=0$.\\
Define $\widetilde w_{11}(X)=\frac{p_{11}(1-L(1,1,X))}{\kappa}$, $\tau_l(X)=\E[Y_{t^*+l}(1)-Y_{t^*+l}(0)|D=1,X]$, then
\begin{align*}
    \tau_l^{CT}&=\E[\widetilde w_{11}(X)(\Delta Y_l-L_0(\Delta Y_l|1,1,X))|D=1 ]\\
    &=\E[\widetilde w_{11}(X)(\E[\Delta Y_l|D=1,X]-L_0(\Delta Y_l|1,1,X))|D=1 ]\\
    &=\E[\widetilde w_{11}(X)(\tau_l(X)+\E[\Delta Y_l(0)|D=1,X]-L_0(\Delta Y_l|1,1,X))|D=1 ]\\
    &=\E[\widetilde w_{11}(X)\tau_l(X)|D=1]-\E[\widetilde w_{11}(X)L_0(\Delta Y_l|1,1,X))|D=1] \\
    &+\E[\widetilde w_{11}(X)(\E[\Delta Y_l(0)|D=1,X]-\E[\Delta Y_l(0)|G=0,S=1,X]\\
    &-\E[\Delta Y_l(0)|G=1,S=0,X]+\E[\Delta Y_l(0)|G=1,S=0,X]|D=1] \\
    &+\E[\widetilde w_{11}(X)(\E[\Delta Y_l(0)|G=0,S=1,X]+\E[\Delta Y_l(0)|G=1,S=0,X]-\E[\Delta Y_l(0)|G=1,S=0,X]|D=1] \\
    &=\E[\widetilde w_{11}(X)\tau_l(X)|D=1] \\
    &+\E[\widetilde w_{11}(X)(\E[\Delta Y_l(0)|G=0,S=1,X]-L_0(\Delta Y|0,1,X))]\\
    &+\E[\widetilde w_{11}(X)(\E[\Delta Y_l(0)|G=1,S=0,X]-L_0(\Delta Y|1,0,X))]\\
    &-\E[\widetilde w_{11}(X)(\E[\Delta Y_l(0)|G=0,S=0,X]-L_0(\Delta Y|0,0,X))]
\end{align*}
The last equality follows from the conditional parallel gaps and the fact that $L_0(\Delta Y|1,1,X)=L_0(\Delta Y|0,1,X)+L_0(\Delta Y|1,0,X)-L_0(\Delta Y|0,0,X) $.\\
Under the assumption (i), the treatment effect is homogeneous with respect to $X$ and $\E[\Delta Y_l|G=g,S=s,X]=L_0(\Delta Y_l|g,s,X)$, it follows directly from the last equation above that $\tau_l^{CT}=\tau_l$.\\
We next consider the case where the generalized propensity scores takes the inverse linear form.
\begin{align*}
    &\E\!\left[\frac{p_{11}\left(1-L(D\mid G,S,X)\right)}{\kappa}\, \Delta Y_l \;\middle|\; G=1,S=1\right]\\
    &=\E\!\left[\frac{p_{11}\left(1-L(D\mid G,S,X)\right)}{\kappa}\, \E[\Delta Y_l|G=1,S=1,X] \;\middle|\; G=1,S=1\right] \\
    &=\E\!\left[\frac{p_{11}\left(1-L(D\mid G,S,X)\right)}{\kappa}\, \left(\tau_l(X) +\E[\Delta Y_l(0)|G=1,S=1,X] \right) \;\middle|\; G=1,S=1\right] \\
    &=\E\!\left[\frac{p_{11}\left(1-L(D\mid G,S,X)\right)}{\kappa}\, \tau_l(X) \;\middle|\; G=1,S=1\right] \\
    &+\E\!\left[\frac{p_{11}\left(1-L(D\mid G,S,X)\right)}{\kappa}\, \E[\Delta Y_l(0)|G=0,S=1,X]  \;\middle|\; G=1,S=1\right] \\
    &+\E\!\left[\frac{p_{11}\left(1-L(D\mid G,S,X)\right)}{\kappa}\, \E[\Delta Y_l(0)|G=1,S=0,X]  \;\middle|\; G=1,S=1\right] \\
    &-\E\!\left[\frac{p_{11}\left(1-L(D\mid G,S,X)\right)}{\kappa}\, \E[\Delta Y_l(0)|G=0,S=0,X]  \;\middle|\; G=1,S=1\right] \allowdisplaybreaks \\
    &=\E\!\left[\frac{p_{11}\left(1-L(D\mid 1,1,X)\right)}{\kappa}\, \tau_l(X) \;\middle|\; G=1,S=1\right] \\
    &+\frac{1}{\kappa}\E\!\left[e_{11}(X)\left(1-L(D\mid 1,1,X)\right)\, \E[\Delta Y_l(0)|G=0,S=1,X]  \right] \\
    &+\frac{1}{\kappa}\E\!\left[e_{11}(X)\left(1-L(D\mid 1,1,X)\right)\, \E[\Delta Y_l(0)|G=1,S=0,X]  \right] \\
    &-\frac{1}{\kappa}\E\!\left[e_{11}(X)\left(1-L(D\mid 1,1,X)\right)\, \E[\Delta Y_l(0)|G=0,S=0,X]  \right] 
\end{align*}
    The third equality follows from the conditional parallel gaps assumption. The last equality follows from the argument below,
    \begin{align*}
        &\E\!\left[\frac{p_{11}\left(1-L(D\mid G,S,X)\right)}{\kappa}\, \E[\Delta Y_l(0)|G=0,S=1,X]  \;\middle|\; G=1,S=1\right]  \\
        &=\frac{1}{\kappa}\E\!\left[\1(G=1,S=1)\left(1-L(D\mid 1,1,X)\right)\, \E[\Delta Y_l(0)|G=g,S=s,X]  \right] \\
        &=\frac{1}{\kappa}\E\!\left[e_{11}(X)\left(1-L(D\mid 1,1,X)\right)\, \E[\Delta Y_l(0)|G=g,S=s,X]  \right]
    \end{align*}
    On the other hand, 
    \begin{align*}
        &\E\!\left[\frac{p_{gs}\,L(D\mid G,S,X)}{\kappa}\, \Delta Y_l \;\middle|\; G=g,S=s\right]\\
        &=\E\!\left[\frac{p_{gs}\,L(D\mid g,s,X)}{\kappa}\, \E[ \Delta Y_l|G=g,S=s,X] \;\middle|\; G=g,S=s\right] \\
        &=\frac{1}{\kappa} \E\!\left[\1(G=g,S=s)L(D\mid g,s,X)\, \E[ \Delta Y_l|G=g,S=s,X] \right]\\
        &=\frac{1}{\kappa} \E\!\left[e_{gs}(X)L(D\mid g,s,X)\, \E[ \Delta Y_l|G=g,S=s,X] \right]
    \end{align*}
    Therefore,
    \begin{align*}
        \tau_l^{CT} &=\E\!\left[\frac{p_{11}\left(1-L(D\mid 1,1,X)\right)}{\kappa}\, \tau_l(X) \;\middle|\; G=1,S=1\right]\\
        &+\frac{1}{\kappa}\E\!\left[\left[e_{11}(X)\left(1-L(D\mid 1,1,X)\right)-e_{01}(X)L(D\mid 0,1,X)\right]\, \E[\Delta Y_l(0)|G=0,S=1,X]  \right] \\
        &+\frac{1}{\kappa}\E\!\left[\left[e_{11}(X)\left(1-L(D\mid 1,1,X)\right)-e_{10}(X)L(D\mid 1,0,X)\right]\, \E[\Delta Y_l(0)|G=1,S=0,X]  \right] \\
        &-\frac{1}{\kappa}\E\!\left[\left[e_{11}(X)\left(1-L(D\mid 1,1,X)\right)+e_{00}(X)L(D\mid 0,0,X)\right]\, \E[\Delta Y_l(0)|G=0,S=0,X]  \right]
    \end{align*}
    Under the assumption (ii), we have:
    \begin{align*}
        &\E\!\left[\frac{p_{11}\left(1-L(D\mid 1,1,X)\right)}{\kappa}\, \tau_l(X) \;\middle|\; G=1,S=1\right] =  \E\!\left[\frac{p_{11}\left(1-L(D\mid 1,1,X)\right)}{\kappa}\, \tau_l \;\middle|\; G=1,S=1\right] =\tau_l \\
    \end{align*}
    In addition, for $\{g,s\}\in\{(0,1),(1,0),(0,0)\}$
    \begin{align*}
    &e_{11}(X)\left(1-L(D\mid 1,1,X)\right)+(-1)^{g+s}e_{gs}(X)L(D\mid g,s,X) \\
     &=\frac{[1-L(D|1,1,X)]^{-1}}{\sum_{g=0}^1\sum_{s=0}^1[gs-(-1)^{g+s}L(D|g,s,x)]^{-1}}\left(1-L(D\mid 1,1,X)\right) \\
     &-\frac{[(-1)^{g+s}L(D\mid g,s,X)]^{-1}}{\sum_{g=0}^1\sum_{s=0}^1[gs-(-1)^{g+s}L(D|g,s,x)]^{-1}}(-1)^{g+s}L(D\mid g,s,X)\\
    &=0
    \end{align*}
    This implies that $\tau_l^{CT}=\tau_l$.
\end{proof}

\begin{proof}
    Proof of lemma \ref{lemma:covariatelevel}. \\
    To prove (I), notice that the covariate-level 3WFE specification can be written as:
        \begin{align*}
    &Y_{it}
    = \sum_{g,s} \alpha_{gs}\1(G_i=g,S_i=s)
    + \sum_{g} \alpha_{gt}\1(G_i=g)
    + \sum_{s} \gamma_{st}\1(S_i=s)
    + \sum_{l\neq -1}
        \tau_l^{CL} D_i\1(t-t^*=l)
    +  X_i'\theta
    + u_{it}\\
 & \implies \Delta Y_{il}=(\alpha_{0,t^*+l-1}-\alpha_{0,t^*} + \gamma_{0,t^*+l-1}-\gamma_{0,t^*}) +(\alpha_{0,t^*+l-1}-\alpha_{0,t^*})G_i +(\gamma_{0,t^*+l-1}-\gamma_{0,t^*})S_i+\tau_l D_i\\
 &+(u_{i,t^*+l-1}-u_{i,t^*})
\end{align*}
   So  $\tau_l^{CL}$ is the coefficient of $D$ of the population linear projection of  $\Delta Y_l$ on $1,G,S,D$, which is a fully saturated regression. Therefore, it is straight forward to solve that,
   \begin{align*}
        \tau_l^{CL} &=\E[\Delta Y_l | G=1,S=1] -\E[\Delta Y_l | G=0,S=1]- \E[\Delta Y_l | G=1,S=0] + \E[\Delta Y_l | G=0,S=0] 
   \end{align*}
   To prove (II) 
   \begin{align*}
       \tau_l^{CL} &=\E[\Delta Y_l | G=1,S=1] -\E[\Delta Y_l | G=0,S=1]- \E[\Delta Y_l | G=1,S=0] + \E[\Delta Y_l | G=0,S=0] \\
       &=\tau_l + \E[\Delta Y_l(0) | G=1,S=1] -\E[\Delta Y_l(0) | G=0,S=1] \\
       &- \E[\Delta Y_l(0) | G=1,S=0] + \E[\Delta Y_l(0) | G=0,S=0]
   \end{align*}
   Then it follows that $ \tau_l^{CL}= \tau_l $ if and only if the parallel gaps holds unconditionally, in other words,
    \begin{align*}
        &\E[\Delta Y_l(0)\mid G=1,\ S=1] - \E[\Delta Y_l(0)\mid G=0,\ S=1]\\
     &=   \E[\Delta Y_l(0)\mid G=1,\ S=0] - \E[\Delta Y_l(0)\mid G=0,\ S=0]
    \end{align*}
\end{proof}

\begin{proof}
    Proof of lemma \ref{lemma:matchingprocedure}.\\
    Proof of part (a): By construction of the nearest neighbor matching algorithm, $w_{ij}^{fix}$ and $w_j^{fix}$ satisfy assumption \ref{assumption:matching} (i). 
    To show (ii), conditional on the sample size $n_{g's'},$ notice that under nearest neighbor matching, 
    \begin{align*}
    &\mathbb E\!\left[\sum_{j\in\mathcal I_{gs}} w_{ij}\left\lVert X_j-X_i\right\rVert\,\middle|\, i\in\mathcal I_{11}\right]\\
    &=\mathbb E\!\left[\sum_{m=1}^M \frac{1}{M}\left\lVert X_i-X_{j_m(i,g,s)}\right\rVert\,\middle|\, i\in\mathcal I_{11}\right] \\
    &\leq \mathbb E\!\left[\left\lVert X_i-X_{j_M(i,g,s)}\right\rVert\,\middle|\, i\in\mathcal I_{11}\right] \\
    &=n_{gs}^{-1/q}\mathbb E\!\left[n_{gs}^{1/q}\left\lVert X_i-X_{j_M(i,g,s)}\right\rVert\,\middle|\, i\in\mathcal I_{11}\right] \\
    &\leq n_{gs}^{-1/q}C\\
    &=o(1)
    \end{align*}
    The last inequality follows lemma 2 in \cite{Abadie-Imbens_2006_ECMA}.\\
    To show (iii), we can also conduct the other pairwise matching by reversing the role of reference group and the comparison group in the original pairwise matching. Then each unit $j\in \mathcal{I}_{11}\cup \mathcal{I}_{gs}$ gets a matching weight $w_j$, then under our assumption, we can invoke lemma 3(i) in \cite{Abadie-Imbens_2006_ECMA} to get that $\E[w_j^{4+\delta}|j\in \mathcal{I}_{11}\cup \mathcal{I}_{gs}]=O(1)$. Therefore,
    \begin{align*}
\E[w_j^{4+\delta}|j\in \mathcal{I}_{gs}]&=\E[w_j^{4+\delta}\1(j\in \mathcal{I}_{gs} )|i\in \mathcal{I}_{11}\cup \mathcal{I}_{gs}]\frac{p_{11}+p_{gs}}{p_{gs}} \\
&\leq \E[w_j^{4+\delta}|i\in \mathcal{I}_{11}\cup \mathcal{I}_{gs}]\frac{p_{11}+p_{gs}}{p_{gs}} \\
 &=O(1)     
    \end{align*}
    (iv) is shown in lemma \ref{lemma:reallocationbound}.
    
    Proof of part (b): By construction of the nearest neighbor matching algorithm, $w_{ij}^{diverge}$ and $w_j^{diverge}$ satisfy assumption \ref{assumption:matching} (i). \\
    To show (ii),  conditional on the sample size $n_{gs}$, we have:
     \begin{align*}
    &\mathbb E\!\left[\sum_{j\in\mathcal I_{gs}} w_{ij}\left\lVert X_j-X_i\right\rVert\,\middle|\, i\in\mathcal I_{11}\right]\\
    &=\mathbb E\!\left[\sum_{m=1}^M \frac{1}{M}\left\lVert X_i-X_{j_m(i,g,s)}\right\rVert\,\middle|\, i\in\mathcal I_{11}\right] \\
    &\leq \mathbb E\!\left[\left\lVert X_i-X_{j_M(i,g,s)}\right\rVert\,\middle|\, i\in\mathcal I_{11}\right] \\
    &\leq C\left(\frac{M}{n_{gs}} \right)^{1/d} \\
    &=o(1)
    \end{align*}
    The second inequality follows from lemma $C.2$ in \cite{Lin-Ding-Han_2023_ECMA}. The last line follows from assumption \ref{assumption:NNMdiverge}.\\
    To show (iii), we still conduct the other pairwise matching by reversing the role of reference group and the comparison group in the original pairwise matching. Then each unit $j\in \mathcal{I}_{11}\cup \mathcal{I}_{gs}$ gets a matching weight $w_j$. Conditional on the sample size $n_{11}$, and $n_{gs}$, by theorem $B.2$ in \cite{Lin-Ding-Han_2023_ECMA}, we have:
    \begin{align*}
        \E\left[\abs{\frac{n_{gs}}{n_{11}}w_j-\frac{f_{11}(X)}{f_{gs}(X)} }^{4+\delta} \right] \rightarrow 0
    \end{align*}
    Use the C-r inequality, we have:
    \begin{align*}
        \E\left[\left(\frac{n_{gs}}{n_{11}}w_j\right)^{4+\delta}  \right]&\leq 2^{3+\delta} \E\left[\abs{\frac{n_{gs}}{n_{11}}w_j-\frac{f_{11}(X)}{f_{gs}(X)} }^{4+\delta} \right] + \E\left[\left(\frac{f_{11}(X)}{f_{gs}(X)}\right)^{4+\delta} \right] \\
        &\leq o(1) + C \\
        &=O(1)
    \end{align*}
    The inequality follows from assumption \ref{assumption:regularity} (i). Therefore, 
    \begin{align*}
        \E[w_j ^{4+\delta}] =\left( \frac{n_{11}}{n_{gs}} \right)^{4+\delta}\E\left[\left(\frac{n_{gs}}{n_{11}}w_j\right)^{4+\delta}  \right]=O(1)
    \end{align*}
    Then by the similar argument as the proof in part (a), we can show that
    \begin{align*}
        \E[w_j ^{4+\delta}|j\in \mathcal{I}_{gs}]=O(1)
    \end{align*}
    (iv) is shown in lemma \ref{lemma:reallocationbound}.
    
    Proof of part (c): By construction of the nearest neighbor matching algorithm, $w_{ij}^{kernel}$ and $w_j^{kernel}$ satisfy assumption \ref{assumption:matching} (i). \\
    To prove (ii), notice that under assumption \ref{assumption:kernel} (i), if $\norm{\frac{X_j-X_i}{h}}>R$
    \begin{align*}
        w_{ij}^{kernel}\norm{X_j-X_i}=\frac{K\!\left(\frac{X_j - X_i}{h}\right)}
       {\sum_{l\in\mathcal I_{g's'}} K\!\left(\frac{X_l - X_i}{h}\right)}\norm{X_j-X_i}=0
    \end{align*}
    if $\norm{\frac{X_j-X_i}{h}}<R$
    \begin{align*}
        w_{ij}^{kernel}\norm{X_j-X_i}\leq w_{ij}^{kernel}Rh
    \end{align*}
    Therefore, 
    \begin{align*}
        \mathbb E\!\left[\sum_{j\in\mathcal I_{g's'}} w_{ij}\left\lVert X_j-X_i\right\rVert\,\middle|\, i\in\mathcal I_{gs}\right] &\leq Rh\cdot \mathbb E\!\left[\sum_{j\in\mathcal I_{g's'}} w_{ij}\middle|\, i\in\mathcal I_{gs}\right] = Rh\rightarrow 0
    \end{align*}
    The equality follows by the fact the pairwise weight $w_{ij}^{kernel}$ satisfies assumption \ref{assumption:matching}(i) and the convergence result follows by assumption \ref{assumption:kernel}.\\
    To prove (iii), it suffices to show that  
    \[\E\!\left[w_j^5\mid j\in\Ical_{g's'}\right]=O(1)\]
    Conditional on the subgroup sample size $n_{gs}$, $n_{g's'}$, and $j\in \mathcal{I}_{g's'}$, by definition,
    \begin{align*}
        w_j^5&=\left(\sum_{i\in \mathcal{I}_{gs}} w_{ij} \right)^5\\
        &=\sum_{i_1\in \mathcal{I}_{gs}}\cdots\sum_{i_5\in \mathcal{I}_{gs}}  w_{i_1j}\cdots w_{i_5j}
    \end{align*}
    Consider a tuple $(i_1,\ldots,i_5)$ containing exactly $k$ distinct indices, with multiplicities $a_1,\ldots,a_k$ satisfying
\[
    a_1+\cdots+a_k=5.
\]
Conditional on all comparison group covariates,  and $j\in \mathcal{I}_{g's'}$, $\{w_{ij}\}_{i\in \mathcal{I}_{gs}}$, are i.i.d random variables, so 
\begin{align*}
    &\E\!\left[w_{i_1j}\cdots w_{i_5j}
        \mid j\in \mathcal{I}_{g's'},  \{X_\ell:\ell\in\Ical_{g's'}\}
    \right]\\
    &=
    \prod_{h=1}^{k}
    \E\!\left[
        w_{ij}^{a_h}
        \mid j\in \mathcal{I}_{g's'},  \{X_\ell:\ell\in\Ical_{g's'}\}
    \right]\\
    &\leq \left(\E\!\left[
        w_{ij}
        \mid j\in \mathcal{I}_{g's'}, \{X_\ell:\ell\in\Ical_{g's'}\}
    \right]\right)^k
\end{align*}
The inequality follows from $0\le w_{ij}\le 1$. On the other hand, the number of order five tuples containing exact $k$ distinct indices is $n_{gs}(n_{gs}-1)\cdots (n_{gs}-k+1)C(5,k)$ which is bounded by $C(5,k)n_{gs}^k$, where $C(5,k)$ is the Stirling number of the second kind (a constant). It follows that
\begin{align*}
    \E\!\left[w_j^5\mid j\in \mathcal{I}_{g's'},  \{X_\ell:\ell\in\Ical_{g's'}\}\right]\leq \sum_{k=1}^5 C(5,k)n_{gs}^k\left(\E\!\left[
        w_{ij}
        \mid j\in \mathcal{I}_{g's'},  \{X_\ell:\ell\in\Ical_{g's'}\}
    \right]\right)^k
\end{align*}
By law of iterated expectation
\begin{align*}
    \E\!\left[w_j^5\mid   j\in \mathcal{I}_{g's'}\right]\leq \sum_{k=1}^5 C(5,k)n_{gs}^k\E\left[\left(\E\!\left[
        w_{ij}
        \mid \{ X_j:j\in\Ical_{g's'} \}
    \right]\right)^k\mid j\in \mathcal{I}_{g's'}  \right]
\end{align*}
Now the key is to put an appropriate bound on $\E\left[\left(\E\!\left[w_{ij}\mid \{X_j:j\in\Ical_{g's'}\}\right]\right)^k\mid j\in \mathcal{I}_{g's'} \right]$

Fix $x\in\Xcal$, and define 
\[
    N_x
    =
    \sum_{\substack{\ell\in\Ical_{g's'}\\ \ell\ne j}}
    \1\!\left\{\norm{X_\ell-x}\le rh\right\}.
\]
which is the total number of the comparison units other than unit $j$ that falls into $B(x,rh)$. Since the sample is i.i.d
\[
    N_x\sim\operatorname{Binomial}(n_{g's'}-1,p_h(x)),
\]
where
\[
    p_h(x)
    =
    \Pr\!\left(\norm{X_\ell-x}\le rh
    \mid \ell\in\Ical_{g's'}\right).
\]
By the lower bound of the density function and convex support condition imposed assumption \ref{assumption:regularity} (i),
\begin{align*}
    p_h(x)
    &=
    \int_{\Xcal\cap B(x,rh)} f_{g's'}(z)\,dz\\
    &\ge
    \underline f\,\operatorname{Leb}\!\left(\Xcal\cap B(x,rh)\right)\\
    &\ge
    \underline f\,c_{\Xcal}Leb(B(x,rh))\\
    &\ge  \underline f\,c_{\Xcal} \frac{\pi^{q/2}}{\Gamma\left(\frac{q}{2}+1\right)}r^qh^q
\end{align*}
The second inequality follows from theorem $5$ (pp. 12) of \cite{Viel_2025_wp}. $\Gamma(\cdot)$ is the Gamma function.
Thus, for some constant $C>0$,
\[
    p_h(x)\ge Ch^q
\]
uniformly over $x\in\Xcal$.

On the other hand, 
\begin{align*}
    \E[N_{x}\mid j\in \mathcal{I}_{g's'}]=(n_{g's'}-1)p_h(x) \geq \frac{Cn_{g's'}h^q}{2}
\end{align*}
uniformly in $x$. Since $n_{g's'}h^q\to\infty$, $\E[N_{x}\mid j\in \mathcal{I}_{g's'} ] \to\infty$ uniformly in $x$.

For the same positive integer $k$ as above, 
\begin{align*}
    &\E\!\left[(N_x+1)^{-k}\mid j\in \mathcal{I}_{g's'} \right]\\&= \E\!\left[\left.(N_x+1)^{-k}\1\left(N_x> \frac{\E[N_x\mid j\in \mathcal{I}_{g's'}  ]}{2}\right)\right\lvert j\in \mathcal{I}_{g's'} \right]\\
    &+ \E\!\left[\left.(N_x+1)^{-k}\1\left(N_x\leq \frac{\E[N_x\mid j\in \mathcal{I}_{g's'} ]}{2}\right)\right\lvert j\in \mathcal{I}_{g's'} \right]  \\
    &\le
    \left(\frac{2}{\E[N_x\mid j\in \mathcal{I}_{g's'}  ]}\right)^k
    +
    \Pr\!\left(\left. N_x<\frac{\E[N_x\mid j\in \mathcal{I}_{g's'}  ]}{2}\right \lvert j\in \mathcal{I}_{g's'} \right).
\end{align*}
By the Chernoff bound for binomial random variable,
\[
     \Pr\!\left(\left. N_x<\frac{\E[N_x\mid j\in \mathcal{I}_{g's'}  ]}{2}\right \lvert j\in \mathcal{I}_{g's'} \right)
    \le
    \exp\!\left(-\frac{\E[N_x\lvert j\in \mathcal{I}_{g's'}  ]}{8}\right).
\]
Since $\E[N_{x} \lvert j\in \mathcal{I}_{g's'} ] \to\infty$ uniformly in $x$ and exponential decay dominates every fixed polynomial, for sufficiently large $n_{g's'}$,
\[
    \exp\!\left(-\frac{\E[N_x \lvert j\in \mathcal{I}_{g's'} ]}{8}\right)\leq
    \left(\frac{2}{\E[N_x\lvert j\in \mathcal{I}_{g's'}  ]}\right)^k.
\]
Therefore, for sufficiently large $n_{g's'}$
\begin{align*}
    \E\!\left[(N_x+1)^{-k}\lvert j\in \mathcal{I}_{g's'}  \right]
    \le
    \frac{2^{k+1}}{(\E[N_x \lvert j\in \mathcal{I}_{g's'} ])^k } \leq \frac{C_k}{(n_{g's'}h^q)^k}.
    \label{eq:inverse-count}
\end{align*}

Now use the definition of the pairwise weight $w_{ij}$ under the kernel matching
\[
    w_{ij}
    =
    \frac{
        K\!\left(\dfrac{X_j-x}{h}\right)
    }{
        \displaystyle\sum_{\ell\in\Ical_{g's'}}
        K\!\left(\dfrac{X_\ell-x}{h}\right)
    }
    =
    \frac{
        K\!\left(\dfrac{X_j-x}{h}\right)
    }{
      K\!\left(\dfrac{X_j-x}{h}\right)+  \displaystyle\sum_{\substack{\ell\in\Ical_{g's'}\\ \ell\ne j}}
        K\!\left(\dfrac{X_\ell-x}{h}\right)
    }
\]
By assumption \ref{assumption:kernel}(i), every comparison observation satisfying $\norm{X_\ell-x}\le rh$ contributes at least $\underline K$ to the denominator. Hence,
\[
    \sum_{\ell\in\Ical_{g's'}}
    K\!\left(\frac{X_\ell-x}{h}\right)
    \ge
    K\!\left(\frac{X_j-x}{h}\right)+\underline K N_x.
\]
When $N_x\ge1$, using the fact that the support of the kernel $K$ is bounded above by $\overline{K}$, we have
\[
    w_{ij}\le \frac{\overline K}{\underline K N_x}.
\]
In addition, combining the fact that $w_{ij}\leq1$, so
\begin{align*}
    w_{ij}\le \frac{2\overline{K}}{\underline{K}(N_x+1)}.
\end{align*}    
For fixed $x$, $N_x$ depends only on comparison observations other than $j$, and is therefore independent of $X_j$. As a result, for $k=1,\ldots,5$,
\begin{align*}
    &\E\!\left[
        \1\!\left\{\norm{X_j-x}\le Rh\right\}w_{ij}^k\lvert j\in \mathcal{I}_{g's'}
    \right]\\&\leq\E\!\left[\left.
        \1\!\left\{\norm{X_j-x}\le Rh\right\}\left(\frac{2\overline{K}}{\underline{K}(N_x+1)} \right)^k\right \lvert j\in \mathcal{I}_{g's'}
    \right]  \\
    &\le
    \left(\frac{2\overline{K}}{\underline{K}} \right)^k\Pr\!\left(\norm{X_j-x}\le Rh \lvert j\in \mathcal{I}_{g's'} \right)
    \E\!\left[(N_x+1)^{-k}\lvert j\in \mathcal{I}_{g's'}  \right].
\end{align*}
The upper density bound and the convex support condition imposed by assumption \ref{assumption:regularity} yields
\begin{align*}
    \Pr\!\left(\norm{X_j-x}\le Rh \lvert j\in \mathcal{I}_{g's'} \right)
    &=
    \int_{\Xcal\cap B(x,Rh)} f_{g's'}(z)\,dz\\
    &\le
    \overline f\,\operatorname{Leb}(B(x,Rh))\\
    &\le \overline{f}\frac{\pi^{q/2}}{\Gamma\left(\frac{q}{2}+1\right)}R^qh^q.
\end{align*}
Consequently, for sufficiently large $n_{g's'}$
\begin{align*}
    \E\!\left[
        \1\!\left\{\norm{X_j-x}\le Rh\right\}w_{ij}^k\lvert j\in \mathcal{I}_{g's'}
    \right] &\leq \left(\frac{2\overline{K}}{\underline{K}} \right)^k\overline{f}\frac{\pi^{q/2}}{\Gamma\left(\frac{q}{2}+1\right)}R^qh^q\frac{C_k}{(n_{g's'}h^q)^k}\\
    &=\left(\frac{2\overline{K}}{\underline{K}} \right)^k\overline{f}\frac{\pi^{q/2}}{\Gamma\left(\frac{q}{2}+1\right)}R^qC_kh^q(n_{g's'}h^q)^{-k}
\end{align*}
Lastly, 
\begin{align*}
    \E\!\left[
        w_{ij}
        \lvert j\in \mathcal{I}_{g's'},   \{X_\ell:\ell\in\Ical_{g's'}\}
    \right]&=\int_{\Xcal} f_{gs}(x)w_{ij}\,dx\\
    &=\int_{\Xcal} \1\!\left\{\norm{x-X_j}\le Rh\right\} f_{gs} (x)w_{ij}\,dx \\
    &\leq \left(\int_{\Xcal} \1\!\left\{\norm{x-X_j}\le Rh\right\} \,dx \right)^{\frac{k-1}{k}}\left(\int_{\Xcal} f_{gs}^k (x)w_{ij}^k \,dx \right)^{\frac{1}{k}}\\
    &=\bigl[\operatorname{Leb}(B(X_j,Rh))\bigr]^{\frac{k-1}{k}}\left(\int_{\Xcal} f_{gs}^k (x)w_{ij}^k \,dx \right)^{\frac{1}{k}}
\end{align*}
The inequality follows from the H\"older's inequality. Thus,
\begin{align*}
   \left(\E\!\left[w_{ij}\lvert j\in \mathcal{I}_{g's'}, \{X_\ell:\ell\in\Ical_{g's'}\}\right]\right)^k&\leq \bigl[\operatorname{Leb}(B(X_j,Rh))\bigr]^{k-1}\times
    \int_{\Xcal}
    f_{gs}(x)^kw_{ij}^k\,dx \\
   &= \bigl[\operatorname{Leb}(B(X_j,Rh))\bigr]^{k-1}\times
    \int_{\Xcal}
    \1\!\left\{\norm{x-X_j}\le Rh\right\}
    f_{gs}(x)^kw_{ij}^k\,dx \\
    &\leq \overline{f}\left(\frac{\pi^{q/2}}{\Gamma\left(\frac{q}{2}+1\right)}R^qh^q \right)^{k-1}\int_{\Xcal}
    \1\!\left\{\norm{x-X_j}\le Rh\right\}
    w_{ij}^k\,dx
\end{align*}
Take expectation on both sides with respect to all the comparison group covariates, and the event $j\in\mathcal{I}_{g's'}$,
\begin{align*}
    &\E\left[\left(\E\!\left[w_{ij}\lvert j\in \mathcal{I}_{g's'}, \{X_\ell:\ell\in\Ical_{g's'}\}\right]\right)^k\right]\\
    &\leq \overline{f}\left(\frac{\pi^{q/2}}{\Gamma\left(\frac{q}{2}+1\right)}R^qh^q \right)^{k-1}\E\left[\left.\int_{\Xcal}
    \1\!\left\{\norm{x-X_j}\le Rh\right\}
    w_{ij}^k\,dx \right \lvert j\in \mathcal{I}_{g's'} \right] \\
    &=\overline{f}\left(\frac{\pi^{q/2}}{\Gamma\left(\frac{q}{2}+1\right)}R^q \right)^{k-1}h^{q(k-1)}\int_{\Xcal}\E\left[
    \1\!\left\{\norm{x-X_j}\le Rh\right\}
    w_{ij}^k\lvert j\in \mathcal{I}_{g's'} \right]\,dx \\
    &\leq C_k h^{q(k-1)}\int_{\Xcal}h^q(n_{g's'}h^q)^{-k}\,dx \\
    &\leq \frac{C_k}{n_{g's'}^k}
\end{align*}
The first equality follows by the fact that the integrand is non-negative so we can apply the Tonelli's theorem to exchange the expectation and integration.

Therefore,
\begin{align*}
     \E\!\left[w_j^5\lvert j\in \mathcal{I}_{g's'}  \right]&\leq \sum_{k=1}^5 C(5,k)n_{gs}^k\E\left[\left(\E\!\left[
        w_{ij}
        \lvert j\in \mathcal{I}_{g's'}, \{ X_j:j\in\Ical_{g's'} \}
    \right]\right)^k\right] \\
    &\leq \sum_{k=1}^5 C(5,k)C_k\left(\frac{n_{gs}}{n_{g's'}}\right)^k
\end{align*}
Notice that $\frac{n_{gs}}{n_{g's'}}\rightarrow \frac{p_{gs}}{p_{g's'}}>0$, so $\left(\frac{n_{gs}}{n_{g's'}}\right)^k=O(1)$. This finally shows that,
\[\E\!\left[w_j^5 \mid j\in \mathcal{I}_{g's'}\right] =O(1)  \]
(iv) is shown in lemma \ref{lemma:reallocationbound}.
\end{proof}

\begin{proof}
    Proof of lemma \ref{lemma:regressionweight}. To show the first equality, apply the law of total probability, we have:
    \begin{align*}
        &\E\left[\left.\frac{(1-p_{11})L(D|G,S,X)}{\E[(D-L(D|G,S,X))^2]} X \right| D=0\right] =\E\left[\left.\frac{(1-p_{11})L(D|G,S,X)}{\E[(D-L(D|G,S,X))^2]} X \right| G=0,S=1\right]\frac{p_{01}}{1-p_{11}}\\
    &+ \E\left[\left.\frac{(1-p_{11})L(D|G,S,X)}{\E[(D-L(D|G,S,X))^2]} X \right| G=1,S=0\right]\frac{p_{10}}{1-p_{11}}\\ 
    &+\E\left[\left.\frac{(1-p_{11})L(D|G,S,X)}{\E[(D-L(D|G,S,X))^2]} X \right| G=0,S=0\right]\frac{p_{00}}{1-p_{11}} \\
    &=\E\left[\left.\frac{p_{01}L(D|G,S,X)}{\E[(D-L(D|G,S,X))^2]} X \right| G=0,S=1\right] + \E\left[\left.\frac{p_{10}L(D|G,S,X)}{\E[(D-L(D|G,S,X))^2]} X \right| G=1,S=0\right] \\ 
    &+\E\left[\left.\frac{p_{00}L(D|G,S,X)}{\E[(D-L(D|G,S,X))^2]} X \right| G=0,S=0\right]\\
    \end{align*}
    To prove the second equality, it suffices to show that
    \begin{align*}
        \E[p_{11}\left(1-L(D|G,S,X)\right)X | G=1,S=1 ] =\E[(1-p_{11})L(D|G,S,X)X| D=0  ]
    \end{align*}
    Note that 
    \begin{align*}
         &\E[p_{11}\left(1-L(D|G,S,X)\right) | G=1,S=1 ] -\E[(1-p_{11})L(D|G,S,X)| D=0  ]\\
         &=\E[(D(1-L(D|G,S,X)) - (1-D)L(D|G,S,X))X ]\\
         &=\E[(D-L(D|G,S,X))X] \\
         &=0
    \end{align*}
    The last equality follows from the fact that $X$ is orthogonal to the regression residual $D-L(D|G,S,X)$
\end{proof}

\begin{proof} Proof of lemma \ref{lemma:convergence}.\\
Let $\G=(G_1,\ldots,G_n)$, $\bs=(S_1,\ldots,S_n)$, and $\X=(X_1',\ldots,X_n')'$, 
Now,
\begin{align*}
T_n=\frac{n}{n_{11}}&\left\{\frac{1}{n}\isum\left[\1(i\in \mathcal{I}_{11})-\1(i\in \mathcal{I}_{gs})w_i\right](Z_i-\mu_Z(G_i,S_i,X_i))\right.\\
&+\frac{1}{n}\sum_{i\in \mathcal{I}_{11}}(\mu_Z(1,1,X_i)-\mu_Z(g,s,X_i))\\
&+\left.\frac{1}{n}\isum\left[\1(i\in \mathcal{I}_{11})-\1(i\in \mathcal{I}_{gs})w_i\right]\mu_Z(g,s,X_i)\right\}\\
:&=\frac{n}{n_{11}}\left(T_{1,n}+T_{2,n}+T_{3,n}\right).
\end{align*}
We look at these three terms separately. First, $\E[T_{1,n}|\G,\bs,\X]=0$ and
\begin{align*}
\V[T_{1,n}|\G,\bs,\X]&=\frac{1}{n^2}\isum \left[\1(i\in \mathcal{I}_{11})-\1(i\in \mathcal{I}_{gs})w_i\right]^2\sigma^2_Z(G_i,S_i,X_i)
\end{align*}
from which 
\begin{align*}
\V[T_{1,n}]&=\frac{1}{n}\E\left[\left[\1(i\in \mathcal{I}_{11})-\1(i\in \mathcal{I}_{gs})w_i\right]^2\sigma^2_Z(G_i,S_i,X_i)\right]\\
&\le \frac{C}{n}\left(1+p_{gs}\E\left[w_i^2|i\in \mathcal{I}_{gs}\right]\right)\\
&\to 0
\end{align*}
the last line follows by assumption \ref{assumption:matching}. Thus, $T_{1,n}=o_p(1)$. Next,
\begin{align*}
   &Var[\1(i\in \mathcal{I}_{gs})(\mu_Z(1,1,X_i)-\mu_Z(g,s,X_i))  ]  \\
   &\leq \E[(\mu_Z(1,1,X_i)-\mu_Z(g,s,X_i))^2]\\
   &\leq  \E\left[ \left(\abs{\mu_Z(1,1,X_i)} + \abs{\mu_Z(g,s,X_i)} \right)^2 \right]\\
   &<\infty 
\end{align*}
The second to last inequality follows by the fact that the conditional mean function is Lipschitz continuous and the support of $X_i$ is bounded. Apply the WLLN,
\begin{align*}
T_{2,n}&=\frac{1}{n}\sum_{i\in \mathcal{I}_{11}}(\mu_Z(1,1,X_i)-\mu_Z(g,s,X_i))\\
&\plim \E\left[\mu_Z(1,1,X)-\mu_Z(g,s,X)|G=1,S=1\right]p_{11}\\
&=\left(\mu_{Z}(1,1)-\E[\mu_Z(g,s,X)|G=1,S=1]\right)p_{11}.
\end{align*}
Finally, for the third term,
\begin{align*}
T_{3,n}&=\frac{1}{n}\isum\left[\1(i\in \mathcal{I}_{11})-\1(i\in \mathcal{I}_{gs})w_i\right]\mu_Z(g,s,X_i)\\
&=\frac{1}{n}\sum_{i\in \mathcal{I}_{11}}\mu_Z(g,s,X_i)-\frac{1}{n}\sum_{j\in\mathcal{I}_{gs}}w_i\mu_Z(g,s,X_j) \\
&=\frac{1}{n}\sum_{i\in \mathcal{I}_{11}}\mu_Z(g,s,X_i)- \frac{1}{n}\sum_{j\in\mathcal{I}_{gs}}\sum_{i\in\mathcal{I}_{11}}w_{ij}\mu_Z(g,s,X_j) \\
&=\frac{1}{n}\sum_{i\in \mathcal{I}_{11}}\sum_{j\in\mathcal{I}_{gs}}w_{ij} \mu_Z(g,s,X_i)- \frac{1}{n}\sum_{i\in\mathcal{I}_{11}}  \sum_{j\in\mathcal{I}_{gs}}w_{ij}\mu_Z(g,s,X_j) \\
&=\frac{1}{n}\sum_{i\in \mathcal{I}_{11}}\sum_{j\in\mathcal{I}_{gs}}w_{ij}\left[ \mu_Z(g,s,X_i)-\mu_Z(g,s,X_j)\right] \\
&\leq \frac{C_2}{n}\sum_{i\in \mathcal{I}_{11}}\sum_{j\in\mathcal{I}_{gs}}w_{ij}\norm{X_i-X_j}
\end{align*}
The third equality follows from $\sum_{j\in\mathcal{I}_{gs}}w_{ij}=1$. Notice the last term is non-negative because $w_{ij}\geq 0$ By Markov inequality,
\begin{align*}
   P\left( \frac{C}{n}\sum_{i\in \mathcal{I}_{11}}\sum_{j\in\mathcal{I}_{gs}}w_{ij}\norm{X_i-X_j}>\epsilon\right)&\leq \frac{\E\left[\frac{C}{n}\sum_{i\in \mathcal{I}_{11}}\sum_{j\in\mathcal{I}_{gs}}w_{ij}\norm{X_i-X_j}  \right]}{\epsilon} \\
   &= \frac{C\E\left[\1(i\in \mathcal{I}_{11})\sum_{j\in\mathcal{I}_{gs}}w_{ij}\norm{X_i-X_j}  \right]}{\epsilon} \\
   &= \frac{Cp_{11}\E\left[\left.\sum_{j\in\mathcal{I}_{gs}}w_{ij}\norm{X_i-X_j}\right | G_i=1, S_i=1  \right]}{\epsilon} \\
   &\rightarrow 0
\end{align*}
The first inequality follows from the fact that the sample is i.i.d. and the pairwise weight is permutation equivariant.
The last line follows from assumption \ref{assumption:matching}(ii). Thus $T_{3,n}=o_p(1)$. Therefore, 
\begin{align*}
T_n&=\frac{n}{n_{11}}\left(T_{1,n}+T_{2,n}+T_{3,n}\right)\plim  \mu_{Z}(1,1)-\E[\mu_Z(g,s,X)|G=1,S=1]  
\end{align*}
as required.
\end{proof}

\begin{proof}
    Proof of lemma \ref{lemma:varianceconvergence}.\\
   For the first set of claim, by Cheyshev's inequality, 
    \begin{align*}
     &P\left( \abs{  \frac{1}{n}\sum_{i\in\mathcal{I}_{11}}\sigma^2_{11,l}(X_{i})- \E\left[\frac{1}{n}\sum_{i\in\mathcal{I}_{11}}\sigma^2_{11,l}(X_{i}) \right] }>\epsilon \right) \\
     &\leq \frac{Var\left(  \frac{1}{n}\sum_{i\in\mathcal{I}_{11}}\sigma^2_{11,l}(X_{i}) \right)}{\epsilon^2} \\
     &=\frac{n_{11}Var\left( \sigma^2_{11,l}(X_{i}) \right)}{n^2\epsilon^2} \\
     &\leq \frac{n_{11}\E\left[\left( \sigma^2_{11,l}(X_{i}) \right)^2\right]}{n^2\epsilon^2} \\
     &\leq \frac{n_{11}\overline{\sigma}^2}{n^2\epsilon^2} \\
     &\rightarrow o_p(1)
    \end{align*}
    The last inequality follows from assumption \ref{assumption:regularity}(iii). The last line follows from $n_{11} \asymp n$.\\
    This shows that 
    \begin{align*}
        \frac{1}{n}\sum_{i\in\mathcal{I}_{11}}\sigma^2_{11,l}(X_{i})- \E\left[\frac{1}{n}\sum_{i\in\mathcal{I}_{11}}\sigma^2_{11,l}(X_{i}) \right] &=o_p(1)
    \end{align*}
    On the other hand,
    \begin{align*}
        \E\left[ \frac{1}{n}\sum_{i\in \mathcal{I}_{11}}\sigma^2_{11,l}(X_i) \right] \leq p_{11}\overline{\sigma}^2 \implies \E\left[ \frac{1}{n}\sum_{i\in \mathcal{I}_{11}}\sigma^2_{11,l}(X_i) \right] =O(1)
    \end{align*}
    The argument above shows that 
    \begin{align*}
        Var\left(  \frac{1}{n}\sum_{i\in\mathcal{I}_{11}}\sigma^2_{11,l}(X_{i}) \right) =o(1)
    \end{align*}
    Therefore,
     \begin{align*}
        \E\left[\left(  \frac{1}{n}\sum_{i\in\mathcal{I}_{11}}\sigma^2_{11,l}(X_{i}) \right)^2\right] =O(1)
    \end{align*}
    Now,
    \begin{align*}
        &\E\left[\abs{\frac{1}{n_{11}}\sum_{i\in\mathcal{I}_{11}}\sigma^2_{11,l}(X_{i})-\frac{1}{p_{11}}\E\left[\frac{1}{n}\sum_{i\in\mathcal{I}_{11}}\sigma^2_{11,l}(X_{i})\right]}\right]\\
        &=\E\left[\abs{\left(\frac{n}{n_{11}}-\frac{1}{p_{11}} \right)\frac{1}{n}\sum_{i\in\mathcal{I}_{11}}\sigma^2_{11,l}(X_{i}) + \frac{1}{p_{11}}\left( \frac{1}{n}\sum_{i\in\mathcal{I}_{11}}\sigma^2_{11,l}(X_{i}) -\E\left[\frac{1}{n}\sum_{i\in\mathcal{I}_{11}}\sigma^2_{11,l}(X_{i})\right]  \right)}\right] \\
        &\leq \E\left[\abs{\left(\frac{n}{n_{11}}-\frac{1}{p_{11}} \right)\frac{1}{n}\sum_{i\in\mathcal{I}_{11}}\sigma^2_{11,l}(X_{i}) }\right] +\E\left[\abs{ \frac{1}{p_{11}}\left( \frac{1}{n}\sum_{i\in\mathcal{I}_{11}}\sigma^2_{11,l}(X_{i}) -\E\left[\frac{1}{n}\sum_{i\in\mathcal{I}_{11}}\sigma^2_{11,l}(X_{i})\right]  \right)}\right] \\
        &\leq \E\left[\abs{\frac{n}{n_{11}}-\frac{1}{p_{11}} }^2\right]^{1/2}\E\left[\abs{\frac{1}{n}\sum_{i\in\mathcal{I}_{11}}\sigma^2_{11,l}(X_{i}) }^2\right]^{1/2}+\frac{1}{p_{11}}\left(Var\left(\frac{1}{n}\sum_{i\in\mathcal{I}_{11}}\sigma^2_{11,l}(X_{i}) \right) \right)^{1/2} \\
        &=o(1)O(1) + o(1) \\
        &=o(1)
    \end{align*}
    Then by Jensen's inequality,
    \begin{align*}
        &\abs{ \E\left[ \frac{1}{n_{11}}\sum_{i\in\mathcal{I}_{11}}\sigma^2_{11,l}(X_{i})\right]-\frac{1}{p_{11}}\E\left[\frac{1}{n}\sum_{i\in\mathcal{I}_{11}}\sigma^2_{11,l}(X_{i})\right] }\\
        &= \abs{ \E\left[ \frac{1}{n_{11}}\sum_{i\in\mathcal{I}_{11}}\sigma^2_{11,l}(X_{i})-\frac{1}{p_{11}}\E\left[\frac{1}{n}\sum_{i\in\mathcal{I}_{11}}\sigma^2_{11,l}(X_{i})\right]  \right] } \\
        &\leq \E\left[\abs{\frac{1}{n_{11}}\sum_{i\in\mathcal{I}_{11}}\sigma^2_{11,l}(X_{i})-\frac{1}{p_{11}}\E\left[\frac{1}{n}\sum_{i\in\mathcal{I}_{11}}\sigma^2_{11,l}(X_{i})\right]}\right]\\
        &=o_p(1)
    \end{align*}
    Then by Markov inequality, for any $\epsilon>0$,
    \begin{align*}
       & P\left(\abs{\frac{1}{n_{11}}\sum_{i\in\mathcal{I}_{11}}\sigma^2_{11,l}(X_{i})-\frac{1}{p_{11}}\E\left[\frac{1}{n}\sum_{i\in\mathcal{I}_{11}}\sigma^2_{11,l}(X_{i})\right]  } \right)\\
       &\leq \frac{1}{\epsilon} \E\left[\abs{\frac{1}{n_{11}}\sum_{i\in\mathcal{I}_{11}}\sigma^2_{11,l}(X_{i})-\frac{1}{p_{11}}\E\left[\frac{1}{n}\sum_{i\in\mathcal{I}_{11}}\sigma^2_{11,l}(X_{i})\right]}\right] \rightarrow 0
    \end{align*}
    So, 
    \begin{align*}
        \frac{1}{n_{11}}\sum_{i\in\mathcal{I}_{11}}\sigma^2_{11,l}(X_{i})-\frac{1}{p_{11}}\E\left[\frac{1}{n}\sum_{i\in\mathcal{I}_{11}}\sigma^2_{11,l}(X_{i})\right]  =o_p(1)
    \end{align*}
    Combining the two results above,
    \begin{align*}
        &\frac{1}{n_{11}}\sum_{i\in\mathcal{I}_{11}}\sigma^2_{11,l}(X_{i})-\E\left[\frac{1}{n_{11}}\sum_{i\in\mathcal{I}_{11}}\sigma^2_{11,l}(X_{i})\right] \\
        &= \frac{1}{n_{11}}\sum_{i\in\mathcal{I}_{11}}\sigma^2_{11,l}(X_{i})-\frac{1}{p_{11}}\E\left[\frac{1}{n}\sum_{i\in\mathcal{I}_{11}}\sigma^2_{11,l}(X_{i})\right] + \frac{1}{p_{11}}\E\left[\frac{1}{n}\sum_{i\in\mathcal{I}_{11}}\sigma^2_{11,l}(X_{i})\right]  - \E\left[ \frac{1}{n_{11}}\sum_{i\in\mathcal{I}_{11}}\sigma^2_{11,l}(X_{i})\right]\\
        &=o_p(1)
    \end{align*}
    This finishes the proof of the first set of claims.
    
    For the second claim, let $\Omega_n= \sum_{i\in\mathcal{I}_{gs}} w_{i}^2\sigma^2_{gs,l}(X_{i}) $, and $ \Omega_n^{(j)}=\1(j\in \mathcal{I}_{gs})\left( w_j^{(j)}\right)^2\sigma^2_{gs,l}(X_{j}') + \sum_{i\in\mathcal{I}_{gs}\setminus \{j\}} \left(w_{i}^{(j)}\right)^2\sigma^2_{gs,l}(X_{i})  $ be the counterpart of $\Omega_n$ with $(G_j,S_j,X_j)$ being replaced by $(G_j',S_j',X_j')$. 
    \begin{align*}
       \abs{ \Omega_n -\Omega_n^{(j)}} &=\abs{\1(j\in \mathcal{I}_{gs} ) w_j^2\sigma^2_{gs,l}(X_{j})  -\1(j\in \mathcal{I}_{gs}  ) \left(w_j^{(j)}\right)^2\sigma^2_{gs,l}(X_{j}') + \sum_{i\in\mathcal{I}_{gs}\setminus \{1\}} \left( w_i^2-\left(w_{i}^{(j)}\right)^2\right)\sigma^2_{gs,l}(X_{i}) }\\
       &\leq \overline{\sigma}^2\left(\1(j\in \mathcal{I}_{gs}) (w_j^2)+ \1'(j\in \mathcal{I}_{gs} )\left(w_j^{(j)}\right)^2 + \sum_{i\in\mathcal{I}_{gs}\setminus \{j\}} \abs{ w_i^2-\left(w_{i}^{(j)}\right)^2} \right)
    \end{align*}
    By Efron Stein inequality,
    \begin{align*}
        Var(\Omega_n)&\leq \frac{1}{2}\sum_{j=1}^n \E\left[ \left(\Omega_n -\Omega_n^{(j)}\right)^2 \right] \\
        &\leq  \frac{1}{2}\sum_{j=1}^n\E\left[ \overline{\sigma}^2\left(\1(j\in \mathcal{I}_{gs} ) (w_j^2)+ \1'(j\in \mathcal{I}_{gs} ) \left(w_j^{(j)}\right)^2 + \sum_{i\in\mathcal{I}_{gs}\setminus \{j\}} \abs{ w_i^2-\left(w_{i}^{(j)}\right)^2} \right)^2\right] \\
        &\leq \frac{1}{2}\sum_{j=1}^n\E\left[ 3\overline{\sigma}^2\left(\1(j\in \mathcal{I}_{gs} ) (w_j^4)+\1'(j\in \mathcal{I}_{gs} ) \left(w_j^{(j)}\right)^4 + \left( \sum_{i\in\mathcal{I}_{gs}\setminus \{j\}} \abs{ w_i^2-\left(w_{i}^{(j)}\right)^2}\right)^2 \right)\right] \\
        &=\frac{3n\overline{\sigma}^2}{2}\left(\E[\1(j\in \mathcal{I}_{gs} ) w_j^4] +\E\left[\1(j\in \mathcal{I}_{gs} ) \left(w_j^{(j)}\right)^4 \right]+\E\left[ \left( \sum_{i\in\mathcal{I}_{gs}\setminus \{j\}} \abs{ w_i^2-\left(w_{i}^{(j)}\right)^2}\right)^2  \right] \right)\\
        &=\frac{3n\overline{\sigma}^2}{2}(O(1)+O(1)+O(n^{1/2})\\
        &=O(n^{3/2})
    \end{align*}
    The second inequality follows from $(a+b+c)^2\leq 3(a^2+b^2+c^2)$. The second equality follows from assumption \ref{assumption:matching}(iii) and lemma \ref{lemma:weightproperty}(ii).
    
    By Cheyshev's inequality, 
     \begin{align*}
     &P\left( \abs{  \frac{1}{n}\sum_{i\in\mathcal{I}_{gs}} w_{i}^2\sigma^2_{gs,l}(X_{i})-\E\left[ \frac{1}{n}\sum_{i\in\mathcal{I}_{gs}} w_{i}^2\sigma^2_{gs,l}(X_{i})\right] }>\epsilon \right) \\
     &\leq \frac{\Omega_n}{n^2\epsilon^2} =\frac{n^{3/2}}{n^2\epsilon^2} \rightarrow 0
    \end{align*}
    This shows that 
    \begin{align*}
         \frac{1}{n}\sum_{i\in\mathcal{I}_{gs}} w_{i}^2\sigma^2_{gs,l}(X_{i})-\E\left[ \frac{1}{n}\sum_{i\in\mathcal{I}_{gs}} w_{i}^2\sigma^2_{gs,l}(X_{i})\right] & =o_p(1)
    \end{align*}
    On the other hand, 
    \begin{align*}
        \E\left[\frac{1}{n}\sum_{i\in\mathcal{I}_{gs}} w_{i}^2\sigma^2_{gs,l}(X_{i}) \right]&=\frac{1}{n} \isum\E\left[\1(j\in \mathcal{I}_{gs}) w_{i}^2\sigma^2_{gs,l}(X_{i}) \right] \\
        &\leq \overline{\sigma}^2\E\left[ w_{i}^2|i\in \mathcal{I}_{gs} \right]\\
        &=O(1)
    \end{align*}
    The above argument implies that
    \begin{align*}
        Var\left(\frac{1}{n}\sum_{i\in\mathcal{I}_{gs}} w_{i}^2\sigma^2_{gs,l}(X_{i})  \right) =\frac{1}{n^2}Var(\Omega_n)=\frac{1}{n^2}O(n^{3/2}) = o(1)
    \end{align*}
    Therefore,
    \begin{align*}
        \E\left[\left(\frac{1}{n}\sum_{i\in\mathcal{I}_{gs}} w_{i}^2\sigma^2_{gs,l}(X_{i})  \right)^2 \right] = O(1)
    \end{align*}
    Then following the same proof strategy as the proof of the first set of claim, we have:
    \begin{align*}
        \frac{1}{n_{11}}\sum_{i\in\mathcal{I}_{gs}} w_{i}^2\sigma^2_{gs,l}(X_{i})-\E\left[ \frac{1}{n_{11}}\sum_{i\in\mathcal{I}_{gs}} w_{i}^2\sigma^2_{gs,l}(X_{i})\right] & =o_p(1)
    \end{align*}
    This proves the second set of claims.
\end{proof}

\begin{proof}
    Proof of lemma \ref{lemma:weightproperty}. \\
    (i) \begin{align*}
        \E\left[\frac{1}{n^2}\sum_{j\in\mathcal{I}_{gs}}w_{j}^2 \right]&=\frac{1}{n}\E[w_j^2\1(i\in \mathcal{I}_{gs})] \\
        &=\frac{p_{gs}}{n}\E[w_j^2\mid i\in \mathcal{I}_{gs}] \\
        &\leq \frac{1}{n} \left(\E[w_j^{4+\delta}\mid i\in \mathcal{I}_{gs}]\right)^{\frac{2}{4+\delta}} \\
        &= O(n^{-1})
    \end{align*}
    The third line follows from Lyapunov inequality and the last line follows from assumption \ref{assumption:matching}(iii).\\
    So there exists a constant $C$ and $N$, such that, for all $n\geq N$,
    \[ \E\left[\frac{1}{n^2}\sum_{j\in\mathcal{I}_{gs}}w_{j}^2 \right]\leq \frac{C}{N} \].
    For any $\epsilon>0$, pick $M=\frac{C}{\epsilon}$, then for all $n\geq N$, 
    \begin{align*}
        P\left(\frac{1}{n^2}\sum_{j\in\mathcal{I}_{gs}}w_{j}^2 \geq \frac{M}{N}  \right)&\leq \frac{N}{M}\E\left[\frac{1}{n^2}\sum_{j\in\mathcal{I}_{gs}}w_{j}^2 \right] \\
        &\leq \frac{C}{M}=\epsilon
    \end{align*}

    \begin{align*}
        \sum_{i\in \mathcal{I}_{11}}w_{ij}w_{ij'}\leq  \sum_{i\in \mathcal{I}_{11}}w_{ij} \leq \max_{j\in \mathcal{I}_{gs}} w_j
    \end{align*}

    \begin{align*}
        &\max_{j\in \mathcal{I}_{gs}} w_j^2 \leq \sum_{j\in \mathcal{I}_{gs}} w_{gs}^2 \\
        &\implies \left(\max_{j\in \mathcal{I}_{gs}} w_j\right)^{2}\leq  \sum_{j\in \mathcal{I}_{gs}} w_{gs}^2\\
        &\implies \max_{j\in \mathcal{I}_{gs}} w_j \leq \left( \sum_{j\in \mathcal{I}_{gs}} w_{gs}^2 \right)^{1/2} =O_p(n^{1/2})
    \end{align*}
    (ii) Notice that:
    \begin{align*}
      \sum_{j\in \mathcal{I}_{g's'\setminus\{1\}}}\abs{w_j^2-\left(w_j^{(1)}\right)^2}&=\sum_{j\in \mathcal{I}_{g's'\setminus\{1\}}}\left( w_j+w_j^{(1)}\right)\abs{w_j-w_j^{(1)}}  \\
      &\leq \left(\max_{j\in \mathcal{I}_{g's'\setminus\{1\}}} w_j + \max_{j\in \mathcal{I}_{g's'\setminus\{1\}}} w_j^{(1)} \right)\sum_{j\in \mathcal{I}_{g's'\setminus\{1\}}}  \abs{w_j-w_j^{(1)}}
    \end{align*}
    Therefore,
    \begin{align*}
        &\E\left[\left(\sum_{j\in \mathcal{I}_{g's'\setminus\{1\}}}\abs{w_j^2-\left(w_j^{(1)}\right)^2}\right)^2 \right] \\
        &\leq \E\left[ \left(\max_{j\in \mathcal{I}_{g's'\setminus\{1\}}} w_j + \max_{j\in \mathcal{I}_{g's'\setminus\{1\}}} w_j^{(1)} \right)^2 \left(\sum_{j\in \mathcal{I}_{g's'\setminus\{1\}}} \abs{w_j-w_j^{(1)}}\right)^2\right] \\
        & \leq \E\left[ \left(\max_{j\in \mathcal{I}_{g's'\setminus\{1\}}} w_j + \max_{j\in \mathcal{I}_{g's'\setminus\{1\}}} w_j^{(1)} \right)^4\right]^{1/2}  \E\left[ \left(\sum_{j\in \mathcal{I}_{g's'\setminus\{1\}}} \abs{w_j-w_j^{(1)}}\right)^4\right]^{1/2} \\
        &\leq \left( 8\E\left[ \max_{j\in \mathcal{I}_{g's'\setminus\{1\}}} w_j^4 + \max_{j\in \mathcal{I}_{g's'\setminus\{1\}}} \left(w_j^{(1)}\right)^4 \right]\right)^{1/2}  \E\left[ \left(\sum_{j\in \mathcal{I}_{g's'\setminus\{1\}}} \abs{w_j-w_j^{(1)}}\right)^4\right]^{1/2} \\
       & \leq \left( 8\E\left[ \sum_{j\in \mathcal{I}_{g's'\setminus\{1\}}} w_j^4 + \sum_{j\in \mathcal{I}_{g's'\setminus\{1\}}} \left(w_j^{(1)}\right)^4 \right]\right)^{1/2}  \E\left[ \left(\sum_{j\in \mathcal{I}_{g's'\setminus\{1\}}} \abs{w_j-w_j^{(1)}}\right)^4\right]^{1/2} \\
       &=\left( 8\sum_{j\in \mathcal{I}_{g's'\setminus\{1\}}}  \E\left[ w_j^4 + \left(w_j^{(1)}\right)^4 \right]\right)^{1/2}  \E\left[ \left(\sum_{j\in \mathcal{I}_{g's'\setminus\{1\}}} \abs{w_j-w_j^{(1)}}\right)^4\right]^{1/2} \\
       &=O(n^{1/2})O(1) \\
       &=O(n^{1/2})
    \end{align*}
    The second equality follows from Cauchy Schwarz inequality. The third inequality follows from $C_r$ inequality. The second last equality follows from assumption \ref{assumption:matching}(ii) and (iii).

    (iii) By Lyapunov inequality,
    \begin{align*}
        \E[w_j^{r}|j\in \mathcal{I}_{gs}]\leq \E[w_j^{4+\delta}|j\in \mathcal{I}_{gs}]^{\frac{r}{4+\delta}}=O(1)
    \end{align*}
    This proves the first claim. So for any $\epsilon>0$ and $M>0$  there exists $N$ such that, for any $n\geq N$, $ \E[w_j^{r}|j\in \mathcal{I}_{gs}]\leq \frac{M}{\epsilon p_{gs}}$
    By Markov's inequality, for any $n\geq N$
    \begin{align*}
     P\left(\frac{1}{n}\sum_{i\in \mathcal{I}_{gs}}w_j^r  >M \right)& \leq \frac{\E\left[\frac{1}{n}\sum_{i\in \mathcal{I}_{gs}}w_j^r  \right]}{M}\\
     &=\frac{p_{gs}\E\left[ w_j^r | j\in \mathcal{I}_{gs}\right]}{M} \\
     &\leq \epsilon
    \end{align*}
    This proves the second claim. 

    (iv) By Markov inequality,
    \begin{align*}
      P\left(  \frac{1}{n}\sum_{i\in \mathcal{I}_{11}}\sum_{j\in \mathcal{I}_{gs}}w_{ij}\norm{X_i-X_j} \geq \epsilon \right) & \leq \frac{\E\left[ \frac{1}{n}\sum_{i\in \mathcal{I}_{11}}\sum_{j\in \mathcal{I}_{gs}}w_{ij}\norm{X_i-X_j}  \right]}{\epsilon} \\
      &=\frac{\E\left[ \sum_{j\in \mathcal{I}_{gs}}w_{ij}\norm{X_i-X_j}|i\in \mathcal{I}_{11}  \right]p_{11}}{\epsilon} \\
      &\rightarrow 0
    \end{align*}
    The last line follows from assumption \ref{assumption:matching}(ii).\\
    By Slutsky theorem, 
    \begin{align*}
        \frac{1}{n_{11}}\sum_{i\in \mathcal{I}_{11}}\sum_{j\in \mathcal{I}_{gs}}w_{ij}\norm{X_i-X_j} =\frac{n}{n_{11}}\frac{1}{n}\sum_{i\in \mathcal{I}_{11}}\sum_{j\in \mathcal{I}_{gs}}w_{ij}\norm{X_i-X_j} =o_p(1)
    \end{align*}
    On the other hand, since the support is compact,
    \begin{align*}
        \left(\sum_{j\in \mathcal{I}_{gs}}w_{ij}\norm{X_i-X_j}\right)^2&\leq C\cdot \sum_{j\in \mathcal{I}_{gs}}w_{ij}\cdot\left(\sum_{j\in \mathcal{I}_{gs}}w_{ij}\norm{X_i-X_j}\right) \\
        &=C\left(\sum_{j\in \mathcal{I}_{gs}}w_{ij}\norm{X_i-X_j}\right)
    \end{align*}
    Therefore,
    \begin{align*}
        \frac{1}{n_{11}}\sum_{i\in \mathcal{I}_{11}}\left(\sum_{j\in \mathcal{I}_{gs}}w_{ij}\norm{X_i-X_j}\right)^2  \leq \frac{C}{n_{11}}\sum_{i\in \mathcal{I}_{11}}\sum_{j\in \mathcal{I}_{gs}}w_{ij}\norm{X_i-X_j} =o_p(1)
    \end{align*}
\end{proof}

\begin{proof}
     Proof of lemma \ref{lemma:subgroupproportion}.

    Define the event $E_n=\{n_{gs}\geq np_{gs}/2\} $. By Hoeffding's Inequality,
    \begin{align*}
        P(E_n^c)&=P\left(n_{11}-np_{11}<-\frac{np_{11}}{2} \right)\leq exp\left(-\frac{np_{11}^2}{2} \right)
    \end{align*}
    On the event $E_n$,
    \begin{align*}
        \abs{\frac{n}{n_{11}}-\frac{1}{p_{11}}} =\frac{\abs{n_{11}-np_{11}}}{n_{11}p_{11}}\leq \frac{2\abs{n_{11}-np_{11}}}{np_{11}^2}
    \end{align*}
    So that 
    \begin{align*}
        \E\left[\1(E_n)\left(\frac{n}{n_{gs}}-\frac{1}{p_{gs}} \right)^2 \right]&\leq \E\left[\1(E_n) \frac{4(n_{gs}-\E[n_{gs}])^2}{n^2p_{gs}^4} \right] \\
        &\leq \frac{4Var(n_{gs})}{n^2p_{gs}^4} =\frac{4(1-p_{gs})}{np_{gs}^3} = O(n^{-1})
    \end{align*}
    On the event $E_n^c$, $\frac{n}{n_{11}}\leq n$, so
       \begin{align*}
           \E\left[\1(E_n^c)\left(\frac{n}{n_{gs}}-\frac{1}{p_{gs}} \right)^2 \right]& \leq \E\left[\1(E_n^c)\left(n+\frac{1}{p_{gs}} \right)^2 \right] \\
           & =\left(n+\frac{1}{p_{gs}} \right)^2  P(E_n^c) \\
           & \leq \left(n+\frac{1}{p_{gs}} \right)^2 exp\left(-\frac{np_{11}^2}{2} \right) \\
           &=o(n^{-1})
       \end{align*}
       Combine the two results we prove lemma \ref{lemma:subgroupproportion}.
 \end{proof}

 \begin{proof}
     Proof of lemma \ref{lemma:conditionalchebyshev}.

     By conditional Chebyshev's inequality, for any $\epsilon>0$,
     \begin{align*}
         P(\abs{T_n}\geq\epsilon |\G,\bs,\X)\leq \frac{Var(T_n|\G,\bs,\X) }{\epsilon^2}
     \end{align*}
     By law of iterated expectation and the property of probability,
     \begin{align*}
          P(\abs{T_n}\geq\epsilon )=\E[P(\abs{T_n}\geq\epsilon |\G,\bs,\X) ]\leq \min\left\{1, \frac{\E[Var(T_n|\G,\bs,\X) ]}{\epsilon^2} \right\}
     \end{align*}
     Let $Z_n=\min\left\{1, \frac{Var(T_n|\G,\bs,\X) }{\epsilon^2} \right\} $, now for any $0<\delta<1$,
     \begin{align*}
       \E\left[  Z_n \right]&=\E\left[ Z_n\1(Z_n\leq \delta)\right] +\E\left[ Z_n\1(Z_n> \delta)\right] \\
       &\leq \delta + P(Z_n>\delta)\\
       &=\delta +P\left(Var(T_n|\G,\bs,\X) >\delta\epsilon^2\right)\\
     \end{align*}
     This implies that for any $\delta>0$,
     \begin{align*}
         \limsup_{n\rightarrow\infty}\E[Z_n]\leq \delta
     \end{align*}
     Therefore,
     \begin{align*}
        \lim_{n\rightarrow\infty} \E[Z_n]=\lim_{n\rightarrow\infty} \min\left\{1, \frac{\E[Var(T_n|\G,\bs,\X) ]}{\epsilon^2} \right\}=0
     \end{align*}
     Combine the results above, we have: $T_n\xrightarrow{p}0$
 \end{proof}

 \begin{proof}
     Proof of lemma \ref{lemma:reallocationbound}.

     Consider an intermediate status where only unit $1$ is deleted and let $\widetilde w_{ij}^{(1)}$ and $\widetilde w_{j}^{(1)}$ denote the corresponding pairwise weight and matching weight under this status. Then by triangular inequality,
     \begin{align*}
          \sum_{j\in \mathcal{I}_{gs}\setminus\{1\}}\abs{w_j-w_j^{(1)}} =\sum_{j\in \mathcal{I}_{gs}\setminus\{1\}}\abs{w_j-\widetilde w_j^{(1)}+\widetilde w_j^{(1)} -w_j^{(1)}} \leq \sum_{j\in \mathcal{I}_{gs}\setminus\{1\}}\abs{w_j-\widetilde w_j^{(1)}} +\sum_{j\in \mathcal{I}_{gs}\setminus\{1\}}\abs{\widetilde w_j^{(1)} -w_j^{(1)}}
     \end{align*}
     For the first term, there are three cases.

     Case 1: $1\notin \mathcal{I}_{11}\cup \mathcal{I}_{gs}$, then in this case nothing changes, so $\sum_{j\in \mathcal{I}_{gs}\setminus\{1\}}\abs{w_j-\widetilde w_j^{(1)}}=0 $.

     Case 2: $1 \in \mathcal{I}_{11}$, then for all three matching methods, since the pairwise matching weight $w_{ij}$ only uses the information from the reference unit $i$ and the comparison group, so in this case it removes the pairwise matching weight of unit $1$ and leaves all other pairwise matching weight unchanged. Therefore,
     \begin{align*}
         \sum_{j\in \mathcal{I}_{gs}\setminus\{1\}}\abs{w_j-\widetilde w_j^{(1)}}= \sum_{j\in \mathcal{I}_{gs}\setminus\{1\}} w_{ij} =1
     \end{align*}

     Case 3: $1 \in \mathcal{I}_{gs}$, by triangular inequality
     \begin{align*}
         \sum_{j\in \mathcal{I}_{gs}\setminus\{1\}}\abs{w_j-\widetilde w_j^{(1)}}&=\sum_{j\in \mathcal{I}_{gs}\setminus\{1\}}\abs{\sum_{i\in \mathcal{I}_{11}}w_{ij}-\widetilde w_{ij}^{(1)}} \\
         &\leq \sum_{j\in \mathcal{I}_{gs}\setminus\{1\}}\sum_{i\in \mathcal{I}_{11}}\abs{w_{ij}-\widetilde w_{ij}^{(1)} } \\
         &= \sum_{i\in \mathcal{I}_{11}}\sum_{j\in \mathcal{I}_{gs}\setminus\{1\}}  \abs{w_{ij}-\widetilde w_{ij}^{(1)} }
     \end{align*}
     Subcase 1: Nearest neighbor matching with fixed and diverging neighbor. If $1\notin \mathcal{M}(i)$, then $\sum_{j\in \mathcal{I}_{gs}\setminus\{1\}} \abs{w_{ij}-\widetilde w_{ij}^{(1)} }=0 $. If $1\in \mathcal{M}(i)$, then $\sum_{j\in \mathcal{I}_{gs}\setminus\{1\}} \abs{w_{ij}-\widetilde w_{ij}^{(1)} }=\frac{1}{M}=w_{i1}$ , so
     \begin{align*}
         \sum_{i\in \mathcal{I}_{11}}\sum_{j\in \mathcal{I}_{gs}\setminus\{1\}}  \abs{w_{ij}-\widetilde w_{ij}^{(1)} } \leq \sum_{i\in \mathcal{I}_{11}} w_{i1} = w_1
     \end{align*}
     Subcase 2, Kernel matching, 
     \begin{align*}
         &\sum_{i\in \mathcal{I}_{11}}\sum_{j\in \mathcal{I}_{gs}\setminus\{1\}}  \abs{w_{ij}-\widetilde w_{ij}^{(1)} }\\
         &=\sum_{i\in \mathcal{I}_{11}}\sum_{j\in \mathcal{I}_{gs}\setminus\{1\}}  \abs{\frac{K((X_i-X_j)/h)}{\sum_{r\in \mathcal{I}_{gs}}K((X_i-X_r)/h) }- \frac{K((X_i-X_j)/h)}{\sum_{r\in \mathcal{I}_{gs}}K((X_i-X_r)/h)-K((X_i-X_1)/h) }  } \\
         &=\sum_{i\in \mathcal{I}_{11}}\sum_{j\in \mathcal{I}_{gs}\setminus\{1\}}  \frac{ K((X_i-X_j)/h)K((X_i-X_1)/h) }{\sum_{r\in \mathcal{I}_{gs}}K((X_i-X_r)/h) \sum_{r\in \mathcal{I}_{gs}}K((X_i-X_r)/h)-K((X_i-X_1)/h) } \\
         &=\sum_{i\in \mathcal{I}_{11}} \frac{ K((X_i-X_1)/h) }{\sum_{r\in \mathcal{I}_{gs}}K((X_i-X_r)/h)  } \\
         &=\sum_{i\in \mathcal{I}_{11}} w_{i1} \\
         &=w_1
     \end{align*}
     Combine the results from the three cases, 
     \begin{align*}
         \sum_{j\in \mathcal{I}_{gs}\setminus\{1\}}\abs{w_j-\widetilde w_j^{(1)}} \leq w_1 +\1(1\in \mathcal{I}_{11})
     \end{align*}
     For the second term, notice that this case can be viewed as the symmetry case as the first term where we delete the observation $1'$ from the replaced sample, so by the same argument above, we have,
     \begin{align*}
         \sum_{j\in \mathcal{I}_{gs}\setminus\{1\}}\abs{\widetilde w_j^{(1)} -w_j^{(1)}}=\sum_{j\in \mathcal{I}_{gs}\setminus\{1\}}\abs{ w_j^{(1)} - \widetilde w_j^{(1)}} \leq w_{1'} +\1(1'\in \mathcal{I}_{11})
     \end{align*}
     Therefore, 
     \begin{align*}
         \sum_{j\in \mathcal{I}_{gs}\setminus\{1\}}\abs{w_j-w_j^{(1)}} \leq w_1+w_{1'}+\1(1\in \mathcal{I}_{11}) +\1(1'\in \mathcal{I}_{11})
     \end{align*}
     This proves the first claim.
     \begin{align*}
         \E\left[\left(\sum_{j\in \mathcal{I}_{gs}\setminus\{1\}}\abs{w_j-w_j^{(1)}}\right)^4 \right]&\leq \E\left[\left(w_1+w_{1'}+\1(1\in \mathcal{I}_{11}) +\1(1'\in \mathcal{I}_{11})\right)^4 \right] \\
         &\leq 64 \E\left[\left(w_1^4+w_{1'}^4+\1(1\in \mathcal{I}_{11}) +\1(1'\in \mathcal{I}_{11})\right) \right] \\
         &=128 (\E[w_1^4]+\E[\1(1\in \mathcal{I}_{11})  ])\\
         &=128 ( \E[w_1^4|1\in \mathcal{I}_{gs}]p_{gs} +p_{11} )\\
         &=O(1)
     \end{align*}
     
 \end{proof}

   \begin{proof}\label{proof:maximumdistancelimit}
Proof of lemma \ref{lemma:maximumdistancelimit}.\\
Let $\xi>0$ be any fixed constant, we then pick a positive integer $r$ (so $r$ depends on $\xi$), such that $r\xi>1$, then
\begin{align*}
    &P(n_{gs}^{-\xi}\max_{i\in\mathcal{I}_{gs}}n_{gs}^{1/q}\norm{X_i-X_{k_m(i)}}>\epsilon|\G,\bs)\\
    &\leq \sum_{i\in\mathcal{I}_{gs} } P(n_{gs}^{-\xi}n_{gs}^{1/q}\norm{X_i-X_{k_m(i')}}>\epsilon|\G,\bs)\\
    &=\isum \1(i\in \mathcal{I}_{gs})P(n_{gs}^{-\xi}n_{gs}^{1/q}\norm{X_i-X_{k_m(i)}}>\epsilon|\G,\bs)\\
    &=n_{gs} P(n_{gs}^{-r\xi}n_{gs}^{r/q}\norm{X_i-X_{k_m(i)}}^r>\epsilon^r|\G,\bs,G_i=g,S_i=s) \\
    &\leq \frac{\E[n_{gs}^{r/q}\norm{X_i-X_{k_m(i)}}^r| \G,\bs,G_i=g,S_i=s   ]}{\epsilon^r}\frac{n_{gs}}{n_{gs}^{r\xi}}\\
    &\leq \frac{1}{n_{gs}^{r\xi-1}}\frac{C}{\epsilon^r}    
\end{align*}
The first inequality follows from the Bonferroni inequality. The second inequality follows from the Markov's inequality. The third inequality follows from lemma 2 in \cite{Abadie-Imbens_2006_ECMA}. Notice under our assumption, when $n\rightarrow \infty$,  $n_{gs}^{r\xi-1}\xrightarrow{a.s.}\infty$, , so $\frac{1}{n_{gs}^{r\xi-1}}\frac{C}{\epsilon^r}\xrightarrow{a.s.}0$, as $n\rightarrow\infty$. This shows that: 
\[P(n_{gs}^{-\xi}\max_{i\in\mathcal{I}_{gs}}n_{gs}^{1/q}\norm{X_i-X_{k_m(i)}}>\epsilon|\G,\bs) \xrightarrow{a.s.}0\]
On the other hand, by law of iterated expectation
\begin{align*}
        &P(n_{gs}^{-\xi}\max_{i\in\mathcal{I}_{gs}}n_{gs}^{1/q}\norm{X_i-X_{k_m(i)}}>\epsilon)\\
   &=\E\left[ P(n_{gs}^{-\xi}\max_{i\in\mathcal{I}_{gs}}n_{gs}^{1/q}\norm{X_i-X_{k_m(i)}}>\epsilon|\G,\bs)\right]
\end{align*}
and combine with the fact that the conditional expectation is between $[0,1]$, so by dominated convergence theorem,
\[n_{gs}^{-\xi}\max_{i\in\mathcal{I}_{gs}}n_{gs}^{1/q}\norm{X_i-X_{k_m(i)}}=o_p(1)\]
The above equality holds for any fixed $\xi>0$, specifically, pick $\xi=1/q$, then:
\[\max_{i\in \mathcal{I}_{gs}}\left\|X_i - X_{k_m(i)}\right\|=o_p(1)\]
\end{proof}

\begin{proof}
    Proof of lemma \ref{lemma:residualbound}.

    \begin{align*}
        \E\left[\left.\left(\epsilon_{il}^{gs}\right)^4\right\lvert \G,\bs,\X,G_i=g,S_i=s \right] &= \E\left[\left.\left(\Delta Y_{il}- \E[\Delta Y_{il}|G_i=g,S_i=s,X_i]\right)^4\right\lvert G_i=g,S_i=s,X_i \right] \\
        &\leq 8\left(\E[\Delta Y_{il}^4 |G_i=g,S_i=s,X_i] +\left(\E[\Delta Y_{il}|G_i=g,S_i=s,X_i] \right)^4   \right)\\
        &<C
    \end{align*}
    The last inequality follows from assumption \ref{assumption:regularity}.
\end{proof}

\begin{proof}
    Proof of lemma \ref{lemma:equivalentestimator}.

    By definition
    \begin{align*}
        \widehat \tau_l&= \frac{1}{n_{11}}\sum_{i\in\mathcal{I}_{11}}\Delta Y_{il} -\frac{1}{n_{11}}\sum_{i\in\mathcal{I}_{01}}w_i\Delta Y_{il}-\left(\frac{1}{n_{11}}\sum_{i\in\mathcal{I}_{10}}w_i\Delta Y_{il} -\frac{1}{n_{11}}\sum_{i\in\mathcal{I}_{00}}w_i\Delta Y_{il}\right) \\
        &=\frac{1}{n_{11}}\sum_{i\in\mathcal{I}_{11}}\Delta Y_{il} -\frac{1}{n_{11}}\sum_{j\in\mathcal{I}_{01}}w_j\Delta Y_{jl}-\left(\frac{1}{n_{11}}\sum_{j\in\mathcal{I}_{10}}w_j\Delta Y_{jl} -\frac{1}{n_{11}}\sum_{i\in\mathcal{I}_{00}}w_j\Delta Y_{jl}\right) \\
        &=\frac{1}{n_{11}}\sum_{i\in\mathcal{I}_{11}}\Delta Y_{il} -\frac{1}{n_{11}}\sum_{j\in \mathcal{I}_{01}}\sum_{i\in \mathcal{I}_{11}}w_{ij}\Delta Y_{jl}-\frac{1}{n_{11}}\sum_{j\in \mathcal{I}_{10}}\sum_{i\in \mathcal{I}_{11}}w_{ij}\Delta Y_{jl} + \frac{1}{n_{11}}\sum_{j\in \mathcal{I}_{00}}\sum_{i\in \mathcal{I}_{11}}w_{ij}\Delta Y_{jl} \\
        &=\frac{1}{n_{11}}\sum_{i\in\mathcal{I}_{11}}\Delta Y_{il}- \frac{1}{n_{11}}\sum_{i\in \mathcal{I}_{11}}  \sum_{j\in \mathcal{I}_{01}}w_{ij}\Delta Y_{jl}-\frac{1}{n_{11}}\sum_{i\in \mathcal{I}_{11}}  \sum_{j\in \mathcal{I}_{10}}w_{ij}\Delta Y_{jl} + \frac{1}{n_{11}}\sum_{i\in \mathcal{I}_{11}}  \sum_{j\in \mathcal{I}_{00}}w_{ij}\Delta Y_{jl} \\
        &=\frac{1}{n_{11}}\sum_{i\in \mathcal{I}_{11}}\left(\Delta Y_{il}-\sum_{j\in \mathcal{I}_{01}}w_{ij}\Delta Y_{jl} - \sum_{j\in \mathcal{I}_{10}}w_{ij}\Delta Y_{jl} +  \sum_{j\in \mathcal{I}_{00}}w_{ij}\Delta Y_{jl}\right) \\
        &=\frac{1}{n_{11}}\sum_{i\in \mathcal{I}_{11}}\left(\Delta Y_{il}-  \widehat {\Delta Y_{il}^{01}}-\widehat {\Delta Y_{il}^{10}} + \widehat {\Delta Y_{il}^{00}} \right)
    \end{align*}
\end{proof}

\end{document}